\documentclass[prd,tightenlines,nofootinbib,superscriptaddress]{revtex4}

\usepackage{pre_all}
\usepackage{pre_CS}
\usepackage{bm}
\allowdisplaybreaks

\hypersetup{
pdfstartview = {FitH},
}
\hypersetup{
	colorlinks=true,         
	linkcolor=brown,          
	citecolor=red,        
	urlcolor=blue            
}

\begin{document}

\title{Melonic Radiative Correction in Four-Dimensional Spinfoam Model with Cosmological Constant}

\author{{\bf Muxin Han}}\email{hanm@fau.edu}
\affiliation{Department of Physics, Florida Atlantic University, 777 Glades Road, Boca Raton, FL 33431, USA}
\affiliation{Institut f\"ur Quantengravitation, Universit\"at Erlangen-N\"urnberg, Staudtstr. 7/B2, 91058 Erlangen, Germany}

\author{{\bf Qiaoyin Pan}}\email{qpan@fau.edu}
\affiliation{Department of Physics, Florida Atlantic University, 777 Glades Road, Boca Raton, FL 33431, USA}

\date{\today}

\begin{abstract}
Infrared divergence is a common feature of spinfoam models with a vanishing cosmological constant but is expected to disappear in presence of a non-vanishing cosmological constant. In this paper, we investigate the spinfoam amplitude with cosmological constant \cite{Han:2021tzw} on the melon graph, which is known as the melonic radiative correction. The amplitude closely relates to the state-integral model of complex Chern-Simons theory. We prove that the melonic radiative correction is finite in presence of a non-vanishing cosmological constant, in contrast to the infrared divergence of spinfoam models with a vanishing cosmological constant. In addition, we also analyze the scaling behavior of the radiative correction in the limit of small cosmological constant.

\end{abstract}

\maketitle

\tableofcontents

\section{Introduction}

Spinfoam quantum gravity \cite{Rovelli:2014ssa,Perez:2012wv} provides a covariant formulation to Loop Quantum Gravity (LQG) and can be viewed as a discrete path integral of quantum gravity. A spinfoam model is based on a cellular decomposition, conventionally chosen to be a triangulation, of the spacetime manifold. By virtue of LQG, the geometrical areas in 3+1 dimensional (or 4D) spinfoam models have discrete spectra $\fa=\gamma\ell^2_{\p}\sqrt{j(j+1)}$ where $\gamma$ is called the Babero-Immirzi parameter, $\ell_\p=\sqrt{8\pi G\hbar/c^3}$ is the Plank length and $j$ is an $\SU(2)$ irreducible representation label. This setting directly leads to the consequence that spinfoam models are free of ultraviolet divergences. However, infrared divergences are still present in spinfoam models with a vanishing cosmological constant $\Lambda$. Such divergences are called the radiative corrections or self-energies of the spinfoam models. Understanding these divergences is essential for studying the renormalization of the theory, which should lead us from the quantum spacetime dynamics at the microscopic scale to physical predictions at large scale. 

In 2+1 dimensions (or 3D), the divergence of the spinfoam model with $\Lambda=0$, called the Ponzano-Regge model \cite{Ponzano:1968se}, is related to the diffeomorphism symmetry \cite{Freidel:2002dw} and implicit sum over orientations of the spacetime manifold \cite{Christodoulou:2012af}. 
This divergence is regularized in the Turaev-Viro model \cite{Turaev:1992hq}, which is a deformed version of the Ponzano-Regge model corresponding to $\Lambda>0$. 
However, divergence in 4D spinfoams with $\Lambda=0$ remains an open question. Group field theory (GFT) (see \eg \cite{Oriti:2006se,Krajewski:2011zzu}) suggests that spinfoam amplitude corresponding to the melonic spinfoam graph (see fig.\ref{fig:melon_graph}) contributes the most divergent part for the radiative correction (at least for simple enough spinfoam graphs) \cite{Perini:2008pd,Krajewski:2010yq}. The melon graph is the first-order correction of a spinfoam amplitude for a spinfoam edge, or a spinfoam propagator in the GFT language. 
Radiative correction corresponding to the melon graph has been studied for the Engle-Pereira-Rovelli-Livine-Freidel-Krasnov (EPRL-FK) model \cite{Engle:2007wy,Freidel:2007py}, which is one of the most studied 4D spinfoam models with $\Lambda=0$. A recent study based on numerical method \cite{Frisoni:2021uwx} reveals that the EPRL-FK spinfoam amplitude of a melon graph scales as $|\cA_{\rm melon}|\sim|\Lambda|^{-1}$ at small $\Lambda$ provided a standard choice of the face amplitude. It is consistent with earlier results of its lower bound $|\cA_{\rm melon}|\sim\ln\lb|\Lambda|^{-1}\rb$ \cite{Riello:2013bzw} and upper bound $|\cA_{\rm melon}|\sim|\Lambda|^{-9}$ \cite{Dona:2018pxq}. 

Another way to target the radiative correction is to study the spinfoam model with a non-vanishing $\Lambda$ and consider its amplitude at small $|\Lambda|$ limit. Inspired by the Turaev-Viro model, it has been conjectured that a 4D spinfoam model with $\Lambda\neq 0$ should be free of divergence by construction. A natural way to manifest finiteness is to consider the quantum group deformation of the Lorentz group in the spinfoam models \cite{Noui:2002ag,Han:2010pz,Fairbairn:2010cp} as they provide a cutoff in representation by definition. On the other hand, a valid spinfoam model is supposed to reproduce discrete gravity, \ie Regge calculus \cite{Regge:1961px,Hartle:1981cf,Friedberg:1984ma,Regge:2000wu}, at its semi-classical limit. This has been realized in 3D spinfoam models \cite{Ponzano:1968se,Turaev:1992hq} and 4D spinfoam models with $\Lambda=0$ \cite{Barrett:1997gw,Barrett:1998gs,Bianchi:2008ae,Conrady:2008mk}. Then a 4D spinfoam model with $\Lambda\neq0$ is legitimately expected to bring out, at the semi-classical regime, the Regge calculus for 4-simplex with constant curvature. Due to the formulation complexity, however, the semi-classical approximation for the quantum group deformation of 4D spinfoam models is difficult to examine. 

Recently, a 4D spinfoam model with $\Lambda\neq0$ \cite{Han:2021tzw} was proposed and shown to be featured with both finiteness and the expected semi-classical approximation. This spinfoam model is defined by the $\SL(2,\bC)$ Chern-Simons partition function on the boundary of a 4D manifold coupled with Chern-Simons coherent states. It is capable of describing 4D quantum gravity with $\Lambda$ in either positive or negative sign, which is not fixed {\it in priori} but emerges from the equations of motion semi-classically and the boundary states. This spinfoam model is a modified version of that introduced in \cite{Haggard:2014xoa}, wherein the role of coherent states are played by the projective $\SL(2,\bC)$ spin network states and the spinfoam amplitude expression therein is only formal hence finiteness is doubtful. Therefore, it is promising to study in more detail on this new spinfoam model. 

In this work, we analytically study the radiative correction corresponding to the melon graph of the spinfoam model introduced in \cite{Han:2021tzw} at the $\Lambda\rightarrow0$ approximation. 
In the line of analysis, we improve the spinfoam model by proposing a concrete face amplitude given by a function of the spin associated to the face, consistent with the face amplitude in EPRL-FK model at $\Lambda\rightarrow0$ limit. As an important result, we prove that in the presence of nonzero $\Lambda$, the spinfoam amplitude on the melon graph is finite. Moreover, the convergence of the amplitude is even stronger than the general discussion in \cite{Han:2021tzw}: We show that the finiteness still holds after removing an exponentially decaying factor inserted in the edge amplitude there. This result is in contrast to the divergent melonic radiative correction in the spinfoam models with vanishing $\Lambda$. This finiteness is one of the inviting features of the spinfoam model with cosmological constant.

\begin{figure}[t]
\centering
\includegraphics[width=0.7\textwidth]{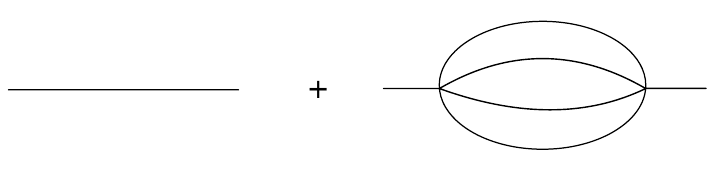}
\caption{The melonic spinfoam amplitude as the correction to the spinfoam edge amplitude.} 
\label{fig:melon_graph}
\end{figure}

We also discuss the scaling behavior of the melonic amplitude $\cA_{\rm melon}$ suppresses as $\Lambda\to 0$. in the small $\Lambda$ limit. The scaling behavior can be analyzed by applying the stationary phase approximation to the amplitude. The amplitude in the small $\Lambda$ regime is dominant by the contributions from the critical points. The scaling behavior is obtained by a power-counting argument. We find that the scaling behavior of the melonic amplitude has the lower bound as $|\cA_{\rm melon}|\sim 1/|\Lambda|^{15+6\mu}$ where $\mu$ is an undetermined power in the face amplitude.

This paper is organized as follows. In Section \ref{sec:review}, we give a rather self-consistent review of the spinfoam model with $\Lambda$ focusing on the vertex amplitude. We modify the way to impose the second-class simplicity constraints compared to the original work which we believe can simplify the construction. In Section \ref{sec:melon_graph}, we consider the full spinfoam amplitude for the melon graph in a similar way as for the vertex amplitude. That is to first consider the Chern-Simons partition function on the boundary then impose the simplicity constraints through Chern-Simons coherent states. 
Semi-classical approximation of the full melonic spinfoam amplitude is analyzed separately in Section \ref{sec:radiative_correction} and Section \ref{sec:coherent_integral} according to different scaling behaviours of different parts of the amplitude. The result of the melonic radiative correction is then drawn. 
Finally, we give a geometrical interpretation of critical points in Section \ref{app:geometry} and 
We conclude in Section \ref{sec:conclusion}.

\section{4D spinfoam amplitude with $\Lambda\neq 0$ from boundary Chern-Simons theory}
\label{sec:review}

In this section, we review the spinfoam model introduced in \cite{Han:2021tzw} which corresponds to four-dimensional quantum gravity with a non-vanishing cosmological constant $\Lambda$.

\subsection{From 4D gravity to Chern-Simons path integral}

The construction of the spinfoam amplitude is motivative by the formal path integral formalism of 4D gravity with $\Lambda\neq0$. 
We start from the Plebanski action \cite{Plebanski:1977zz}, which is a first-order formulation of 4D gravity expressed as a constrained $\SL(2,\bC)$ BF theory, adding a cosmological term. Consider an $\sl(2,\bC)$ two-form $B$ and an $\sl(2,\bC)$ connection $\cA$ which is a one-form on a 4-manifold $\cM_4$. The topological BF action, denoted as $S_{\LBF}$, is
\be
S_{\LBF}[B,\cA]=\int_{\cM_4} \tr \left[\lb \star +\f{1}{\gamma}\rb B\w \lb \cF (\cA)+ \f{|\Lambda|}{6} B\rb\right]\,,
\label{eq:LBF_action}
\ee
where $\cF(\cA)$ is the curvature 2-form of $\cA$, $\star$ is the Hodge star operation satisfying $\star^2=-1$ in Lorentzian signature and $\gamma$ is the Barbero-Immirzi parameter which takes a real value. The trace is taken in the $\sl(2,\bC)$ Lie algebra and it evaluates as $\tr[X\w Y]=X^{IJ}Y_{IJ}$ \footnote{
The form of the action \eqref{eq:LBF_action} relies on the self-dual and anti-self-dual decomposition of a complexified $\sl(2,\bC)$ element which gives two commuting copies of complexified $\su(2)$ elements, \ie $(\sl(2,\bC))_\bC=\su(2)^+_\bC\oplus\su(2)^-_\bC$. See \cite{Haggard:2014xoa} for a detailed derivation, wherein the global sign of the action is taken differently.
}. $S_{\LBF}$ depends on the absolute value of the cosmological constant $|\Lambda|$.

By imposing the {\it simplicity constraint}, which relates $B$ to the cotetrad one-form $e$ by
\be
B\cong \sgn(\Lambda) e\w e\,,
\ee
one recovers the first-order action of general relativity with a cosmological constant $\Lambda$, written in terms of the cotetrad $e$ and the connection $\cA$
\be
S_{\GR}[e,\cA]= \int_{\cM_4}\tr\left[\lb \star +\f{1}{\gamma}\rb (e\w e)\w \lb \cF(\cA)+\f{\Lambda}{6}(e\w e)\rb \right]\,.
\ee
The equations of motion of \eqref{eq:LBF_action} from varying the $B$ field leads to a linear relation between the $\cF$ field and the $B$ field, which transfers to the equation between the curvature and the cotetrad after imposing the simplicity constraints. 
\be
\f{\partial S_{\LBF}}{\partial B^{IJ}}=0
\quad \Longrightarrow\quad
\cF=\f{|\Lambda|}{3}B
\quad \xrightarrow{B\cong \sgn(\Lambda) e\w e}\quad
\cF\cong \f{\Lambda}{3}e\w e\,.
\label{eq:simplicity_constraint}
\ee
The right-most equation above is the simplicity constraint that we will implement to the theory. 

The path integral of the action \eqref{eq:LBF_action} contains a Gaussian integral for the $B$ field, performing which constrains $\cF=\f{|\Lambda|}{3}B$ and leads to two (conjugation related) second Chern-forms when separating the $\sl(2,\bC)$-valued curvature $\cF$ into its self-dual part $F$ and anti-self-dual part $\Fb$. By manipulating path integral \footnote{The first equality of \eqref{eq:path_integra_LBF} is formal. Indeed, the integration of $B$-field gives a divergent factor that corresponds to vacuum contributions in the field theory language, which would be cancelled out when one computes correlation functions. Same as the result of \eqref{eq:CS_actions} when one integrates out the $A$ and $\Ab$ fields in the bulk of $\cM_4$. },
\be\begin{split}
\cZ=\int\rd \cA \rd B \, e^{\f{i}{\ell_{\p}^2}S_{\LBF}} 
&=\int\rd \cA\, \exp\lb\f{3i}{2\ell_{\p}^2|\Lambda|} \int_{\cM_4}\tr\left[\lb\star+\f{1}{\gamma}\rb\cF(\cA)\w\cF(\cA)\right]\rb\\
&=\int\rd A\rd\Ab \, \exp\lb-\f{3}{2\ell_\p^2 |\Lambda|}\int_{\cM_4}\lb 1-\f{i}{\gamma}\rb\tr[F(A)\w F(A)] - \lb1+\f{i}{\gamma}\rb\tr[\Fb(\Ab)\w\Fb(\Ab)]\rb\,,
\end{split}
\label{eq:path_integra_LBF}
\ee
where $A$ and $\Ab$ are the self-dual and anti-self-dual parts of $\cA$ respectively and $\ell_\p$ is the Planck length. (Throughout this paper, we take the convention that the gravitational constant $G=1$ and that the speed of light $c=1$.)
As the exponent is a topological term, \eqref{eq:path_integra_LBF} becomes a path integral of $\SL(2,\bC)$ Chern-Simons action with complex level on the boundary $\partial \cM_4$. When $\mathcal{M}_4$ is topologically trivial, 
\be
\cZ=e^{-iS_{\CS}[A,\Ab]}\,,\quad S_{\CS}[A,\Ab]=
\f{t}{8\pi}\int_{\partial\cM_4} \tr\left[A\w\rd A+ \f23 A\w A\w A\right]+ \f{\tb}{8\pi}\int_{\partial\cM_4} \tr\left[\Ab\w\rd \Ab+ \f23 \Ab\w \Ab\w \Ab\right] 
\label{eq:CS_actions}
\ee
where the level $t$ and its complex conjugate $\tb$ can be separated into real and imaginary parts as
\be
t=k+is\,,\quad
\tb=k-is\,,\quad
\text{where}\,\,
k=\f{12\pi}{\ell_\p^2\gamma|\Lambda|}\in\Z_+\,,\quad
s=\gamma k\in\R_+\,.
\label{eq:def_t_k}
\ee
Therefore, the quantization of gravity on a 4-manifold $\cM_4$ with a cosmological constant $\Lambda$ now relates to quantization of the $\SL(2,\bC)$ Chern-Simons theory with complex coupling constant on the 3D boundary $\partial\cM_4$ of the manifold: 
\be
S_{\CS}[A,\Ab]=\f{t}{8\pi}\int_{\partial\cM_4} \tr\left[A\w\rd A+ \f23 A\w A\w A\right]+ \f{\tb}{8\pi}\int_{\partial\cM_4} \tr\left[\Ab\w\rd \Ab+ \f23 \Ab\w \Ab\w \Ab\right]\,.
\label{eq:CS_action}
\ee
The connection $A$ (as well as $\Ab$) is now restricted to the 3-boundary $\partial\cM_4$, where the simplicity constraints will be imposed. 

When constructing the spinfoam amplitude, we consider $\mathcal{M}_4$ to be a 4-simplex and quantize the Chern-Simons theory {\it canonically} on the boundary, followed by suitably imposing the (quantized) simplicity constraint. 
The result of the construction is the spinfoam vertex amplitude $\mathcal{A}_v$. Due to the fact that the simplicity constraint requires non-trivial magnetic flux by \eqref{eq:simplicity_constraint}, certain defect has to be introduced to the Chern-Simons theory (otherwise the Chern-Simons theory would imply $\mathcal{F}=0$ by the equation of motion). Some details about the quantization of the Chern-Simons theory with defect and the construction of $\mathcal{A}_v$ are reviewed in the following.

\subsection{Chern-Simons partition function on the triangulated 3-manifold}

Consider a 4-simplex which is topologically equivalent to a 4-ball $\cB_4$ whose boundary is a 3-sphere $S^3$. The triangulation ${\bf T}_3$ of $S^3$ contains 5 tetrahedra sharing 10 triangles. The dual graph, equivalently, contains 5 nodes connected by 10 links and is denoted as $\Gamma_5$ (See fig.\ref{fig:Gamma5}) \footnote{Throughout this paper, unless specification, we use the terminology that a 0-simplex and a 1-simplex in the triangulation of a manifold are denoted as a vertex and an edge respectively while a 0-complex and a 1-complex in the dual graph of the triangulation are denoted as a node and a link respectively. Note that the dual graph is different from the spinfoam graph, \eg the melon graph (see fig.\ref{fig:melon_graph}), where we denote the 0-, 1- and 2-complexes as spinfoam vertices, spinfoam edges and spinfoam faces. In the context with no ambiguity, we denote them simply as vertices, edges and faces for conciseness.}. 
Upon triangulation, the simplicity constraints take the form of smeared 2-forms hence it is natural to impose them on the triangles of ${\bf T}_3$. In the dual picture, the violation of flatness occurs {\it only} on the links of $\Gamma_5$. This means one can first study the quantum Chern-Simons theory on the graph complement $M_3:=S^3\backslash \Gamma_5$ which is the complement of an open tubular neighbourhood of $\Gamma_5$ in $S^3$ and then impose the simplicity constraints on the boundary $\partial M_3$ as certain boundary conditions.
\begin{figure}[h!]
\centering
\includegraphics[width=0.25\textwidth]{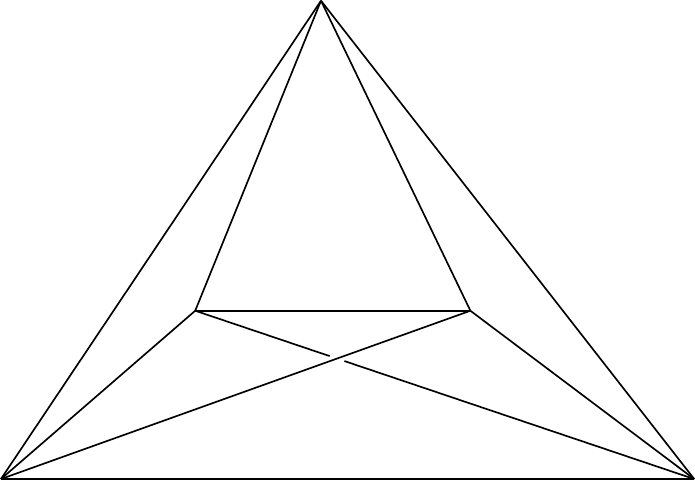}
\caption{The $\Gamma_5$ graph as the dual graph of the triangulation ${\bf T}_3$ of $S^3$.}
\label{fig:Gamma5}
\end{figure}
In this subsection, we review the main ingredients to perform the former step. Ref.\cite{Han:2021tzw} applied the method developed in a series of works \cite{Gaiotto:2009hg,Dimofte:2011gm,Dimofte:2011ju,Dimofte:2013lba,Dimofte:2014zga,andersen2014complex} to construct the Chern-Simons partition function $\cZ_{M_3}$ in terms of finite sums and finite-dimensional absolutely-convergent state-integral. 
Under this construction, $\cZ_{M_3}$ carries a complex gauge group $\SL(2,\bC)$ 
 and describes the quantization of the moduli space $\cM_{\Flat}(M_3,\SL(2,\bC))$ of flat $\SL(2,\bC)$ connection on $M_3$. 

The quantization of complex Chern-Simons theory uses the {\it ideal triangulation} of the graph-complement 3-manifold, say $\Gamma$-complement of $\cM_3$ denoted as $\cM_3\backslash\Gamma$. 
The building blocks of the ideal triangulation are the {\it ideal tetrahedra} $\triangle$'s, which are tetrahedra with vertices truncated into triangles as shown in fig.\ref{fig:ideal_tetra} \footnotemark{}. 
The original boundaries of an $\triangle$ before truncation are called the {\it geodesic boundaries} of $\triangle$ and the truncated vertices are called the {\it cusp boundaries} (or {\it disc cusp}) of $\triangle$. 
The boundaries of $\cM_3\backslash\Gamma$ can also separated into two types:
\begin{itemize}
    \item geodesic boundaries -- boundaries created by removing open balls around vertices of $\Gamma$, which are holed spheres, and
    \item cusp boundaries or {\it annulus cusp} -- boundaries created by removing the tubular neighbourhood of edges of $\Gamma$, which are annuli.
\end{itemize}
An ideal triangulation decomposes $\cM_3\backslash\Gamma$ into a set of ideal tetrahedra such that the geodesic boundaries are triangulated by the geodesic boundaries of $\triangle$'s while the annulus cusps are triangulated by the disc cusps of $\triangle$'s. An example of the ideal triangulation of a four-valent-node-complement of $S^3$ is illustrated in fig.\ref{fig:v_to_t}. It is part of the ideal triangulation of $M_3$.

\begin{figure}[h!]
\centering
\begin{minipage}{0.45\textwidth}
\centering
\includegraphics[width=0.5\textwidth]{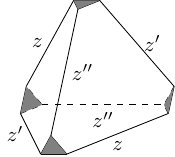}
\subcaption{}
\label{fig:ideal_tetra}
\end{minipage}\quad
\begin{minipage}{0.45\textwidth}
\includegraphics[width=0.7\textwidth]{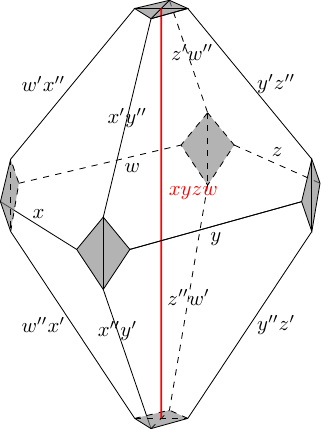}
\subcaption{}
\label{fig:ideal_octa}
\end{minipage}
\caption{{\it (a)} An ideal tetrahedron whose edges are dressed with edge coordinates $(z,z',z'')$. Each pair of opposite edges are dressed with the same coordinate. The disc cusps are filled {\it in gray}. {\it (b)} An ideal octahedron. Choose the equator to be edges dressed with $x,y,z,w$. Adding an internal edge ({\it in red}) orthogonal to the equator separates the ideal octahedron into four ideal tetrahedra, each of which is dressed with different copies of coordinates $(x,x',x'')\,,(y,y',y'')\,,(z,z',z'')\,,(w,w',w'')$. For edges shared by different ideal tetrahedra, coordinates are multiplied together.}
\end{figure}

The triangulation of $M_3$ can be decomposed into 5 {\it ideal octahedra} (see fig.\ref{fig:triangulation_All}), then each ideal octahedron can be further decomposed into 4 ideal tetrahedra by adding an internal edge (see fig.\ref{fig:ideal_octa}). 
As a result, the triangulation contains 20 ideal tetrahedra in total. (One should not confuse the ideal tetrahedra from triangulating $M_3$ with the tetrahedra from triangulating $S^3$ as the boundary of the 4-simplex.) The boundary $\partial M_3$ of $M_3$ is made of five 4-holed spheres $\{\cS_a\}_{a=1}^{5}$ and 10 annuli $\{(ab)|a < b, \,a,b=1,\cdots,5\}$ connecting these holes. 
The triangulation of $M_3$ induces the ideal triangulation on $\partial M_3$. 
The ideal triangulation of a 4-holed sphere $\cS_a$ contains four triangles located at the holes and four hexagons as illustrated in fig.\ref{fig:v_to_t_b}. 
\begin{figure}[h!]
\centering
\begin{minipage}{0.3\textwidth}
\centering
\includegraphics[width=0.9\textwidth]{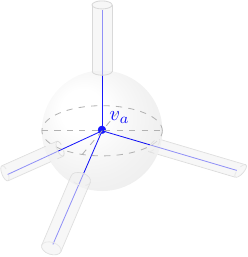}
\subcaption{}
\label{fig:v_to_t_a}
\end{minipage}\quad
\begin{minipage}{0.3\textwidth}
\centering
\includegraphics[width=0.9\textwidth]{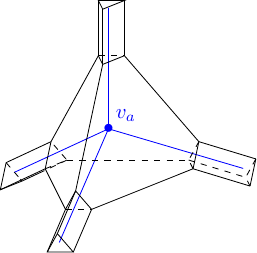}
\subcaption{}
\label{fig:v_to_t_b}
\end{minipage}
\caption{{\it (a)} Illustration of part of the $S^3\backslash\Gamma_5$. A four-valent node $v_a\in\Gamma_5$ and its neighbourhood is removed from $S^3$ and generates a part of the boundary as a 4-holed sphere $\cS_a$ whose holes are connected to annuli. 
{\it (b)} The ideal triangulation of $(a)$. Vertices created by edges of the graph piercing through the sphere are truncated into triangles. Each such triangle is connected to the boundary of a triangular prism which is the ideal triangulation of an annulus in $(a)$. (The triangulation of the parallelograms in triangular prisms is not shown for a clear visual effect.) 
In the full triangulation of $S^3\backslash\Gamma_5$, each triangular prism is connected to a pair of truncated vertices from two different triangulated 4-holed spheres. }
\label{fig:v_to_t}
\end{figure}
\footnotetext{An ideal tetrahedron can be lifted to the hyperbolic 3-plane $\bH^3$ with all the vertices located at infinity and all faces along geodesic surfaces of $\bH^3$. See \eg \cite{Dimofte:2010wxa}. }
On the other hand, an annulus is triangulated into the boundary of a triangular prism whose two triangles are identified with the cusp discs the annulus connects and the four parallelograms are split into four triangles.
Combinatorially, $\partial M_3$ is triangulated into 20 hexagonal geodesic boundaries and 30 quadrangular cusp boundaries. 

\begin{figure}[h!]
\centering
\includegraphics[width=0.9\textwidth]{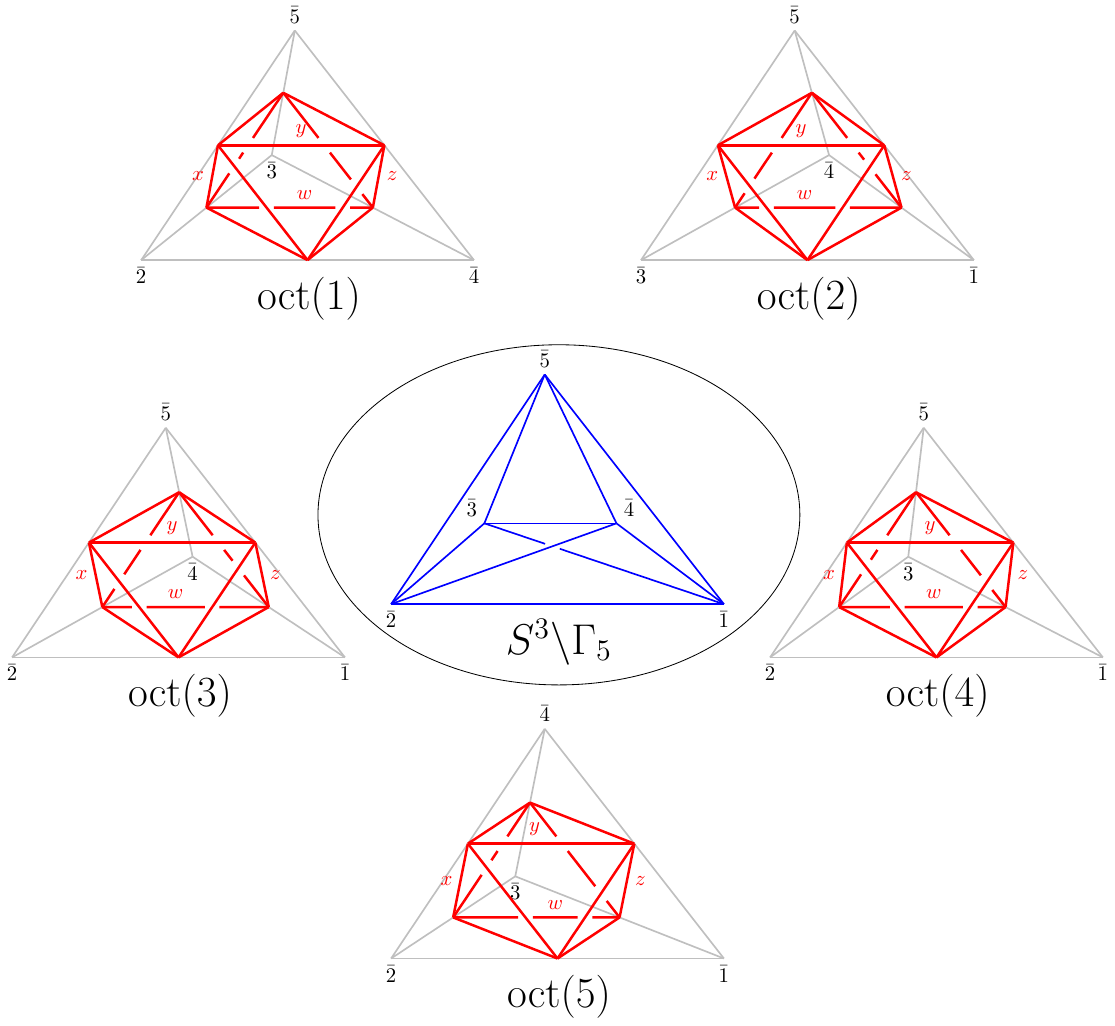}
\caption{The decomposition of the ideal triangulation of $M_3\equiv S^3\backslash\Gamma_5$ into 5 ideal octahedra ({\it in red}), each of which can be decomposed into 4 ideal tetrahedra. The cusp boundaries of the ideal octahedra are shrunk to vertices in this figure. (See fig.\ref{fig:ideal_octa}) for the ideal octahedron with un-shrunk cusp boundaries.)  
Numbers $\bar{1},\bar{2},\bar{3},\bar{4},\bar{5}$ with bars denote the 4-holed spheres on $\partial M_3$. 
The faces ${a}, {b}, {c}, {d}, {e}, {f}, {g}, {h}, {i}, {j}$ ({\it labelled in green and each is on a boundary triangle of the tetrahedron in gray}) are the faces where a pair of octahedra are glued. Two ideal octahedra are glued through pairs of faces having the same label (with different subscripts). In each ideal octahedron, $x, y, z, w$ ({\it labelled in red}) are chosen to form the equator of the octahedron. The same figure appears in \cite{Han:2015gma,Han:2021tzw}.}
\label{fig:triangulation_All}
\end{figure}
The building block to construct the partition function $\cZ_{M_3}$ is therefore provided by the $\SL(2,\bC)$ Chern-Simons partition function for an ideal tetrahedron $\triangle$, which is identical to the Chern-Simons wave function on $\triangle$ given boundary condition. The Chern-Simons wave function can be defined following the canonical quantization of the moduli space of framed flat connections on $\triangle$, which has been well studied in the literature (see \eg \cite{Dimofte:2011gm,Dimofte:2013lba,Dimofte:2010wxa}). Here, we use the result of \cite{Han:2021tzw} to write the partition function for $M_3$ and sketch the necessary steps in constructing this partition function in Appendix \ref{app:partition}.
 
As shown in fig.\ref{fig:triangulation_All}, the triangulation of $S^3\backslash\Gamma_5$ contains 5 ideal octahedra (see fig.\ref{fig:triangulation_All}) with all edges on the boundary $\partial(S^3\backslash\Gamma_5)$. The phase space $\cP_{\partial(S^3\backslash\Gamma_5)}$, which is the moduli space $\cM_{\Flat}(\partial(S^3\backslash\Gamma_5),\SL(2,\bC))$ of $\SL(2,\bC)$ flat connection on $\partial(S^3\backslash\Gamma_5)$, has 15 holomorphic position coordinates, which we group into a vector $\vec{\cQ}$, and 15 holomorphic momentum coordinates, which we group into a vector $\vec{\cP}$, as well as their anti-holomorphic counterparts $\vec{\widetilde{\cQ}}$ and $\vec{\widetilde{\cP}}$. Their elements are denoted as
\be
\vec{\cQ}=(\{2L_{ab}\}_{(ab)},\{\cX_a\}_{a=1}^{5})\,,\quad
\vec{\cP}=(\{\cT_{ab}\}_{(ab)},\{\cY_a\}_{a=1}^{5})\,,\quad
\vec{\widetilde{\cQ}}=(\{2\widetilde{L}_{ab}\}_{(ab)},\{\widetilde{\cX}_a\}_{a=1}^{5})\,,\quad
\vec{\widetilde{\cP}}=(\{\widetilde{\cT}_{ab}\}_{(ab)},\{\widetilde{\cY}_a\}_{a=1}^{5})\,.
\ee
They can be parametrized in terms of two complex vectors $\vec{\mu},\vec{\nu}\in\bC^{15}$ and a vector with discrete-valued entries $\vec{m},\vec{n}\in(\Z/k\Z)^{15}$ where $k=\f{12\pi}{\ell_\p^2\gamma|\Lambda|}\in\Z_+$. Precisely,
\be
\vec{\cQ}=\f{2\pi i}{k}\lb -ib\vec{\mu}-\vec{m} \rb\,,\quad
\vec{\cP}=\f{2\pi i}{k}\lb -ib\vec{\nu}-\vec{n} \rb\,,\quad
\vec{\widetilde{\cQ}}=\f{2\pi i}{k}\lb -ib^{-1}\vec{\mu}+\vec{m} \rb\,,\quad
\vec{\widetilde{\cP}}=\f{2\pi i}{k}\lb -ib^{-1}\vec{\nu}+\vec{n} \rb\,,\quad
\label{eq:Q_P_param}
\ee
where $b$ is a phase parameter related to the Barbero-Immirzi parameter satisfying
\be
b^2=\f{1-i\gamma}{1+i\gamma}\,,\quad \re(b)>0\,,\quad \im(b)\neq0\,,\quad |b|=1\,.
\ee
Conversely,
\be
\vec{\mu}=\f{kb}{2\pi(b^2+1)}\lb\vec{\cQ}+\vec{\widetilde{\cQ}}\rb\,,\,\,
\vec{m}=\f{ik}{2\pi(b^2+1)}\lb\vec{\cQ}-b^2\vec{\widetilde{\cQ}}\rb\,,\,\,
\vec{\nu}=\f{kb}{2\pi(b^2+1)}\lb\vec{\cP}+\vec{\widetilde{\cP}}\rb\,,\,\, 
\vec{n}=\f{ik}{2\pi(b^2+1)}\lb\vec{\cP}-b^2\vec{\widetilde{\cP}}\rb\,.
\ee
$(2L_{ab},\cT_{ab})$ are associated to the annulus $(ab)$ and $(\cX_a,\cY_a)$ are associated to two edges connected to a common hole of the ideal triangulation of the 4-holed sphere $\cS_a$ (see fig.\ref{fig:v_to_t}).  
 $2L_{ab}$ and $\cT_{ab}$ are called the complex (logarithmic) Fenchel-Nielsen (FN) length and FN twist respectively \footnote{$e^{\cT_{ab}}$'s are the coordinates of $\cM_\Flat(\partial(S^3\backslash\Gamma_5),\SL(2,\bC))$ because they involve square roots of FG coordinates due to the $1/2$ entries in $(\bB^\top)^{-1}$.}. The FN length $2L_{ab}$ is related to the squared eigenvalue of the meridian holonomy for the annulus $\lambda_{ab}^2=e^{2L_{ab}}$. 
$\cX_a$ and $\cY_a$ are called (the logarithm of) the {\it Fock-Goncharov (FG) coordinate} on $\mathcal{S}_a$ \cite{Fock:2003alg}. Same as their anti-holomorphic counterparts. 
In fact, each of the six edges in the ideal triangulation of each 4-holed sphere $\cS_a\in\partial(S^3\backslash\Gamma_5)$ is addressed with an FG coordinate, denoted as $\chi^{(a)}_{ij}$ when the edge is shared by two ideal octahedra $\Oct(i)$ and $\Oct(j)$. 

We also denote the continuous and discrete parameterization of the new set of coordinates as follows. 
\be
\vec{\mu}=\f{k}{2\pi Q}\lb \vec{\cQ} +\vec{\tcQ} \rb\,,\,\, 
\vec{m}=\f{ik}{2\pi bQ}\lb \vec{\cQ}-b^2\vec{\tcQ} \rb\,,\,\, 
\vec{\nu}=\f{k}{2\pi Q}\lb \vec{\cP}+\vec{\tcP} \rb\,,\,\,  
\vec{n}=\f{ik}{2\pi bQ}\lb \vec{\cP}-b^2\vec{\tcP}\rb\,.
\label{eq:new_coordinate_param}
\ee
We will also use the notations $\mu_{ab},m_{ab}$ (\resp $\nu_{ab},n_{ab}$) to denote the coordinates corresponding to $2L_{ab}$ (\resp $T_{ab}$) and use $\mu_{a},m_{a}$ (\resp $\nu_{a},n_{a}$) to denote the coordinates corresponding to $\cX_{a}$ (\resp $\cY_{a}$). 

The partition takes the following expression \cite{Han:2021tzw} \footnote{The integration contour $\cC^{\times 15}$ in \eqref{eq:partition_S3G5_1} is chosen to be on the plane $\R^{15}+i\vec{\alpha}_2$ where $\vec{\alpha}_{2}$ is within the first vector component of the positive angle structure $\fP_{2}$ after the $U$-type and $T$-type transformations. $\fP_2$ is related to the positive angle structure $\fP(\oct)^{\times 5}$ of 5 ideal octahedra in the following way (see Appendix \ref{app:partition}):
\be
\fP_{2}=T\circ U\circ \fP(\oct)^{\times 5}
\quad \Rightarrow\quad
\text{ If } \, 
(\vec{\alpha}_0,\vec{\beta}_0)\in\fP(\oct)^{\times 5}\,,\quad \text{ then }\,
(\vec{\alpha}_2,\vec{\beta}_2)=(-(\bB^{-1})^\top\vec{\alpha}_0,-\bB\vec{\beta}_0-\bA\vec{\alpha}_0)\in \fP_{2}\,.
\nn\ee}
\be
\cZ'_{S^3\backslash\Gamma_5}(\vec{\mu}|\vec{m})
=\f{4i}{k^{15}}\sum_{\vec{n}\in(\Z/k\Z)^{15}}\int_{\cC^{\times 15}}\rd^{15}\vec{\nu}\,
(-1)^{\vec{n}\cdot \bA\bB^\top\cdot \vec{n}}
e^{\f{i\pi}{k}(-\vec{\nu}\cdot \bA\bB^\top\cdot \vec{\nu}+\vec{n}\cdot \bA\bB^\top\cdot \vec{n}})
 e^{\f{2\pi i}{k}\left[-\vec{\nu}\cdot(\vec{\mu}-\f{iQ}{2}\vec{t})+\vec{n}\cdot \vec{m}\right]}\cZ_\times(-\bB^\top\vec{\nu}|-\bB^\top\vec{n})\,,
\label{eq:partition_S3G5_1}
\ee
where $\bA$ and $\bB$ are $15\times 15$ matrices with integer entries and $\vec{t}$ is a vector with integer elements. See \eqref{eq:ABt+} for the explicit expressions of $\bA$, $\bB$ and $\vec{t}$. They correspond to the order $\{(12),(13),(14),(15),(23),(24),(25),(34),(35),(45)\}$ of the annuli $(ab)$'s. We will use this ordering throughout this paper. 
$\cZ_\times$ is a product of 5 partition functions $\cZ_{\oct}$'s for ideal octahedra:
\be
\cZ_\times(\vec{\mu}|\vec{m}):=\prod_{a=1}^{5}\cZ_{\oct}(x_a,y_a,z_a;\xt_a,\yt_a,\zt_a)\,,
\label{eq:Z_times}
\ee
where each $\cZ_{\oct}$ is a product of 4 partition functions for ideal tetrahedra:
\be
\cZ_{\oct}(x,y,z;\xt,\yt,\zt ) := \prod_{i,j,k,l=0}^{\infty} 
\f{1-\qt^{i+1}\xt^{-1}}{1-q^{-i}x^{-1}}
\f{1-\qt^{j+1}\yt^{-1}}{1-q^{-j}y^{-1}}
\f{1-\qt^{k+1}\zt^{-1}}{1-q^{-k}z^{-1}}
\f{1-\qt^{l}\xt\yt\zt}{1-q^{-l-1}xyz}\,.
\label{eq:Z_oct}
\ee
On the right-hand sides of both \eqref{eq:Z_times} and \eqref{eq:Z_oct}, the variables $x,y,z$ and $\xt,\yt,\zt$ are the edge coordinates on the ideal octahedra. Their logarithmics are parametrized in the same way as in \eqref{eq:Q_P_param}.  That is,
\be
\fz_a=\exp\left[\f{2\pi i}{k}\lb -ib\mu_{\fz_a}-m_{\fz_a} \rb \right]\,,\quad 
\tilde{\fz}_a=\exp\left[\f{2\pi i}{k}\lb -ib^{-1}\mu_{\fz_a}+m_{\fz_a} \rb \right]\,,\quad
\text{with}\,\,\fz_a=x_a,y_z,z_a\,,\quad
\tilde{\fz}_a=\xt_a,\yt_z,\zt_a\,.
\ee
These parameters give the entries of the variables $\vec{\mu},\vec{m}$ on the left-hand side of \eqref{eq:Z_times}. We refer to Appendix \ref{subsec:ideal_tetra} and \ref{subsec:ideal_octa} for a more systematic derivation of these partition functions. 

Observe that $\bA\bB^\top$ is a symmetric matrix with integer entries, $(-1)^{\vec{n}\cdot\bA\bB^\top\cdot\vec{n}}$ in \eqref{eq:partition_S3G5_1} can be simplified to be $(-1)^{\vec{D}\cdot\vec{n}}$ where $\vec{D}:=\diag(\bA\bB^\top)$ is a vector whose elements are the diagonal elements of $\bA\bB^\top$. The sign $(-1)^{\vec{n}\cdot\bA\bB^\top\cdot\vec{n}}$ depends on the parity of elements in $\vec{D}$ and $\vec{n}$. Also notice that the parity of $D_I$ is the same as the parity of $t_I$, $\forall I=1,\cdots,15$. Combining these facts, we rewrite the sign factor $(-1)^{\vec{n}\cdot\bA\bB^\top\cdot\vec{n}}$ in \eqref{eq:partition_S3G5} to be $(-1)^{\vec{t}\cdot\vec{n}}$ \footnote{One can check using the explicit expressions \eqref{eq:ABt+} of matrices $\bA$, $\bB$ and vector $\vec{t}$ that the odd elements of $\vec{D}$ and $\vec{t}$ are both the $1$th, $2$nd, $6$th, $8$th, $11$th, $12$th and $13$th elements.}. Different from \cite{Han:2021tzw}, we will use the following expression for the Chern-Simons partition function on $S^3\backslash\Gamma_5$: 
\be
\cZ_{S^3\backslash\Gamma_5}(\vec{\mu}|\vec{m})
=\f{4i}{k^{15}}\sum_{\vec{n}\in(\Z/k\Z)^{15}}\int_{\cC^{\times 15}}\rd^{15}\vec{\nu}\,
{(-1)^{\vec{t}\cdot\vec{n}}}
e^{\f{i\pi}{k}(-\vec{\nu}\cdot \bA\bB^\top\cdot \vec{\nu}+\vec{n}\cdot \bA\bB^\top\cdot \vec{n})}
 e^{\f{2\pi i}{k}\left[-\vec{\nu}\cdot(\vec{\mu}-\f{iQ}{2}\vec{t})+\vec{n}\cdot \vec{m}\right]}\cZ_\times(-\bB^\top\vec{\nu}|-\bB^\top\vec{n})\,.
\label{eq:partition_S3G5}
\ee
We will see in Section \ref{sec:radiative_correction} that such a change will not alter the equations of motion compared to \cite{Han:2021tzw}.

The finiteness of $\cZ_{S^3\backslash\Gamma_5}(\vec{\mu}|\vec{m})$ is guaranteed by the so-called {\it positive angle structure} $\fP(S^3\backslash\Gamma_5)$ which is proven in \cite{Han:2021tzw} to be non-empty. 
Given an $2N$-dimensional positive angle structure $\fP$, we define the functional space
\be
\cF_{\fP}=\left\{ \text{holomorphic }f:\bC^N\rightarrow \bC \mid \forall(\vec{\alpha},\vec{\beta})\in\fP\,, e^{-\f{2\pi}{k}\vec{\beta}\cdot\vec{\mu}}f(\vec{\mu}+i\vec{\alpha})\in \cS(\R^N)\, \text{ is Schwartz class} \right\}\,.
\ee
Combining a discrete representation part $\lb\bC^{k}\rb^{\otimes N}$, we define
\be
\cF_{\fP}^{(k)}=\cF_{\fP}\otimes_{\bC}(\bC^k)^{\otimes N}\,.
\label{eq:def_Fk}
\ee 
In our case, $N=15$. By the theorem \cite{EllegaardAndersen:2011vps,andersen2013new,Dimofte:2014zga} that the Chern-Simons partition function converges absolutely as long as the 3-manifold admits a non-empty positive angle structure, the finiteness of the Chern-Simons partition function on $S^3\backslash\Gamma_5$ is manifest. 
This means, given any $(\vec{\alpha},\vec{\beta})\in \fP(S^3\backslash\Gamma_5)$ and let $\im(\vec{\mu})=\vec{\alpha}$, the integration contours $\cC^{\times 15}$ satisfying $\im(\vec{\nu})=\vec{\beta}$ renders the finiteness of $\cZ_{S^3\backslash\Gamma_5}(\vec{\mu}|\vec{m})$, or in other words, $\cZ_{S^3\backslash\Gamma_5}\in \cF_{S^3\backslash\Gamma_5}^{(k)}$.

\subsection{Impose the simplicity constraints towards a spinfoam vertex amplitude}
\label{subsec:simplicity_constraint}

The second step in constructing the vertex amplitude is to impose the simplicity constraints. Ref.\cite{Han:2021tzw} applies the spinfoam techniques, especially those applied to the EPRL-FK model \cite{Engle:2007wy,Freidel:2007py}. As discussed below, the simplicity constraints contain the first-class and second-class pieces, according to the Chern-Simons symplectic structure. The first-class constraints are imposed strongly on $\cZ_{S^3\backslash \Gamma_5}$ which amount to restricting the FN coordinates on the annuli. On the other hand, the second-class constraints are imposed weakly on the nodes of $\Gamma_5$. This is done by firstly coupling $\cZ_{S^3\backslash\Gamma_5}$ with 5 coherent states, each on one node of $\Gamma_5$ which is peaked at certain phase space point in $\cM_{\Flat}(S^3\backslash\Gamma_5,\PSL(2,\bC))$, then imposing constraints to the allowed phase space points where the coherent states are peaked. 

The simplicity constraints (see below) imposed on the Chern-Simons theory on $S^3\backslash\Gamma_3$ can be seen as the generalization of the simplicity constraints in the EPRL-FK model. Recall that, at the classical discrete level, the simplicity constraints in the (Lorentzian) EPRL-FK model are \cite{Engle:2007qf,Engle:2007wy,Freidel:2007py}
\begin{subequations}
\begin{align}
&\text{first-class (diagonal constraints): }\quad\quad\quad\,\,\,\, \epsilon_{IJKL}B_f^{IJ}(t)B_f^{KL}(t)=0\,,\quad\forall f\in t\,,
\label{eq:EPRL_1st_calss}\\
&\text{second-class (off-diagonal constraints): }\quad  \epsilon_{IJKL}B_f^{IJ}(t)B_{f'}^{KL}(t)=0 \,,\quad\forall f,f'\in t\,, f\neq f'\,,
\label{eq:EPRL_2nd_calss}
\end{align}
\label{eq:EPRL_simplicity}
\end{subequations}
where $f$ and $t$ denote a triangle and a tetrahedron respectively and $f\in t$ denotes that $f$ is on the boundary of $t$. $B_f^{IJ}(t)=\int_f B^{IJ}(t)$ is the discretized $B$-field associated to $f$ with $I,J=0,1,2,3$ being the internal labels and 0 is identified to be the time direction. 
These quadratic constraints can be strengthened to a single set of linear constraints
\be
\text{linear constraints: }\quad \exists\,N_J \text{ such that } N_JB_f^{IJ}(t)=0\,,\quad\forall f\in t\,.
\label{eq:EPRL_linear}
\ee
The replacement from \eqref{eq:EPRL_simplicity} to \eqref{eq:EPRL_linear} is for the purpose of selecting a single solution sector and is beneficial for quantization. We will treat \eqref{eq:EPRL_linear} as the full set of simplicity constraints, different from the original papers \cite{Engle:2007qf,Engle:2007wy} while following \cite{Ding:2010fw}, and generalize it in the new spinfoam model.

The simplicity constraints then imply that the discretized $B$-field $B^{IJ}_f(t)$ measures the area $\fa_f=|\f12\epsilon_{IJKL}N^JB^{KL}_f(t)|$ of the triangle $f$. One can gauge fix the vector $N_J=N_0$ to be timelike, then \eqref{eq:EPRL_linear} is equivalent to the statement that the tetrahedron $t$ is spacelike. Moreover, the SU(2) gauge symmetry implies the closure condition in the EPRL-FK model. That is, for each tetrahedron $t$:
\be
\sum_{f\in t} B_f^{IJ}(t)=0 
\quad\Longleftrightarrow\quad
\sum_{f\in t}\fa_f \fn_f^I=0\,,
\label{eq:EPRL_closure}
\ee
where $\fn_f^I$ is the normal vector to $f$ satisfying $|\fn_f|=1$. 
By Minkowski's theorem, the simplicity constraint \eqref{eq:EPRL_linear} together with the closure condition \eqref{eq:EPRL_closure} allows us to identify a convex tetrahedron whose face areas and normals are given by $\fa_f$'s and $\fn_f^I$'s. 

The generalization of simplicity constraints to the $\Lambda\neq 0$ case at the discrete level can be implemented as follows.
Consider the (non-ideal) triangulation, denoted as $\tau_a$, of a 4-holed sphere $\cS_a$ such that each hole, denoted by $\fp$, is inside a triangle $f_\fp$. See the red lines in fig.\ref{fig:XY_choice}. Define the the discretized $B$-field associated to $f_\fp$ as in the EPRL-FK model, \ie $B_{f_\fp}(\tau_a)=\int_{f_\fp} B(\tau_a)$. 
One the other hand, let us recall the relation $\cF=\f{|\Lambda|}{3}B$ discussed in \eqref{eq:simplicity_constraint}. The discretization of this relation gives $\cF_\fp(\cS_a)=\f{|\Lambda|}{3}B_{f_\fp}(\tau_a) \delta(\vec{x})\rd x^1\w\rd x^2$ at the local coordinate $(x^1,x^2)$ on one patch of $\cS_a$ with the hole $\fp$ at the origin. it allows us to write the simplicity constraints in the same form as \eqref{eq:EPRL_linear} in terms of the Chern-Simons curvature. That is, for all holes $\fp$'s of $\cS_a$,
\be
 \exists\,  N_J \text{ such that } N_J\cF_\fp^{IJ}(\cS_a)=0\,.
\label{eq:CS_simplicity}
\ee
By the non-abelian stock's theorem, the holonomy around each triangle $f_\fp$ of $\tau_a$ takes the form $O_{f_\fp}(\tau_a)=e^{\f{|\Lambda|}{3}B_{f_\fp}(\tau_a)}\in\PSL(2,\bC)$. Eq.\eqref{eq:CS_simplicity} can be translated into constraints in terms of $\{O_{f_\fp}(\tau_a)\}_{\fp=1}^4$:
\be
 \exists\,  N_J \text{ such that } N_J(O_{f_\fp})_{I}^{\phantom{I}J}(\tau_a)=N_I\,,\quad\forall f_\fp \in \tau_a\,.
 \label{eq:CS_discrete_simplicity}
\ee 
Similar to the EPRL-FK case, \eqref{eq:CS_simplicity} (or \eqref{eq:CS_discrete_simplicity}) means that the 4-holed sphere $\cS_a$, or its triangulation $\tau_a$, is orthogonal to a common vector $N^J\in\R^4$. Gauge fixing $N_J=(1,0,0,0)$ implements that all the holonomies $\{O_{f_\fp}(\tau_a)\}_{\fp=1}^{4}$ are in a common $\PSU(2)$ subgroup of $\PSL(2,\bC)$. 
In other words, the simplicity constraints restrict the moduli space $\cM_\Flat(\cS_a,\PSL(2,\bC))$ of flat $\PSL(2,\bC)$ connection to a moduli space $\cM_\Flat(\cS_a,\PSU(2))$ of flat $\PSU(2)$ connection, which is a symplectic submanifold of $\cM_\Flat(\cS_a,\PSL(2,\bC))$. 

The flat connection in $\cM_\Flat(\cS_a,\PSU(2))$ defines a representation of the fundamental group of $\cS_a$ into $\PSU(2)$ modulo gauge transformations. Let the holonomies $\{O_{f_\fp}(\tau_a)\}$ have the same base point $\fb\in\cS_a$. Then they satisfy the non-linear closure condition (we fix the ordering of the holonomies here and for the rest of this paper)
\be
O_{f_1}(\tau_a)O_{f_2}(\tau_a)O_{f_3}(\tau_a)O_{f_4}(\tau_a)=\id_{\PSU(2)} 
\ee
due to the isomorphism
\be
\cM_\Flat(\cS_a,\PSU(2))\cong\{O_1,O_2,O_3,O_4\in\PSU(2):O_1O_2O_3O_4=\id_{\PSU(2)}\}/\PSU(2)\,.
\ee
The correspondence between $\PSU(2)$ flat connection and constant curvature tetrahedron has been established in \cite{Haggard:2015ima}.

The simplicity constraint on $O_{f_\fp}(\tau_a)$ can be expressed in terms of the coordinates $(\vec{\mathcal{Q}},\vec{\mathcal{P}})$ defined in \eqref{eq:change_coordinate}. We will classify the constraints into first- and second-class parts and treat them differently in the following.

\subsubsection{The first-class simplicity constraints}

The first-class constraints are obtained by the commutative functions of the holonomies $\{O_{f_\fp}(\tau_a)\}$. In $\partial(S^3\backslash\Gamma_5)$, a hole $\fp$ of $\cS_a$ is connected to a hole of $\cS_b(\neq \cS_a)$ via a annulus cusp. Classically, $O_{f_\fp}(\tau_a)\in\PSU(2)$ implies that $\lambda_{\fp}^2\equiv\lambda_{ab}^2=e^{i2\theta_{ab}}$ with some $\theta_{ab}\in\R$. Ref.\cite{Haggard:2015ima} has shown that $\theta_{ab}$ encodes the area $\fa_{f_p}$ of the triangle $f_\fp$ surrounding $\fp$ in the triangulation $\tau_a$.
Therefore, the first-class simplicity constraints can be formulated as
\be
2L_{ab}:=\f{2\pi i}{k}\lb -ib\mu_{ab}-m_{ab}\rb\in i\R 
\quad\Longleftrightarrow\quad
\mu_{ab}=0
\quad\xrightarrow{\text{quantization}}\quad
\re(\bmu_{ab})\cZ_{S^3\backslash\Gamma_5}(\vec{\mu}|\vec{m})=0\,,
\label{eq:1st_class_quantum}
\ee
where the right-most quantum constraint is written in terms of $\re(\bmu_{ab})$ as the analytic continuation of $\mu_{ab}$ to $\bC$ is allowed at the quantum level. If the requirement ``4d area = 3d area'' \cite{Ding:2010fw} is further imposed, the first-class constraint is strengthened to $\bmu_{ab}\cZ_{S^3\backslash\Gamma_5}(\vec{\mu}|\vec{m})=0$. Following \cite{Han:2021tzw}, we keep the weaker condition $\im(\mu_{ab})\equiv \alpha_{ab}\neq 0$. Then $e^{2L_{ab}}\in \bU(1)$ is realized only at the classical level. Define the ``spin'' $j_{ab}$ such that
\be
2j_{ab}= m_{ab}\quad\rightarrow\quad
j_{ab}=0,\f12,\cdots,\f{k-1}{2}\,.
\label{eq:m_to_j}
\ee
$j_{ab}$ encodes the area $\fa_{f_\fp}$ of the triangle $f_\fp$ in a tetrahedron (when we fix the orientation of $f_\fp$) by \cite{Haggard:2015ima} \footnote{If the orientation of $f_\fp$ is not fixed, there is an ambiguity for the area $\fa_{f_\fp}$ for a given $j_{ab}$. More precisely, the area is related to $j_{ab}$ by \eqref{eq:spin_to_area} or $2\pi-\f{|\Lambda|}{3}\fa_{f_\fp}=\f{4\pi }{k}j_{ab}$.} 
\be
\f{|\Lambda|}{3}\fa_{f_\fp}=\f{4\pi }{k}j_{ab}. 
\label{eq:spin_to_area}
\ee
The quantum states satisfying the constraint \eqref{eq:1st_class_quantum} are then labelled by
\be
\cZ_{S^3\backslash\Gamma_5}(\{i\alpha_{ab}\}_{(ab)}, \{\mu_{a}\}\mid \{j_{ab}\}_{(ab)}, \{m_{a}\})\,.
\label{eq:Z_after_1st_constraints}
\ee 

Therefore, the first-class simplicity constraints can be seen to be imposed on the FN coordinates on the annulus cusps on the triangulation of $\partial(S^3\backslash\Gamma_5)$. The remaining (second-class) simplicity constraints will be imposed on each $\cS_a$.

\subsubsection{The second-class simplicity constraints and the Chern-Simons coherent states}
\label{subsubsec:second_simplicity}

The moduli space $\cM_\Flat(\cS_a,\PSL(2,\mathbb{C}))$ is not a symplectic manifold but a Poisson manifold, due to the presence of Poisson commutative $\{\lambda^2_{\fp}\}_{\fp=1}^4$. Fixing $\{\lambda^2_{\fp}\}_{\fp=1}^4$ by \eqref{eq:spin_to_area}
 reduces the moduli space $\cM_\Flat(\cS_a,\PSL(2,\mathbb{C}))$ to a two-complex-dimensional symplectic space $\mathcal{M}_{\vec{\lambda}}$ with symplectic coordinates $(\cX_a,\cY_a)$, on which we should impose the second-class simplicity constraints. 

It will be more convenient to work with the trace coordinates of flat connections rather than the FG coordinates $(\cX_a,\cY_a)$ when analyzing these simplicity constraints \footnote{See \cite{Han:2021tzw} for using spinors instead of trace coordinates to impose constraints on $(\cX_a,\cY_a)$.}. 
Consider the triangulation $\tau_a$ of $\cS_a$ as described above. Label the holes by numbers $1,2,3,4$ and denote each edge connecting the holes $\fp_1$ and $\fp_2$ ($\fp_i=1,2,3,4$) by $e_{\fp_1\fp_2}$. Denote the (exponential) FG coordinate on $e_{\fp_1\fp_2}$ as $z_{\fp_1\fp_2}$. With no loss of generality, let $z_{12}=e^{\cX_a}$ and $z_{13}=e^{\cY_a}$ as shown in fig.\ref{fig:XY_choice}. 
(If this choice is taken for all the five $\{\cS_a\}_{a=1}^5$, the way of gluing different 4-holed spheres is unique. See Appendix \ref{app:identify_holes} for details of the gluing.) We choose a lift by defining $y_{\fp_1\fp_2}:=\sqrt{-z_{\fp_1\fp_2}}\equiv \exp\lb\f12 \lb Z_{\fp_1\fp_2}+i\pi\rb\rb$ for all edges $\{e_{\fp_1\fp_2}\}$ and work with $\mathrm{SL}(2,\mathbb{C})$ flat connections in stead of $\mathrm{PSL}(2,\mathbb{C})$ flat connections.  
\begin{figure}[h!]
\centering
\includegraphics[width=0.3\textwidth]{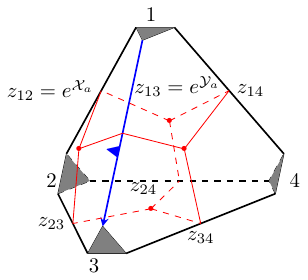}
\caption{The ideal triangulation ({\it in black}) and the (normal) triangulation $\tau_a$ ({\it in red}) of a 4-holed sphere $\cS_a$. Numbers $1,2,3,4$ label the holes of $\cS_a$.  A choice of dressing $e^{\cX_a}=z_{12},e^{\cY_a}=z_{13}$ in terms of the edge coordinates  is given. The relative location of holes is consistent with the Poisson relation of $\cX_a$ and $\cY_a$. The arrow in blue dressed with a fin -- called a {\it snake} -- is used to calculate the holonomies around holes by the snake rule \cite{Dimofte:2013lba}. See also Appendix \ref{app:snake} for a brief description of the snake rule.}
\label{fig:XY_choice}
\end{figure}

\medskip

\noindent{\bf Trace coordinates of $\cM_\Flat(\cS_a,\SL(2,\bC))$. }
In order to write the trace coordinate explicitly, we now work on one 4-holed sphere and lift the holonomies $O_{f_\fp}(\tau_a)\in\PSL(2,\bC)$ to $h_\fp \in\SL(2,\bC)$ for all holes. They describe solutions to $\cM_{\Flat}(\cS_a,\SL(2,\bC))$ by the closure constraint
\be
h_1h_2h_3h_4=\id_{\SL(2,\bC)}\,.
\ee
$\{h_\fp\}$ can be calculated by the snake rule \cite{Dimofte:2013lba} (see Appendix \ref{app:snake}) based on the ideal triangulation of $\cS_a$ (see fig.\ref{fig:XY_choice}).  
Their traces are determined by commutative eigenvalues $\{\lambda_\fp\}$ when we choose the lift $y_{\fp_1\fp_2}=\sqrt{-z_{\fp_1\fp_2}}$ \footnote{If we choose the other lift $y_{\fp_1\fp_2}=-\sqrt{-z_{\fp_1\fp_2}}$, the traces become $\fm_\fp=-\lambda_\fp-\lambda_\fp^{-1}.$}
\be
\fm_1:=\tr(h_1)=\lambda_1+\lambda_1^{-1}\,,\quad
\fm_2:=\tr(h_2)=\lambda_2+\lambda_2^{-1}\,,\quad
\fm_3:=\tr(h_3)=\lambda_3+\lambda_3^{-1}\,,\quad
\fm_4:=\tr(h_4)=\lambda_4+\lambda_4^{-1}\,.
\label{eq:trace}
\ee

Apart from $\{\fm_\fp\}$, two more trace coordinates are needed to describe $\cM_{\Flat}(\cS_a,\SL(2,\bC))$. They correspond to holonomies $h_{\fp_1\fp_2}$ around two holes $\fp_1$ and $\fp_2$. The snake rule gives 
\begin{subequations}
\begin{align}
\tr(h_{12})&={-}\f{y_{14}^2 y_{24}^2 y_{13}^2+y_{14}^2 y_{23}^2 y_{24}^2 y_{13}^2+y_{23}^2 y_{24}^2 y_{13}^2+y_{24}^2 y_{13}^2+y_{13}^2+y_{24}^2+1}{y_{13} y_{14} y_{23} y_{24}}\,,\\
\tr(h_{23})&=
{-}\f{y_{12}^2 y_{13}^2 y_{24}^2 y_{34}^2+y_{12}^2 y_{13}^2 y_{34}^2+y_{12}^2 y_{24}^2 y_{34}^2+y_{12}^2 y_{34}^2+y_{12}^2+y_{34}^2+1}{y_{12} y_{13} y_{24} y_{34}},\\
\tr(h_{13})&=-\f{y_{12}^2 y_{14}^2 y_{23}^2 y_{34}^2+y_{12}^2 y_{14}^2 y_{23}^2+y_{14}^2 y_{23}^2 y_{34}^2+y_{14}^2 y_{23}^2+y_{14}^2+y_{23}^2+1}{y_{12} y_{14} y_{23} y_{34}}\,.
\end{align}
\label{eq:trace_coordinates}
\end{subequations}
These expressions are consistent with those in \cite{Gaiotto:2009hg,Coman:2015lna}. 
On the other hand,  the traces of monodromies around one hole are fixed by the first-class constraints, \ie
\be
y_{12} y_{13} y_{14}=\lambda_1\,,\quad
y_{12} y_{23} y_{24}=\lambda_2\,,\quad
y_{13} y_{23} y_{34}=\lambda_3\,,\quad
y_{14} y_{24} y_{34}=\lambda_4\,,\quad
\text{where }\, \lambda_{\fp}:=e^{L_{ab}}\,.
\label{eq:lambda}
\ee
Eq.\eqref{eq:lambda} can be inserted into \eqref{eq:trace_coordinates} to rewrite the trace coordinates in terms of $\{\lambda_{\fp}\}_{\fp=1}^4$ and $z_{12},z_{13}$. 
\begin{subequations}
\begin{align}
\fkt_1:=\tr(h_{12})&
=\frac{\lambda _1 \lambda _2}{z_{12}}+\frac{-z_{12} z_{13}+z_{12}+\lambda _2^2 z_{13}}{\lambda _1 \lambda _2}+\frac{\lambda _4 z_{12} (z_{13}-1) z_{13}}{\lambda _1^2 \lambda _3}+\frac{\lambda _4 z_{13}}{\lambda _3}\,,\\
\fkt_2:=\tr(h_{23})&
=\frac{\lambda _1 (z_{12}-1)}{\lambda _4 z_{12} z_{13}}+\frac{\lambda _3 (z_{12} (z_{13}-1)+1)}{\lambda _2 z_{13}}-\frac{\lambda _4 z_{12} (z_{13}-1)}{\lambda _1}\,,\\
\fkt_3:=
\tr(h_{13})&=-\frac{\lambda _2 \lambda _1^2 (z_{12}-1)}{\lambda _4 z_{12}^2 z_{13}}+\frac{\lambda _2}{\lambda _4 z_{12}}+\frac{\lambda _1 \left(\lambda _3^2 (z_{12}-1)+z_{13}\right)}{\lambda _3 z_{12} z_{13}}+\frac{z_{13}}{\lambda _1 \lambda _3}\,.
\end{align}
\label{eq:trace_coordinates_2}
\end{subequations}
The algebra functions on $\cM_{\Flat}(\cS_a,\SL(2,\bC))$ can be described by the polynomial ring generated by the trace coordinates $\{\fm_1,\fm_2,\fm_3,\fm_4,\fkt_1,\fkt_2,\fkt_3\}$ quotient by a polynomial relation \cite{Teschner:2013tqy,Coman:2015lna,Nekrasov:2011bc} 
\be
P=\fkt_1\fkt_2\fkt_3+\fkt_1^2+\fkt_2^2+\fkt_3^2+\fm_1\fm_2\fm_3\fm_4+\fm_1^2+\fm_2^2+\fm_3^2+\fm_4^2
-(\fm_1\fm_2+\fm_3\fm_4)\fkt_1 -(\fm_2\fm_3+\fm_1\fm_4)\fkt_2 - (\fm_1\fm_3+\fm_2\fm_4)\fkt_3 -4\,.
\label{eq:polynomial}
\ee
It can be easily verified that $\{\fm_1,\fm_2,\fm_3,\fm_4,\fkt_1,\fkt_2,\fkt_3\}$ defined by \eqref{eq:trace} and \eqref{eq:trace_coordinates_2} is a set of solutions to $P=0$. The second-class simplicity constraints are implemented by
\be
\fkt_1,\fkt_2,\fkt_3\in[-2,2]\,,
\label{eq:simplicity_t123}
\ee
where only two are independent as they are functions of $z_{12},z_{13}$.
Inversely solving $z_{12},z_{13}$ from given $\{\fm_1,\fm_2,\fm_3,\fm_4,\fkt_1,\fkt_2\}$ satisfying the simplicity constraints, one can find two solutions. Indeed, $P$ is a quadratic polynomial of $\fkt_3$ hence there are generally two roots to $\fkt_3$ given data of $\{\fm_1,\fm_2,\fm_3,\fm_4,\fkt_1,\fkt_2\}$ which corresponds to these two solutions of $\{z_{12},z_{23}\}$. However, further knowing $\fkt_3$ uniquely fixes to one of the solutions. 

 \medskip

\noindent{\bf Darboux coordinates of $\cM_\Flat(\cS_a,\SL(2,\bC))$. }
The trace coordinates are not the symplectic coordinates on the moduli space of flat connection on the 4-holed sphere (see e.g. \cite{Goldman1986,Nekrasov:2011bc,Coman:2015lna} for the discussion about their Poisson brackets). In order to have a well-defined state integral for the spinfoam amplitude, one needs to replace them with a new set of symplectic coordinates, which can be defined as follows. 

The holomorphic Darboux coordinates $(\theta,\phi)$ of $\cM_{\vec \lambda}$ 
relate to $\fkt_1=\tr(h_{12})\,,\fkt_2=\tr(h_{23})$ and $\fkt_3=\tr(h_{13})$ by (see e.g \cite{Nekrasov:2011bc}).
\begin{subequations}
\begin{align}
2\cos\theta&=\fkt_1\,,\\
2\cos\phi\sqrt{c_{12}c_{34}}&=
\fkt_2(\fkt_1^2-4)+2(\fm_1\fm_4+\fm_2\fm_3)-\fkt_1(\fm_1\fm_3+\fm_2\fm_4)\,,
\label{eq:theta_phi_2}\\
\sin\phi \sqrt{c_{12}c_{34}}&= (2\fkt_3+\fkt_1\fkt_2-\fm_1\fm_3-\fm_2\fm_4)\sin\theta\,,
\label{eq:t3_theta_phi}
\end{align}	
\label{eq:theta_phi}
\end{subequations}
where 
\be
c_{ij}=\fkt_1^2+\fm_i^2+\fm_j^2-\fkt_1\fm_i\fm_j-4\,,\quad i,j=1,\cdots, 4\,.
\label{eq:cij}
\ee
Generically, we can solve for $(\fkt_1,\fkt_2,\fkt_3)$ as functions of $(\theta,\phi)$:
\begin{subequations}
\begin{align}
\fkt_1 =& 2\cos\theta \,,\\
\fkt_2 =& -\frac{1}{2} \csc ^2\theta \left(\cos \phi  \sqrt{c^\theta_{12} c^\theta_{34}}
+\cos \theta (\fm_1 \fm_3+\fm_2 \fm_4)-\fm_1 \fm_4-\fm_2 \fm_3\right)\,,
\label{eq:t2}\\
\fkt_3 =& 
\frac{1}{2} \csc ^2\theta  \left(\sqrt{c^\theta_{12} c^\theta_{34}} \cos (\theta -\phi )-\cos \theta  (\fm_1 \fm_4+\fm_2 \fm_3)+\fm_1 \fm_3+\fm_2 \fm_4\right)\,.
\end{align}
\label{eq:t_to_Darboux}
\end{subequations}
where 
\be
c_{ij}^\theta=\fm_i^2+\fm_j^2+2\cos\theta\,\fm_i\fm_j-4\sin^2\theta\,,\quad i,j=1,\cdots, 4\,.
\label{eq:cij_theta}
\ee
Therefore, given $(\theta,\phi)$, the solution to $\fkt_3$ is fixed from the two solutions solved from $P=0$. 
The FG coordinates $\{z_{\fp_1\fp_2}\}$ become functions of $(\theta,\phi)$, since they are uniquely determined by $(\fkt_1,\fkt_2,\fkt_3)$.
Therefore, the simplicity constraints \eqref{eq:simplicity_t123} can be converted to functions of $(\theta,\phi)$. 

We denote the Darboux coordinates satisfying the simplicity constraints to be $(\htheta,\hphi)$. Together with $\{\fm_1,\fm_2,\fm_3,\fm_4\}$, they uniquely determine the geometry of a (curved) tetrahedron on $S^3$ as follows.

Consider 4 points $\{v_i\}_{i=1}^4$ on $\SU(2)\cong S^3$ located at 
\be
v_1=\id_{\SU(2)}\,,\quad
v_2=h_1\,,\quad
v_3=h_1h_2\,,\quad
v_4=h_1h_2h_3\,.
\ee
\begin{figure}[h!]
    \centering
    \includegraphics[width=0.4\textwidth]{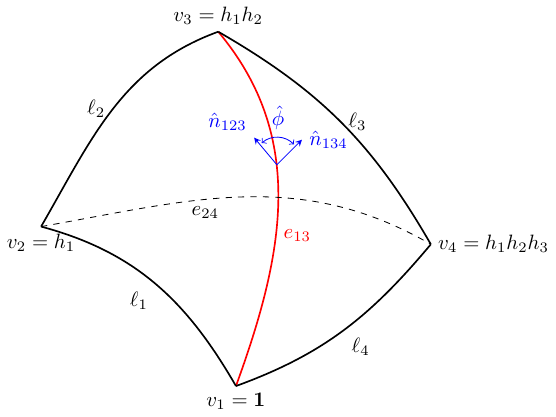}
    \caption{A 4-gon on $\SU(2)\cong S^3$ formed by geodesic curves $\{\ell_i\}_{i=1}^4$ connecting four points $v_1=\id,v_2=h_1,v_3=h_1h_2,v_4=h_1h_2h_3$ in cyclic order. The geodesic curve $e_{13}$ connecting $v_1$ and $v_3$ ({\it in red}) has length $\htheta$. Further connecting $v_2$ and $v_4$ with a geodesic curve $e_{24}$ ({\it dashed}) forms a curved tetrahedron on $S^3$ whose faces are geodesics. $\hat{n}_{123}$ and $\hat{n}_{134}$ ({\it one-way arrows in blue}) are outgoing (relative to the tetrahedron) normal vectors of the geodesic triangle $f_{123}$ bounded by $\ell_1,\ell_2,e_{13}$ and the geodesic triangle $f_{134}$ bounded by $\ell_3,\ell_4,e_{13}$ respectively. $\hphi\in[0,\pi]$ is the dihedral angle between $f_{123}$ and $f_{134}$ around $e_{13}$. }
    \label{fig:4-gon}
\end{figure}
A 4-gon is formed by 4 geodesic curves $\ell_1\equiv e_{12}\,,\,\ell_2\equiv e_{23}\,,\,\ell_3\equiv e_{34}\,,\,\ell_4\equiv e_{41}$ where $e_{ij}$ is the geodesic connecting $v_{i}$ and $v_{j}$, as shown in fig.\ref{fig:4-gon}. The geodesic length $a_i\in[0,\pi)$ of $\ell_i$ satisfies 
\be
\cos(a_i)=\fm_i/2=\cos\left(\frac{2\pi}{k}j_i\right),
\ee
for $i=1,\cdots,4$. $\htheta\in[0,\pi]$ is then the length of the diagonal geodesic curve $e_{13}$ connecting $v_1$ and $v_3$, which separates the 4-gon into two (curved) triangles $f_{123}$ bounded by $\ell_1,\ell_2,e_{13}$ and $f_{134}$ bounded by $\ell_3,\ell_4,e_{13}$. Here $j_i=0,1/2,\cdots,(k-1)/2$, but since our discussion here is semiclassical, we may extend $j_i$ to be continuous and belonging to $[0,k/2)$.

On the other hand, $\hphi\in[0,\pi]$ describes the bending angle between the two triangles. 
Adding the other diagonal geodesic curve (whose geodesic length is $\arccos(\fkt_2/2)$), one forms a curved tetrahedron in $S^3$. 
Given fixed lengths $\{a_1,a_2,a_3,a_4\}$ of the four geodesic curves of the 4-gon, $\htheta$ and $\hphi$ uniquely determine the shape of this curved tetrahedron embedded in $S^3$.
See \cite{Han:2021tzw,Nekrasov:2011bc} for more discussion.  
From their geometrical interpretations, we restrict $(\htheta,\hphi)$ to be real with the range 
\be
\max(|a_1-a_2|,|a_3-a_4|)\leq\htheta\leq\min(a_1+a_2,a_3+a_4)\,,\quad
\hphi\in[0,\pi]\,.
\label{eq:range_theta_phi}
\ee
The range of $\htheta$ corresponds to $c_{12},c_{34}<0$ by \eqref{eq:cij}, which fixes the orientation of the two geodesic triangles separated by the $e_{13}$. 
In this way, a solution to the simplicity constraints can be geometrically described by a curved tetrahedron in $S^3$. We denote the above range of $(\htheta,\hphi)$ by $\overline{\cM}_{\vec j}$. Note that the definition of $\overline{\cM}_{\vec j}$ is valid for continuous $\vec{j}\in [0,k/2)$.

Let us now consider a special limit when $a_1\to 0$ thus $\ell_1$ in fig.\ref{fig:4-gon} shrinks to vanishing. The result will be useful later in the asymptotic analysis (see Section \ref{sec:coherent_integral}). Under this limit, the triangle inequality restricts $\htheta$ to equal $a_2$, so $\overline{\cM}_{\vec j}$ becomes 1-dimensional, in which case $\hat{\phi}$ is the only degree of freedom. Therefore, we have $\fkt_1=\fm_2$ as well as $\fm_1=2$ when $a_1\to 0$. 
Inserting them in \eqref{eq:t_to_Darboux} gives the simple result $\fkt_2=\fm_4$ and $\fkt_3=\fm_3$. The result is also expected since $\fkt_2=\tr(h_{23})$ and $\fkt_3=\tr(h_{13})$ while $h_1\rightarrow\id_{\SU(2)}$ is trivial in this limit. An interesting observation is that the traces $\fkt_1,\fkt_2,\fkt_3$ are independent of $\hphi$ in this limit, and the same is true for the FG coordinates $z_{12},z_{13}$ since they are functions of $\fkt_1,\fkt_2,\fkt_3$, \ie they are constants on $\overline{\cM}_{\vec j}$. {Moreover, when $\ell_1,\ell_2,\ell_3,\ell_4$ all shrink to vanishing, $a_1,a_2,a_3,a_4\rightarrow 0$, $\fkt_1,\fkt_2,\fkt_3\rightarrow2$ and $z_{13}\rightarrow \f{z_{12}-1}{z_{12}}$.}

\medskip

As second-class constraints, we will impose them weakly by using Chern-Simons coherent states, which we define in the following. By definition, coherent states are peaked at the classical phase space points hence the labels of coherent states are given by both the position variables $\{\cX_a\}$ and the momentum variables $\{\cY_a\}$. Recall the notations
\be
\cX_a=\f{2\pi i}{k}\lb -ib\mu_a-m_a\rb\,,\quad
\cY_a=\f{2\pi i}{k}\lb -ib\nu_a-n_a\rb\,.
\ee

\noindent{\bf Chern-Simons coherent states on $\cS_a$. }
After fixing the FN coordinates $\{L_{ab}\}_{(ab)}$ to be given by the spins $\{j_{ab}\}_{(ab)}$, the Hilbert space of each 4-holed sphere $\cS_a$ is locally $\bC^2$. We also fix $\im(\mu_a)=\alpha_a$ and consider the degrees of freedom $\re(\mu_a)\in \R$ and $m_a\in \Z/k\Z$. To simplify the notation, we will denote $\re(\mu_a)$ by $\mu_a\in\R$ in the rest of this subsection. The Hilbert space for $\cS_a$ is
\be
\cH_{\cS_a}= L^2(\R)\otimes_{\bC}\bC^k\,.
\nn\ee

Firstly, the coherent state $\psi_{z_a}^0(\mu)$ on $L^2(\R)$ is defined as 
\be
\psi_{z_a}^0(\mu)= \lb\f{2}{k}\rb^{1/4} e^{-\f{\pi}{k}\lb\mu-\f{k}{\pi\sqrt{2}}\re(z_a)\rb^2}e^{-i\sqrt{2}\mu\im(z_a)}\,,
\label{eq:coherent_2}
\ee
with the over-completeness property
\be
\f{k}{2\pi^2}\int_{\bC}\rd \mathrm{Re}(z_a)\rd\mathrm{Im}(z_a)\,\psi^0_{z_a}(\mu)\bar{\psi}^0_{z_a}(\mu')=\delta_{\mu,\mu'}\,.
\label{eq:over-completeness_coherent}
\ee
The coherent state label $z_a\in\bC$ parameterizing a complex plane is related to the classical coordinates by $z_a=\f{\sqrt{2}\pi}{k}(\mu_a+i\nu_a)$. 

Secondly, the coherent state $\xi_{(x_a,y_a)}(m)$ on $\bC^k$ is labelled by $(x_a,y_a)\in [0,2\pi)\times[0,2\pi)$, which can be viewed as the angle coordinates on a torus $\bT^2$. It is defined as \cite{Gazeau:2009zz} 
\be
\xi_{(x_a,y_a)}(m)=\lb\f{2}{k}\rb^{1/4} e^{\f{ikx_ay_a}{4\pi}}\sum_{p_a\in \Z}
e^{-\f{k}{4\pi}\lb \f{2\pi m}{k}-2\pi p_a-x_a \rb^2} e^{\f{ik}{2\pi}y_a\lb \f{2\pi m}{k}-2\pi p_a-x_a\rb}\,.
\label{eq:coherent_Vk}
\ee
$x_a,y_a$ are related to the classical coordinates by $x_a=\text{mod}(\f{2\pi}{k}m_a,2\pi)\,,y_a=\text{mod}(\f{2\pi}{k}n_a,2\pi)$. The over-completeness property of $\xi_{(x_a,y_a)}(m)$ reads
\be
\f{k}{4\pi^2}\int_{\bT^2} \rd x_a\rd y_a \, \xi_{(x_a,y_a)}(m)\bar{\xi}_{(x_a,y_a)}(m')=\delta_{e^{\f{2\pi i}{k}(m-m')},1}\,.
\ee

The coherent state in $\cH_{\cS_a}$ is the tensor product of these two coherent states \footnotemark{}
\be
\Psi^0_{\rho_a}(\mu|m):=\psi^0_{z_a}\otimes \xi_{(x_a,y_a)} \in \cH_{\cS_a}\,,\quad
\rho_a=(z_a,x_a,y_a)\,,
\label{eq:coherent_for_amplitude}
\ee
with the over-completeness relation
\be
\lb\f{k}{4\pi^2}\rb^2\int_{\bC\times\bT^2}\rd \rho_a\,\Psi^0_{\rho_a}(\mu|m)\bar{\Psi}^0_{\rho_a}(\mu'|m') =\delta_{\mu,\mu'}\delta_{e^{\f{2\pi i}{k}(m-m')},1}\,,
\label{eq:coherent_over-completeness_combined}
\ee
where $\rd\rho_a:=\rd \mathrm{Re}(z_a)\rd\mathrm{Im}(z_a)\rd x_a\rd y_a$. 
{It will be convenient to define $\bar{\rho}_a:=(\bar{z}_a,x_a,-y_a)$ (still with $y\in[0,2\pi)$) then we can write $\bar{\Psi}^0_{\rho_a}(\mu|m)=\Psi^0_{\bar{\rho}_a}(\mu|m)$. }
\footnotetext{The coherent state used in \cite{Han:2021tzw} to define the vertex amplitude is a rescaled version. We change in this paper to use \eqref{eq:coherent_for_amplitude} as this does not change the finiteness of the melonic amplitude, as shown below in Section \ref{sec:melon_graph}. {Apart from that, the coherent states defined in this paper is the complex conjugate of those defined in \cite{Han:2021tzw}.}} 
It is easy to confirm that the expectation values of the operators $\bmu,\bnu,{\bf m},{\bf n}$ calculated by the coherent state $\Psi^0_{\rho_a}(\mu|m)$ are given by the coherent state labels, or the classical phase space coordinate at large $k$ limit, \ie 
\be
\la\bmu\ra\xrightarrow{k\rightarrow\infty}\mu_a\,,\quad
\la\bnu\ra\xrightarrow{k\rightarrow\infty}\nu_a\,,\quad 
\la\exp(\frac{2\pi i}{k}{\bf m})\ra\xrightarrow{k\rightarrow\infty}\exp(\frac{2\pi i}{k}m_a)\,,\quad
\la\exp(\frac{2\pi i}{k}{\bf n})\ra=n_a\,.
\label{eq:expectation}
\ee
It is only valid at the large $k$ limit since the torus part of the coherent state $\xi_{(x,y)}(m)$ is normalized only at this limit. We give a derivation for \eqref{eq:expectation} in Appendix \ref{app:expectation}.

The transformation from 
\begin{eqnarray}
    \mathcal{X}_a	=\frac{2\pi i}{k}(-ib\mu_a-m_a)\,,\quad
    \bar{\mathcal{X}}_a=\frac{2\pi i}{k}(-ib^{-1}\mu_a+m_a)\,,\quad 
\mathcal{Y}_a	=\frac{2\pi i}{k}(-ib\nu_a-n_a),\,,\quad
\bar{\mathcal{Y}}_a=\frac{2\pi i}{k}(-ib^{-1}\nu_a+n_a)
\end{eqnarray}
to $\theta,\bar{\theta},\phi,\bar{\phi}$ is canonical \cite{Nekrasov:2011bc}.  
So the following change of variables in the integral has only a constant Jacobian 
\begin{eqnarray}
\int\rd \rho_a\cdots 
&=&\frac{1}{2}\int\rd \left(\frac{2\pi}{k}\mu_a\right)\wedge\rd\left(\frac{2\pi}{k}\nu_a\right)\wedge\rd \left(\frac{2\pi}{k}m_a\right)\wedge \rd\left(\frac{2\pi}{k}n_a\right) \cdots \nonumber\\
&=&\frac{1}{2Q^2}\int\rd \cX_a\wedge\rd\bar{\cX}_a\wedge\rd \cY_a\wedge \rd\bar{\cY}_a \cdots=-\frac{1}{2Q^2}\int\rd \Omega_a\wedge \rd\bar{\Omega}_a \cdots\nonumber\\
&=&\frac{1}{2Q^2}\int\rd \theta_a\wedge\rd\bar{\theta}_a\wedge\rd \phi_a\wedge \rd\bar{\phi}_a \cdots
\end{eqnarray}
where $\Omega_a$ is the holomorphic Atiyah-Bott-Goldman symplectic from on $\cS_a$ with fixed $\{\lambda_{\mathfrak{p}}\}$, and $\cdots$ stands for $\Psi^0_{\rho_a}(\mu|m)\bar{\Psi}^0_{\rho_a}(\mu'|m')$.

The imposition of the simplicity constraints inserts the delta functions $\delta (\mathrm{Im}\theta_a)\delta (\mathrm{Im}\phi_a)$ in the above integral followed by restricting the range of $(\mathrm{Re}\theta_a,\mathrm{Re}\phi_a)$ to $\overline{\cM}_{\vec{j}}$. We denote the coherent state label satisfying the constraints by $\hrho_a$ and the corresponding coherent state by $\Psi^0_{\hrho_a}(\mu|m)$. Imposing the simplicity constraints reduces the above integral to 
\be
\int_{\overline{\cM}_{\vec{j}}}\rd\hrho_a\,\Psi^0_{\hrho_a}(\mu|m)\bar{\Psi}^0_{\hrho_a}(\mu'|m'):=\frac{1}{2Q^2}\int_{\overline{\cM}_{\vec{j}}}\rd\htheta_a\w\rd\hphi_a \,\Psi^0_{\hrho_a}(\mu|m)\bar{\Psi}^0_{\hrho_a}(\mu'|m') \,.
\ee
Since ${\overline{\cM}_{\vec{j}}}$ is compact, any integration on ${\overline{\cM}_{\vec{j}}}$ is finite as long as the integrand is bounded. This fact is important to guarantee the finiteness of the spinfoam amplitude defined below in Section \ref{sec:melon_graph}.

\subsubsection{The vertex amplitude: finiteness and semi-classical approximation}

With the second-class simplicity constraints imposed on the coherent state labels, one can define the vertex amplitude by the inner product of partition function \eqref{eq:Z_after_1st_constraints} and five coherent states \eqref{eq:coherent_for_amplitude}, each associated to one $\cS_a$. That is 
\be
\cA_v(\iota):=\la\prod\limits_{a=1}^5\bar{\Psi}^0_{\hrho_a}|\cZ_{S^3\backslash\Gamma_5}\ra
=\sum_{\{\mt_a\}\in (\Z/k\Z)^5}\int_{\R^5} \rd^5\tmu_a \,
\cZ_{S^3\backslash\Gamma_5}(\{i\alpha_{ab}\}_{(ab)}, \{\tmu_{a}+i\alpha_a\}\mid\{j_{ab}\}_{(ab)}, \{\mt_{a}\})
\prod\limits_{a=1}^5\Psi^0_{\hrho_a}(\tmu_a|\mt_a)\,,
\label{eq:vertex_amplitude}
\ee
where $\iota=(\{\alpha_{ab},j_{ab}\}_{(ab)}, \{\hrho_a\}_{a=1}^5, \{\alpha_a,\beta_a\}_{a=1}^5)$. 
Ref.\cite{Han:2021tzw} has proven that $\cA_v(\iota)$ is finite for given $\{\hat{\rho}_a\}_{a=1}^5$ with finite $\{\re(\hat{z}_a)\}_{a=1}^5$.

The large-$k$ approximation of the $\cA_v(\iota)$ reproduces the form as given in \cite{Haggard:2015yda,Haggard:2015nat},
\be 
\cA_v(\iota)\stackrel{k\rightarrow \infty}{\sim}\lb \cN_+ e^{iS^{\Lambda}_{\text{Regge}}+C} +  \cN_- e^{-iS^{\Lambda}_{\text{Regge}}-C} \rb(1+O(1/k))\,,
\label{eq:Regge_action}
\ee
where $\cN_\pm$ are factors related to the Hessian of the effective action when performing the saddle point analysis, $C$ is a geometric-independent integration constant and $S^{\Lambda}_{\text{Regge}}$ is the Regge action for a 4-simplex with constant curvature determined by the value of $\Lambda$. Explicitly,
\be
S^\Lambda_{\text{Regge }}=\frac{\Lambda k \gamma}{12 \pi}\left(\sum_{(ab)} \mathfrak{a}_{a b} \Theta_{a b}-\Lambda\left|V_4\right|\right)\,,
\label{eq:Regge_action_2}
\ee
where $\fa_{ab}$ is the area of the triangle $f_{ab}$ shared by tetrahedron $a$ and $b$ on the boundary of the 4-simplex, $\Theta_{ab}$ is the hyper-dihedral angle hinged by $f_{ab}$ and $|V_4|$ the volume of the 4-simplex. 

The finiteness of $\cA_v(\iota)$ and the appearance of the Regge action for a curved 4-simplex at the large-$k$ approximation \eqref{eq:Regge_action} renders the eligibility of the spinfoam model constructed with the vertex amplitude defined by \eqref{eq:vertex_amplitude}. 
By a valid choice of edge amplitude and face amplitude, one can define a finite amplitude for a general 4-manifold. Such a choice of edge and face amplitude was not given in the original paper \cite{Han:2021tzw}. We will give a proposal in the next section that is suitable for a simple spinfoam graph containing two spinfoam vertices and can be easily generalized to a general spinfoam graph.

\section{melon graph and spinfoam amplitude}
\label{sec:melon_graph}

We now consider the spinfoam amplitude corresponding to two 4-simplices with 4 boundary tetrahedra identified. In the dual picture, the spinfoam graph is called the ``melon graph'', which contains two spinfoam vertices, four internal spinfoam edges and two external spinfoam edges as shown in fig.\ref{fig:melon_graph}. It is the one-loop self-energy correction in the quantum field theory language. For the EPRL-FK model, it has been shown using GFT techniques that it is the most divergent part of the radiative correction of a spinfoam amplitude (at least compared to other simple enough spinfoam graphs, \eg a ``starfish graph'') \cite{Krajewski:2010yq,BenGeloun:2010qkf}. 

The way to define the spinfoam amplitude for the melon graph is similar to the way to define $\cA_v(\iota)$ reviewed in Section \ref{sec:review}. That is to first write the Chern-Simons partition function for the boundary of the manifold corresponding to the melon graph then impose the simplicity constraints (strongly for the first-class types and weakly for the second-class types). The first step is described in Section \ref{subsec:melon_partition} and the second step is sketched in Section \ref{subsec:separate_amplitudes}. The partition function for the melon graph can be separated into a pair of partition functions $\cZ_{S^3\backslash\Gamma_5}(\vec{\mu}|\vec{m})$'s for one spinfoam vertex defined in \eqref{eq:partition_S3G5} as well as some extra terms (which can be absorbed in the two vertex amplitudes), as explained in Section \ref{subsubsec:constraint_system}. The spinfoam amplitude for the melon graph is completed by adding a face amplitude for each internal spinfoam face. We write the full amplitude in \ref{subsec:full_amplitude} and prove its finiteness.

\subsection{Constraint system and the Chern-Simons partition function}
\label{subsec:melon_partition}

Denote the two three-manifolds $S^3\backslash \Gamma_5$'s as $M_+$ (containing 4-holed spheres $\cS_{1,2,3,4,5}$ on its boundary) and $M_-$ (containing 4-holed spheres $\cS_{1,2,3,4,6}$ on its boundary). They are glued through identifying $\cS_1,\cS_2,\cS_3,\cS_4$ on their boundaries and form the three-manifold $M_{+\cup -}$ whose spinfoam graph is a melon graph. See fig.\ref{fig:glue} for the GFT graph after gluing, where each blue line corresponds to an identification of holes from different spheres. 
After gluing, the connected holes become annuli or tori as boundaries of $M_{+\cup -}$. The blue lines can also be seen as the defects of an ambient 3-manifold of $M_{+\cup -}$ which possesses non-contractible cycles. 
\begin{figure}[h!]
\centering
\includegraphics[width=0.6\textwidth]{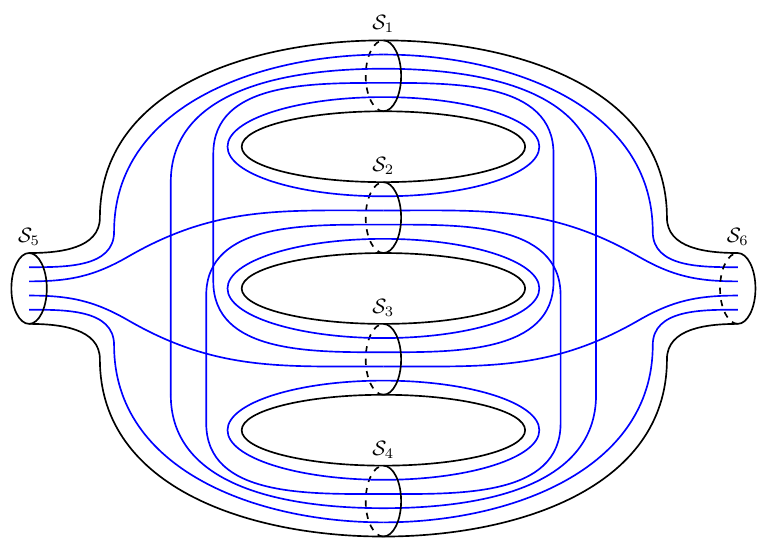}
\caption{The GFT graph denoting the manifold $M_{+\cup -}$ after gluing two spinfoam vertices corresponding to $M_+$ and $M_-$. 
The 4-holed spheres $\cS_1,\cS_2,\cS_3,\cS_4$ from $\partial M_+$ and $\partial M_-$ are identified. Each blue line relates to the identification of a pair of holes and becomes an annulus or a torus boundary of $M_{+\cup -}$. (There are no intersections among the blue lines.) $M_{+\cup -}$ is a graph complement of an ambient 3-manifold which has non-contractible cycles. } 
\label{fig:glue}
\end{figure}
The ideal triangulation of $M_{+\cup -}$ is obtained by the ideal triangulations of $M_+$ and $M_-$, which leads to 60 edges in total. On each edge, we assign an FG coordinate as we did in the previous section. To be consistent, we dress the edges on $M_+$ with FG coordinates in the same way as in the previous section (and as in \cite{Han:2021tzw}). $M_-$ and its ideal triangulation is simply given by the mirror of $M_+$ (see fig.\ref{fig:octahedra}). The (logarithmic) FG coordinates are listed in Table \ref{tab:edges}. 
Consequently, in Table \ref{tab:edges}, the relations for $M_-$ are translated from the ones for $M_+$ by changing each $i$ to $i+5$, where $i=1,\cdots,5$ labels the octahedra in $M_+$. 
\begin{figure}[h!]
\centering
\includegraphics[width=0.9\textwidth]{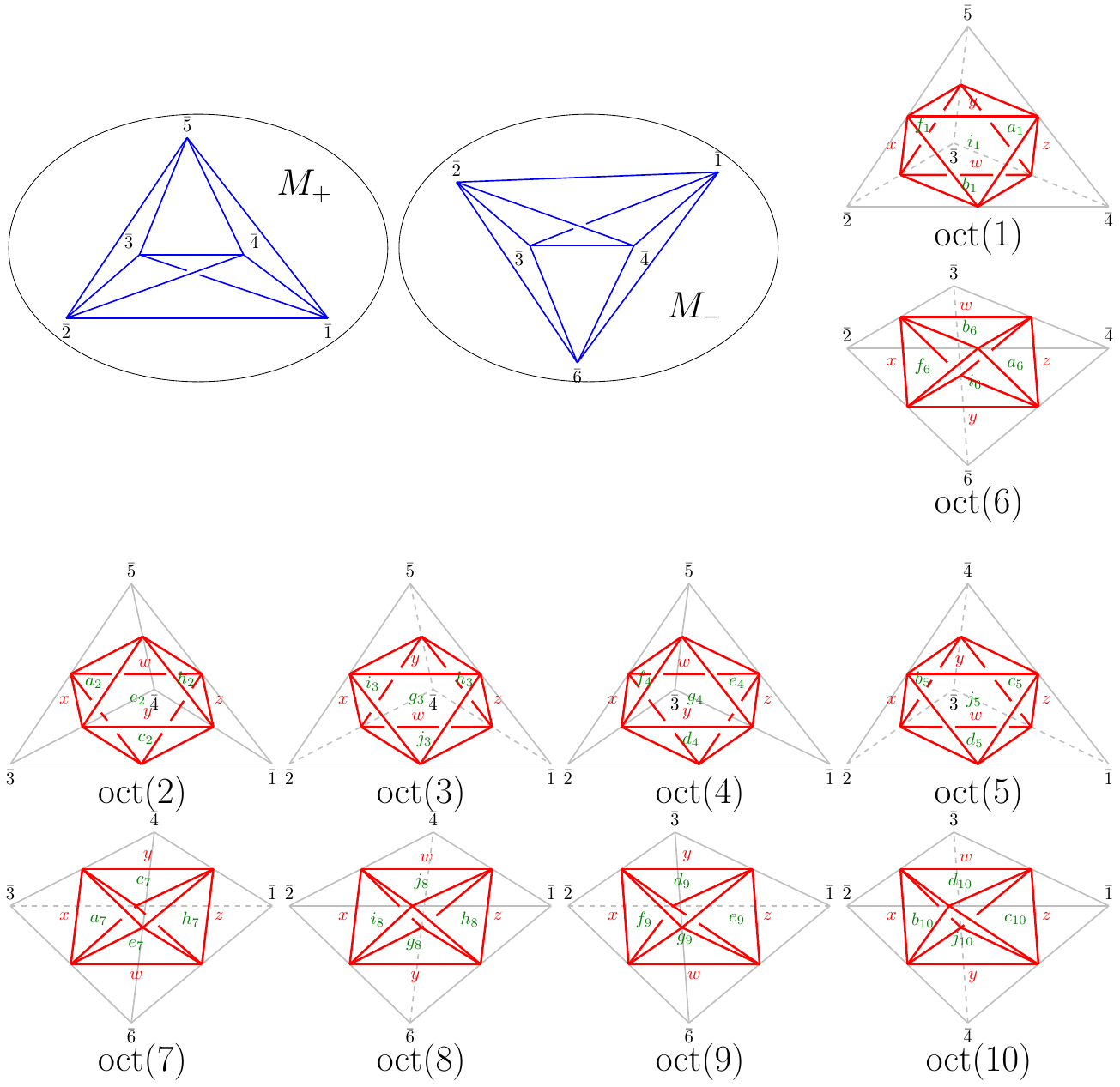}
\caption{The decomposition of $M_+$ and $M_-$ into ten octahedra ({\it in red}). The notations are the same as in fig.\ref{fig:triangulation_All}. The labels of faces ({\it in green}) in $\Oct(i+5)$ is the same as those in $\Oct(i)$ ($i=1,\cdots,5$) except for the subscripts. }
\label{fig:octahedra}
\end{figure}

\subsubsection{Gluing constraints and the Darboux coordinates}

When gluing $\cS_1,\cS_2,\cS_3,\cS_4$, we let the edges on the ideal triangulations of $M_+$ and $M_-$ dressed with the same FG coordinate be identified if they become internal edges in the gluing process. 
Indeed, if we parameterize all the edges on $M_+$ and $M_-$ in the same way, there is a twist between the Poisson brackets from the two three-manifolds due to the opposite orientations. Each edge $E$ of an ideal tetrahedron is dressed with an edge coordinate $z_E=e^{Z_E}$, as illustrated in fig.\ref{fig:ideal_tetra}. Let $Z^+_E$ be a (logarithmic) edge coordinate on one ideal tetrahedron $\triangle_+$ of $M_+$ and $Z^-_E$ be one on an ideal tetrahedron $\triangle_-$ of $M_-$. Then
\be
\{Z^+_E,Z^+_{E'}\}=\epsilon_{EE'}\delta_{\triangle_+,\triangle_+'}\,, \quad
\{Z^-_E,Z^-_{E'}\}=-\epsilon'_{EE'}\delta_{\triangle_-,\triangle_-'}\,, \quad 
\{Z^+_E,Z^-_{E'}\}=0\,,
\ee
where $\epsilon_{EE'}=0,\pm1$ counts the oriented triangles shared by $E,E'$ and $\epsilon_{EE'}=1$ if $E'$ occurs to the left of $E$ in the triangle, and $\delta_{\triangle_\pm,\triangle_\pm'}=1$ if $\triangle_\pm=\triangle_\pm'$ and $0$ otherwise.

Or equivalently, one can keep the Poisson brackets for $M_+$ and $M_-$ the same (as in \eqref{eq:ZZ'_Poisson}) but parameterize the edges differently for all ideal tetrahedra on $M_+$ and $M_-$, as shown in fig.\ref{fig:orientation}. 
This is the way we treat the two 3-manifolds in this paper. Such a parameterization has been used in fig.\ref{fig:octahedra} where edges with the same FG coordinate were glued. The algebraic curve for ideal tetrahedra on $M_+$ and $M_-$ written in terms of the edge coordinates on ideal octahedra (see fig.\ref{fig:ideal_octa}) are respectively
\be
\ba{ll}
\text{on }M_+:\quad&\fz^{-1}+\fz''-1=0\\
\text{on }M_-:\quad&\fz^{-1}+\fz'-1=0
\ea\,,\quad
\left|\ba{l}\fz=x,y,z,w\\\fz'=x',y',z',w'\\\fz''=x'',y'',z'',w''\ea\right..
\label{eq:A-polynomial}
\ee 
 It is easy to see that $\fZ'$ and $\fZ''$, which are the logarithms of $\fz'$ and $\fz''$ respectively, shift their roles on $M_+$ versus $M_-$. 
Therefore, we define the momenta on $M_+$ in terms of $\fZ''$ (see \eqref{eq:momentum}) while in terms of $\fZ'$ on $ M_-$. That is
\begin{subequations}
\begin{align}
&P_{X_i}=X_i''-W_i''\,,\quad
 P_{Y_i}=Y_i''-W_i''\,,\quad
 P_{Z_i}=Z_i''-W_i''\,,\quad
 \Gamma_i=W_i''\,,\quad 
\text{ for }\, i=1,\cdots,5\,,\\
&P_{X_j}=X_j'-W_j'\,,\quad
 P_{Y_j}=Y_j'-W_j'\,,\quad
 P_{Z_j}=Z_j'-W_j'\,,\quad
 \Gamma_j=W_j'\,,\quad 
\text{ for }\, j=6,\cdots,10\,.
\label{eq:momenta_ML_MR}
\end{align}
\end{subequations}
They are momenta conjugate to $X_i,\ Y_i,\ Z_i,\ C_i=X_i+Y_i+Z_i+W_i$, $i=1,\cdots,10$, respectively and satisfy
\be
\{X_i,P_{X_j}\}=\{Y_i,P_{Y_j}\}=\{Z_i,P_{Z_j}\}= \{C_i,\Gamma_j\}=\delta_{ij}\,.
\ee

\begin{figure}[h!]
\centering
\begin{minipage}{0.3\textwidth}
\centering
\includegraphics{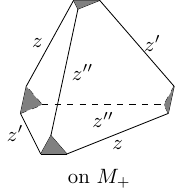}
\end{minipage}
\begin{minipage}{0.3\textwidth}
\centering
\includegraphics{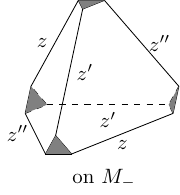}
\end{minipage}
\caption{Different parameterizations of edges of ideal tetrahedra on $M_+$ and $M_-$. }
\label{fig:orientation}
\end{figure}

The procedure of gluing triangulated 3-manifolds is a generalization of the treatment for an ideal octahedron: Every internal edge of the ideal triangulation corresponds to a (classical) constraint restricting the sum of the involved (logarithmic) FG coordinates at the edge to $2\pi i$ \cite{Dimofte:2011gm}. The gluing of $M_+$ and $M_-$ leads to 24 constraints $\{\cC_A^\chi\}_{A=1}^{24}$ in terms of the FG coordinates $\{\chi_{ij}^{(a)}\}_{a,i\neq j}$. Here, $\chi_{ij}^{(a)}$ dresses the edge of the ideal triangulation of $\cS_a$ that is shared by octahedra $\Oct(i)$ and $\Oct(j)$ (see Table \ref{tab:edges} in Appendix \ref{app:FG_FN_coordinates} for their explicit expressions in terms of the edge coordinates). They are explicitly
\be\ba{llll}
\text{on }\cS_1:  & \cC^\chi_1=\chi^{(1)}_{45} + \chi^{(1)}_{9,10} -2\pi i \,, 
		& \cC^\chi_2=\chi^{(1)}_{35} + \chi^{(1)}_{8,10} -2\pi i \,, 
		& \cC^\chi_3=\chi^{(1)}_{34} + \chi^{(1)}_{89} -2\pi i \,,\\[0.15cm]
		& \cC^\chi_4=\chi^{(1)}_{25} + \chi^{(1)}_{7,10} -2\pi i \,, 
		& \cC^\chi_5=\chi^{(1)}_{24} + \chi^{(1)}_{79} -2\pi i \,, 
		& \cC^\chi_6=\chi^{(1)}_{23} + \chi^{(1)}_{78} -2\pi i \,,\\[0.25cm]
\text{on }\cS_2:  & \cC^\chi_7=\chi^{(2)}_{45} + \chi^{(2)}_{9,10} -2\pi i \,, 
		& \cC^\chi_8=\chi^{(2)}_{35} + \chi^{(2)}_{8,10} -2\pi i \,, 
		& \cC^\chi_9=\chi^{(2)}_{34} + \chi^{(2)}_{89} -2\pi i \,,\\[0.15cm]
		& \cC^\chi_{10}=\chi^{(2)}_{15} + \chi^{(2)}_{6,10} -2\pi i \,, 
		& \cC^\chi_{11}=\chi^{(2)}_{14} + \chi^{(2)}_{69} -2\pi i \,, 
		& \cC^\chi_{12}=\chi^{(2)}_{13} + \chi^{(2)}_{68} -2\pi i \,,\\[0.25cm]
\text{on }\cS_3:  & \cC^\chi_{13}=\chi^{(3)}_{45} + \chi^{(3)}_{9,10} -2\pi i \,, 
		& \cC^\chi_{14}=\chi^{(3)}_{25} + \chi^{(3)}_{7,10} -2\pi i \,, 
		& \cC^\chi_{15}=\chi^{(3)}_{24} + \chi^{(3)}_{79} -2\pi i \,,\\[0.15cm]
		& \cC^\chi_{16}=\chi^{(3)}_{15} + \chi^{(3)}_{6,10} -2\pi i \,, 
		& \cC^\chi_{17}=\chi^{(3)}_{14} + \chi^{(3)}_{69} -2\pi i \,, 
		& \cC^\chi_{18}=\chi^{(3)}_{12} + \chi^{(3)}_{67} -2\pi i \,,\\[0.25cm]
\text{on }\cS_4:  & \cC^\chi_{19}=\chi^{(4)}_{35} + \chi^{(4)}_{8,10} -2\pi i \,, 
		& \cC^\chi_{20}=\chi^{(4)}_{25} + \chi^{(4)}_{7,10} -2\pi i \,, 
		& \cC^\chi_{21}=\chi^{(4)}_{23} + \chi^{(4)}_{78} -2\pi i \,,\\[0.15cm]
		& \cC^\chi_{22}=\chi^{(4)}_{15} + \chi^{(4)}_{6,10} -2\pi i \,, 
		& \cC^\chi_{23}=\chi^{(4)}_{13} + \chi^{(4)}_{68} -2\pi i \,, 
		& \cC^\chi_{24}=\chi^{(4)}_{12} + \chi^{(4)}_{67} -2\pi i \,.
\ea
\label{eq:constaint_chi}
\ee  
It is easy to check that there are only 18 independent first-class constraints out of these 24 constraints. The dimension of the Chern-Simons phase space $\cP_{\partial M_{+\cup -}}$ on the boundary of $M_{+\cup -}$, as the moduli space of framed flat $\PSL(2,\bC)$ connection on $\partial M_{+\cup -}$, is $60-2\times18=24$. To single out the first-class constraints, using the FN coordinates as the Darboux coordinates of $\cP_{\partial M_{+\cup -}}$ is more convenient. 

\medskip

We denote the FN coordinates on $\cP_{\partial M_+}$ (\resp $\cP_{\partial M_-}$) as $\{L_{ab}\}$ (\resp $\{L'_{ab}\}$) where $a,b$ denote the 4-holed spheres $\cS_a$ and $\cS_b$. They are indeed the linear combinations of the FG coordinates $\{\chi_{ij}^{(a)}\}_{i,j=1,i\neq j}^5$ (\resp $\{\chi_{ij}^{(a)}\}_{i,j=6,i\neq j}^{10}$). $\{L_{ab}\}$ are defined in the same way as in Section \ref{sec:review}. The definition of each $L'_{ab}$ is copied from that of $L_{ab}$ followed by shifting all the octahedron labels therein by 5, \ie $i\rightarrow i+5,j\rightarrow j+5$ (see \eqref{eq:all_FN_coordinates}).

The FN coordinates can be naturally understood as assigned on the annuli connecting holes from different spheres as they, by definition, satisfy the relations 
\be
L_{ab}=L_{ba}\,,\quad
L'_{ab}=L'_{ba}\,,\quad
\forall a,b\,.
\ee

The gluing constraints \eqref{eq:constaint_chi} can be then partially written in terms of these FN coordinates: 
\be\ba{llll}
  \cC_1=2L_{12}+2L'_{12}=0\,,
& \cC_2=2L_{13}+2L'_{13}=0\,,
& \cC_3=2L_{14}+2L'_{14}=0\,,
& \cC_4=2L_{15}+2L'_{16}=0\,,\\[0.25cm]
  \cC_5=2L_{23}+2L'_{23}=0\,,
& \cC_6=2L_{24}+2L'_{24}=0\,,
& \cC_7=2L_{25}+2L'_{26}=0\,,
& \cC_8=2L_{34}+2L'_{34}=0\,,\\[0.25cm]
  \cC_9=2L_{35}+2L'_{36}=0\,, 
& \cC_{10}=2L_{45}+2L'_{46}=0\,.
\ea
\label{eq:constraint_FN}
\ee 
Denote the Darboux coordinates on $\partial M_+$ as $(\cQ^+_i,\cP^+_i)_{i=1,\cdots,15}$ and those on $\partial M_-$ as $(\cQ^-_i,\cP^-_i)_{i=1,\cdots,15}$ where (we denote coordinates from $M_-$ with prime) 
\begin{subequations}
\begin{align}
&\cQ^+_i=\{\{2L_{ab}\}_{(ab)},\{\cX_a\}_{a=1}^5\}\,,\quad
\cP^+_i=\{\{\cT_{ab}\}_{(ab)},\{\cY_a\}_{a=1}^5\}\,,\\
&\cQ^-_i=\{\{2L'_{ab}\}_{(ab)},\{\cX'_a\}_{a=1}^5\}\,,\quad
\cP^-_i=\{\{\cT'_{ab}\}_{(ab)},\{\cY'_a\}_{a=1}^5\}
\end{align}
\label{eq:QP_epsilon_def}
\end{subequations}
 with the Poisson brackets
\be
\{\cQ^+_i,\cP^+_j\}=\{\cQ^-_i,\cP^-_j\}=\delta_{ij}\,,\quad
\{\cQ^+_i,\cQ^-_j\}=\{\cP^+_i,\cP^-_j\}=\{\cQ^+_i,\cP^-_j\}=\{\cQ^-_i,\cP^+_j\}=0\,,\quad
\forall i,j=1,\cdots,15\,.
\label{eq:Poisson-L_-}
\ee

The Darboux coordinates for the $M_{+ \cup -}$ are thus $({}^0\cQ_I,{}^0\cP_I)_{I=1,\cdots,30}$ with $^0\cQ_I=\{\cQ_i^+,\cQ_i^-\}$ and $^0\cP_I=\{\cP_i^+,\cP_i^-\}$, which span a 60-dimensional phase space. 
The explicit choices for $(\cQ^+_i,\cP^+_i)$ and $(\cQ^-_i,\cP^-_i)$ are as follows.
\begin{subequations}
\begin{align}
L_{ab} &=\{L_{12},L_{13},L_{14},L_{15},L_{23},L_{24},L_{25},L_{34},L_{35},L_{45}\}\,,\\
L'_{ab}&=\{L'_{12},L'_{13},L'_{14},L'_{16},L'_{23},L'_{24},L'_{26},L'_{34},L'_{36},L'_{46}\}\,,\\
\cX_a  &=\{\chi^{(1)}_{25},\chi^{(2)}_{15},\chi^{(3)}_{15},\chi^{(4)}_{15},\chi^{(5)}_{14}\}\,,\\
\cX'_a &=\{\chi^{(1)}_{7,10},\chi^{(2)}_{6,10},\chi^{(3)}_{6,10},\chi^{(4)}_{6,10},\chi^{(6)}_{69}\}\,,\\
\cT_{ab} &=\{\cT_{12},\cT_{13},\cT_{14},\cT_{15},\cT_{23},\cT_{24},\cT_{25},\cT_{34},\cT_{35},\cT_{45}\}\,,\\
\cT'_{ab}&=\{\cT'_{12},\cT'_{13},\cT'_{14},\cT'_{16},\cT'_{23},\cT'_{24},\cT'_{26},\cT'_{34},\cT'_{36},\cT'_{46}\}\,,\\
\cY_a  &=\{\chi^{(1)}_{23},\chi^{(2)}_{14},\chi^{(3)}_{45}-2\pi i,-\chi^{(4)}_{35}+2\pi i,\chi^{(5)}_{34}-2\pi i\}\,,\\
\cY'_a &=\{-\chi^{(1)}_{78},-\chi^{(2)}_{69},-\chi^{(3)}_{9,10}+2\pi i,\chi^{(4)}_{8,10}-2\pi i,-\chi^{(6)}_{89}+2\pi i\}\,.
\end{align}
\end{subequations}
Following \eqref{eq:Poisson-L_-}, the Darboux coordinates on $\partial M_{+ \cup -}$ also satisfy the desired Poisson brackets:
\be
\{{}^0\cP_I,{}^0\cQ_J\}=\delta_{IJ}\,,\quad
\{{}^0\cP_I,{}^0\cP_J\}=\{{}^0\cQ_I,{}^0\cQ_J\}=0\,,\quad
\forall I,J=1,\cdots,30\,.
\label{eq:Poisson-LUR}
\ee

Apart from the 10 constraints \eqref{eq:constraint_FN}, we need to define the remaining 8 independent first-class constraints. We choose them to be 
\be\ba{llll} 
 \cC_{11}=\cX_1+\cX'_1 -2\pi i \equiv \cC_4^\chi \,,
&\cC_{12}=\cX_2+\cX'_2 -2\pi i \equiv \cC_{10}^\chi \,,
&\cC_{13}=\cX_3+\cX'_3 -2\pi i \equiv \cC_{16}^\chi \,,
&\cC_{14}=\cX_4+\cX'_4 -2\pi i \equiv \cC_{22}^\chi \,,\\[0.25cm]
 \cC_{15}=\cY_1-\cY'_1 -2\pi i \equiv \cC_6^\chi \,, 
&\cC_{16}=\cY_2-\cY'_2 -2\pi i \equiv \cC_{11}^\chi \,, 
&\cC_{17}=\cY_3-\cY'_3 +2\pi i \equiv \cC_{13}^\chi\,, 
&\cC_{18}=\cY_4-\cY'_4 -2\pi i \equiv -\cC_{19}^\chi\,. 
\ea
\label{eq:constraint_XY}
\ee
The 18 constraints \eqref{eq:constraint_FN} and \eqref{eq:constraint_XY} are all independent and can be verified to be first-class, $\ie$
\be
\{\cC_A,\cC_B\}=0\quad \forall A,B=1,\cdots,18\,.
\ee

The relation between $\{\cC_A\}_{A=1}^{18}$ and the original constraints $\{\cC^\chi_K\}_{K=1}^{24}$ can be understood in the following way. If we add 6 more constraints such that $\cC_{A=19,\cdots,24}=2L_{ba}+2L_{ba}'$ with $a<b$ and $a,b=1,
\cdots,4$, there is a nondegenerate linear transformation relating $\{\cC_A\}$ and $\{\cC_K^\chi\}$. The redundancy in the set of constraints $\{\cC_K^\chi\}$ is reflected by the fact that $\{\cC_A\}_{A=19}^{24}$ are not independent of the rest of $\{\cC_A\}$, since $L_{ab}=L_{ba}$ and $L'_{ab}=L'_{ba}$ by definition. This is related to the topology of $M_{+\cup -}$. One can find one linear relation among the constraints for every torus cusp (depicted by a closed blue loop in fig.\ref{fig:glue}) and there are in total 6 of them which remove 6 constraints from $\{\cC_K^\chi\}_{K=1}^{24}$. The generality of this topological relation is argued in \cite{Dimofte:2013iv, Dimofte:2013lba}.

The reduced phase space $\cP_{\partial M_{+\cup -}}$, therefore, is the symplectic quotient of the tensor product of phase spaces from $\partial M_+$ and $\partial M_-$ by the gluing constraints, \ie $\cP_{\partial M_{+\cup -}}=\lb\cP_{\partial M_+}\otimes \cP_{\partial M_-}\rb//\{\cC_A\}_{A=1}^{18}$. 

\subsubsection{Symplectic transformation and the partition function}
\label{subsubsec:constraint_system}

After imposing the 18 first-class constraints \eqref{eq:constraint_FN} and \eqref{eq:constraint_XY}, one is left with a 24-dimensional phase space with 12 positions and 12 momenta variables. We perform a series of symplectic transformations from $(^0\cQ_I,^0\cP_I)_{I=1,\cdots,30}$ to $(\fQ_J,\fP_J)_{J=1,\cdots,30}$ parameterized as
\be
\fQ_J=\{\{2L_{ab}^{+\cup -}\}_{(ab)},\cX_5,\cX'_5,\{\cC_A\}_{A=1}^{18}\}\,,\quad
\fP_J=\{\{\cT_{ab}^{+\cup -}\}_{(ab)},\cY_5,\cY'_5,
\{\Gamma_A\}_{A=1}^{18}\}\,.
\label{eq:final_QP}
\ee 
We choose the first 10 position variables to be 
\be
2L_{ab}^{+\cup -}=\{2L_{12},2L_{13},2L_{14},2L_{15},2L_{23},2L_{24},2L_{25},2L_{34},2L_{35},2L_{45}\}\,,
\label{eq:final_Q}
\ee 
then
\be
\cT_{ab}^{+\cup -}=\{\cT_{12}-\cT'_{12},\cT_{13}-\cT'_{13},\cT_{14}-\cT'_{14},\cT_{15}-\cT'_{16},\cT_{23}-\cT'_{23},\cT_{24}-\cT'_{24},\cT_{25}-\cT'_{26},\cT_{34}-\cT'_{34},\cT_{35}-\cT'_{36},\cT_{45}-\cT'_{46}\}\,.
\label{eq:final_T}
\ee 
The explicit expressions of $\{\Gamma_A\}_{A=1}^{18}$ will calculated by the symplectic matrices (see \eqref{eq:Gamma_A}).
The transformations from $(^0\cQ_I,^0\cP_I)$ to $(\fQ_J,\fP_J)$ contains one $U$-type transformation, one {\it partial} $S$-type transformation and one affine translation illustrated as follows. 

\begin{enumerate}
\setcounter{enumi}{-1}
\item The starting point is the product of the partition functions for $M_+$ and $M_-$:
\be
\cZ_\times(\vec{\mu}|\vec{m})=\cZ_{M_+}(\vec{\mu}_+|\vec{m}_+)\cZ_{M_-}(\vec{\mu}_-|\vec{m}_-)\,,
\ee
where $\vec{\mu}=\{\vec{\mu}_+,\vec{\mu}_-\}\,,\vec{m}=\{\vec{m}_+,\vec{m}_-\}$ and the two distinct partition functions are defined in terms of {\it different} symplectic transformations
\be
\cZ_{M_\pm}(\vec{\mu}_\pm|\vec{m}_\pm)
=((\sigma_{\vec{t}_\pm}\circ S\circ T_\pm \circ U_\pm )\triangleright \cZ_\times)(\vec{\mu}_\pm|\vec{m}_\pm)\,.
\ee
The symplectic transformations are encoded in the transformation matrices $\mat{cc}{\bA_\pm & \bB_\pm \\ -(\bB_\pm^\top)^{-1} & 0}$ and the affine translation vector $\vec{t}_\pm$. 
We use the expression \eqref{eq:partition_S3G5} for both partition functions $\cZ_{M_+}$ and $\cZ_{M_-}$ since $\bA_\pm\bB^\top_\pm$ are both symmetric matrices with integer entries \footnote{There are the same number of odd elements in $\diag(\bA_\pm\bB_\pm^\top)$ and these elements are at the same locations, \ie the $1$th, $2$nd, $6$th, $8$th, $11$th, $12$th and $13$th elements, as can be checked from the explicit expressions \eqref{eq:ABt+} and \eqref{eq:ABt-} of $\bA_\pm$ and $\bB_\pm$ respectively.}.
Note that the ``$+$'' sector is the same as in Section \ref{sec:review} while the ``$-$'' sector is {\it not} due to a different choice \eqref{eq:momenta_ML_MR} of momentum variables in $\cP_{\partial\oct}$ for $\partial M_-$. See Appendix \ref{app:symplectic_tranf}
for the explicit expressions for $\bA_\pm$, $\bB_\pm$ and $\vec{t}_\pm$. 
We denote the positive angle structures for $\cZ_{M_\pm}(\vec{\mu}_\pm|\vec{m}_\pm)$ by $\fP_{M_\pm}$, then $\cZ_\times\in\cF_{\fP_{M_+}\otimes\fP_{M_-}}^{(k)}$. 

\item First, we perform a $U$-type transformation:  
\be
\mat{c}{{}^1\cQ_I\\\hline{}^1\cP_I}
=\mat{c|c}{\bfU&{\bf 0}\\\hline {\bf 0} & (\bfU^{\top})^{-1}}
\mat{c}{{}^0\cQ_I\\\hline{}^0\cP_I}
\,,\quad\text{where }
\bfU=\mat{c|c|c|c|c|c}
{{\bf 1}_{10\times 10}&{\bf 0} & {\bf 0} & {\bf 0} & {\bf 0}& {\bf 0}\\
\hline
{\bf 0} & {\bf 0} & 1 & {\bf 0} & {\bf 0} & 0\\
\hline
{\bf 0} & {\bf 0} & 0 & {\bf 0} & {\bf 0} & 1\\
\hline
{\bf 1}_{10\times 10}&{\bf 0} & {\bf 0} & {\bf 1}_{10\times 10} & {\bf 0} & {\bf 0}\\
\hline
{\bf 0} &{\bf 1}_{4\times 4} & {\bf 0} & {\bf 0} & {\bf 1}_{4\times 4} & {\bf 0}\\
\hline
{\bf 0} &{\bf 1}_{4\times 4} & {\bf 0} & {\bf 0} & {\bf 0}& {\bf 0} 
} \,.
\ee
The new positions $\{^1\cQ_I\}$ and momenta $\{^1\cP_I\}$ read
\be\ba{lccccccl}
^1\cQ_I=\{
&2L_{ab},&\cX_5,&\cX'_5,&2L_{ab}+2L'_{ab},&\{\cX_a+\cX'_a\}_{a=1}^4,&\{\cX_a\}_{a=1}^4&\}\,,\\ [0.15cm]
^1\cP_I=\{
&\cT_{ab}-\cT'_{ab},&\cY_5,&\cY'_5,&\cT'_{ab},&\{\cY'_a\}_{a=1}^4,&\{\cY_a-\cY'_a\}_{a=1}^4&\}\,.
\ea\nn\ee
Since $\det(\bfU)=1$, the amplitude is transformed to 
\be
\cZ_1(\vec{\mu}|\vec{m})=(\bfU\triangleright \cZ_\times)(\vec{\mu}|\vec{m})
=\cZ_\times (\bfU^{-1}\vec{\mu}|\bfU^{-1}\vec{m})\,.
\ee
When $(\vec{\alpha},\vec{\beta})\in\fP_{M_+}\otimes \fP_{M_-}$, $(\bfU\vec{\alpha},(\bfU^{-1})^\top\vec{\beta})\in \fP_1$ with $\fP_1=\bfU\circ \lb\fP_{M_+}\otimes \fP_{M_-}\rb$, and 
\be
e^{-\f{2\pi}{k}\vec{\mu}\cdot(\bfU^{-1})^\top\vec{\beta}}\cZ_\times(\bfU^{-1}(\vec{\mu}+i\bfU\vec{\alpha})|\vec{m})
\ee
is a Schwartz function. Therefore, $Z_1\in \cF^{(k)}_{\fP_1}$.

\item We then perform a partial $S$-type transformation on the last 4 positions of $^1\cQ_I$ and the last 4 momenta of $^1\cP_I$. That is,
\be
\mat{c}{{}^2\cQ_I\\\hline{}^2\cP_I}=\bfS\mat{c}{{}^1\cQ_I\\\hline{}^1\cP_I} \,,\quad
\text{where }
\bfS=\mat{c|c|c|c}{
{\bf 1}_{26\times 26} & {\bf 0} & {\bf 0} & {\bf 0} \\
\hline
{\bf 0} &  {\bf 0} &  {\bf 0} &  {\bf 1}_{4\times 4} \\
\hline
{\bf 0} & {\bf 0} & {\bf 1}_{26\times 26}  & {\bf 0} \\
\hline
{\bf 0} & -{\bf 1}_{4\times 4} & {\bf 0} & {\bf 0} 
}\,.
\ee
The new coordinates after this transformation are
\be\ba{lccccccl}
^2\cQ_I=\{
&2L_{ab},&\cX_5,&\cX'_5,&2L_{ab}+2L'_{ab},&\{\cX_a+\cX'_a\}_{a=1}^{4},&\{\cY_a-\cY'_a\}_{a=1}^{4}&\}\,,\\ [0.15cm]
^2\cP_I=\{
&\cT_{ab}-\cT'_{ab},&\cY_5,&\cY'_5,&\cT_{ab}',&\{\cY'_a\}_{a=1}^{4},&\{-\cX_a\}_{a=1}^{4}&\}\,.
\ea\ee
This partial $S$-type transformation corresponds to a Fourier transform on the amplitude to change the coordinates corresponding to the last four constraints $\{\cC_A\}_{A=15}^{18}$ while keeping the rest of the coordinates unchanged. Explicitly,
\be
\cZ_2(\vec{\mu}|\vec{m})=(\bfS\triangleright \cZ_1)(\vec{\mu}|\vec{m})
=\f{1}{k^4}\sum_{\vec{n}\in(\Z/k\Z)^{30}} \int_{\cC^{\times30}}\rd^{30}\vec{\nu}\,
\lb\prod_{I=1}^{26}\delta_{\mu_I,\nu_I}\delta_{m_I,n_I}\rb
e^{\f{2\pi i}{k}\sum_{J=27}^{30}(-\mu_J\nu_J+m_Jn_J)}\cZ_1(\vec{\nu}|\vec{n})\,.
\ee
Define $(\vec{\alpha}',\vec{\beta}')$ such that 
\be
\begin{cases}
	\alpha'_I=-\beta_I\,,\quad \beta'_I=\alpha_I \quad & \forall I=1,\cdots,26\\
	\alpha'_I=\alpha_I\,,\quad\beta'_I=\beta_I\quad &\forall I=27,\cdots,30
\end{cases}\,,
\label{eq:alpha_beta_prime}
\ee
and set $\vec{\alpha}=\im(\vec{\mu}),\vec{\beta}=\im(\vec{\nu})$.
Then when $(-\vec{\beta},\vec{\alpha})\in\fP_1$, or equivalently $(\vec{\alpha}',\vec{\beta}')\in \fP_2:=\bfS\circ \bfU\circ\lb\fP_{M_+}\otimes \fP_{M_-}\rb$,
\be
e^{-\f{2\pi i}{k}\sum_{I=27}^{30}\mu_I\nu_I}\cZ_1(\{\mu_I\}_{I=1}^{26},\{\nu_I\}_{I=27}^{30}|\{m_I\}_{I=1}^{26},\{n_I\}_{I=27}^{30})
\ee
is a Schwartz function in $\{\re(\nu_I)\}_{I=27}^{30}$. When $\{\im(\mu_{I})=\alpha_I\}_{I=27}^{30}$ and the integration contour $\cC^{\times30}$ is defined such that $\{\im(\nu_I)=\beta_I\}_{I=27}^{30}$, $\cZ_2$ converges absolutely hence $\cZ_2\in \cF_{\fP_2}^{(k)}$.

\item Finally, we perform an affine shift $\sigma_{\vec{t}}$ to arrive at the final coordinates $(\fQ_I,\fP_I)$ defined in \eqref{eq:final_QP}--\eqref{eq:final_T}. The symplectic transformation is
\be
\mat{c}{\fQ_I\\\hline \fP_I} 
=  \mat{c}{{}^2\cQ_I +i\pi \vec{t}\\\hline{}^2\cP_I}
\,,\label{QIPI}
\ee
where the vector $\vec{t}$ of length 30 is composed with integer elements. According to the constraints definitions \eqref{eq:constraint_FN} and \eqref{eq:constraint_XY}, there are only 8 non-zero elements in $\vec{t}$, which are (note a different sign in $t_{29}$)
\be
	t_{23}=t_{24}=t_{25}=t_{26}=t_{27}=t_{28}=-t_{29}=t_{30}=-2
	\,.\nn
\ee
Therefore, one can write down the conjugate variables $\{\Gamma_A\}$ of the constraints $\{\cC_A\}$ \eqref{eq:constraint_FN} and \eqref{eq:constraint_XY} in terms of $(^0\cQ_I,^0\cP_I)$: 
\be\ba{lllll}
\Gamma_1 = \cT'_{12}\,, 
&\Gamma_2 = \cT'_{13}\,,
&\Gamma_3 = \cT'_{14}\,,
&\Gamma_4 = \cT'_{15}\,,
&\Gamma_5 = \cT'_{23}\,,\\
\Gamma_6 = \cT'_{24}\,,
&\Gamma_7 = \cT'_{26}\,,
&\Gamma_8 = \cT'_{34}\,,
&\Gamma_9 = \cT'_{36}\,,
&\Gamma_{10} = \cT'_{46}\,,\\
\Gamma_{11} = \cY'_1\,,
&\Gamma_{12} = \cY'_2\,,
&\Gamma_{13} = \cY'_3\,,
&\Gamma_{14} = \cY'_4\,,\\
\Gamma_{15} = -\cX_1\,,
&\Gamma_{16} = -\cX_2\,,
&\Gamma_{17} = -\cX_3\,,
&\Gamma_{18} = -\cX_4\,.
\ea
\label{eq:Gamma_A}
\ee
Define a translation map to the positive angle variables
\be
\sigma'_{\vec{t}}:\fP_2\rightarrow \fP_{\text{new}}\,,\quad
\mat{c}{\vec{\alpha}\\\vec{\beta}}\mapsto 
\mat{c}{\vec{\alpha}+\f{Q}{2}\vec{t}\\\vec{\beta}}\,.
\nn\ee
The final positive angle structure is
\be
\fP_{\text{new}}=\sigma'_{\vec{t}}\circ \bfS\circ \bfU \circ(\fP_{M_+}\otimes \fP_{M_-})\,.
\nn\ee
\end{enumerate}
In order to write the final amplitude in a simple way, we pick out parts of the non-zero elements in $\vec{t}$ and define a length-30 vector $\vec{t}'$ whose only non-zero elements are $t'_{23}=t'_{24}=t'_{25}=t'_{26}=-2$.
The resulting amplitude is written as
\be\begin{split}
\cZ^0_{M_{+\cup -}}(\vec{\mu}|\vec{m})
&= (( \sigma_{\vec{t}}\circ \bfS \circ \bfU)\triangleright \cZ_\times)(\vec{\mu}|\vec{m})\\
&=\f{1}{k^4}\sum_{\vec{n}\in(\Z/k\Z)^{30}} \int_{\cC^{\times 30}}{\rd^{30}\vec{\nu}}\,
\lb\prod_{I=1}^{26}\delta_{\mu_I,\nu_I}\delta_{m_I,n_I}\rb
e^{\f{2\pi i}{k}\sum\limits_{J=27}^{30}(-\mu_J+i\f{Q}{2}t_J)\nu_J+m_Jn_J}\cZ_{\times}(\bfU^{-1}\vec{\nu}-\f{iQ}{2}\vec{t}'\mid\bfU^{-1}\vec{n})\,.
\label{eq:final_amplitude}
\end{split}\ee

Let us also write out the positive angle structure. If $(\vec{\alpha},\vec{\beta})\in \fP_{M_+}\otimes \fP_{M_-}$, then 
\be
\lb\vec{\alpha}_{\text{new}},\vec{\beta}_{\text{new}}\rb:=\lb\lb(\bfU^{-1})^\top \vec{\beta}\rb'+\f{Q}{2}\vec{t},-\lb\bfU\vec{\alpha}\rb' \rb\in \fP_{\text{new}}\,,
\label{eq:positive_angle_new12}
\ee
where the prime variables are defined in the same way as in \eqref{eq:alpha_beta_prime}. 
Therefore, when $(\vec{\alpha}_{\text{new}},\vec{\beta}_{\text{new}})\in \fP_{\text{new}}$, $Z^0_{M_{+\cup-}}\in \cF^{(k)}_{\fP_{\text{new}}}$.

The 18 constraints $\{\cC_A\}$ require that the corresponding elements in $\vec{\alpha}_{\text{new}}$ are zero. These requirements impose further constraints on the initial positive angle structures for ideal octahedra on top of \eqref{eq:positive_angle_oct}. One can show that the positive angle is still {\it non-empty} through examples, some of which are collected in Appendix \ref{app:positive_angle}. 

Let us now fix at once the notations of the parameterizations for the new symplectic coordinates $(\vec{\fQ},\vec{\fP})$. Label the constraints $\{\cC_{A}\}_{A=1}^{10}$ for the FN coordinates \eqref{eq:constraint_FN} by $\cC_{ab}:=2L_{ab}+2L'_{ab}$, and the constraints $\{\cC_{A}\}_{A=11}^{18}$ for the FG coordinates \eqref{eq:constraint_XY} by $\cC_{\cX_a}:=\cX_a+\cX'_a-2\pi i$ and $\cC_{\cY_a}:=\cY_a-\cY'_a-2\pi is_a\,,a=1\cdots,4$ where $\vec{s}=\{1,1,-1,1\}$ is a vector of signs. We parameterize
\begin{subequations}\label{conventionmunu000}
\begin{align}
&2L_{ab}=\f{2\pi i}{k}(-ib\mu_{ab}-m_{ab})\,,\quad 
2L'_{ab}=\f{2\pi i}{k}(-ib\mu'_{ab}-m'_{ab})\,,\quad\\
&\cT_{ab}=\f{2\pi i}{k}(-ib\nu_{ab}-n_{ab})\,,\quad 
\cT'_{ab}=\f{2\pi i}{k}(-ib\nu'_{ab}-n'_{ab})\,,\quad\\
&\cX_a=\f{2\pi i}{k}(-ib\mu_{a}-m_{a})\,,\quad
\cX'_a=\f{2\pi i}{k}(-ib\mu'_{a}-m'_{a})\,,\quad a=1,\cdots,5\,,
\label{eq:cX_param}\\
&\cY_a=\f{2\pi i}{k}(-ib\nu_{a}-n_{a})\,,\quad
\cY'_a=\f{2\pi i}{k}(-ib\nu'_{a}-n'_{a})\,,\quad a=1,\cdots,5\,,
\label{eq:cY_param}\\
&\cC_{ab}=\f{2\pi i}{k}(-ib\mu_{\cC_{ab}}-m_{\cC_{ab}})\,,\quad
\cC_{\cX_{a}}=\f{2\pi i}{k}(-ib\mu_{\cX_{a}}-m_{\cX_{a}})\,,\quad
\cC_{\cY_{a}}=\f{2\pi i}{k}(-ib\mu_{\cY_{a}}-m_{\cY_{a}})\,,\quad a=1,\cdots,4\,,\\
&\Gamma_{ab}=\f{2\pi i}{k}(-ib\nu_{\cC_{ab}}-n_{\cC_{ab}})\,,\quad
\Gamma_{\cX_{a}}=\f{2\pi i}{k}(-ib\nu_{\cX_{a}}-n_{\cX_{a}})\,,\quad
\Gamma_{\cY_{a}}=\f{2\pi i}{k}(-ib\nu_{\cY_{a}}-n_{\cY_{a}})\,,\quad a=1,\cdots,4\,.
\end{align}
\end{subequations}
Combine the parameters on the right-hand sides into vectors $\vec{\mu},\vec{\nu},\vec{m},\vec{n}$ with elements
\begin{subequations}
\begin{align}
&\vec{\mu}=\{\mu_{ab},\mu_5,\mu'_5,\mu_{\cC_{ab}},\mu_{\cX_a},\mu_{\cY_a}\}\,,\quad
\vec{m}=\{m_{ab},m_5,m'_5,m_{\cC_{ab}},m_{\cX_a},m_{\cY_a}\}\,,\\
&\vec{\nu}=\{\nu_{ab}-\nu'_{ab},\nu_5,\nu'_5,\nu_{\cC_{ab}},\nu_{\cX_{a}},\nu_{\cY_a}\}\,,\quad
\vec{n}=\{n_{ab}-n'_{ab},n_5,n'_5,n_{\cC_{ab}},n_{\cX_a},n_{\cY_a}\}\,.
\end{align}
\end{subequations}
Then constraints $\{\cC_A\}_{A=1}^{18}$ and their conjugate momenta $\{\Gamma_A\}_{A=1}^{18}$ give the following relations \footnote{
In general, the quantization of the constraints is implied by $e^{\chi_{I}+\chi_{I}^{\prime}}\equiv c_{I}=\exp\left[\frac{2\pi i}{k}(-ib\mu_{I}-m_{I})\right]= q$ and therefore $\mu_I= iQ$ and $m_I= 0$. In the case of $\cC_{\cX_a}$ and $\cC_{\cY_a}$, due to the addition of $\pm 2\pi i$ in the definitions \eqref{eq:constraint_XY}, it implies $q^{-1}e^{\cX_a+\cX'_a}= 1,q^{-s_a}e^{s_a(\cY_a+\cY'_a)}= 1 $ hence $\mu_{\cX_a}=\mu_{\cY_a}=0$ and $m_{\cX_a}=m_{\cY_a}=0$. In the case of $\cC_{ab}$ which involving FN coordinates, we use the relation $2L=\chi_{1}+\chi_{2}+\chi_{3}-3\pi i,\ 2L^{\prime}=\chi_{1}^{\prime}+\chi_{2}^{\prime}+\chi_{3}^{\prime}-3\pi i$ and derive the constraint for FN coordinates: $e^{2L+2L^{\prime}}=q^{-3}c_{1}c_{2}c_{3}= 1$. Then we also obtain $\mu_{\cC_{ab}}=0$ and $m_{\cC_{ab}}=0$ from $\cC_{ab}$. }
\begin{subequations}
\begin{align}
&\mu'_{ab}=\mu_{\cC_{ab}}-\mu_{ab}\,,\quad
m'_{ab}={\rm mod}(m_{\cC_{ab}}-m_{ab},k)\,,\quad
\nu_{\cC_{ab}}=\nu'_{ab}\,,\quad
n_{\cC_{ab}}=n'_{ab}\,,
\\
&\mu'_a=\mu_{\cX_a}+iQ-\mu_a\,,\quad
m'_a={\rm mod}(m_{\cX_a}-m_a,k)\,,
\quad\forall a=1,\cdots,4\,, \\
&\nu'_a=-\mu_{\cY_a}-iQs_a+\nu_a\,,\quad
n'_a= {\rm mod}(-m_{\cY_a}+n_a,k)\,,
\quad\forall a=1,\cdots,4\,,
\label{eq:Cya}\\
&\nu_{\cX_a}=\nu'_a\,,\quad
\nu_{\cY_a}=-\mu_a\,,\quad
n_{\cX_a}=n'_a\,,\quad
n_{\cY_a}=-m_a\,,\quad \forall a=1,\cdots,4\,.
\end{align}
\end{subequations}
Also denote the imaginary parts $\alpha_{ab}=\im(\mu_{ab}),\alpha'_{ab}=\im(\mu'_{ab}),\alpha_{a}=\im(\mu_{a}),\alpha'_{a}=\im(\mu'_{a})$ and $\beta_{ab}=\im(\nu_{ab}),\beta'_{ab}=\im(\nu'_{ab}),\beta_{a}=\im(\nu_{a}),\beta'_{a}=\im(\nu'_{a})$.

Apply these notations, the amplitude \eqref{eq:final_amplitude} can be written more explicitly. 
To shorten the notation, we denote 
\be\ba{ll}
\mu_+=\left\{\mu_{ab},\{\mu_a\}_{a=1}^5\right\}\,,\quad &
\mu_-=\left\{\mu_{\cC_{ab}}-\mu_{ab},\{\mu_{\cX_a}+iQ-\mu_{a}\}_{a=1}^{4},\mu'_5\right\}\,,\\[0.15cm]
m_+=\left\{m_{ab},\{m_a\}_{a=1}^5\right\}\,,\quad &
m_-=\left\{m_{\cC_{ab}}-m_{ab},\{m_{\cX_a}-m_{a}\}_{a=1}^{4},m'_5\right\}\,.
\ea\ee
Then
\be
\cZ^0_{M_{+\cup-}}(\vec{\mu}|\vec{m})
=\f{1}{k^4}\sum_{\{m_{a}\}\in(\Z/k\Z)^4} \int_{\cC^{\times4}}[\rd\mu_a]\,
e^{\f{2\pi i}{k}\lb\sum_{a=1}^4(\mu_{\cY_a}+iQs_a)\mu_{a}-m_{\cY_a} m_{a}\rb}\,
\cZ_{M_+} (\mu_+|m_+)\cZ_{M_-}(\mu_-|m_-)\,,
\label{eq:final_amplitude_2}
\ee
where $[\rd\mu_a]:=\rd\mu_1\rd\mu_2\rd\mu_3\rd\mu_4$ denotes four copies of measures for $\mu_a$. The integration contour $\mathcal{C}$ is along $\mu_a=\mathrm{Re}(\mu_a)+i\alpha_a$ with fixed $\alpha_a$.
When constraints are imposed, \ie
\be
\cC_A=0 \,,\quad \forall A=1,\cdots,18\quad \Longleftrightarrow \quad\mu_{\cC_{ab}}=\mu_{\cX_a}=\mu_{\cY_a}=0= m_{\cC_{ab}}=m_{\cX_a}=m_{\cY_a}
\,,\ee
we obtain the partition function of Chern-Simons theory on $M_{+\cup -}$:
\be
\cZ_{M_{+\cup-}}(\vec{\mu}|\vec{m})
=
\f{1}{k^4}\sum_{\{m_{a}\}\in(\Z/k\Z)^4} \int_{\cC^{\times4}}[\rd\mu_a]\,
\exp\lb-\f{2\pi Q }{k}\sum_{a=1}^4s_a\mu_{a}\rb\,
\cZ_{M_+}(\mu^{\cC}_+\mid m^{\cC}_+)\,
\cZ_{M_-}(\mu^{\cC}_-\mid m^{\cC}_-)\,,
\label{eq:final_amplitude_2}
\ee
where the following notations are used 
\be\ba{ll}
\mu^{\cC}_+=\left\{\mu_{ab},\{\mu_{a}\}_{a=1}^{4},\mu_5\right\}\,,\quad &
\mu^{\cC}_-=\left\{-\mu_{ab},\{iQ-\mu_{a}\}_{a=1}^{4},\mu'_5\right\}\,,\\[0.15cm]
m^{\cC}_+=\left\{m_{ab},\{m_{a}\}_{a=1}^{4},m_5\right\}\,,\quad &
m^{\cC}_-=\left\{-m_{ab},\{-m_{a}\}_{a=1}^{4},m'_5\right\}\,.
\ea\ee

$\cZ_{M_\pm}\in\cF^{(k)}_{\mathfrak{P}(M_{\pm})}$ implies that the following two functions
\be
f_{+}\left(\mu^{\cC}_+\mid m^{\cC}_+\right)=e^{-\frac{2\pi }{k}\sum\limits_{a=1}^4\beta_a\mu_a}\cZ_{M_+}\left(\mu^{\cC}_+\mid m^{\cC}_+\right)\,,\quad
{f_{-}\left(\mu^{\cC}_-\mid m^{\cC}_-\right)=e^{\frac{2\pi }{k}\sum\limits_{a=1}^4\beta'_a\mu_a}\cZ_{M_-}\left(\mu^{\cC}_-\mid m^{\cC}_-\right)}
\ee
are Schwartz functions on $\cC^{\times 4}$. With the following constraint on the positive angle structure resulting from $\cC_{\mathcal{Y}_a}=0$ (see \eqref{eq:Cya})
\be
\beta_a-\beta_a'=s_aQ\,, 
\ee
the partition function \eqref{eq:final_amplitude_2} can be rewritten as 
\be
\cZ_{M_{+\cup-}}(\vec{\mu}|\vec{m})=\f{1}{k^4}\sum_{\{m_{a}\}\in(\Z/k\Z)^4} \int_{\cC^{\times4}}[\rd\mu_a]\,
f_{+}\left(\mu^{\cC}_+\mid m^{\cC}_+\right)\, f_{-}\left(\mu^{\cC}_-\mid m^{\cC}_-\right)\,.
\ee
It is manifest that $\cZ_{M_{+\cup-}}(\vec{\mu}|\vec{m})$ is absolutely convergent.

\subsection{Coherent state representation}
\label{subsec:separate_amplitudes}

The amplitude \eqref{eq:final_amplitude_2} is now written in terms of coordinates shared by the two manifolds $M_+$ and $M_-$ due to the gluing constraints. We would like to separate the variables from $\cZ_{M_+}$ and $\cZ_{M_-}$ so that it is easier to relate to vertex amplitudes of spinfoam. We make use of the overcompleteness relation \eqref{eq:coherent_over-completeness_combined} of the coherent states. Then we apply the procedure as in Section \ref{subsubsec:second_simplicity} to impose the simplicity constraint to the coherent state labels. 

To shorten the notation, we denote $\re(\mu)$ simply by $\mu\in\R$ and specify its imaginary part by $\alpha=\im(\mu)$ if any in this subsection. 
For each gluing 4-holed sphere $\cS_a$, we need to use the relation \eqref{eq:coherent_over-completeness_combined}. First, we rewrite the amplitude \eqref{eq:final_amplitude_2} as (we omit here the labels not relevant to $\{\cS_a\}_{a=1}^4$ for conciseness) 
\begin{multline}
\cZ_{M_{+\cup-}}(\vec{\mu}|\vec{m})
= \f{1}{k^4}\sum_{\{m_a,m'_a\}\in(\Z/k\Z)^{8}} \int_{\R^{8}}[\rd\mu_a][\rd\mu'_a]\\
\prod_{a=1}^4 {\delta_{\mu'_a,-\mu_a}\delta_{e^{\f{2\pi i}{k}(m_a+m'_a)},1}}\,  f_{+}\left(\{\mu_{a}+i\alpha_a \mid m_{a}\}_{a=1}^{4}\right)\, 
{f_{-}\left(\{\mu'_{a}+i(Q-\alpha_a)\mid m'_{a}\}_{a=1}^{4}\right)\,.}
\end{multline}
Then we express the delta distributions by the coherent states through \eqref{eq:coherent_over-completeness_combined}
\be
{\delta_{\mu'_a,-\mu_a}\delta_{e^{\f{2\pi i}{k}(m_a+m'_a)},1}}
=\lb\f{k}{4\pi^2} \rb^2 
\int_{\bC\times\bT^2}\rd\rho_a\, \Psi^0_{\rho_a}(\mu_a|m_a)\bar{\Psi}^0_{\rho_a}(-\mu'_a|-m'_a)
{\equiv}\int_{\bC\times\bT^2}\rd\rho_a\, \Psi^0_{\rho_a}(\mu_a|m_a)\Psi^0_{\trho_a}(\mu'_a|m'_a) \,,
\label{eq:delta_span}
\ee
where $\tilde{\rho}_a=(-\bar{z}_a,-x_a,y_a)$ given $\rho_a=(z_a,x_a,y_a)$ and we have identified $\Psi^0_{\tilde{\rho}_a}(\mu|m)\equiv\bar{\Psi}^0_{\rho_a}(-\mu|-m)$. 

Lastly, take the inner product of $\cZ_{M_{+\cup-}}(\vec{\mu}|\vec{m})$ with coherent states $\Psi^0_{\eta_5}(\mu_5|m_5)$ on $\cS_5$ and  ${\Psi^0_{\tilde{\rho}_5}(\mu'_5|m'_5)}$ on $\cS_6$ that are not glued. 
The full partition function for $M_{+\cup -}$ can now be written as
\be
\cZ_{\trho_5,\eta_5}(\{\mu_{ab}+i\alpha_{ab}|m_{ab}\}_{(ab)})=\lb\f{k}{4\pi^2} \rb^8 
\int_{(\bC\times\bT^2)^{\times 4}}[\rd\rho_a]\,
\cZ_{M_+}(\vec{\rho},\eta_5)\,\cZ_{M_-}(\vec{\rho},\trho_5)\,,
\label{eq:total_amplitude}
\ee
 where $\vec{\rho}=\{\rho_a\}_{a=1}^4$ with $\rho_a=(x_a,y_a,z_a)\in \bT^2\times\bC$ and similar for $\trho_5,\eta_5$. $\cZ_{M_\pm}$ in \eqref{eq:total_amplitude} read
\begin{subequations}
\label{eq:split_amplitude}
\begin{align}
\cZ_{M_+}(\vec{\rho},\eta_5)
&=\sum_{\{m_a\}\in(\Z/k\Z)^5} \int_{\R^5}\{\rd\mu_a\}\prod_{a=1}^5\Psi^0_{\rho_a}(\mu_a|m_a)\big|_{\rho_5\to\eta_5}\,
f_{+}\left(\{\mu_{a}+i\alpha_a\}_{a=1}^4,\mu_5+i\alpha_5|\{m_a\}_{a=1}^4,m_5\right)\,,\\
\cZ_{M_-}(\vec{\rho},\trho_5)
&=\sum_{\{m'_a\}\in(\Z/k\Z)^5}\int_{\R^5}\{\rd\mu'_a\}\prod_{a=1}^5
{\Psi^0_{\trho_a}(\mu'_a|m'_a)}\, 
f_{-}(\{\mu'_a+i(Q-\alpha_a)\}_{a=1}^4,\mu_5'+i\alpha_5'|\{m'_a\}_{a=1}^4,m_5')\,,
\label{eq:edge_amplitude_def}
\end{align}
\end{subequations}
where $\{\rd\mu_a\}=\rd\mu_1\cdots\rd\mu_5$ denotes 5 copies of measure for $\mu_a$ and similarly for $\{\rd\mu'_a\}$.

\begin{lemma}\label{boundedness}

Both $|\cZ_{M_+}(\vec{\rho},\eta_5)|$ and $|\cZ_{M_-}(\vec{\rho},\trho_5)|$ are bounded from above on $(\bC\times\bT^2)^{\times 4}$ for any given boundary data $(\lambda_{ab},\eta_5,\trho_5)$.

\end{lemma}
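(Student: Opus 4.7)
The strategy is to factorize the integrand in \eqref{eq:split_amplitude} into a product of coherent-state factors that are uniformly bounded in the coherent-state labels and the functions $f_\pm$ which are absolutely integrable along the chosen contour; since the discrete sum ranges over the finite set $(\Z/k\Z)^5$, these ingredients combine to give a bound independent of $\vec{\rho}\in(\bC\times\bT^2)^{\times 4}$.

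The first step is to establish a pointwise uniform bound $|\Psi^0_{\rho_a}(\mu_a|m_a)|\le M$ for a constant $M=M(k)$ independent of $\rho_a\in \bC\times\bT^2$, $\mu_a\in \R$, and $m_a\in \Z/k\Z$. The continuous factor $\psi^0_{z_a}(\mu_a)$ defined in \eqref{eq:coherent_2} is a Gaussian in $\mu_a$ modulated by a phase of unit modulus, so $|\psi^0_{z_a}(\mu_a)|\le (2/k)^{1/4}$ uniformly, since $z_a$ enters only through a translation of the Gaussian centre that leaves the peak amplitude unchanged. For the theta-type factor $\xi_{(x_a,y_a)}(m_a)$ from \eqref{eq:coherent_Vk}, the modulus is bounded by a Jacobi theta series in the real variable $u=\tfrac{2\pi m_a}{k}-x_a$, which is continuous and $2\pi$-periodic in $u$ and hence attains a finite maximum over the compact torus $\bT^2$. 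Together these estimates give the claimed uniform bound $M$.

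The second step is to invoke $\cZ_{M_\pm}\in \cF_{\fP_{M_\pm}}^{(k)}$, recalled in Section~\ref{subsubsec:partition_S3Gamma5}, which ensures that on the contour $\im(\mu_a)=\alpha_a$ specified by a choice of positive angle $\vec{\alpha}\in \fP_{M_\pm}$, the functions $f_\pm$ defined after \eqref{eq:final_amplitude_2} are of Schwartz class in $\re(\mu_a)\in\R$ for every fixed tuple of discrete variables. Hence $\int_{\R^5}|f_\pm|\,\{d\mu_a\}$ is finite for each $\vec m\in(\Z/k\Z)^5$. Combining this with the uniform coherent-state bound applied to all five labels (the integrated $\rho_1,\ldots,\rho_4$ and the fixed boundary datum $\eta_5$ or $\trho_5$) yields
\[
|\cZ_{M_\pm}(\vec{\rho},\star)|\le M^5\sum_{\vec m\in(\Z/k\Z)^5}\int_{\R^5}|f_\pm|\,\{d\mu_a\}<\infty,
\]
where $\star$ denotes the fixed boundary datum and the right-hand side is manifestly independent of $\vec\rho$.

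The main technical input is the uniformity of the coherent-state bound, which rests on the Gaussian amplitude of $\psi^0_{z_a}$ being insensitive to $z_a$ together with the compactness of $\bT^2$ and the continuity of the theta series; beyond this observation the argument is a direct application of the absolute convergence already proved for $\cZ_{M_\pm}$ in Section~\ref{subsubsec:partition_S3Gamma5}.
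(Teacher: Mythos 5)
Your proof is correct and follows essentially the same route as the paper's: a uniform bound on the coherent states (Gaussian peak amplitude independent of $z_a$, theta-function boundedness on the compact torus) combined with the Schwartz property of $f_\pm$ and the finiteness of the sum over $(\Z/k\Z)^5$. The only cosmetic difference is that you take the sup over the coherent states and the $L^1$-norm of $f_\pm$, whereas the paper takes the sup of $f_\pm$ and integrates the Gaussians explicitly (yielding the constant $C_\pm k^{5/2}$); both give a $\vec\rho$-independent bound.
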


\begin{proof}
Recall the expression of the coherent state $\Psi^0_{\rho}(\mu\mid m)=\psi^0_{z}(\mu)\xi_{(x,y)}(m)$. $|\xi_{(x,y)}(m)|$ is bounded on $\mathbb{T}^2$ for all $m$'s, because $\xi_{(x,y)}(m)$ relates to the Jacobi theta function by {$\xi_{(x, y)}(m)={\sqrt[4]{2} }\,{k^{-3 / 4}}e^{-\frac{k y(y-i x)}{4 \pi}} \vartheta_3\left(\frac{1}{2}\left(-\frac{2 \pi m}{k}+x+i y\right), e^{-\frac{\pi}{k}}\right)$}, and $\vartheta_3\left(z, e^{-\frac{\pi}{k}}\right)$ is analytic on $\mathbb{C}$ and $x,y$ are bounded. On the other hand, $|f_\pm|$ is bounded on $\mathbb{R}^5$ for all $\vec{\mu}$, since they are Schwartz functions. Therefore,
\be
|\cZ_{M_+}(\vec{\rho},\eta_5)|\leq \sum_{\{m_a\}\in(\Z/k\Z)^5} \int_{\R^5}\{\rd\mu_a\}|f_+|\prod_{a=1}^5|\Psi^0_{\rho_a}|
\leq C_+\prod_{a=1}^5 \lb\int_{\mathbb{R}^5} \mathrm{~d} \mu_a \,e^{-\frac{\pi}{k} \sum_a\left(\mu_a-\frac{k}{\pi \sqrt{2}} \operatorname{Re}\left(z_a\right)\right)^2}\rb=C_+ k^{5/2}
\ee
for some $0<C_+<\infty$. The same argument also holds for $\cZ_{M_-}$ which leads to $|\cZ_{M_-}(\vec{\rho},\trho_5)|\leq C_-k^{5/2}$ for some $0<C_-<\infty$\,.

\end{proof}

\subsection{The face amplitude and the full amplitude for $M_{+\cup -}$}
\label{subsec:full_amplitude}

 After obtaining the Chern-Simons partition function in the coherent state representation, we are left to impose the simplicity constraints as described in Section \ref{subsec:simplicity_constraint} to define the spinfoam amplitude for $M_{+\cup -}$. That is, to impose the first-class constraints, we require $\mathrm{Re}(\mu_{ab})=0\,,\forall(ab)$ and that $m_{ab}$ depends on $j_{ab}$ in the way of \eqref{eq:m_to_j}. The second-class constraints, on the other hand, are imposed by requiring that the coherent state labels $\rho_a=(z_a,x_a,y_a)\,,\forall a=1,\cdots,5$ are parameterized by $(\htheta_a,\hphi_a)\in [0,\pi]\times [0,\pi]$ satisfying the triangle inequality \eqref{eq:range_theta_phi}. We denote these coherent state labels as $\hat{\rho}_a$. 

One last ingredient to include for completing the amplitude for $M_{+\cup -}$ is the face amplitude, since there are torus cusp boundaries in the manifold $M_{+\cup -}$ and each torus cusp corresponds to an internal face in the spinfoam melon graph. There are in total 6 torus cusps, each of which contributes a face amplitude depending on a spin (from the lesson on 3D spinfoams and the EPRL-FK model). 

Denote $\vec{j}\equiv\{j_{ab}\}=\{\vec{j}_f,\vec{j}_b\}$ with $\vec{j}_b$ being
the spins for annuli connected to the boundary \ie for $(ab)=\{(15),(25),(35),(45)\}$ and $\vec{j}_f$ for the internal tori \ie for $(ab)=\{(12),(13),(14),(23),(24),(34)\}$. 
The form of the face amplitude should relate to the boundary Hilbert space and the amplitude behaviour under the decomposition \cite{Bianchi:2010fj}. 
According to the combinatorial quantization of the Chern-Simons theory \cite{Alekseev:1994au,Alekseev:1994pa,Alekseev:1995rn}, the quantum states of Chern-Simons theory at level $k$ is described by the quantum group deformation of the gauge group. After imposing the simplicity constraints, the gauge group is reduced to $\SU(2)$ (as we impose the reality conditions on the trace coordinates). 
Therefore, we expect that the boundary states are $\fq$-deformed spin network states of the quantum group $\SU_\fq(2)$ with $\fq=e^{2\pi i/k}$ a root of unity depending on the Chern-Simons level $k$. 
We postulate
a face amplitude 
\be
\cA_f(2j_f)=[2j_f+1]_\fq^\mu \, {e^{\f{ik}{2\pi}\cF_f(-\f{2\pi i}{k}2j_f)}}
\,,\quad\mu\in \R\,,\quad
j_f=0,\f12,\cdots,\f{k-1}{2}
\label{eq:face}
\ee
with an undetermined power $\mu$, where $[n]_\fq:=\f{\fq^n-{\fq}^{-n}}{\fq-\fq^{-1}}$ is a $\fq$-number. The limit $[n]_\fq\xrightarrow{k\rightarrow\infty} n$ relates $[2j_f+1]_\fq^\mu$ to $(2j_f+1)^\mu$ used in the EPRL-FK model. $\cF_f$ is a real function that is determined in a moment. The reason of including $e^{\frac{ik}{2\pi}\cF_f}$ is that the Chern-Simons partition function is a wave function (of position variables $\fQ_I$), which is determined up to a phase. 

The full spinfoam amplitude for the melon graph then reads
\be
\cZ_{\heta_5,\htrho_5}(\vec{\alpha}|\vec{j}_b)
    =\sum_{j_f=0}^{(k-1)/2}\prod_{f=1}^6\mathcal{A}_f(2j_f)
\int_{\overline{\cM}_{\vec j}}[\rd\hat{\rho}_a ]
\cA_{v,+}(\vec{\alpha},\vec{j},\vec{\hat{\rho}},\heta_5)\,
\cA_{v,-}(\vec{\alpha},\vec{j},\vec{\hat{\rho}},\htrho_5)\,,
\label{eq:full_amplitude}
\ee
where $\vec{\alpha}=\{\{\alpha_{ab},\beta_{ab}\}_{(ab)},\{\alpha_a,\beta_a\}_{a=1}^4,\alpha_5,\alpha'_5,\beta_5,\beta'_5\}$ are all the positive angle dependence of the full amplitude. Each integral  $\int\rd\hat{\rho}_a $ is over $\overline{\cM}_{\vec j}$ satisfying the simplicity constraints on $\cS_a$. The vertex amplitude $\cA_{v,\pm}$ is obtained by restricting the variables in $\cZ_{M_\pm}$ to satisfy the simplicity constraints.

\begin{theorem}

The melonic spinfoam amplitude $\cZ_{\heta_5,\htrho_5}(\vec{\alpha}|\vec{j}_b)$ is finite for any given boundary data $\{\heta_5,\htrho_5,\vec{j}_b\}$.

\end{theorem}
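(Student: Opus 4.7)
The plan is to reduce the finiteness of $\cZ_{\heta_5,\htrho_5}(\vec{\alpha}|\vec{j}_b)$ to a combination of three facts already stated or proved in the paper: (i) the boundedness of the vertex partition functions $\cZ_{M_\pm}$ established in Lemma \ref{boundedness}, (ii) the compactness of the moduli space $\overline{\cM}_{\vec j}$ on which the coherent-state labels $\hat{\rho}_a$ are integrated, and (iii) the finiteness of both the spin sum $\sum_{j_f=0}^{(k-1)/2}$ (which is a sum over a finite set for any fixed Chern-Simons level $k$) and the face amplitude $\cA_f(2j_f)$ itself (since $[2j_f+1]_\fq$ and the phase $e^{\frac{ik}{2\pi}\cF_f}$ are manifestly finite for each admissible $j_f$).

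The first step is to show that the constrained vertex amplitudes $\cA_{v,\pm}(\vec{\alpha},\vec{j},\vec{\hat{\rho}},\heta_5)$ and $\cA_{v,\pm}(\vec{\alpha},\vec{j},\vec{\hat{\rho}},\htrho_5)$ are bounded in absolute value, uniformly in $\hat{\rho}_a\in\overline{\cM}_{\vec j}$ and in $\vec{j}$. This is done by noting that $\cA_{v,\pm}$ are precisely the restrictions of $\cZ_{M_\pm}(\vec{\rho},\eta_5)$ and $\cZ_{M_\pm}(\vec{\rho},\trho_5)$ to coherent-state labels $\hat{\rho}_a$ obeying the second-class simplicity constraints (together with the first-class conditions $\mathrm{Re}(\mu_{ab})=0$ and $m_{ab}=2j_{ab}$). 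Lemma \ref{boundedness} provides constants $C_\pm<\infty$ such that $|\cZ_{M_\pm}|\le C_\pm k^{5/2}$ for all $\rho_a\in\bC\times\bT^2$, and the constraint surface sits inside this domain, so the same bound holds for $|\cA_{v,\pm}|$.

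The second step is to observe that, after imposing the simplicity constraints, the integration region $\overline{\cM}_{\vec j}$ is a compact subset of $[0,\pi]\times[0,\pi]$ (as specified by the triangle inequalities in \eqref{eq:range_theta_phi}), and the measure $[\rd\hat{\rho}_a]=\prod_a\tfrac{1}{2Q^2}\rd\htheta_a\wedge\rd\hphi_a$ is finite on it. Combined with the uniform boundedness of $|\cA_{v,+}\cA_{v,-}|$ from the first step, one obtains a uniform bound
\[
\Bigl|\int_{\overline{\cM}_{\vec j}}[\rd\hat{\rho}_a]\,\cA_{v,+}\,\cA_{v,-}\Bigr|\le C\,\mathrm{vol}(\overline{\cM}_{\vec j})^{5}
\]
for some constant $C$ independent of $\vec{j}_f$ (and depending only on $k$ and the fixed boundary data $\heta_5,\htrho_5,\vec{j}_b,\vec{\alpha}$).

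The third step is to bound the spin sum. For each admissible internal spin $j_f\in\{0,\tfrac12,\dots,\tfrac{k-1}{2}\}$ the face amplitude $\cA_f(2j_f)=[2j_f+1]_\fq^\mu\,e^{\frac{ik}{2\pi}\cF_f(-\frac{2\pi i}{k}2j_f)}$ satisfies $|\cA_f(2j_f)|=|[2j_f+1]_\fq|^\mu$, which is finite since $\fq=e^{2\pi i/k}$ is a root of unity and $[2j_f+1]_\fq$ takes finitely many values. Multiplying the uniform bound from step two by $\prod_{f=1}^{6}|\cA_f|$ and summing over the finite set $j_f\in\{0,\dots,(k-1)/2\}^{\times 6}$ of spins produces a finite result, which establishes $|\cZ_{\heta_5,\htrho_5}(\vec{\alpha}|\vec{j}_b)|<\infty$ and completes the proof.

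I do not expect any genuine obstacle: all the analytic work has already been done in the construction of $\cZ_{M_\pm}$ as elements of $\cF^{(k)}_{\fP_{M_\pm}}$, in the absolute convergence of the Gaussian $\mu_a$-integrals (Lemma \ref{boundedness}), and in the geometric characterization of $\overline{\cM}_{\vec j}$ as compact. The only place one must be a little careful is verifying that the bound in Lemma \ref{boundedness} is indeed \emph{uniform} in $\rho_a$ on $\bC\times\bT^2$ (not merely finite pointwise); this is where the Gaussian factor $e^{-\frac{\pi}{k}(\mu_a-\frac{k}{\pi\sqrt{2}}\mathrm{Re}(z_a))^2}$ in $|\psi^0_{z_a}|$ must be bounded by $1$, after translating the integration variable $\mu_a\mapsto\mu_a+\frac{k}{\pi\sqrt{2}}\mathrm{Re}(z_a)$, before invoking the Schwartz property of $f_\pm$. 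This is a routine check and constitutes the only technical point in an otherwise compactness-based argument.
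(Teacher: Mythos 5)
Your proposal is correct and follows essentially the same route as the paper's proof: boundedness of $|\cA_{v,\pm}|$ via Lemma \ref{boundedness}, absolute convergence of the $\hat{\rho}_a$-integral by compactness of $\overline{\cM}_{\vec j}$, and finiteness of the $j_f$-sum. Your extra remark about checking that the Gaussian bound in Lemma \ref{boundedness} is uniform in $\rho_a$ is a sensible refinement but does not change the argument.
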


\begin{proof}
Both $|\cA_{v,\pm}|$ are bounded in the integration domain, since $|\cZ_{M_\pm}|$ are bounded by Lemma \ref{boundedness}. Then the integral is absolutely convergent since the domain of $\hat{\rho}_a$ is compact. Moreover, the sum over $j_f$ is a finite sum. We then conclude that $\cZ_{\heta_5,\htrho_5}(\vec{\alpha}|\vec{j}_b)$ is finite. 
\end{proof}

The sums over different $j_f$'s in \eqref{eq:full_amplitude} are independent. However, the range of $(\hat{\theta}_a,\hphi_a)_{a=1,\cdots,4}$, which has been denoted by $\overline{\cM}_{\vec j}$, is constrained by the triangle inequality \eqref{eq:range_theta_phi} (thus $\overline{\cM}_{\vec j}$ depends on both $j_f$ and the boundary data $j_b$). For certain $j_f$ in the sum, $\overline{\cM}_{\vec j}$ may become measure-zero, then the integral vanishes. For instance, it happens for $j_f$'s violating the triangle inequality or $j_f=0$ at some $f$.

\section{The large-$k$ behavior of the melonic amplitude }
\label{sec:radiative_correction}

 In this section, we use stationary phase analysis to analyze the large-$k$ (equivalently $\Lambda\to0$) behaviour of the melonic amplitude \eqref{eq:full_amplitude}. The sum over $j_f$'s is subject to the triangle inequality. Recall the relation $2j_{ab}=m_{ab}$, we have
\begin{multline}
\cZ_{\heta_5,\htrho_5}(\vec{\alpha}|\vec{m})
=\sum_{\{m_f\in \mathbb{Z}/k\mathbb{Z}\}} \prod_{f=1}^6[m_f+1]_\fq^\mu e^{\frac{ik}{2\pi}\cF_f(-\f{2\pi i}{k}m_f)} \\
\int_{\overline{\cM}_{\vec m}}\left[\rd \hrho_a\right]\,
\cA_{v,+}\left(\{m_{ab}\}_{(ab)},\{\hrho_a\}_{a=1}^4,\heta_5\right)
\cA_{v,-}\left(\{m_{ab}\}_{(ab)},\{\hrho_a\}_{a=1}^4,\htrho_5\right)
\,,
\label{eq:final_amplitude_3}
\end{multline}
where $\overline{\cM}_{\vec m}\equiv{\overline{\cM}_{\vec j}}$. The vertex amplitudes for $M_\pm$ are explicitly given by 
\begin{subequations}
\label{eq:split_amplitude}
\begin{multline}
\cA_{v,+}\left(\{m_{ab}\}_{(ab)},\{\hrho_a\}_{a=1}^4,\heta_5\right)
=\sum_{\{m_a\}\in(\Z/k\Z)^5} \int_{\R^5}\{\rd\mu_a\}\prod_{a=1}^5\Psi^0_{\hrho_a}(\mu_a|m_a)\big|_{\htrho_5\to\heta_5}\,e^{-\frac{2\pi}{k}\sum_{a=1}^4\beta_a(\mu_a+i\alpha_a)}\\
\mathcal{Z}_{M_+}\left(i\alpha_{ab},\{\mu_{a}+i\alpha_a\}_{a=1}^4,\mu_5+i\alpha_5|m_{ab},\{m_a\}_{a=1}^4,m_5\right)\,,
\end{multline}
\begin{multline}
\cA_{v,-}\left(\{m_{ab}\}_{(ab)},\{\hrho_a\}_{a=1}^4,\htrho_5\right)
=\sum_{\{m'_a\}\in(\Z/k\Z)^5}\int_{\R^5}\{\rd\mu'_a\}\prod_{a=1}^5
\Psi^0_{\htrho_a}(\mu'_a|m'_a)\, e^{-\frac{2\pi}{k}\sum_{a=1}^4\beta'_a(\mu'_a-i\alpha_a)} \\
\cZ_{M_-}\left(-i\alpha_{ab},\{\mu'_a+i(Q-\alpha_a)\}_{a=1}^4,\mu_5'+i\alpha_5'|-m_{ab},\{m'_a\}_{a=1}^4,m_5'\right)\,,
\label{eq:edge_amplitude_def}
\end{multline}
\end{subequations}

We are interested in the scaling behaviour of the amplitude \eqref{eq:final_amplitude_3} when $k\to\infty$, while the boundary data is fixed. Here, the boundary data includes $\vec{j_b}=\{j_{ab}=\frac{m_{ab}}{2}\}_{(ab)}$ for $(ab)={(15),(25),(35),(45)}$ and the coherent state labels $\hat{\eta}_5$ and  $\htrho_5$ (corresponding to $\mathcal{S}_5$ and $\mathcal{S}_6$ in fig.\ref{fig:glue}). However, the parameters $\{\mu_I,\nu_I,m_I,n_I\}$ of $\{\fQ_I,\fP_I\}$ involved in the integrals and sums all scale linearly in $k$ as can be seen from their definitions \eqref{eq:Z_to_mu_m}. This motivates us to change variables to the scale-invariant ones so that the large-$k$ approximation can be analyzed by the stationary phase method. In Section \ref{subsec:large_k}, we first make such a change of coordinates, with which we rewrite the amplitude \eqref{eq:final_amplitude_3} for the melon graph. At the large-$k$ regime, an effective action of the amplitude can be formulated. 
In Section \ref{subsec:stationary_analysis}, we apply the stationary analysis on the effective action to find the critical points which dominate the contributions to the amplitude. 
The effective action at the critical points turns out to be a pure phase as analyzed in Section \ref{subsec:critical action}.
The scaling of the amplitude in $k$ partially comes from (the determinant of) the Hessian matrix of the effective action, which we analyze in detail in Section \ref{subsec:Hessian}.

\subsection{Change coordinates and take the large $k$ approximation}
\label{subsec:large_k}

We convert the parameters $\{\mu_I,\nu_I,m_I,n_I\}_{I=1}^{12}$ into the coordinates $\{\fQ_I,\fP_I\}_{I=1}^{12}$ by the relations 
\begin{subequations}
\begin{align}
\mu_I=\f{kb}{2\pi(b^2+1)}\lb\fQ_I+\tfQ_I\rb\,,\quad &
m_I=\f{ik}{2\pi(b^2+1)}\lb\fQ_I-b^2\tfQ_I\rb\,,
\label{eq:m_mu-FN_FG}\\
\nu_I=\f{kb}{2\pi(b^2+1)}\lb\fP_I+\tfP_I\rb\,,\quad &
n_I=\f{ik}{2\pi(b^2+1)}\lb\fP_I-b^2\tfP_I\rb\,,
\label{eq:n_nu-FN_FG}
\end{align}
\label{eq:mn_munu-FN_FG}
\end{subequations}
which are the generalization of those in \eqref{eq:Z_to_mu_m} by allowing analytic continuation of $\mu_I,\nu_I$ to $\bC$ (hence $\tfQ_I$ (\resp $\tfP_I$) is {\it not} the complex conjugate of $\fQ_I$ (\resp $\fP_I$) in general). The constraints $\mu'_a+i\alpha'_a=iQ-(\mu_a+i\alpha_a)\,,m'_a=-m_a$ ($\mu_a,\mu_a'\in\mathbb{R}$) is translated to constraints on $\cX'_a$ and $\tilde{\cX}'_a$ as (recall the definitions \eqref{eq:q_qt_def} of $h$ and $\tilde{h}$)
\be
\cX'_a=-\cX_a+2\pi i\lb1+\f{b^2+1}{k}\rb\,,\quad
\tilde{\cX}'_a=-\tilde{\cX}_a+2\pi i\lb1+\f{b^{-2}+1}{k}\rb\,.
\ee

When the first-class simplicity constraints are imposed, we demand $\mathrm{Re}(\mu_{ab})=0$ for all annuli $(ab)$'s. 
Each of these constraints is translated into a constraint between the annulus variables $L_{ab}$ and $\tilde{L}_{ab}$:  
 \be
 2\tilde{L}_{ab}=-2L_{ab}+O(k^{-1})\,.
 \label{eq:Lt_from_L}
 \ee
 where $O(k^{-1})$ relates to $\alpha_{ab}$.

Recall that the boundary data $j_{ab}=\frac{m_{ab}}{2}$ are held fixed for $(ab)={(15),(25),(35),(45)}$ when we take $k\to\infty$. It implies that we scale some boundary $L_{ab}$'s to zero at the same time, \ie
\be
L_{ab}=O(k^{-1}),\quad\text{for}\ (ab)={(15),(25),(35),(45)}.\label{Labscaling}
\ee

In order to deal with $\cZ_{M_\pm}$ at large-$k$ in a uniform way, it is convenient to define the  following FN and FG coordinates 
\begin{subequations}
\begin{align}
&\vec{\fQ}_+=\{\{2L_{ab}\}_{(ab)},\{\cX_a\}_{a=1}^{5}\}\equiv\vec{\cQ}^+\,,\quad &&
\vec{\tfQ}_+=\{\{-2L_{ab}\}_{(ab)},\{\tilde{\cX}_a\}_{a=1}^{5}\}\equiv\vec{\tcQ}^+|_{2\tilde{L}_{ab}\rightarrow-2L_{ab}}\,, 
\label{eq:def_fQ_tfQ_+}\\
&{\vec{\fQ}_-=\{\{-2L_{ab}\}_{(ab)},\{\cX'_a\}_{a=1}^{5}\}\equiv \vec{\cQ}^-|_{2L'_{ab}\rightarrow-2L_{ab}}\,,}&&
{\vec{\tfQ}_-=\{\{2L_{ab}\}_{(ab)},\{\tilde{\cX}'_a\}_{a=1}^{5}\}\equiv \vec{\tcQ}^-|_{2\tilde{L}'_{ab}\rightarrow2L_{ab}}\,, }
\label{eq:def_fQ_tfQ_-}\\
&\vec{\fP}_+=\{\{\cT_{ab}\}_{(ab)},\{\cY_a\}_{a=1}^{5}\}\equiv \vec{\cP}^+\,,\quad &&
\vec{\tfP}_+=\{\{\widetilde{\cT}_{ab}\}_{(ab)},\{\tilde{\cY}_a\}_{a=1}^{5}\}\equiv \vec{\tcP}^+\,, \\
&
{\vec{\fP}_-=\{\{\cT'_{ab}\}_{(ab)},\{\cY'_a\}_{a=1}^{5}\}}\equiv \vec{\cP}^-\,,\quad &&
{\vec{\tfP}_-=\{\{\widetilde{\cT}'_{ab}\}_{(ab)},\{\tilde{\cY}_a'\}_{a=1}^{5}\} \equiv \vec{\tcP}^-\,.}
\end{align}
\label{eq:fQ-fPt_def}
\end{subequations}
One can then define the parameter vectors $\vec{\mu}_\pm,\vec{\nu}_\pm,\vec{m}_\pm$ and $\vec{n}_\pm$ of these coordinates accordingly. We will also extensively use the notation $a^\times:=a+10$ in the rest of the paper. 

 The amplitude $\cZ_{\heta_5,\htrho_5}(\vec{\alpha}|\vec{m})$ involves some sums $\sum_{n\in\mathbb{Z}/k\mathbb{Z}}\cdots $ where $n\in\{m_f,m_a,m_a',\vec{n}_\pm\}$. We need to relate the sums to integrals in order to apply the method of stationary phase. The trick is choosing a representation of the sum followed by the Poisson resummation \footnote{We use the Poisson resummation formula in \cite{BJBCrowley_1979}: $\sum_{n=0}^{k-1}f(n)
=\sum_{p\in\Z}e^{2 \pi ip(\alpha-\f12)}\int_{0}^{k}\rd n \,f(n+\alpha-\f12)\,e^{2\pi i p n}$ for any $\alpha\in\R$ satisfying $|\alpha|<\f12$. Take $\alpha=\f12-\delta$ with
$\delta>0$ being arbitrarily small. Eq.\eqref{poisum1} is obtained by a change of variable. 
{The sum of $m'_a$ after imposing the gluing constraint $m'_a=-m_a$ becomes $\sum_{m'_a=-k+1}^0$. However, $f(m'_a)$ involved in our discussion are periodic in $m'_a$ \ie $f(m'_a+k)=f(m'_a)$, hence it does no harm to shift to $\sum_{m'_a=0}^{k-1}$ then \eqref{poisum1} can still be applied.}} 
\be
\sum_{{n} \in \mathbb{Z} / k \mathbb{Z}} f({n} )=\sum_{n =0}^{k-1} f(n)=\sum_{p \in \mathbb{Z}} \int_{-\delta}^{k-\delta} \mathrm{d} n \, f({n} ) e^{2 \pi i {p} n } 
=\frac{k}{2 \pi} \sum_{p \in \mathbb{Z}} \int_{-\delta/k}^{2 \pi-\delta/k} \mathrm{d} \mathcal{J}\, f\left(\frac{k}{2\pi}\mathcal{J} \right) e^{i k {p} {\mathcal{J}} }\,,
\label{poisum1}
\ee
where $\mathcal{J}=2\pi n/k$ and $\delta>0$ is arbitrarily small. The  application of this formula to the sums of ${n}_{\pm,I}$ and $m_{\pm,a^\times}$ in $\cA_{v,\pm}$ and combining the Lebesgue measure $\rd\nu_{\pm,I}$ or $\rd\mu_{\pm,a^\times}$, we obtain for all $I=1,\cdots,15\,,a=1,\cdots,5$ that
\be
\mathrm{d} \nu_{\epsilon,I} \wedge \mathrm{~d} \mathcal{J}_{\epsilon,I} =\frac{k}{2 \pi  Q} \left(-i\, \mathrm{d} \mathfrak{P}_{\epsilon,I} \wedge \mathrm{d} \widetilde{\mathfrak{P}}_{\epsilon,I}\right) \,,\quad
\mathrm{d} \mu_{\epsilon,a^\times} \wedge \mathrm{~d} \mathcal{K}_{\epsilon,a} =\frac{k}{2 \pi  Q} \left(-i\, \mathrm{d} \mathfrak{Q}_{\epsilon,a^\times} \wedge \mathrm{d} \widetilde{\mathfrak{Q}}_{\epsilon,a^\times}\right)\,,\quad \epsilon=\pm
\,,
\ee
where $\cJ_{\epsilon,I}=2\pi n_{\epsilon,I}/k$ and $\cK_{\epsilon,a}=2\pi m_{\epsilon,a^\times}/k$. 
Similarly, the sum over $m_f$ becomes
\be
\sum_{m_f \in \mathbb{Z} / k \mathbb{Z}} \cdots =\frac{k}{2\pi}\sum_{u_f\in\mathbb{Z}}\int_{-\delta/k}^{2 \pi-\delta/k} \mathrm{d} (i\mathfrak{Q}_f)e^{ -k u_f \fQ_f}\cdots,\qquad f=1,\cdots,6
\label{eq:poisson_resum_Qf}
\ee
where $i\mathfrak{Q}_f=2\pi m_f/{k}$. This procedure makes choices of the lift from $e^{{\mathfrak{P}}_{\epsilon,I}},\,e^{\widetilde{\mathfrak{P}}_{\epsilon,I}},\,e^{{\mathfrak{Q}}_{\epsilon,a^\times}},\,e^{\widetilde{\mathfrak{Q}}_{\epsilon,a^\times}},\,e^{\mathfrak{Q}_f}$ to ${\mathfrak{P}}_{\epsilon,I},\,\widetilde{\mathfrak{P}}_{\epsilon,I},\,{\mathfrak{Q}}_{\epsilon,a^\times},\,\widetilde{\mathfrak{Q}}_{\epsilon,a^\times},\,\mathfrak{Q}_f$. The integration domain $\overline{\cM}_{\vec m}$ of $\{\htheta_a,\hphi_a\}$ is well-defined with continuous ${\vec m}$.

\subsubsection{The large-$k$ approximation of the vertex amplitudes}
\label{subsubsec:large_k_vertex}

Let us first consider the large-$k$ approximation of the vertex amplitudes $\cA_{v,\pm}$. 
We apply the result in \cite{Han:2021tzw} and write the partition functions
$\cZ_{M_\pm}$ in the form of path integrals at large $k$:
\be
\cZ_{M_\epsilon}=\cN_0\sum_{\vec{p}_\epsilon\in\Z^{15}}\int_{\cC^{\times 30}_{\fP_\epsilon\times\tfP_\epsilon}}\bigwedge_{I=1}^{15}\left(-i\, \mathrm{d} \mathfrak{P}_{\epsilon,I} \wedge \mathrm{d} \widetilde{\mathfrak{P}}_{\epsilon,I}\right)
e^{{k}S_{\vec{p}_\epsilon}(\vec{\fP}_\epsilon,\vec{\tfP}_\epsilon,\vec{\fQ}_\epsilon,\vec{\tfQ}_\epsilon)}\left[1+O(1/k)\right]\,,\quad\forall \epsilon=\pm\,,
\ee
The overall constant $\cN_0=\f{4k^{15}}{(2\pi)^{30}Q^{15}}$ and the effective action can be separated into four parts as 
\be
S_{\vec{p}_\epsilon}
=S^\epsilon_0(\vec{\fP}_\epsilon,\vec{\tfP}_\epsilon,\vec{\fQ}_\epsilon,\vec{\tfQ}_\epsilon)
+S_1^\epsilon(-\bB_\epsilon^\top\cdot \vec{\fP}_\epsilon)
+\tS^\epsilon_1(-\bB_\epsilon^\top \cdot\vec{\tfP}_\epsilon)
-\f{1}{b^2+1}\vec{p}_\epsilon\cdot(\vec{\fP}_\epsilon-b^2\vec{\tfP}_\epsilon)\,.
\label{eq:effective_action_all}
\ee
The vector $\vec{p}_\epsilon$ comes from the Poisson resummation of $\vec{n}_\epsilon$ (recall the expression \eqref{eq:partition_S3G5}).  
The first three terms in \eqref{eq:effective_action_all} are explicitly \cite{Han:2021tzw} 
\begin{subequations}
\begin{align}
S_0^\epsilon(\vec{\fP}_\epsilon,\vec{\tfP}_\epsilon,\vec{\fQ}_\epsilon,\vec{\tfQ}_\epsilon)
=&-\f{1}{k}\vec{t}_\epsilon\cdot\lb\vec{\fP}_\epsilon+\vec{\tfP}_\epsilon\rb 
-\f{i}{4\pi(b^2+1)}\left[\vec{\fP}_\epsilon\cdot\lb\bA_\epsilon\bB_\epsilon^\top\cdot\vec{\fP}_\epsilon+2\vec{\fQ}_\epsilon \rb 
+b^2\vec{\tfP}_\epsilon\cdot\lb\bA_\epsilon\bB_\epsilon^\top\cdot\vec{\tfP}_\epsilon
+2\vec{\tfQ}_\epsilon \rb\right]\nn\\
&-\f{1}{2(b^2+1)}\vec{t}_\epsilon\cdot \lb\vec{\fP}_\epsilon-b^2\vec{\tfP}_\epsilon \rb\,,
\label{eq:S0}\\
S_1^\epsilon(-\bB_\epsilon^\top \cdot\vec{\fP}_\epsilon)
=&-\f{i}{2\pi(b^2+1)}\sum_{i=1}^5
\left[\Li_2(e^{-X^\epsilon_i})+\Li_2(e^{-Y^\epsilon_i})+\Li_2(e^{-Z^\epsilon_i})+\Li_2(e^{-W^\epsilon_i}) \right]\,,
\label{eq:S1}\\
\tS^\epsilon_1(-\bB_\epsilon^\top \cdot\vec{\tfP}_\epsilon)
=&-\f{i}{2\pi(b^{-2}+1)}\sum_{i=1}^5
\left[\Li_2(e^{-\Xt^\epsilon_a})+\Li_2(e^{-\Yt^\epsilon_i})+\Li_2(e^{-\Zt^\epsilon_i})+\Li_2(e^{-\Wt^\epsilon_i}) \right]\,,
\label{eq:S1t}
\end{align}
\label{eq:S0-S1t}
\end{subequations}
where $-\bB_\epsilon^\top\cdot \vec{\fP}_\epsilon=(X^\epsilon_i,Y^\epsilon_i,Z^\epsilon_i)_{i=1}^5$ with notations $(X^+_i,Y^+_i,Z^+_i)\equiv(X_i,Y_i,Z_i)$ and $(X^-_i,Y^-_i,Z^-_i)\equiv(X_{i+5},Y_{i+5},Z_{i+5})$. $i$ here denotes the octahedron $\Oct(i)$. Similarly for the tilde sectors. 
$\Li_2$ appearing in \eqref{eq:S1} and \eqref{eq:S1t} is the dilogarithm function defined as 
\be
\Li_2(z):=-\int_0^z\f{\ln(1-u)}{u}\rd u
\ee
 for $z\in\bC$.
$\bB^\top_\epsilon$ transforms the momenta $\vec{\fP}_\epsilon$ and $\vec{\tfP}_\epsilon$ on the 3-manifold $M_\epsilon$ to position variables on the octahedra $\{\Oct(i)\}$ of $M_\epsilon$. This is the reversed version of the coordinate transformation \eqref{eq:change_coordinate}. 
$W^\epsilon_i$ and $\Wt^\epsilon_i$ are obtained from the constraints on an octahedron:
\be
X^\epsilon_i+Y^\epsilon_i+Z^\epsilon_i+W^\epsilon_i=2\pi i +\f{2\pi i}{k}(b^2+1)\,,\quad
\Xt^\epsilon_i+\Yt^\epsilon_i+\Zt^\epsilon_i+\Wt^\epsilon_i=2\pi i +\f{2\pi i}{k}(b^{-2}+1)\,.
\label{eq:Qconstraint_XYZW}
\ee
The effective actions $S_1^\epsilon$ and $\tS_1^\epsilon$ as in \eqref{eq:S1} -- \eqref{eq:S1t} are obtained by taking the large-$k$ approximation of all the quantum dilogarithm functions within $\cZ_{M_\epsilon}$. As an example,
\be
\Psi_\triangle(\mu_{X_i}|m_{X_i})
=\exp\left[-\f{ik}{2\pi(b^2+1)}\Li_2(e^{-X_i})-\f{ik}{2\pi(b^{-2}+1)}\Li_2(e^{-\Xt_i})\right][1+O(1/k)]\,.
\ee

Let us consider now the inner product of $\cZ_{M_+}$ (\resp$\cZ_{M_-}$) with coherent states. 
Firstly, we use the change of variables \eqref{eq:m_mu-FN_FG} to express the coherent states $\Psi^0_{\pm,a}:=\Psi^0_{\hrho_{\pm,a}}(\fQ_{\pm,a^\times}, \tfQ_{\pm,a^\times})$ 
into functions of the new variables, where
\be\ba{c}
\begin{cases}
\hrho_{+,a}\equiv(\zh_{+,a},\xh_{+,a},\yh_{+,a}):=(\zh_a,\xh_a,\yh_a)\\
\hrho_{-,a}\equiv (\zh_{-,a},\xh_{-,a},\yh_{-,a}):=(-\hat{\bar{z}}_a,-\xh_a,\yh_a)
\end{cases}
\quad \text{for }a=1,\cdots,5\,,\\[0.15cm]
\hrho_{+,5}\equiv(\zh_{+,5},\xh_{+,5},\yh_{+,5}):=(\zh_5,\xh_5,\yh_5)\,,\quad
\hrho_{-,5}\equiv(\zh_{-,5},\xh_{-,5},\yh_{-,5}):=(-\hat{\bar{z}}'_5,-\xh'_5,\yh'_5)\,.
\ea
\label{eq:rho_notation}
\ee
We then perform the Poisson resummation for the sums over $m_{+,a}:= m_a$ and $m_{-,a}:= m'_a$ in the inner product. We also denote $\mu_{+,a}:=\mu_a\,,\,\mu_{-,a}:=\mu'_a$ for $a=1,\cdots,5$.

As a result, the inner product takes the form (we omit the factor $e^{-\f{2\pi}{k}\sum_{a=1}^4\beta_{\epsilon,a}\mu_{\epsilon,a}}$ as it is subleading at large $k$) 
\be\begin{split}
&
\cA_{v,\epsilon}(\vec{\mu}_\epsilon,\vec{m}_\epsilon,\vec{\hrho}_\epsilon)
=\sum_{\{m_{\epsilon,a}\}\in (\Z/k\Z)^5}\int_{\R^5}\{\rd\mu_{\epsilon,a}\}\,
{\cZ_{M_\epsilon}\lb\mu_{\epsilon,a}| m_{\epsilon,a}\rb}\lb\prod\limits_{a=1}^5\Psi^0_{\epsilon,a}(\mu_{\epsilon,a}|m_{\epsilon,a})\rb\\
=&\cN_1\sum_{\vec{p}_{\epsilon}\in\Z^{15}}\sum_{\vec{u}_\epsilon\in\Z^5}
\int_{\cC^{\times40}_{M_\epsilon}}\rd M_\epsilon
\,
\exp\left[ k
S^{\cohe}_{\vec{p}_\epsilon,\vec{u}_\epsilon,\vec{\hrho}_\epsilon}
(\vec{\fP}_\epsilon,\vec{\tfP}_\epsilon,\vec{\fQ}_\epsilon,\vec{\tfQ}_\epsilon)
\right]
\,,
\end{split}
\label{eq:Ztot_cohe}
\ee
where $\cN_1=\cN_0\lb\f{k^2}{4\pi^2Q}\sqrt{\f{2}{k}}\rb^5=\f{16\sqrt{2}}{(2\pi)^{40}Q^{20}}k^{45/2}$ and
\be
\int_{\cC^{\times40}_{M_\epsilon}}\rd M_\epsilon=
\int_{\cC^{\times 30}_{\fP_\epsilon\times\tfP_\epsilon}}
\bigwedge_{I=1}^{15}\left(-i\, \mathrm{d} \mathfrak{P}_{\epsilon,I} \wedge \mathrm{d} \widetilde{\mathfrak{P}}_{\epsilon,I}\right)
\int_{\cC^{\times10}_{\fQ_{\epsilon,a^\times}\times\tfQ_{\epsilon,a^\times}}}
\bigwedge_{a=1}^{5}\lb-i\, \mathrm{d} \mathfrak{Q}_{\epsilon,a^\times} \wedge \mathrm{d} \widetilde{\mathfrak{Q}}_{\epsilon,a^\times}\rb\,.
\label{eq:dM}
\ee
The effective action in the exponent is 
\be\begin{split}
&S^\cohe_{\vec{p}_\epsilon,\vec{u}_\epsilon,\vec{\hrho}_\epsilon}(\vec{\fP}_\epsilon,\vec{\tfP}_\epsilon,\vec{\fQ}_\epsilon,\vec{\tfQ}_\epsilon)\\
=&S_{\vec{p}_\epsilon}(\vec{\fP}_\epsilon,\vec{\tfP}_\epsilon,\vec{\fQ}_\epsilon,\vec{\tfQ}_\epsilon)
+\sum\limits_{a=1}^5
\left[S_{\zh_{\epsilon,a}}(\fQ_{\epsilon,a^{\times}},\tfQ_{\epsilon,a^{\times}})
+S_{(\xh_{\epsilon,a},\yh_{\epsilon,a})}(\fQ_{\epsilon,a^{\times}},\tfQ_{\epsilon,a^{\times}})
-\f{1}{b^2+1}u_{\epsilon,a}\lb\fQ_{\epsilon,a^{\times}}-b^2\tfQ_{\epsilon,a^{\times}}\rb\right]\,,
\end{split}
\label{eq:vertex_action}
\ee 
where $u_{\epsilon,a}$ in the last term comes from the Poisson resummation of $m_{\epsilon,a}$. 
Then the first two effective actions in the summation $\sum_{a=1}^5$ of \eqref{eq:vertex_action} takes the form 
\begin{subequations}
\begin{align}
&S_{\zh_{\epsilon,a}}(\fQ_{\epsilon,a^{\times}},\tfQ_{\epsilon,a^{\times}})=
-\f{ b}{2\pi(b^2+1)}\lb\fQ_{\epsilon,a^{\times}}+\tfQ_{\epsilon,a^{\times}}\rb
\left[\f{b\lb\fQ_{\epsilon,a^{\times}}+\tfQ_{\epsilon,a^{\times}}\rb}{2(b^2+1)}-\sqrt{2}\hat{\bar{z}}_{\epsilon,a}\right]-\f{1}{2\pi}\re(\zh_{\epsilon,a})^2\,,
\\
&S_{(\xh_{\epsilon,a},\yh_{\epsilon,a})}(\fQ_{\epsilon,a^{\times}},\tfQ_{\epsilon,a^{\times}})
=- \f{i\xh_{\epsilon,a}\yh_{\epsilon,a}}{4\pi}-\f{1}{4\pi}\left[\f{i\lb\fQ_{\epsilon,a^{\times}}-b^2\tfQ_{\epsilon,a^{\times}}\rb}{b^2+1}-\xh_{\epsilon,a}\right]^2-\f{1}{2\pi}\f{\lb\fQ_{\epsilon,a^{\times}}-b^2\tfQ_{\epsilon,a^{\times}}\rb  \yh_{\epsilon,a}}{b^2+1} 
\,.
\label{eq:coherent_QQt_1}
\end{align}
\label{eq:coherent_QQt}
\end{subequations}
The expression \eqref{eq:coherent_QQt_1} comes from the simplified version of $\xi_{(\xh_{\epsilon,a},\yh_{\epsilon,a})}$ when restricting $m_{\epsilon,a}=0,\cdots,k-1\,, (\epsilon\xh_{\epsilon,a}, \yh_{\epsilon,a})\in[0,2\pi]$ and neglecting the exponentially decaying contribution at large $k$. 
We keep in mind that, when converting the variables $\{\fQ_{\epsilon,a^\times},\tfQ_{\epsilon,a^\times}\}$ in \eqref{eq:coherent_QQt} back to $\mu_{\epsilon,a}$ using \eqref{eq:m_mu-FN_FG}, we should replace $\mu_{\epsilon,a}$ by $\re(\mu_{\epsilon,a})$ as these actions are form the coherent state $\Psi_{\hrho_{\epsilon,a}}^0(\re(\mu_{\epsilon,a})| m_{\epsilon,a})$. (See \eqref{eq:eom_pos_cohe} below.) 
The same expressions as in \eqref{eq:Ztot_cohe} -- \eqref{eq:coherent_QQt} have been obtained in \cite{Han:2021tzw} when considering one 3-manifold $S^3\backslash\Gamma_5$, to which we refer for detailed derivation.

\subsubsection{The full amplitudes}
\label{subsubsec:large_k_full}

Lastly, we take into account the Poisson resummation \eqref{eq:poisson_resum_Qf} for $\sum_{m_{f}}$ and change the variables from $m_{f}$ to $\fQ_f\in\{2L_{12},2L_{13},2L_{14},2L_{23},2L_{24},2L_{34}\}$. In the large-$k$ regime, the $\fq$-number $[m_f+1]_\fq$ is approximated by the integer $m_f+1$. With the simplicity constraints \eqref{eq:Lt_from_L} imposed, they are related by 
\be
m_{f}=\f{ik}{2\pi}\fQ_f\,,\quad\forall f=1,\cdots,6
\quad\Longrightarrow\quad
m_f+1=\f{ik}{2\pi}\lb\fQ_f +\f{2\pi}{ik}\rb\overset{k\rightarrow\infty}{\sim}\f{ik}{2\pi}\fQ_f \,.
\label{eq:constraint_fQ}
\ee
Since the discussion is in the large-$k$ limit, we often identify $2 L_{ab}\in i\mathbb{R}$ and do not distinguish it with $\fQ_f$.

The total amplitude can be written as 
\be
\cZ_{\heta_5,\htrho_5}(\vec{\mu}|\vec{m})
=\cN
\sum_{\substack{\vec{u}_f\in\Z^6\\\vec{u}_{\pm}\in\Z^5\\\vec{p}_\pm\in\Z^{15}}}
\int\limits_{\cC^{\times86}_{\fQ\times\fP}}
\rd \cM_{\fQ\times\fP}\,
\int_{\overline{\cM}_{\vec m}}[\rd \hrho_a]
\lb\prod_{f=1}^6(i\fQ_f)^\mu\rb e^{k S_{\tot}(\vec{\fQ}_\tot,\vec{\fP}_\tot,\vec{\tfQ}_\tot,\vec{\tfP}_\tot)}[1+O(k^{-1})]\,,
\label{eq:total_amplitude_action}
\ee
where $\cC^{\times86}_{\fQ\times\fP}$ is the integration contour for all the $86=30\times2+10\times 2+6$ momentum and position integration variables (recall \eqref{eq:dM}):
\be
\rd \cM_{\fQ\times\fP}:=
\rd M_+\wedge
\rd M_-\wedge
\lb\bigwedge_{f=1}^6\rd \lb i\fQ_f\rb\rb\,.
\ee
The prefactor $\cN$ has three sources and reads 
\be
\cN
=\lb\f{16\sqrt{2} k^{45/2}}{(2\pi)^{40}Q^{20}}\rb^2 \times\lb\f{k^2}{(2\pi)^4}\rb^4
\times \lb\f{k}{2\pi}\rb^{6+6\mu}
=\f{512\, k^{59+6\mu}}{(2\pi)^{102+6\mu}Q^{40}}\,,
\label{eq:cN}
\ee
where the first term comes from the two vertex amplitudes $\cA_{v,\pm}$ (see \eqref{eq:Ztot_cohe}), the second term comes from the delta distributions for the four glued 4-holed spheres (see \eqref{eq:delta_span}) and the last term comes from the six face amplitudes and the change of integration variables $\rd m_f=\f{k}{2\pi} \rd \lb i\fQ_f\rb$. The total effective action $S_\tot$ is a function of 30 position variables $(\vec{\fQ}_\tot,\vec{\tfQ}_\tot)$, 60 momentum variables $(\vec{\fP}_\tot,\vec{\tfP}_\tot)$ and six sets of coherent state labels $\{\{\hrho_a\}_{a=1}^4
,\heta_5,\htrho_5\}$ where (recall \eqref{eq:fQ-fPt_def})
\be
\vec{\fQ}_\tot=\{\vec{\fQ}_+,\vec{\fQ}_-\}\,,\quad
\vec{\tfQ}_\tot=\{\vec{\tfQ}_+,\vec{\tfQ}_-\}\,,\quad
\vec{\fP}_\tot=\{\vec{\fP}_+,\vec{\fP}_-\}\,,\quad
\vec{\tfP}_\tot=\{\vec{\tfP}_+,\vec{\tfP}_-\}\quad
\text{requiring }2\tilde{L}_{ab}\equiv -2L_{ab}\,,\,\, \forall(ab)\,.
\label{eq:all_fQ_fP}
\ee
 
The total action is 
\be
S_\tot(\vec{\fQ}_\tot,\vec{\fP}_\tot,\vec{\tfQ}_\tot,\vec{\tfP}_\tot)
=\sum_{\epsilon=\pm} S^\cohe_{\vec{p}_\epsilon,\vec{u}_\epsilon,\vec{\hrho}_\epsilon}(\vec{\fP}_\epsilon,\vec{\tfP}_\epsilon,\vec{\fQ}_\epsilon,\vec{\tfQ}_\epsilon)
+ \sum_{f=1}^6\lb \f{i}{2\pi}\cF_f(2L_f)- u_f\fQ_f\rb\,,
\label{eq:total_action}
\ee
where $S^\cohe_{\vec{p}_\epsilon,\vec{u}_\epsilon,\vec{\hrho}_\epsilon}(\vec{\fP}_\epsilon,\vec{\tfP}_\epsilon,\vec{\fQ}_\epsilon,\vec{\tfQ}_\epsilon)$ is defined in \eqref{eq:vertex_action}
and $\{u_f\}_{f=1}^6\in\Z^6$ come from the Poisson resummations of $\{m_f\}_{f=1}^6$. 

\medskip

Note that although the leading order behavior of $S_{\rm tot}$ is linear in $k$, it does not scale uniformly as $k\to \infty$, because \eqref{Labscaling} and the term $-\frac{1}{2} \vec{t}_\epsilon \cdot(\vec{\mathfrak{P}}_\epsilon+\vec{\widetilde{\mathfrak{P}}}_\epsilon)$ in $S_0^\epsilon$ result in some terms in $S_{\rm tot}$ not scaling in $k$. In performing the stationary phase analysis, one may firstly extract the terms in $S_{\rm tot}$ that is linear in $k$, denoted by $S'_{\rm tot}$ and derive the critical equation $\partial S'_{\rm tot}=0$, whose solutions denoted by $x_c'$ make dominant contribution to the integral \eqref{eq:total_amplitude_action}. However, we can also use $S_{\rm tot}$ and the express \eqref{eq:total_amplitude_action} for the stationary phase analysis.  The critical equation $\partial S_{\rm tot}=0$ will contain some terms of $O(k^{-1})$. The solution to $\partial S_{\rm tot}=0$ is denoted by $x_c$. The difference between $x_c$ and $x_c'$ is of $O(k^{-1})$. Therefore, the dominant contributions of \eqref{eq:total_amplitude_action} computed respectively from $x_c$ and $x_c'$ are different only by some subleading contributions of $O(k^{-1})$, which does not affect our discussion since we focus on the leading asymptotic behaviour.

\subsection{Stationary phase analysis of the effective action}
\label{subsec:stationary_analysis}

Now that we have written the total amplitude $\cZ_{\heta_5,\htrho_5}$ for $M_{+\cup-}$ in terms of the scale-invariant variables $\{\vec{\fP}_\epsilon,\vec{\tfP}_\epsilon,\vec{\fQ}_\epsilon,\vec{\tfQ}_\epsilon\}$, stationary analysis can be performed on the effective action \eqref{eq:total_action}. 
Denote for short $S^{\cohe}_\epsilon=S^\cohe_{\vec{p}_\epsilon,\vec{u}_\epsilon,\vec{\hrho}_\epsilon}$ and $S_{0,f}=S_0^++S_0^-{+\f{i}{2\pi}\cF_f(2L_f)}-u_f\fQ_f$.  
Notice that the dependence of $S_\tot$ on $(\vec{\fP}_{\epsilon},\vec{\tfP}_\epsilon)$ is all in $S_{\vec{p}_\epsilon}$ defined in \eqref{eq:effective_action_all}, 
the dependence on $\fQ_f$ is in $S_{0,f}$ and
the dependence on $(\fQ_{\epsilon,a^\times},\tfQ_{\epsilon,a^\times})$ is in $S^\cohe_\epsilon$,
one can simplify the critical equations to be
\begin{subequations}
\begin{align}
\f{\partial S_{\vec{p}_+}}{\partial \fP_{+,I}}
=\f{\partial S_{\vec{p}_-}}{\partial \fP_{-,I}}
=\f{\partial S_{\vec{p}_+}}{\partial \tfP_{+,I}}
=\f{\partial S_{\vec{p}_-}}{\partial \tfP_{-,I}}
&=0\,,\quad \forall I=1,\cdots,15\,,
\label{eq:critical_equations_1}\\
\f{\partial S_{0,f}}{\partial \fQ_{f}}
&=0\,,\quad \forall f=1,\cdots,6\,,
\label{eq:critical_equations_2}\\
\f{\partial S^{\cohe}_+}{\partial \fQ_{+,a^\times}}
=\f{\partial S^{\cohe}_-}{\partial \fQ_{-,a^\times}}
=\f{\partial S^{\cohe}_+}{\partial \tfQ_{+,a^\times}}
=\f{\partial S^{\cohe}_-}{\partial \tfQ_{-,a^\times}}
&=0\,,\quad  \forall a=1,\cdots,5\,.
\label{eq:critical_equations_3}
\end{align}
\label{eq:critical_equations}
\end{subequations}

\subsubsection{momentum aspects}

We first analyze the derivatives \eqref{eq:critical_equations_1} of $\fP_{\epsilon,I}$ and $\tfP_{\epsilon,I}$ for all $I=1,\cdots,15$. (Recall the explicit expressions \eqref{eq:effective_action_all}--\eqref{eq:S0-S1t} of the action.) 
\begin{subequations}
\begin{align}
\f{\partial S_{\vec{p}_\epsilon}}{\partial \fP_{\epsilon,I}}
=&-\f{t_{\epsilon,I}}{2k}-\f{i}{2\pi(b^2+1)}
\left[\lb\bA_\epsilon\bB_\epsilon^\top\cdot\vec{\fP}_\epsilon\rb_I +\fQ_{\epsilon,I}\right] 
-\f{t_{\epsilon,I}}{2(b^2+1)}
+\f{i}{2\pi(b^2+1)}\lb\bB_{\epsilon}\cdot \vec{P}_{\epsilon}\rb_I
-\f{p_{\epsilon,I}}{b^2+1}\,,
\label{eq:critical_fP_+R}\\
\f{\partial S_{\vec{p}_\epsilon}}{\partial \tfP_{\epsilon,I}}
=&-\f{t_{\epsilon,I}}{2k}-\f{i}{2\pi(b^{-2}+1)}
\left[\lb\bA_\epsilon\bB_\epsilon^\top\cdot\vec{\tfP}_\epsilon\rb_I +\tfQ_{\epsilon,I}\right]
+\f{t_{\epsilon,I}}{2(b^{-2}+1)}
+\f{i}{2\pi(b^{-2}+1)}\lb\bB_{\epsilon}\cdot \vec{\widetilde{P}}_{\epsilon}\rb_I
+\f{p_{\epsilon,I}}{b^{-2}+1}\,,
\label{eq:critical_tfP_+R}
\end{align}
\label{eq:critical-P}
\end{subequations}
where we have used the fact that $\bA_\epsilon\bB_\epsilon^\top$ is a symmetric matrix for both $\epsilon=\pm$ and that 
\begin{subequations}
\begin{align}
\vec{P}_\epsilon &:= 
\left\{\log\lb\f{1-e^{-X_i}}{1-e^{-W_i}}\rb\,,
\log\lb\f{1-e^{-Y_i}}{1-e^{-W_i}}\rb\,,
\log\lb\f{1-e^{-Z_i}}{1-e^{-W_i}}\rb\right\}_{\substack{i=1,\cdots,5\,\text{ if }\epsilon=+\\i=6,\cdots,10\,\text{ if }\epsilon=-}}\,,
\label{eq:P_+}\\
\vec{\widetilde{P}}_\epsilon &:= 
\left\{\log\lb\f{1-e^{-\Xt_i}}{1-e^{-\Wt_i}}\rb\,,
\log\lb\f{1-e^{-\Yt_i}}{1-e^{-\Wt_i}}\rb\,,
\log\lb\f{1-e^{-\Zt_i}}{1-e^{-\Wt_i}}\rb\right\}_{\substack{i=1,\cdots,5\,\text{ if }\epsilon=+\\i=6,\cdots,10\,\text{ if }\epsilon=-}}\,,
\label{eq:P_-}
\end{align}
\end{subequations}
which comes from the derivative of the dilogarithm function $\f{\rd \Li_2(\fz_i)}{\rd \fz_i}=-\f{1}{\fz_i}\log(1-\fz_i)$ for $\fz_i=x_i,y_i,z_i,w_i$ and $i$ labels the octahedron $\Oct(i)$. Here the imaginary part of $\log(x)$ is fixed to be in $[0,2\pi)$. $W_i,\Wt_i$ are defined as
\be
W_i:=2\pi i +\f{2\pi i}{k}\lb b^2+1\rb-X_i-Y_i-Z_i\,,\quad 
\Wt_i:=2\pi i  +\f{2\pi i}{k}\lb b^{-2}+1\rb-\Xt_i -\Yt_i-\Zt_i\,.
\nn\ee 
The critical equations \eqref{eq:critical-P} look complicated at first sight. However, as explained below, they are simply the reformulation of the algebraic curve equations \eqref{eq:A-polynomial} for ideal tetrahedra
\begin{subequations}
\begin{align}
&\fz^{-1}_i+\fz''_i-1=0 \quad \Longleftrightarrow \quad \fZ''_i=\log(1-e^{-\fZ_i})\,,\quad 
\text{ with } \fz_i=e^{\fZ_i}\,,\, \fz''_i\equiv e^{\fZ''_i}\,, \quad\forall i=1,\cdots,5\,,\\
&\fz^{-1}_i+\fz'_i-1=0 \quad \Longleftrightarrow \quad \fZ'_i=\log(1-e^{-\fZ_i})\,,\quad 
\text{ with } \fz_i=e^{\fZ_i}\,,\, \fz'_i\equiv e^{\fZ'_i}\,, \quad\forall i=6,\cdots,10\,.
\end{align}
\label{eq:A-polynomial_-eform}
\end{subequations}

For notational simplicity, we define $\vec{\cW}=\{\cW_i\}_{i=1}^{10}$ and $\vec{\cWt}=\{\cWt_i\}_{i=1}^{10}$ such that $\cW_i=W''_i\,,\, \cWt_i=\Wt''_i$ if $i=1,\cdots,5$ while $\cW_i=W'_i\,,\,\cWt_i=\Wt'_i$ if $i=6,\cdots,10$. Replacing the logarithm function $\log(1-e^{-W_i})$ by $\cW_i$ in \eqref{eq:P_+} and $\log(1-e^{-\Wt_i})$ by $\cWt_i$ in \eqref{eq:P_-}, we rewrite $\vec{P}_\epsilon$ and $\vec{\widetilde{P}}_\epsilon$ into 
\begin{subequations}
\begin{align}
\vec{P}_\epsilon&=\{\log(1-e^{-X_i})-\cW_i\,,\,\log(1-e^{-Y_i})-\cW_i\,,\,\log(1-e^{-Z_i})-\cW_i\}_{\substack{i=1,\cdots,5\,\text{ if }\epsilon=+\\i=6,\cdots,10\,\text{ if }\epsilon=-}}
\,,\\
\vec{\widetilde{P}}_\epsilon&=\{\log(1-e^{-\Xt_i})-\cWt_i\,,\,\log(1-e^{-\Yt_a})-\cWt_i\,,\,\log(1-e^{-\Zt_i})-\cWt_i\}_{\substack{i=1,\cdots,5\,\text{ if }\epsilon=+\\i=6,\cdots,10\,\text{ if }\epsilon=-}}
\,.
\end{align}
\label{eq:def_P}
\end{subequations}
Denote the original position and momentum coordinates for the ideal tetrahedra by
\begin{subequations}
\begin{align}
\vec{\Phi}_\epsilon &= \{X_i,Y_i,Z_i\}_{\substack{i=1,\cdots,5\,\text{ if }\epsilon=+\\i=6,\cdots,10\,\text{ if }\epsilon=-}}\,,
\quad \text{and}\quad
\vec{\Pi}_\epsilon = \{P_{X_i},P_{Y_i},P_{Z_i}\}_{\substack{i=1,\cdots,5\,\text{ if }\epsilon=+\\i=6,\cdots,10\,\text{ if }\epsilon=-}}\,,\\
\vec{\widetilde{\Phi}}_\epsilon &= \{\Xt_i,\Yt_i,\Zt_i\}_{\substack{i=1,\cdots,5\,\text{ if }\epsilon=+\\i=6,\cdots,10\,\text{ if }\epsilon=-}}\,,
\quad \text{and}\quad
\vec{\widetilde{\Pi}}_\epsilon = \{P_{\Xt_i},P_{\Yt_i},P_{\Zt_i}\}_{\substack{i=1,\cdots,5\,\text{ if }\epsilon=+\\i=6,\cdots,10\,\text{ if }\epsilon=-}}\,.
\end{align}
\end{subequations}
They are related to the new coordinates $\{\vec{\fQ}_\epsilon, \vec{\fP}_{\epsilon}\}$ and $\{\vec{\tfQ}_\epsilon, \vec{\tfP}_{\epsilon}\}$ by linear transformations that can be formulated neatly by the following matrix multiplications \cite{Han:2021tzw}
\be
\mat{c}{\vec{\fQ}_\epsilon -i\pi \vec{t}_\epsilon \\ \vec{\fP}_{\epsilon}}
=\mat{cc}{\bA_\epsilon & \bB_\epsilon \\ -(\bB_\epsilon^\top)^{-1} & 0}
\mat{cc}{\vec{\Phi}_\epsilon \\ \vec{\Pi}_\epsilon}\,,\quad
\mat{c}{\vec{\tfQ}_\epsilon+i\pi \vec{t}_\epsilon \\ \vec{\tfP}_\epsilon}
=\mat{cc}{\bA_\epsilon & \bB_\epsilon \\ -(\bB_\epsilon^\top)^{-1} & 0}
\mat{cc}{\vec{\widetilde{\Phi}}_\epsilon \\ \vec{\widetilde{\Pi}}_\epsilon}\,.
\ee
Or inversely,
\be
\mat{cc}{\vec{\Phi}_\epsilon \\ \vec{\Pi}_\epsilon}
=\mat{cc}{0 & -\bB_\epsilon^\top \\\bB_\epsilon^{-1} & \bA^\top}
\mat{c}{\vec{\fQ}_\epsilon -i\pi \vec{t}_\epsilon \\ \vec{\fP}_{\epsilon}}\,,\quad
\mat{cc}{\vec{\widetilde{\Phi}}_\epsilon \\ \vec{\widetilde{\Pi}}_\epsilon}
=\mat{cc}{0 & -\bB_\epsilon^\top \\\bB_\epsilon^{-1} & \bA^\top}
\mat{c}{\vec{\tfQ}_\epsilon +i\pi \vec{t}_\epsilon\\ \vec{\tfP}_\epsilon}\,.
\label{eq:QP_to_PhiPi}
\ee
Therefore,
\begin{subequations}
\be\begin{split}
-2\pi i(b^2+1)\f{\partial S_{\vec{p}_\epsilon}}{\partial \fP_{\epsilon,I}}
&=-\lb\bA_\epsilon\bB_\epsilon^\top\cdot \vec{\fP}_\epsilon\rb_I
-\lb\vec{\fQ}_{\epsilon} 
- i\pi \vec{t}_{\epsilon}\rb_I +2\pi ip_{\epsilon,I}
+\lb\bB_\epsilon\cdot\vec{P}_\epsilon\rb_I \\
&=\lb\bA_\epsilon\cdot\vec{\Phi}_\epsilon\rb_I 
-\lb\bA_\epsilon\cdot \vec{\Phi}_\epsilon\rb_I
-\lb\bB_\epsilon\cdot \vec{\Pi}_\epsilon\rb_I 
+\lb\bB_\epsilon\cdot\vec{P}_\epsilon\rb_I 
+2\pi ip_{\epsilon,I}\\
&=\left[\bB_\epsilon\cdot\lb\vec{P}_\epsilon-\vec{\Pi}_\epsilon \rb \right]_I +2\pi ip_{\epsilon,I}
\equiv 0\,,
\end{split}
\label{eq:critial_P}
\ee
\be\begin{split}
-2\pi i(b^{-2}+1)\f{\partial S_{\vec{p}_\epsilon}}{\partial \tfP_{\epsilon,I}}
&=-\lb\bA_\epsilon\bB_\epsilon^\top\cdot \vec{\tfP}_\epsilon\rb_I
-\lb \vec{\tfQ}_{\epsilon} +i \pi \vec{t}_{\epsilon}\rb_I
-2\pi i p_{\epsilon,I}
+\lb\bB_\epsilon\cdot\vec{\widetilde{P}}_\epsilon\rb_I \\
&=\lb\bA_\epsilon\cdot\vec{\widetilde{\Phi}}_\epsilon\rb_I 
- \lb\bA_\epsilon\cdot \vec{\widetilde{\Phi}}_\epsilon\rb_I
-\lb\bB_\epsilon\cdot \vec{\widetilde{\Pi}}_\epsilon\rb_I
+\lb\bB_\epsilon\cdot\vec{\widetilde{P}}_\epsilon\rb_I 
-2\pi i p_{\epsilon,I} \\
&=\left[\bB_\epsilon\cdot\lb\vec{\widetilde{P}}_\epsilon-\vec{\widetilde{\Pi}}_\epsilon \rb \right]_I -2\pi i p_{\epsilon,I} \equiv 0\,,
\end{split}
\label{eq:critial_Pt}
\ee
\label{eq:critical_P_Pt}
\end{subequations}
where we have omitted the first term $-\f{t_{\epsilon,I}}{2k}$ in \eqref{eq:critical-P} at large $k$. The critical equations \eqref{eq:critical_P_Pt} are then equivalent to the following equations. 
\be
e^{P_{\fZ_i}+\cW_i} = 1-e^{-\fZ_i}\,,\quad
e^{P_{\widetilde{\fZ}_i}+\cWt_i} = 1-e^{-\widetilde{\fZ}_i}\,,\quad
\fZ_i\in\{X_i,Y_i,Z_i\}\,,\,\,\,
\widetilde{\fZ}_i\in\{\Xt_i,\Yt_i,\Zt_i\}\,.
\label{eq:critial_simplify}
\ee
If we defined $\fZ''_i:=P_{\fZ_i}+\cW_i$ and $\widetilde{\fZ}''_i:=P_{\widetilde{\fZ}_i}+\cWt_i$, then these equations are nothing but the algebraic curve equations \eqref{eq:A-polynomial_-eform} for ideal tetrahedra. It is clear that $\vec{p}_\epsilon$ relates to different lifts from $e^{P_{\fZ_i}},e^{\cW_i}$ to the logarithmic variables $P_{\fZ_i},\cW_i$. By the procedure in \eqref{poisum1}, we have fixed the lift ambiguities of all $e^{\mathfrak{Q}_{\epsilon,I}},e^{\mathfrak{P}_{\epsilon,I}}$, so the lifts of $e^{P_{\fZ_i}},e^{\cW_i}$ have already been fixed in the integral representation of the amplitude. Therefore \eqref{eq:critial_P} and \eqref{eq:critial_Pt} uniquely determine the values of $\vec{p}_\epsilon$.

\subsubsection{position aspects}
\label{subsubsec:position}

Let us now move on to consider the derivative of $S_{0,f}$ \wrt the position variables. We first consider those \eqref{eq:critical_equations_2} \wrt the positions on the torus cusps $\{\fQ_f\}$. 
The critical equations are  
\be
\f{\partial S_{0,f}}{\partial \fQ_f}
=\f{i}{2\pi(b^2+1)}\left[\lb\fP_{-,f}-b^2\tfP_{-,f}\rb-\lb\fP_{+,f}-b^2\tfP_{+,f}\rb\right]
{+\f{i}{2\pi}\cF'_f(\fQ_f)}-u_f=0\,,\label{dSdQ}
\ee
where $\fP_{\pm,f}$ and $\tfP_{\pm,f}$ correspond to the momenta conjugate to $\fQ_f$ {and $\cF'_f(\fQ_f):=\rd\cF_f(\fQ_f)/\rd \fQ_f$}. 
The critical equation solves
\be
n_{-,f}-n_{+,f}+\f{ik}{2\pi}\cF'_f(2L_f)-ku_f=0\,,
\quad \forall f=1,\cdots,6\,.
\label{eq:stationary_Qf1}
\ee

The conjugate momenta of $\{\mathfrak{Q}_I\}_{I=1}^{10}=\{2L_{ab}\}_{(ab)}$ are $\{\mathfrak{P}_{+,I}-\mathfrak{P}_{-,I}\}_{I=1}^{10}=\{\cT_{ab}-\cT'_{ab}\}_{(ab)}$ as shown in \eqref{eq:final_T}. 10 pairs of conjugate variables $(2L_{ab},\cT_{ab}-\cT'_{ab})$ associate respectively to 6 torus cusps $f$'s and 4 annulus cusps $b$'s of $M_{+\cup -}$ (blue lines in fig.\ref{fig:glue}). $\{\cT_{ab}-\cT'_{ab}\}$ equal to six B-cycle holonomy eigenvalues on $f$'s and 4 FN twists on $b$'s calculated by the snake rule for cusp boundaries up to a constant $n \pi i$ with $n\in\Z$ (see Appendix \ref{app:twist} for details and for a generalized argument):
\be
\cT_{ab}-\cT_{ab}'=T_{ab}+\zeta_{ab},\qquad \widetilde{\cT}_{ab}-\widetilde{\cT}_{ab}'=\widetilde{T}_{ab}-\zeta_{ab},
\label{cTcTT}
\ee
where
\be
\zeta_{12}=0\,,\,\,
\zeta_{13}=\pi i\,,\,\,
\zeta_{14}=\pi i\,,\,\,
\zeta_{15}=\pi i\,,\,\,
\zeta_{23}=0\,,\,\,
\zeta_{24}=-2\pi i\,,\,\,
\zeta_{25}=0\,,\,\,
\zeta_{34}=-\pi i\,,\,\,
\zeta_{35}=0\,,\,\,
\zeta_{45}=-\pi i\,.
\label{eq:zeta0}
\ee 
Note that $(ab)$'s involving $5$ label the annuli, while others label the tori.

When we parameterize  
\be
T_f=\frac{2\pi i}{k}(-ib\nu_f -n_f)\,,\quad 
\tilde{T}_f=\frac{2\pi i}{k}(-ib^{-1}\nu_f +n_f)\,,\qquad \nu_f\in \mathbb{R},\quad n_f\in [0,k)\,,
\label{eq:Tf_nf}
\ee 
and by \eqref{cTcTT}, we can rewrite \eqref{eq:stationary_Qf1} to be
\be
-n_f-\f{ik}{2\pi}\zeta_f{+\f{ik}{2\pi}\cF'_f(2L_f)}-ku_f=0\,,
\label{nfzetaf}
\ee
where $\frac{i}{2\pi}\zeta_f\in\frac{1}{2}\mathbb{Z}$. We may set $\cF_f$ for  $f=(13),(14),(34)$ such that
\be
\frac{i}{2\pi}\left[\zeta_f-\cF'_f(2L_f)\right]\in \mathbb{Z},\qquad \ie \qquad \f{\cF'_f(2L_f)}{i\pi}\text{ is odd}\,,
\ee
and $\cF_f=0$ for $f=(12),(23),(24)$.
Absorbing this integer into $u_f$, the critical equation \eqref{eq:stationary_Qf1} becomes
\be
-n_f-ku_f=0.
\ee
The solutions of $n_f$ and $u_f\in\Z$ are both {\it unique}, since $n_{\pm,f}$ and $n_f$ have been restricted into a single period $[0,k)$:
\be
n_f=0\,,\quad u_f=0\,.
\label{nf=0}
\ee
Setting $\cF'=i\pi \mathbb{Z}$ and a vanishing constant term in $\cF$ leads to a sign factor
\be
e^{\frac{ik}{2\pi}\cF_f}=(-1)^{2j_f}
\ee
in the face amplitude for $f=(13),(14),(34)$.

\medskip 

For the remaining position variables $\{\fQ_{\epsilon,a^\times},\tfQ_{\epsilon,a^\times}\}_{a=1}^5$ defined in \eqref{eq:def_fQ_tfQ_+} and \eqref{eq:def_fQ_tfQ_-}, 
the critical equations \eqref{eq:critical_equations_3} give 
\begin{subequations}
\begin{align}
\f{\partial S_{\epsilon}^\cohe}{\partial \fQ_{\epsilon,a^\times}}
&=-\f{i\,\fP_{\epsilon,a^\times}}{2\pi(b^2+1)}
-\f{bk^{-1}}{b^2+1}\left[\re(\mu_{\epsilon,a^\times})-\f{k}{\sqrt{2}\pi}\hat{\bar{z}}_{\epsilon,a}\right]
-\f{ik^{-1}}{b^2+1}\left[m_{\epsilon,a^\times}-\f{k}{2\pi}\xh_{\epsilon,a}\right]
-\f{\yh_{\epsilon,a}+2\pi u_{\epsilon,a}}{2\pi(b^2+1)}\simeq 0\,,\\
\f{\partial S_{\epsilon}^\cohe}{\partial \tfQ_{\epsilon,a^\times}}
&=-\f{i\,\tfP_{\epsilon,a^\times}}{2\pi(b^{-2}+1)}
-\f{bk^{-1}}{b^2+1}\left[\re(\mu_{\epsilon,a^\times})-\f{k}{\sqrt{2}\pi}\hat{\bar{z}}_{\epsilon,a}\right]
+\f{ik^{-1}}{b^{-2}+1}\left[m_{\epsilon,a^\times}-\f{k}{2\pi}\xh_{\epsilon,a}\right]
+\f{\yh_{\epsilon,a}+2\pi u_{\epsilon,a}}{2\pi(b^{-2}+1)}\simeq 0\,.
\end{align}
\label{eq:eom_pos_cohe}
\end{subequations}
$u_{\epsilon,a}$ only shifts $n_{\epsilon,a}$ by multiple of $k$. By the same argument below Eq.\eqref{eq:stationary_Qf1},  $n_{\epsilon,a}\in[-\delta ,k-\delta]$ fixed in \eqref{poisum1} uniquely determines $u_{\epsilon,a}=0$. 
Use the notations in \eqref{eq:cX_param} and \eqref{eq:cY_param}, and recall the notation \eqref{eq:rho_notation} for $(\zh_{\epsilon,a},\xh_{\epsilon,a},\yh_{\epsilon,a})$, the solution is given by  
\begin{subequations}
\begin{align}
&\re(\mu_{a})=\f{k}{\sqrt{2}\pi}\re(\zh_{a})\,,\quad
\re(\nu_{a})=-\f{k}{\sqrt{2}\pi}\im(\zh_{a})\,,\quad
m_{a}=\f{k}{2\pi}\xh_{a}\,,\quad
n_{a}=-\f{k}{2\pi}\yh_{a}\,,\quad
\forall a=1,\cdots, 5\,,
\label{eq:eom_position_1}\\
&\re(\mu'_{a})=-\f{k}{\sqrt{2}\pi}\re(\zh_{a})\,,\quad
\re(\nu'_{a})=-\f{k }{\sqrt{2}\pi}\im(\zh_{a})\,,\quad
m'_{a}=-\f{k}{2\pi}\xh_{a}\,,\quad
n'_{a}=-\f{k}{2\pi}\yh_{a}\,,\quad 
\forall a=1,\cdots, 4\,,
\label{eq:eom_position_2}\\
&\re(\mu'_{5})=-\f{k}{\sqrt{2}\pi}\re(\zh'_{5})\,,\quad
\re(\nu'_{5})=-\f{k}{\sqrt{2}\pi}\im(\zh'_{5})\,,\quad
m'_{5}=-\f{k}{2\pi}\xh'_{5}\,,\quad
n'_{5}=-\f{k}{2\pi}\yh'_{5}\,.
\label{eq:eom_position_3}
\end{align}
\label{eq:critial_spheres}
\end{subequations}
{The critical points \eqref{eq:eom_position_1} -- \eqref{eq:eom_position_2} immediately reproduce the gluing constraints on the position parameters and also match the momentum parameters on $\{\cS_a\}_{a=1}^4$ from $M_+$ and $M_-$:
\be
\re(\mu'_a)=-\re(\mu_a)\,,\quad m_a'=-m_a\,,\quad \re(\nu'_a)=\re(\nu_a)\,,\quad n'_a=n_a\,,\quad
a=1,\cdots,4\,.
\ee}
{Note that the coherent state $\bar{\xi}_{(x,y)}(m'_a)$ is invariant under a shift $m'_a\rightarrow m'_a-k$, which corresponds to shifting $\fQ_{-,a^\times}\rightarrow\fQ_{-,a^\times}+2\pi i$ while $\tfQ_{-,a^\times}\rightarrow\tfQ_{-,a^\times}-2\pi i
$. Performing this shift for $\{m'_a\}_{a=1}^4$ would shift the solution to $m'_a$ in \eqref{eq:eom_position_2} to $m'_a=k-\f{k}{2\pi }x'_a$ hence critical solutions $m_a$ and $m'_a=k-m_a$ can both be taken to be in the range of $[-\delta,k-\delta]$ so that the Poisson resummation \eqref{poisum1} can be applied with no ambiguity. }

\medskip

To summarize, the above discussion shows that a part of the critical equations recovers the algebraic curve equation for ideal tetrahedra under the octahedron constraints and recovers the gluing constraints between $M_+$ and $M_-$. These critical equations indicate that the critical points of the amplitude are $\mathrm{SL}(2,\mathbb{C})$ flat connections on $M_{+\cup-}$ satisfying the simplicity constraints. Moreover, because of the sum over $j_f$, the variation of $\mathfrak{Q}_f$ imposes an addition constraint \eqref{nf=0} to the flat connection. This additional constraint is an analog of the ``flatness constraint'' \cite{Han:2013hna,Bonzom:2009hw,Hellmann:2013gva} in the EPRL-KF spinfoam model for the following reason. $m_f$ encodes the area of an internal triangle dual to the spinfoam face $f$. Its conjugate variable $n_f$ then encodes the deficit angle around this triangle. The solution to $n_f$ is interpreted as bulk simplices being glued such that the bulk curvature is a constant. Given $m_f$, the solution to $n_f$ is unique. It is a feature different from the case in EPRL-FK model, where there may be infinitely many critical solutions to the deficit angles separated by $4\pi \Z$. This ambiguity seems to able to be resolved by adding a non-vanishing cosmological constant, from the experience of the spinfoam model we study in this paper.

\subsection{Amplitude at the critical points}
\label{subsec:critical action}

At the critical points \eqref{eq:critial_spheres} solved from the derivative \wrt the position variables of the actions $S_{\zh_{\epsilon,a}}$ and $S_{(\xh_{\epsilon,a},\yh_{\epsilon,a})}$ \eqref{eq:coherent_QQt}, we obtain the critical actions 
\be
S^0_{\zh_{\epsilon,a}}=\f{i}\pi\re(\zh_{\epsilon,a})\im(\zh_{\epsilon,a})\,,\quad
S^0_{(\xh_{\epsilon,a},\yh_{\epsilon,a})}=\f{i}{4\pi}\xh_{\epsilon,a}\yh_{\epsilon,a}\,,
\label{eq:critial_action_coherent}
\ee
which sum to zero when considering both $\epsilon=\pm$ by definition \eqref{eq:rho_notation}.

On the other hand, $e^{k S_{\vec{p}_\epsilon}}$ is a pure phase at large $k$ for the following reasons. 
Firstly, the imaginary parts $\im(\mu)$'s for all $\mu$'s are not seen at large $k$ and $b^{-1}$ is the complex conjugate of $b$ by definition. Therefore, $\tilde{\fz}_i^\epsilon$ is the complex conjugate of $\fz_i^{\epsilon}$ for $\fz_i^\epsilon\in\{x_i^\epsilon,y_i^\epsilon,z_i^\epsilon,w_i^\epsilon\}$ in $\Oct(i)$. We then conclude that the sum $S^\epsilon_1+\tS^\epsilon_1$ is pure imaginary for both $\epsilon=\pm$ from the expressions \eqref{eq:S1} and \eqref{eq:S1t}. For the rest of $S_{\vec{p}_\epsilon}$, we rewrite them as 
\begin{multline}
k S_0^\epsilon(\vec{\mu}_\epsilon,\vec{\nu}_\epsilon,\vec{m}_\epsilon,\vec{n}_\epsilon)
-2\pi i \vec{p}_\epsilon\cdot\vec{n}_\epsilon \\
=\f{\pi i}{k}\left[-2\left(\vec{\mu}_\epsilon-\frac{i Q}{2} \vec{t}_\epsilon\right) \cdot \vec{\nu}_\epsilon+2 \vec{m}_\epsilon \cdot \vec{n}_\epsilon-\vec{\nu}_\epsilon \cdot \bA_\epsilon \bB_\epsilon^\top \cdot \vec{\nu}_\epsilon+ \vec{n}_\epsilon \cdot \bA_\epsilon \bB_\epsilon^\top \cdot \vec{n}_\epsilon+k\vec{n}_\epsilon\cdot( \vec{t}_\epsilon+
2\vec{p}_\epsilon)\right]\,.
\label{eq:S0_old_coord}
\end{multline}
As the imaginary parts of $\vec{\mu}_\epsilon$ and $\vec{\nu}_\epsilon$ do not scale with $k$, \eqref{eq:S0_old_coord} is also imaginary at large $k$. Therefore, $e^{k S_{\vec{p}_\epsilon}}$ contributes to the amplitude only a phase at the critical points. 

Lastly, 
$e^{\f{ik}{2\pi}\cF_f}$ only contribute a sign to the total amplitude.

The above stationary phase analysis is carried out for all integrals except the integrals of $\hat{\rho}_a$. This means we study $\cA_{\heta_5,\htrho_5}(\{\hrho_a\})$ with $\{\hrho_a\}$ as parameters and we can write 
\be
\cZ_{\heta_5,\htrho_5}=\int_{\overline{\cM}_{\vec{m}}}[\rd\hrho_a]\, \cA_{\heta_5,\htrho_5}(\{\hrho_a\})\,.
\ee
Note that we can interchange the order of integrations since $\cZ_{\heta_5,\htrho_5}$ is absolutely convergent. The above analysis assumes the existence of critical point(s) at certain $\{\hrho_a=\hrho_a^{(0)}\}$. At $\{\hrho_a^{(0)}\}$, we have the purely imaginary critical action being $S_{\rm tot}$ evaluated at the critical point. We denote the critical point by $\alpha$ and the critical action by $ S_{\tot}^{\alpha}$. The critical action $S_{\tot}^{\alpha}\in i\mathbb{R}$ is scaleless in $k$. Each $\alpha$ is associated with a unique $\hrho_a^{(0)}$. The asymptotic of $\cA_{\heta_5,\htrho_5}$ at $\hrho_a^{(0)}$ is 
\be
\cA_{\heta_5,\htrho_5}(\{\hrho^{(0)}_a\})=\sum_{\alpha\  \text{associated with } \{\hrho^{(0)}_a\}}\f{\cN}{\sqrt{\det(-H_\alpha/(2\pi))}}
e^{k S^{\alpha}_{\tot}}[1+O(1/k)]
\label{asympform}
\ee
where {$\left.H_\alpha=\partial^2\lb k S_{\tot}\rb\right|_\alpha$} is the Hessian matrix evaluated at the critical point $\alpha$ and $\cN$ is given in \eqref{eq:cN}. We have also assumed that $\alpha$ are isolated and $H_\alpha$ are nondegenerate. Then the sum of $\alpha$ is finite because all critical equations are polynomial equations of certain degree in terms of exponential coordinates $e^{\mathfrak{P}},e^{\mathfrak{Q}},e^{\tilde{\mathfrak{P}}},e^{\tilde{\mathfrak{Q}}}$. Other situations are going to be discussed in a moment in Section \ref{subsec:Hessian}. We have removed the summations for $\vec{p}_\pm,\vec{u}_\pm,\vec{u}_f$ which come from the Poisson resummations, because at the stationary points, the following conditions must be satisfied.
\be
\vec{u}_\pm=\vec{0}\,,\quad
\vec{u}_f\,,\vec{p}_\pm \ \text{are unique}\,.
\ee
The conditions pick up only one term in the sums of $\vec{p}_\pm,\vec{u}_\pm,\vec{u}_f$.

However, it is generally possible that for some $\hat{\rho}_a$, the critical point does not exist in the integration domain. In this case, the asymptotics becomes
\be
\cA_{\heta_5,\htrho_5}(\{\hrho_a\})=O(1/k^N),\qquad \forall\ N>0\,,
\ee
\emph{i.e.} it suppresses faster than any polynomial of $k^{-1}$. Then we can generalize the formula \eqref{asympform} for $\{\hrho_a\}$ in a neighborhood of $\{\hrho_a^{(0)}\}$ \cite{hormander2015analysis}
\be
\cA_{\heta_5,\htrho_5}(\{\hrho_a\})=\sum_{\alpha\  \text{associated with}\ \{\hrho^{(0)}_a\}}\f{\cN}{\sqrt{\det(-H_\alpha/(2\pi))}}
e^{kS^{\alpha}_{\tot}}\cB^{\alpha}(\{\hrho_a\})[1+O(1/k)]\,,
\label{asympform1}
\ee
where $\cB^{\alpha}$ satisfies that $\cB^{\alpha}=1$ at $\{\hrho_a=\hrho^{(0)}_a\}$ and of $O(k^{-N})$ for any $N>0$ elsewhere.  $\cB^{\alpha}$ is smooth and bounded on $\overline{\cM}_{\vec {m}}$. 

The asymptotics of $\cZ_{\heta_5,\htrho_5}$ can be expressed as 
\be
\cZ_{\heta_5,\htrho_5}(\vec{\mu}|\vec{m})= \sum_\alpha \f{\cN}{\sqrt{\det(-H_\alpha/2\pi)}}
e^{kS^{\alpha}_{\tot}}{\int_{\overline{\cM}_{\vec{m}}}[\rd\hrho_a]}\, \cB^{\alpha}(\{\hrho_a\})[1+O(1/k)]\,.
\ee
The $\hrho_a$-integrals are dominated by the contributions from the neighbourhoods of $\{\hrho_a^{(0)}\}$'s and is bounded. Here we remind that
\be
\cN=\f{512\, k^{59+6\mu}}{(2\pi)^{102+6\mu}Q^{40}}\,.
\ee

Note that the above formula clearly assumes that the critical point $\alpha$ exists at some $\{\hrho_a^{(0)}\}$. 

\subsection{The Hessian matrix}
\label{subsec:Hessian}
To obtain the total scaling of $\cZ_{\heta_5,\htrho_5}(\vec{\mu}|\vec{m})$ with $k$, we are left to calculate the scaling of the Hessian $H_\alpha$ at the critical points. 
Let us first determine the dimension of the Hessian matrix. This is given by the number of integration variables in the expression \eqref{eq:total_amplitude_action} of the amplitude after imposing all the simplicity constraints. These variables are summarized as follows. 
\be
\ba{ll}
\{\fQ_f\}_{f=1}^6=\{2L_{12},2L_{13},2L_{14},2L_{23},2L_{24},2L_{34}\}\,,\quad
\\[0.15cm]
\{\fQ_{+,a^\times}\}_{a=1}^5=\{\cX_1,\cX_2,\cX_3,\cX_4,\cX_5\}\,,\quad &
\{\tfQ_{+,a^\times}\}_{a=1}^5=\{\tilde{\cX}_1,\tilde{\cX}_2,\tilde{\cX}_3,\tilde{\cX}_4,\tilde{\cX}_5\}\,,\\[0.15cm]
\{\fQ_{-,a^\times}\}_{a=1}^5=\{\cX'_1,\cX'_2,\cX'_3,\cX'_4,\cX'_5\}\,,\quad &
{\{\tfQ_{-,a^\times}\}_{a=1}^5=\{\tilde{\cX}'_1,\tilde{\cX}'_2,\tilde{\cX}'_3,\tilde{\cX}'_4,\tilde{\cX}'_5\}}\,,\\[0.15cm]
\{\fP_{+,I}\}_{I=1}^{15}=\{\{\cT_{ab}\}_{(ab)},\{\cY_a\}_{a=1}^{5}\}\,,\quad &
\{\tfP_{+,I}\}_{I=1}^{15}=\{\{\widetilde{\cT}_{ab}\}_{(ab)},\{\tilde{\cY}_a\}_{a=1}^{5}\}\,, \\[0.15cm]
{\{\fP_{-,I}\}_{I=1}^{15}=\{\{\cT'_{ab}\}_{(ab)},\{\cY'_a\}_{a=1}^{5}\}}\,,\quad &
{\{\tfP_{-,I}\}_{I=1}^{15}=\{\{\widetilde{\cT}'_{ab}\}_{(ab)},\{\tilde{\cY}_a'\}_{a=1}^{5}\}} \,.
\ea
\label{eq:all_variables}
\ee
Therefore, $H_\alpha$ is an {$86\times 86$} matrix. The entries are given by the second derivatives of the effective action $S_\tot$ with the simplicity constraints imposed. We now calculate the Hessian entries. 

\medskip
\noindent{\it Second derivatives \wrt momenta. }
\medskip

Firstly, consider the second derivatives of $S_{\tot}$ \wrt $\{\fP_{\epsilon,I},\tfP_{\epsilon,I}\}$. The nontrivial results are all from the action $S_{\vec{p}_\epsilon}$. Since $S_{\vec{p}_+}$ and $S_{\vec{p}-}$ are not entangled, we can consider them separately. From \eqref{eq:S0-S1t}, we get, for all $I,J=1,\cdots,15$, 
\begin{subequations}
\begin{align}
\f{\partial^2 S_{\vec{p}_\epsilon}}{\partial \fP_{\epsilon,I}\partial \fP_{\epsilon,J}}
&= \f{i}{2\pi(b^2+1)}\lb \bB_{\epsilon}\cdot\f{\partial \vec{P}_{\epsilon}}{\partial \vec{\fP}_{\epsilon}}-\bA_\epsilon\bB_\epsilon^\top \rb_{IJ}\,,\quad
\f{\partial^2 S_{\vec{p}_\epsilon}}{\partial \tfP_{\epsilon,I}\partial \tfP_{\epsilon,J}}
= \f{i}{2\pi(b^{-2}+1)}\lb \bB_{\epsilon}\cdot\f{\partial \vec{\widetilde{P}}_{\epsilon}}{\partial \vec{\tfP}_{\epsilon}}-\bA_\epsilon\bB_\epsilon^\top \rb_{IJ}\,,\\
\f{\partial^2 S_{\vec{p}_\epsilon}}{\partial \fP_{\epsilon,I}\partial \tfP_{\epsilon,J}}
&=0\,,
\end{align}
\label{eq:second_derivative_momentum}
\end{subequations}
where $\f{\partial \vec{P}_{\epsilon}}{\partial \vec{\fP}_{\epsilon}}$ (\resp $\f{\partial \vec{\widetilde{P}}_{\epsilon}}{\partial \vec{\tfP}_{\epsilon}}$) is a block diagonal matrix in terms of $\vec{\Phi}_\epsilon$ (\resp $\vec{\tilde{\Phi}}$) and $\vec{P}_\epsilon\,,\,\vec{\widetilde{P}}_\epsilon$ are defined in \eqref{eq:def_P}. Explicitly,
\be
\f{\partial \vec{P}_{\epsilon}}{\partial \vec{\fP}_{\epsilon}}(\{X_i,Y_i,Z_i\})
=-\diag\lb\{{\bf E}_i\}\rb\cdot\bB_\epsilon^{\top}\,,\quad
\f{\partial \vec{\widetilde{P}}_{\epsilon}}{\partial \vec{\tfP}_{\epsilon}}(\Xt_i,\Yt_i,\Zt_i)
=-\diag\lb\{\widetilde{\bf E}_i\}\rb\cdot\bB_\epsilon^{\top}\,,\quad
i=\begin{cases}
1,\cdots,5\,\text{ if }\epsilon=+\\
6,\cdots,10\,\text{ if }\epsilon=-
\end{cases}\,,
 \label{eq:pPipP}
\ee
where
\begin{subequations}
\begin{align}
{\bf E}_i
&=\diag\lb\f{e^{-X_i}}{1-e^{-X_i}},\f{e^{-Y_i}}{1-e^{-Y_i}},\f{e^{-Z_i}}
{1-e^{-Z_i}}\rb +\f{e^{X_i+Y_i+Z_i}}{1-e^{X_i+Y_i+Z_i}}\mat{ccc}{1&1&1\\1&1&1\\1&1&1}\,,\\
\widetilde{\bf E}_i
&=\diag\lb\f{e^{-\Xt_i}}{1-e^{-\Xt_i}},\f{e^{-\Yt_i}}{1-e^{-\Yt_i}},\f{e^{-\Zt_i}}
{1-e^{-\Zt_i}}\rb +\f{e^{\Xt_i+\Yt_i+\Zt_i}}{1-e^{\Xt_i+\Yt_i+\Zt_i}}\mat{ccc}{1&1&1\\1&1&1\\1&1&1}\,.
\end{align}
\end{subequations}

\medskip
\noindent{\it Second derivatives \wrt positions. }
\medskip

Secondly, we calculate the second derivatives of $S_\tot$ \wrt the position variables $\{\fQ_{\epsilon,a^\times},\tfQ_{\epsilon,a^\times}\}$. (There are no nontrivial second derivatives \wrt $\{\fQ_f\}$.) The nontrivial results are all from the action $S_{\hrho_{\epsilon,a}} = S_{\zh_{\epsilon,a}}+S_{(\xh_{\epsilon,a},\yh_{\epsilon,a})}$ and there are no entanglement between different $a$'s or different $\epsilon$'s. From the definitions \eqref{eq:coherent_QQt},   
\be
\f{\partial^2 S_{\hrho_{\epsilon,a}}}{\partial \fQ_{\epsilon,a^\times}^2}
=\f{1-b^2}{2\pi(b^2+1)^2}\,,\quad
\f{\partial^2 S_{\hrho_{\epsilon,a}}}{\partial \tfQ_{\epsilon,a^\times}^2}
=\f{b^2(b^2-1)}{2\pi(b^2+1)^2}\,,\quad
\f{\partial^2 S_{\hrho_{\epsilon,a}}}{\partial \fQ_{\epsilon,a}\partial \tfQ_{\epsilon,a^\times}}
=-\f{b^2}{\pi(b^2+1)^2}\,,\quad \forall a=1,\cdots,5\,.
\ee

\medskip
\noindent{\it Second derivatives \wrt positions and momenta. }
\medskip

There are cross-terms of positions and momenta in $S_0^\epsilon$. Therefore, they contribute nontrivial second derivatives of $S_\tot$ \wrt both the momentum variables $\{\fP_{\epsilon,I},\tfP_{\epsilon,I}\}$ and the position variables $\{\fQ_f\}\,,\,\{\fQ_{\epsilon,a^\times},\tfQ_{\epsilon,a^\times}\}$. 
The momenta corresponding to the internal tori are the elements with indices $I=1,2,3,5,6,8$ while others correspond to the boundary annuli, hence it is convenient to define indices $f^\times=\{1,2,3,5,6,8\}$ one-to-one corresponding to the indices $f=\{1,2,3,4,5,6\}$. 
There is no correlation between $\epsilon=+$ and $\epsilon=-$ hence we consider them separately:
\begin{subequations}
\begin{align}
&\f{\partial^2 S_0^\epsilon}{\partial\fP_{\epsilon,I}\partial \fQ_f}=\delta_{I,f^\times}\f{-i\epsilon }{2\pi(b^2+1)}\,,\quad 
\f{\partial^2 S_0^\epsilon}{\partial\tfP_{\epsilon,I}\partial \fQ_f}=\delta_{I,f^\times}\f{i\epsilon }{2\pi(b^{-2}+1)}\,,\\
&\f{\partial^2 S_0^\epsilon}{\partial\fP_{\epsilon,I}\partial \fQ_{\epsilon,a^\times}}=\delta_{I,a^\times}\f{i}{2\pi(b^2+1)}\,,\quad 
\f{\partial^2 S_0^\epsilon}{\partial\tfP_{\epsilon,I}\partial \tfQ_{\epsilon,a^\times}}=\delta_{I,a^\times}\f{i
}{2\pi(b^{-2}+1)}\,.
\end{align}
\label{eq:2derivativePQ}
\end{subequations}

 We observe that all the non-zero second derivatives \eqref{eq:second_derivative_momentum}--\eqref{eq:2derivativePQ} are scaleless with $k$. Assume that the Hessian is non-degenerate, then 
  the power of its determinant in $k$ must be the same as the dimension of the integration, which is the sum of 30 momenta $\{\fP_{\pm,I},\tfP_{\pm,I}\}_{I=1}^{15}$ and 10 positions $\{\fQ_{\pm,a^\times},\tfQ_{\pm,a^\times}\}_{a=1}^{5}$ from both vertex amplitudes 
as well as 6 position variables $\{\fQ_f\}_{f=1}^6$ from the face amplitudes. That is,
 \be
{\det(H_\alpha)\propto k^{86}}\,.
\ee

Combining the power of $k$ in $\cN$ \eqref{eq:cN}, we conclude that 
\be
\cZ_{\heta_5,\htrho_5}= k^{16+6\mu} \sum_\alpha C_\alpha
e^{kS^{\alpha}_{\tot}}{\int_{\overline{\cM}_{\vec{m}}}[\rd\hrho_a]}\, \cB^{\alpha}(\{\hrho_a\})[1+O(1/k)]\,,
\label{eq:final_scaling}
\ee
where $C_\alpha$ does not scale with $k$

 The result \eqref{eq:final_scaling} is based on the assumption of the nondegeneracy of the Hessian $H_\alpha$. In the case of $\det(H_\alpha)=0$ for some $\alpha$, one needs to separate the part of the integral corresponding to the degenerate directions and only applies the stationary phase analysis to the rest of the integral, where the Hessian is nondegenerate. Let us first consider a simpler case when there exists a degenerate critical point $\alpha$ associated with $\{\hrho_a^{(0)}\}$, whereas the degeneracy is caused by continuously many critical points in a neighbourhood of $\alpha$. From the argument of the geometrical interpretation, as will be discussed in Section \ref{app:geometry}, this case happens in our model. Let us focus on a neighbourhood $U_\alpha$ of $\alpha$ in the integration domain. We change the integration variables into two subsets $(\vec{x},\vec{t})$ where $\vec{t}=(t_1,\cdots,t_d)$ satisfying $\vec{t}|_\alpha=\vec{0}$ corresponds to all the degenerate directions of Hessian, \ie $\partial_{t_i}\partial_{t_j}S_{\rm tot}|_\alpha=0$ and the submatrix $h_\alpha$, whose entries are $(h_{\alpha})_{ij}\equiv \partial_{x_i}\partial_{x_j}S_{\rm tot}|_\alpha$, is nondegenerate. By the assumption that there are continuously many critical points in the $\vec{t}$-directions, we have $\partial_{t_i} S_{\rm tot}(\alpha +\vec{t})=0\,,\forall i=1,\cdots,d$ so $S_{\rm tot}(\alpha +\vec{t})=S_{\rm tot}(\alpha)=S_{\rm tot}^{(\alpha)}$ is constant on the submanifold $V_\alpha$ of $\vec{t}$ in $U_\alpha$. Performing the stationary phase approximation for the $\vec{x}$-integral results in that the contribution from $U_\alpha$ to the partial amplitude $\cA_{\heta_5,\htrho_5}(\{\hrho^{(0)}_a\})$ is given by the following integral:  
\be
\cN e^{kS_{\rm tot}^{\alpha}}\int_{V_\alpha} \frac{ \mathrm{d}^d t}{\sqrt{\det(-h_{\alpha}/2\pi)}}\left[1+O(k^{-1})\right]\sim k^{16+6\mu+d/2}\int_{V_\alpha}\mathrm{d}^d t \, u_\alpha(\vec{t})\left[1+O(k^{-1})\right]\,,
\label{degecritpt}
\ee
where $u_\alpha(\vec{t})$ does not scale with $k$. The last integral must be finite since the amplitude is finite at any $k$. As a result, the degeneracy of the critical point increases the exponent of $k$ by $d/2$:
\be
\cZ_{\htrho_5,\heta_5}(\vec{\mu}|\vec{m})\sim  k^{16+6\mu+d/2}\,e^{kS_{\tot}^{\alpha}}\int_{\overline{\cM}_{\vec{m}}}[\rd\hrho_a]\, \cB^{\alpha}(\{\hrho_a\}),
\label{eq:final_scaling1}
\ee
where $d$ is the maximal number of degenerate directions in the set of critical points. We discuss the possible existence of such degenerate directions from the geometrical point of view in Section \ref{app:geometry}.

\section{Integration over coherent state labels}
\label{sec:coherent_integral}

We now discuss the integration of the coherent state labels satisfying the simplicity constraints $\int_{\overline{\cM}_{\vec{m}}}[\rd\hrho_a]\,\cB^{\alpha}(\{\hrho_a\})$. 
The integral of the coherent state label $\hat{\rho}_a=(\htheta_a,\hphi_a)$ is over the subdomain $\overline{\cM}_{\vec{m}}\subset [0,\pi]\times[0,\pi]$ constrained by the triangle inequality. Recall that fixing the boundary configuration $m_{a5}$ at large $k$ leads to $L_{a5}=O(k^{-1})$ for all annulus connections on $\cS_5$ and $\cS_6$. It results in that one of the holes of each $\cS_a$, $a=1,\cdots,4$, has the monodromy nearly trivial up to $O(k^{-1})$, \ie one of $\{a_i\}_{i=1}^4$ in the triangle inequality \eqref{eq:range_theta_phi} is of $O(k^{-1})$. Let $a_1=c k^{-1}$ without loss of generality. It indicates that 
\be
a_2-c k^{-1}\leq \htheta\leq a_2+c k^{-1}\,,
\label{a2thetaineq}
\ee
where $c=2\pi j_1$ for certain fixed $j_1$. It means that $\overline{\cM}_{\vec{m}}$ is very narrow in $\htheta$-direction and shrinks to measure-zero as $k\to\infty$. The $\hphi$-integral is not constrained. The integration domain $\overline{\cM}_{\vec{m}}$ depends on $k$, so the $\hrho_a$-integrals cannot be studied by using stationary phase analysis. 

Let us first consider the $\htheta_a$-integrals. The critical point only associates with $\htheta=a_2$ in \eqref{a2thetaineq}. The integral is confined to an arbitrarily small neighbourhood of $\htheta_a$. Recall that $\cB^{\alpha}$ is smooth in the neighbourhood, so, by the mean value theorem,
\be
\cB^{\alpha}(\{\htheta_a,\hphi_a\})=\cB^{\alpha}(\{\htheta^{(0)}_a,\hphi_a\})+\sum_{a}\left[\theta_a-\theta^{(0)}_a\right]F_a(\{\theta_a\})=\cB^{\alpha}(\{\htheta^{(0)}_a,\hphi_a\})+O(k^{-1})\,,
\ee
where $\htheta^{(0)}_a$ is the one associating to the critical point ($a_2$ in \eqref{a2thetaineq}), and $F_a(\{\theta_a\})$ equals to the derivative $\partial\cB^{\alpha}/\partial{\htheta_a}$ evaluated in the interval $[\theta_a,\theta_a^{(0)}]$. Therefore, the $\htheta_a$-integrals behave as
\be
\int_{ \theta_1^{(0)}-c_1k^{-1}}^{\theta_1^{(0)}+c_1k^{-1}}\rd \htheta_1\cdots\int_{ \theta_4^{(0)}-c_4k^{-1}}^{\theta_4^{(0)}+c_4k^{-1}}\rd \htheta_4\, \cB^{\alpha}(\{\htheta_a,\hphi_a\})=Ck^{-4}\cB^{\alpha}(\{\htheta^{(0)}_a,\hphi_a\})[1+O(k^{-1})] \,.
\ee
where $C=2^4\prod_{a=1}^4 c_a$.

When $a_1\to 0$ and $\hat{\theta}$ is fixed to be $a_2$, all $\hphi\in[0,\pi]$ correspond to the same $\SU(2)$ flat connection on a 3-holed sphere (see the discussion in Section \ref{subsubsec:second_simplicity}). Therefore, the flat connections on all $\cS_a$'s and thus the flat connection on $M'_{+\cup -}$ are independent of $\hat{\phi}_a$. In other words, the critical point exists for any $\hphi_a$. Then the situation is the same as the case of continuously many critical points. The $\hphi_a$-integrals $\int [\rd\hphi_a]\,  \cB^{\alpha}(\{\htheta^{(0)}_a,\hphi_a\})$ does not scale with $k$.

We conclude that the integral of $(\htheta_a,\hphi_a)_{a=1}^4$ contributes $k^{-4}$ scaling. Inserting this result to \eqref{eq:final_scaling1} gives
\be
\cZ_{\htrho_5,\heta_5}(\vec{\mu}|\vec{m})\sim {k^{12+6\mu+d/2}}\, e^{k S_{\rm tot}^\alpha}\,.
\label{eq:final_scaling_final}
\ee
As $k\to\infty$, the amplitude for $M_{+\cup -}$ diverges when {$\mu>-2$}. This conclusion is drawn with the assumption that the degeneracy of the critical points taken into account above is maximal. If any additional degeneracy exists, the power of the amplitude in $k$ may increase as discussed in Section \ref{subsec:Hessian}.

\section{Geometrical interpretation of critical points}
\label{app:geometry}

In this section,
we discuss the existence and properties of the critical points from the geometrical point of view. It is complementary to the possible degeneracy of the Hessian matrix discussed in Section \ref{subsec:Hessian}. 

The critical points of the partial amplitude $\cA_{\heta_5,\htrho_5}(\{\hrho^{(0)}_a\})$ are framed $\mathrm{SL}(2,\mathbb{C})$ flat connections in the open patch defined by the triangulation of $M_{+\cup -}$. A framed flat connection in the patch is an irreducible flat connection \footnote{A flat connection is called
irreducible if the only elements of $\mathrm{SL}(2,\mathbb{C})$ that commute with all holonomies are the central elements $\pm\id$.} together with a choice of flat section of an associated flag bundle at every cusps boundary \cite{Dimofte:2013iv}. A generic choice of flat section always exists for every torus or annulus cusp, since the fundamental group is abelian. Therefore, finding a framed flat connection boils down to finding an irreducible flat connection on $M_{+\cup -}$. The closure of the open patch covers all framed flat connections (with fixed boundary triangulation) when the 3D ideal triangulation is sufficiently refined, which we assume to be true. 

Recall that at large $k$, the boundary configurations $\{\f{j_b}{k}\}\equiv\{\f{j_{a5}}{k}=\f{j_{a6}}{k}\}\rightarrow 0\,,\forall a=1,\cdots,4$
as $j_{a5}=j_{a6}$ are kept fixed. This means that the eigenvalues of holonomies around all holes of $\cS_5$ and $\cS_6$ equal 1 up to $O(1/k)$ correction. Resulting from this, the holes of $\{\cS_a\}_{a=1}^4$ connected to $\cS_5$ by annuli (which are the same ones connected to $\cS_6$ as can be seen from fig.\ref{fig:glue}) all have approximately trivial holonomy eigenvalues when the connection is flat. It implies that the holonomy around one hole of each $\cS_a (a=1,\cdots,4)$ becomes approximately 
\be\label{nipotent}
\begin{pmatrix}
1 & z \\
0 & 1 
\end{pmatrix}
\ee
for some $z\in\mathbb{C}$. However, the simplicity constraints restrict that the flat connection on every $\cS_a$ is SU(2), so $z\equiv 0$. Therefore, the holonomy around any annulus cusps connecting $\cS_5$ and $\cS_6$ is trivial, since it relates \eqref{nipotent} by conjugation. The critical point can then be approximated by a flat connection with trivial holonomies around all annulus cusps (connecting $\cS_5$ and $\cS_6$). The error of the approximation is of $O(1/k)$. In contrast, for the internal torus cusps, $\{\f{j_f}{k}\}\equiv\{\f{j_{ab}}{k}\}_{(ab)=(12),(13),(14),(23),(24),(34)}$ are finite at the critical points, so their A-cycle holonomies are nontrivial.

In this approximation, we can remove the annulus cusps with trivial holonomies, and we remove the boundaries $\cS_5$ and $\cS_6$ since they have no holes so the flat connections on them are trivial. Then on each side, $M_\pm$ effectively becomes a different graph complement of $S^3$: $M_\pm'=S^3\backslash \Gamma_4$, where $\Gamma_4$ is the tetrahedron graph. The fundamental group $\pi_1(S^3\backslash \Gamma_4)$ of $M'_\pm$ is generated by 
\be
{\left\{\ell_{ab} \Bigg|\ a,b=1,\cdots,4\,, a\neq b ;\ \ell_{ab}=\ell_{ba}^{-1};\ \prod_{b}\ell_{ab}=1\,,\forall a\right\}}\,,
\ee
where each $\ell_{ab}$ is a closed loop in $S^3\backslash \Gamma_4$ around an annulus cusp connecting the holes of $\mathcal{S}_a$ and $\cS_b$. Given a Lie group $G$, the irreducible $G$ flat connection on $S^3\backslash \Gamma_4$ is an irreducible $G$-representation of $\pi_1(S^3\backslash \Gamma_4)$ modulo conjugation. The flat connection on $M_\pm'$ can be identified as a flat connection on the original $M_\pm$, simply by adding trivial holonomies as the representatives of the loops around the annulus connecting $\cS_5$ or $\cS_6$.

Here we are not aiming at a full classification of the critical points but showing their existence. We focus on the $\SU(2)$ flat connections, which at least cover a subset of critical points.

\begin{lemma}\label{bijectionV1}

There is a bijection from the set of 4 points on the unit $S^3$ modulo the $\SU(2)$ left and right translations to $\SU(2)$ flat connections on $S^3\backslash \Gamma_4$.

\end{lemma}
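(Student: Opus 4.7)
My plan is to build the bijection explicitly by assigning meridian holonomies to four-tuples of points in $\SU(2)$, verify it factors through the left-right quotient, and invert it node by node, with a dimension count to secure surjectivity.

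First, fix at each node $a$ of $\Gamma_4$ a cyclic ordering $(b_1,b_2,b_3)$ of its three neighbours (for example the one induced by a planar embedding of $K_4$ together with an orientation of $S^3$). Given $(v_1,v_2,v_3,v_4)\in(\SU(2))^4$, I declare that the meridian $\ell_{ab_i}$, viewed as a loop on the three-holed sphere $\cS_a$, has holonomy
\[
O_{ab_i}(\cS_a)=v_{b_{i+1}}v_{b_{i-1}}^{-1}, \qquad \text{indices mod }3.
\]
The closure relation $\prod_i\ell_{ab_i}=1$ at node $a$ then reduces to the telescoping identity $v_{b_2}v_{b_3}^{-1}\cdot v_{b_3}v_{b_1}^{-1}\cdot v_{b_1}v_{b_2}^{-1}=\id$, and with compatible cyclic conventions at both ends of each annulus, $O_{ab}(\cS_a)$ and $O_{ba}(\cS_b)$ agree up to the conjugation corresponding to parallel transport across the annulus, matching $\ell_{ab}=\ell_{ba}^{-1}$. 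This defines a homomorphism $\pi_1(S^3\backslash\Gamma_4)\to\SU(2)$, hence a map $\Phi$ from $(\SU(2))^4$ to the $\SU(2)$ character variety of $S^3\backslash\Gamma_4$.

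Next, I would check that $\Phi$ descends to the left-right quotient: a right translation $v_a\mapsto v_a h$ leaves every $v_c v_d^{-1}$ unchanged, and a left translation $v_a\mapsto g v_a$ conjugates every $v_c v_d^{-1}$ simultaneously by $g$, i.e.\ is a global gauge transformation. Hence $\Phi$ factors through $(S^3)^4/(\SU(2)\times\SU(2))$.

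Finally, for injectivity I would gauge-fix $v_1=\id$ using the left translation. The formulas at node $2$ then give $O_{23}(\cS_2)=v_4$ and $O_{24}(\cS_2)=v_3^{-1}$, reading $v_3$ and $v_4$ directly from the holonomies, and a symmetric computation at node $3$ (or $4$) yields $v_2$; closure and annulus-consistency make these assignments well defined, while the residual diagonal $\SU(2)\subset\SU(2)\times\SU(2)$ (the stabiliser of $v_1=\id$) matches the remaining simultaneous-conjugation freedom on $(v_2,v_3,v_4)$. The main obstacle is surjectivity, i.e.\ that every $\SU(2)$ flat connection on $S^3\backslash\Gamma_4$ arises from some tuple. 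I would settle it with a dimension-and-rigidity argument: both $(S^3)^4/(\SU(2)\times\SU(2))$ and the top stratum of the $\SU(2)$ character variety of $\pi_1(S^3\backslash\Gamma_4)$ are $6$-dimensional (the latter because $K_4$ is planar, so the complement is a genus-$3$ handlebody with $\pi_1$ free of rank $3$, yielding $3\cdot 3-3=6$). The explicit reconstruction above shows the differential of $\Phi$ is nondegenerate at a generic point, so $\Phi$ is a local diffeomorphism onto an open subset, and the same node-by-node local inversion extends globally on the irreducible locus, giving surjectivity.
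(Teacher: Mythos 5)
Your proof is correct and follows essentially the same route as the paper's: define the holonomies as differences $v_cv_d^{-1}$ so that the node (closure) relations telescope, observe that left translation acts by conjugation (gauge) while right translation acts trivially, and invert by gauge-fixing $v_1=\id$ and reading the remaining points off a generating set of meridians. You are in fact more careful than the paper about which difference is assigned to which edge (the paper leaves the $(i,j)\leftrightarrow(a,b)$ matching vague), and your explicit node-by-node reconstruction already furnishes the inverse on the nose — the closing dimension-count/local-diffeomorphism argument is superfluous and, on its own, would not establish surjectivity.
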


\begin{proof}
Given 4 points on $S^3$ denoted by $v_1,v_2,v_3,v_4$. Each $v_i$ can be uniquely represented by an SU(2) matrix. 
We define $h_{ij}=v_iv_j^{-1}$ for any pair of $i,j=1,\cdots,4$ ($i\neq j$). The set of $h_{ij}$ satisfies $h_{ij}h_{jk}h_{ki}=1$. The data $\{h_{ij}\}$ defines a representation of the generators in $\pi_1(S^3\backslash\Gamma_4)$ by the relation $h_{ij}=h(\ell_{ab})$ {for some $i,j,a,b\in\{1,2,3,4\}$}. The left translation acting on $v_1,v_2,v_3,v_4$ gives the conjugation $h_{ij}\to gh_{ij}g^{-1}$ for any $g\in\mathrm{SU(2)}$, and the right translation leaves $h_{ij}$ invariant. 

By the right translation, we fix $v_1=1$ then we have 1-to-1 correspondence between $\{v_1,v_2,v_3,v_4\}$ and $\{h_{1j}=v_j^{-1}\}$, while other $h_{ij}$ are completely fixed by $\{h_{1j}\}$ via $h_{1i}h_{ij}h_{j1}=1$. Therefore, modulo the left and right translation, the map from the set of 4 points on $S^3$ to SU(2) flat connections on $S^3\backslash \Gamma_4$ is bijective.
\end{proof}

The flat connection on $M_\pm'$ maps each $\ell_{ab}$ to the SU(2) holonomy $h(\ell_{ab})=h_{ij}$ with {$i,j,a, b\in\{1,2,3,4\}$}. The flat connection is irreducible for generic 4 points. The eigenvalue of $h(\ell_{ab})$ is the FN variables $\lambda_{ab}=e^{\frac{2\pi i}{k}m_{ab}}$ associated to the annulus cusp connecting $\cS_a$ and $\cS_b$. If we write $\mathrm{Tr}\lb h(\ell_{ab})\rb=2\cos \theta_{ij}$, $\theta_{ij}\in[0,\pi]$, then $\theta_{ij}$ is the geodesic length connecting $v_i$ and $v_j$ on $S^3$. To see this, we use the relation that the inner product $\langle X,Y\rangle=-\frac{1}{2}[\mathrm{Tr}(XY)-\mathrm{Tr}X\mathrm{Tr}Y]$ of any two $\SU(2)$ elements $X$ and $Y$ is identical to the Euclidean inner product $\sum_{i=0}^3x^iy^i$ of two vectors $\vec{x}$ and $\vec{y}$ on $\mathbb{R}^4$, when we parameterize $X=x^0 I+i\sum_{i=1}^3x^i\bm{\sigma }^i$ and $Y=y^0 I+i\sum_{i=1}^3y^i\bm{\sigma }^i$ with $\{\bm{\sigma }^i\}$ being the Pauli matrices. We then obtain 
\be
\langle v_i,v_j\rangle=\frac{1}{2}\mathrm{Tr}\lb h_{ij}\rb=\cos\theta_{ij}\,,
\ee
where the identity of $\SL(2,\bC)$ matrices -- $\mathrm{Tr}(g)\mathrm{Tr}(h)=\mathrm{Tr}(gh)+\mathrm{Tr}(gh^{-1})\,,\forall g,h\in\SL(2,\bC)$ -- is used. The geodesic distances $\{\theta_{ij}\}$ between all pairs of points uniquely fix the positions of $v_1,\cdots,v_4$ on $S^3$ up to a global $\SU(2)$ left or right translation. We are led to the following result. 

\begin{lemma}
\label{lamma:unique_flat_connection}
The FN variables $\{\lambda_{ab}\}$ uniquely determine an $\SU(2)$ flat connection on $M_\pm'$.
    
\end{lemma}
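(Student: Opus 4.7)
My plan is to reduce the claim to a statement in elementary spherical geometry by exploiting the bijection established in Lemma~\ref{bijectionV1}. Under that bijection, an $\SU(2)$ flat connection on $M_\pm'=S^3\backslash\Gamma_4$ corresponds to an ordered $4$-tuple of points $\{v_1,v_2,v_3,v_4\}\subset S^3\cong\SU(2)$ modulo the $\SU(2)_L\times\SU(2)_R$ action by left and right translations. Hence it suffices to show that the $\SU(2)_L\times\SU(2)_R$-orbit of $(v_1,v_2,v_3,v_4)$ is determined by the six pairwise geodesic distances $\{\theta_{ij}\}_{i<j}$ on $S^3$, which in turn are read off from the $\lambda_{ab}$'s via the identification $2\cos\theta_{ij}=\lambda_{ab}+\lambda_{ab}^{-1}=\mathrm{Tr}\,h(\ell_{ab})$ established in the paragraph below Lemma~\ref{bijectionV1}.

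First I would use the right translation to normalize $v_1=\id$; the residual freedom in $\SU(2)_L\times\SU(2)_R$ that preserves this normalization is the diagonal $\SU(2)$ acting by conjugation, which corresponds exactly to the ambiguity $h_{ij}\mapsto g h_{ij}g^{-1}$ defining an $\SU(2)$ flat connection. Next I would reconstruct $v_2,v_3,v_4$ step by step using the residual conjugation. Since $\mathrm{Tr}(v_2)=2\cos\theta_{12}$ fixes the conjugacy class of $v_2$, a conjugation brings it to $v_2=e^{i\theta_{12}\sigma_3}$; the centralizer of this element is the $U(1)$ subgroup generated by $\sigma_3$. Writing $v_3=e^{i\theta_{13}\,\hat n_3\cdot\vec\sigma}$, the trace identity $\mathrm{Tr}(v_2v_3^{-1})=2\cos\theta_{23}$ fixes the polar angle of $\hat n_3$ relative to $\hat z$, and the residual $U(1)$ can be used to set the azimuthal angle so that $\hat n_3$ lies in the half-plane $\{y=0,\;x\geq 0\}$. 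At this point the residual group in $\SU(2)_L\times\SU(2)_R$ has shrunk to $\{\pm\id\}$ (the center), under which the data $\{h_{ij}\}$ is invariant.

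Finally I would show that the remaining three trace constraints $\mathrm{Tr}(h_{1j})=2\cos\theta_{14}$, $\mathrm{Tr}(h_{2j})=2\cos\theta_{24}$, $\mathrm{Tr}(h_{3j})=2\cos\theta_{34}$ determine $v_4\in S^3$. Parameterizing $v_4=a_0\id+i\sum_{k=1}^3a_k\sigma_k$ with $\sum a_\mu^2=1$, each trace constraint is a linear equation in $(a_0,a_1,a_2,a_3)$; the three linear equations together with the unit-norm condition yield a system whose solutions I would analyze by the Gram-matrix construction applied to the vectors $v_i\in\mathbb R^4$. Since the inner product $\langle v_i,v_j\rangle=\cos\theta_{ij}$ determines the $4\times 4$ Gram matrix, positivity/rank of this matrix pins down $\{v_i\}$ up to an $O(4)$ isometry of $S^3$; and because the diagonal $\SU(2)_L\times\SU(2)_R$-action maps surjectively onto $\SO(4)$, this reduces the ambiguity to at most a reflection, which can be absorbed (in the generic region where the configuration spans a codimension-one affine subspace of $\mathbb R^4$) by an appropriate element of the full left-right group that reverses the orientation in the orthogonal direction.

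The main obstacle I expect is the last step: careful bookkeeping of the discrete mirror ambiguity that generically appears when reconstructing $4$ points from $6$ pairwise distances modulo $\SO(4)$. This must be shown either to be absent (by a genericity argument consistent with the simplicity-constraint region $\overline{\cM}_{\vec m}$) or to coincide with an element of the $\SU(2)_L\times\SU(2)_R$ action that was not yet used for normalization. I would handle the non-generic strata, where some $\theta_{ij}\in\{0,\pi\}$ or where the four points lie on a common 2-sphere, separately by direct inspection.
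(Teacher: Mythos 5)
Your route is genuinely different from the paper's: the paper disposes of this lemma with a pure dimension count ($3\times4=12$ parameters for four points on $S^3$, minus $6$ for left and right translations, equals the $6$ distances $\theta_{ij}$), whereas you attempt an explicit reconstruction by successive gauge-fixing plus a Gram-matrix argument. Your version has the merit of making visible exactly where the counting argument is too quick, namely the discrete mirror ambiguity in recovering four points from their six pairwise distances, which you correctly single out as the main obstacle.

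However, your proposed resolution of that obstacle fails. You claim the reflection can be absorbed by the left--right action ``in the generic region where the configuration spans a codimension-one affine subspace of $\mathbb{R}^4$''; this conflates affine span with linear span. The isometries of $S^3$ induced by $\SU(2)_L\times\SU(2)_R$ are the \emph{linear} maps in $\mathrm{SO}(4)$, and four generic points on $S^3$ are linearly independent as vectors in $\mathbb{R}^4$. Hence the reflected configuration $\{Rv_i\}$ (take $R$ to be quaternionic conjugation, $R=\mathrm{diag}(1,-1,-1,-1)\in \mathrm{O}(4)\setminus\mathrm{SO}(4)$) has the same Gram matrix, i.e.\ the same $\{\lambda_{ab}\}$, but generically lies in a different $\mathrm{SO}(4)$-orbit; by Lemma \ref{bijectionV1} it gives a second flat connection $h'_{ij}=\bar v_i\,\bar v_j^{-1}=v_i^{-1}v_j$ that is generically \emph{not} conjugate to $h_{ij}=v_iv_j^{-1}$. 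Concretely, for $(v_1,v_2,v_3,v_4)=(1,i,j,k)$ one finds $h_{14}=-k$, $h'_{14}=k$ while $h_{23}=h'_{23}=-k$, so a single conjugation relating the two sets would have to send $-k$ to both $k$ and $-k$. The $\Z_2$ ambiguity is therefore real on the generic stratum and disappears only when the $v_i$ span a proper linear subspace (all four points on a common \emph{great} $2$-sphere) --- precisely the stratum you set aside as non-generic. Note that the paper's own dimension count does not exclude this discrete ambiguity either; the statement that survives is uniqueness up to a finite (generically two-fold) set, which is all that is needed downstream, since the degeneracy count $d=6$ only uses the local dimension of the family of critical points. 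As written, your proof cannot be completed along the proposed lines without either weakening the statement to uniqueness up to this $\Z_2$, or supplying an additional argument identifying the two mirror sheets.
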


\begin{proof}
The positions of 4 points $v_1,\cdots,v_4$ on $S^3$ have $3\times 4=12$ degrees of freedom, $2\times 3=6$ of which are gauges of left and right translations. Therefore, the relative positions of all points have 6 degrees of freedom and they are fixed by the set of 6 geodesic distances $\{\theta_{ij}\}$. When we restrict $\theta_{ij}\in[0,\pi]$, they are uniquely determined by $\{\lambda_{ab}\}$ through the relations $2\cos\theta_{ij}=\lambda_{ab}+\lambda_{ab}^{-1}$. The $\SU(2)$ flat connection is determined by Lemma \ref{bijectionV1}.
\end{proof}

Given that $\{\lambda_{ab}\}$ are shared by $M'_+$ and $M'_-$, an $\SU(2)$ flat connection on $M_{+\cup-}'=M'_+\cup M'_-$ is constructed from two identical  $\SU(2)$ flat connections on $M_+'$ and $M_-'$ respectively: We denote the representations of the flat connections on $M_+$ and $M_-$ by $\SU(2)$ group elements $\{h_{ij}\}$ and $\{h_{ij}'\}$ (modulo their conjugations) respectively. 
They being identical means that there exists a $g\in \mathrm{SU(2)}$ such that $h'_{ij}=gh_{ij}g^{-1}$ for all pairs $(i,j)$'s. The fundamental group $\pi_1(M_{+\cup-}')$ adds 3 generators $\ell_{I=1,2,3}$ to $\pi_1(S^3\backslash\Gamma_4)$, which come from the 3 non-contractible cycles of the ambient space in fig.\ref{fig:glue}. The SU(2) element $g$ is understood as the parallel transport from the base point $p_+\in M_+'$ of $\{h_{ij}\}$ to the base point $p_-\in M_-'$ of $\{h_{ij}'\}$. But since $g$ is obtained by identifying $\{h_{ij}\}$ and $\{h_{ij}'\}$, $g$ does not depend on the path connecting $p_+$ and $p_-$ (in particular, it does not depend on which $\cS_a$ the path goes across). It indicates that the holonomy along each $\ell_{I}$ has to be trivial. As a result, $\{h_{ij}\}$ and $h(\ell_I)=1$ define an $\SU(2)$ representation of {$\pi_1(M'_{+\cup-})$}. The $\SU(2)$ flat connection on {$M_{+\cup-}'$} is obtained by the gauge equivalence class of this representation. $h(\ell_I)=1$ implies that all B-cycle holonomies of the torus cusps are trivial,  consistent with the critical equation $n_f=0$ from \eqref{nf=0}. The resulting flat connection is irreducible if $\{h_{ij}\}$ are generic, and it can be identified as an irreducible flat connection on $M_{+\cup-}$ by adding trivial holonomies in the representation. So we obtain a critical point of the partial ampltiude $\cA_{\heta_5,\htrho_5}(\{\hrho^{(0)}_a\})$ up to $O(1/k)$ correction, for any boundary data $j_b,\heta_5,\htrho_5$. 
This is based on the assumption that the 3D ideal triangulation is sufficiently refined so that the flat connection is covered by the closure of the open patch.

By the above argument, different sets of $\{\lambda_{ab}\}$ determine different SU(2) flat connections. On $M_{+\cup-}'$, the FN variables $\{\lambda_{ab}\}$ associated to the torus cusps are integrated in the amplitude. There are continuously many SU(2) flat connections on $M_{+\cup-}'$ labelled by different $\{\lambda_{ab}\}$. All of these flat connections should correspond to the critical points of the amplitude. Then it suggests that there should exist degeneracy of the Hessian $H_\alpha$ caused by continuously many critical points. Recall the discussion at the end of the last section. Here $d=|\{\lambda_{ab}\}|=6$ indicates that the scaling of the amplitude should increase by $k^{3}$ compared to \eqref{eq:final_scaling_final} hence
\be
\cZ_{\htrho_5,\heta_5}(\vec{\mu}|\vec{m})\sim k^{15+6\mu}\,.
\label{eq:final_scaling3}
\ee
This formula is valid if $d=6$ is the maximal number of degenerate directions in the set of critical points. Rigorously speaking, the power in the formula is a lower bound, since any additional degeneracy, if exists, may increase the power.

\section{Conclusion and outlook}
\label{sec:conclusion}

In this work, we analyze the radiative correction corresponding to the melon graph with the spinfoam model introduced in \cite{Han:2021tzw}, which describes 4D quantum gravity with a non-vanishing cosmological constant $\Lambda$. The melon graph represents that two 4-simplices are glued by identifying four tetrahedra. We first construct the Chern-Simons partition function with the state-integral model for the 3-manifold $M_{+\cup -}$ corresponding to the melon graph then separate it into partition functions for two 3-manifolds, each corresponding to a spinfoam vertice, by using the over-completeness of the Chern-Simons coherent states. The spinfoam amplitude for $M_{+\cup -}$ is obtained by imposing simplicity constraints properly on the Chern-Simons partition function followed by coupling with 6 face amplitudes, each for a torus cusp. 
We propose a face amplitude as a $q$-deformed version of that in EPRL-FK model combined with a sign factor depending on the FN coordinate on the torus cusp. A key point of the paper is to show that the amplitude of the melon graph is finite. There is no infrared divergence in the radiative correction.

We study the scaling behavior of the melon graph amplitude at small $|\Lambda|$. It scales as $|\Lambda|^{-6\mu-15}$ provided that the face amplitude is a degree-$\mu$ polynomial of internal spins at small $|\Lambda|$ approximation. This provides the first-order correction of a spinfoam edge amplitude. At the $|\Lambda|\rightarrow 0$ limit, such a radiative correction diverges when {$\mu>-5/2$}. 

This work is the first application of this newly constructed spinfoam model. Compared to the original model \cite{Han:2021tzw}, we modify the imposition of the second-class simplicity constraints by using the trace coordinates as described in Section \ref {subsubsec:second_simplicity} so that degenerate simplices are also included in the expression of the spinfoam amplitude. A certain exponential suppressing factor in the edge amplitude in \cite{Han:2021tzw} is removed in the construction here, while the finiteness still holds. Our analysis also establishes that this spinfoam model is as computable as the EPRL-FK model and can be easily generalized to a general triangulation. 

It is natural to compare our result with that of the melonic radiative correction of the EPRL-FK model, which was recently found to scale as $|\Lambda|^{-1}$ at $\mu=1$ by numerical analysis \cite{Frisoni:2021uwx}. Even though the amplitude in our case scales differently (at least as $|\Lambda|^{-21}$ at $\mu=1$), a contradiction is {\it not} immediately drawn. This is because the coherent states that define the two spinfoam models differ. The coherent states in the EPRL-FK model are defined based on the holonomy-flux algebra of loop quantum gravity while the coherent states we use here are defined from the Chern-Simons phase space variables. It is nevertheless interesting to investigate the relation of these two coherent states which then relates the two spinfoam models. We expect that it will explain the different scalings of the melonic radiative corrections.

It is also interesting to study the divergence of the spinfoam amplitude corresponding to a more complicated spinfoam graph or even a general spinfoam graph. This may be systematically analyzed by developing a GFT formalism of the spinfoam model. Such a ``group field'' should encode the information of the cosmological constant and a consistent GFT should reproduce the divergent power of the melonic radiative correction we discover in this paper. 
When such a GFT is formulated and the relation of coherent states in this spinfoam model and those in the EPRL-FK model mentioned above is made clear, one can compare the divergences for other spinfoam graphs in this spinfoam model and the EPRL-FK model (see \eg \cite{BenGeloun:2010qkf,Krajewski:2010yq,Bahr:2015gxa}).

\begin{acknowledgements}
This work receives support from the National Science Foundation through grants PHY-2207763, PHY-2110234, the Blaumann Foundation and the Jumpstart Postdoctoral Program at Florida Atlantic University. The authors acknowledge IQG at FAU Erlangen-N\"urnberg, Perimeter Institute for Theoretical Physics, and University of Western Ontario for the hospitality during their visits.   

\end{acknowledgements}

\appendix
\renewcommand\thesection{\Alph{section}}

\section{Construction of Chern-Simons partition function on $S^3\backslash\Gamma_5$}
\label{app:partition}

In this appendix, we sketch the necessary steps to derive the partition function \eqref{eq:partition_S3G5_1} used in \cite{Han:2021tzw}. We refer to \eg \cite{Dimofte:2011gm,Dimofte:2013lba,Dimofte:2010wxa} for more details of the construction.

\subsection{Ideal tetrahedron partition function}
\label{subsec:ideal_tetra}

The phase space of $\PSL(2,\bC)$ Chern-Simons theory on the boundary $\partial\triangle$ of an ideal tetrahedron $\triangle$ is the moduli space $\cM_{\Flat}(\partial\triangle,\PSL(2,\bC))$ of {\it framed} \footnote{The moduli space we describe in this paper is for framed flat connection because the edge coordinates are defined as the cross-ratios of the {\it framing flags} at the disc cusps of each ideal tetrahedron. See \cite{Han:2021tzw} and reference therein for more details.} flat ${\rm PSL}(2,\bC)$ connection on $\partial\triangle$. Each edge $E$ belonging to the geodesic boundary of $\partial\triangle$ is dressed with an {\it edge coordinate} \cite{Fock:2003alg} $x_E$ which is a coordinate in $\cM_{\Flat}(\partial\triangle,\PSL(2,\bC))$. An edge coordinate can also be lifted to its logarithmic coordinate by choosing a branch such that $x_E=e^{\chi_E}$. 
The $\PSL(2,\bC)$ holonomies on $\partial\triangle$ can be written as $2\times 2$ matrices whose matrix elements are in terms of the edge coordinates dressing the edges they cross. This is called the ``snake rule''. We refer to  \cite{Dimofte:2013lba,Gaiotto:2009hg} for a detailed description of the snake rules. 
For a holonomy along a disc cusp with eigenvalue $\lambda\equiv e^L$, the snake rule gives
\be
\prod_{E \text{ around disc cusp}} (-x_E) =\lambda^2
\quad\Longleftrightarrow\quad
\sum_{E \text{ around disc cusp}} (\chi_E-i\pi)=2L\,.
\label{eq:snake_rule_annulus}
\ee
One immediately realizes that the edge coordinates are not sensitive to the sign of the eigenvalue $\lambda$. This reflects the fact that the gauge group is $\PSL(2,\bC)$ rather than $\SL(2,\bC)$. One can easily choose a lift $\sqrt{-x_E}$ or $-\sqrt{-x_E}$ of the edge coordinates, in which case the gauge group is lifted to $\SL(2,\bC)$. 
(We will choose the former lift for all the edges when constructing the discrete simplicity constraints. See Section \ref{subsubsec:second_simplicity}.)
When the eigenvalues are all fixed for holonomies around the four disc cusps of $\partial\triangle$, the moduli space of flat connection on $\partial\triangle$ is a symplectic space with the Poisson structure given by 
\be
\{\chi_E,\chi_{E'}\}=\epsilon_{EE'}\,,
\label{eq:ZZ'_Poisson}
\ee
where $\epsilon_{EE'}=0,\pm1$ counts the oriented triangles shared by $E,E'$ and $\epsilon_{EE'}=1$ if $E'$ occurs to the left of $E$ in the triangle. 

As shown in fig.\ref{fig:triangulation_All}, the disc cusps of $\triangle$ are not pierced by $\Gamma_5$ hence holonomies are trivial around each disc cusp. In other words, the connection is flat on $\triangle$. The Chern-Simons phase space $\cP_{\partial \triangle}$ on the boundary $\partial\triangle$ is given by three {\it edge coordinates} $\{z,z',z''\}\in\bC^*$ each labelling a pair of opposite edges of $\triangle$ as shown in fig.\ref{fig:ideal_tetra} and it is defined as
\be
\cP_{\partial\triangle}=\{z,z',z''\in\bC^*|zz'z''=-1\}\in(\bC^*)^2\,.
\label{eq:boundary_phase_space}
\ee
It comes from requiring that the holonomy $h$ around (any) one disc cusp of $\triangle$ defined by the snake rule 
\be
h=
\mat{cc}{1&0\\0&-z'} 
\mat{cc}{1&0\\1&1}
\mat{cc}{1&0\\0&-z''}
\mat{cc}{1&0\\1&1}
\mat{cc}{1&0\\0&-z}
\mat{cc}{1&0\\1&1}
\equiv 
\mat{cc}{1&0\\ zz' (z^{-1}+z''-1) & -zz'z''}
\label{eq:snake_rule_h}
\ee
is an $\SL(2,\bC)$ element hence $\det(h)=1$.
The constraint $zz'z''=-1$ eliminates one edge coordinate, say $z'$, then the holomorphic part of the Atiyah-Bott-Goldman symplectic form can be written as
\be
\Omega=\f{\rd z''}{z''}\w \f{\rd z}{z}\,.
\label{eq:ABG_symplectic_form_holo}
\ee
Taking the anti-holomorphic coordinates into account, the symplectic form for the Chern-Simons action \eqref{eq:CS_action} is: 
\be
\omega_{k,s}=\f{t}{4\pi}\Omega+\f{\tb}{4\pi}\overline{\Omega}\,.
\label{eq:ABG_symplectic_form}
\ee
Lift these coordinates to their logarithmic correspondence, $Z:=\log (z),Z':=\log(z'),Z'':=\log(z'')$ and similarly for the anti-holomorphic counterparts, the constraint of the edge coordinates and the Poisson structure induced by \eqref{eq:ABG_symplectic_form} are
\be
Z+Z'+Z''=i\pi=\bZ+\bZ'+\bZ''\,,\quad
\{Z,Z''\}_\Omega=1=\{\bZ,\bZ''\}_{\overline{\Omega}}\,.
\ee
Therefore, $(Z,Z'')$ and $(\bZ,\bZ'')$ form two canonical pairs.  
The quantization is based on another equivalent canonical pairs $(\mu,\nu)\in \R^2$ and $(m,n)\in (\Z/k\Z)^2$ defined as
\be
Z=\f{2\pi i}{k}\lb-ib\mu-m \rb \,,\quad
Z''=\f{2\pi i}{k}\lb-ib\nu-n \rb\,,\quad
\bZ=\f{2\pi i}{k}\lb-ib^{-1}\mu+m \rb \,,\quad
\bZ''=\f{2\pi i}{k}\lb-ib^{-1}\nu+n \rb \,,
\label{eq:Z_to_mu_m}
\ee
where $k\in\Z_+$ is defined in \eqref{eq:def_t_k} and $b$ is a phase parameter related to the Barbero-Immirzi parameter:
\be
b^2=\f{1-i\gamma}{1+i\gamma}\,,\quad \re(b)>0\,,\quad \im(b)\neq0\,,\quad |b|=1
\quad \Rightarrow \quad
t=\f{2k}{1+b^2}\,,\quad \tb=\f{2k}{1+b^{-2}}\,.
\ee
Conversely, one can express $Z,Z'',\bZ,\bZ''$ in terms of $(\mu,\nu,m,n)$ as
\be
\mu=\f{k}{2\pi Q}\lb Z+\bZ \rb\,,\quad 
m=\f{ik}{2\pi bQ}\lb Z-b^2\bZ \rb\,,\quad
\nu=\f{k}{2\pi Q}\lb Z''+\bZ''\rb\,,\quad 
n=\f{ik}{2\pi bQ}\lb Z''-b^2\bZ''\rb\,,\quad
Q=b+b^{-1}\,.
\label{eq:Z_to_mu_m}
\ee
The symplectic form in terms of the new variables and the Poisson brackets it generates are
\be
\omega_{k,s}=\f{2\pi}{k}\lb\rd\nu\w\rd\mu-\rd n\w\rd m \rb\,,\quad
\{\mu,\nu\}_\omega=\{n,m\}_\omega=\f{k}{2\pi}\,,\quad
\{\mu,n\}_\omega=\{\nu,m\}_\omega=0\,.
\ee
To promote to the quantum theory, we introduce quantum parameters
\be
q=\exp\lb\f{4\pi i}{t}\rb =\exp\left[\f{2\pi i}{k}(1+b^2)\right]\equiv e^{h}\,,\quad
\qt=\exp\lb\f{4\pi i}{\tb}\rb =\exp\left[\f{2\pi i}{k}(1+b^{-2})\right]\equiv e^{\tilde{h}}\,.
\label{eq:q_qt_def}
\ee
Here, $h:=4\pi i/t$ (or equivalently $\tilde{h}:=4\pi i/\tb$) is a (non-standard) complex quantum parameter related to the Chern-Simons level whose $h\rightarrow 0$ limit corresponds to the classical limit. 
{A Poisson bracket $\{x,y\}_\omega$ is quantized to a commutator by $[\hat{x},\hat{y}]:=\widehat{\{x,y\}}_\omega/i$.} 
{We allow the analytic continuation of $\mu,\nu$ to be in $\bC$ by adding imaginary parts, and define $Z,Z'',\Zt$ and $\Zt''$ in the same way as in \eqref{eq:Z_to_mu_m} with these complex variables. Then $\Zt$ (\resp $\Zt''$) is not necessarily the complex conjugate of $Z$ (\resp $Z''$). The exponential of $\Zt$ and $\Zt''$ are denoted as $\zt$ and $\zt''$ respectively. }
The quantization of $\cP_{\partial\triangle}$ promotes $\mu,m$ (\resp $Z,\Zt$) to be multiplication operators $\bmu,\bfm$ (\resp $\Zb,\Zbt$) and $\nu,n$ (\resp $Z'',\Zt''$) to be derivative operators $\bnu,\bfn$ (\resp $\Zb'',\Zbt''$) with the commutators
\be
{[\Zb'',\Zb]=h\,,\quad
[\Zbt'',\Zbt]=\tilde{h}\quad
\Longleftrightarrow\quad
[\bmu,\bnu]=[\bfn,\bfm]=\f{k}{2\pi i}\,,\quad
[\bmu,\bfn]=[\bnu,\bfm]=0\,.}
\label{eq:commutator}
\ee
Upon quantization, we require the imaginary parts of $\mu$ and $\nu$ remain to be $c$-numbers. 
Projecting the commutators to the exponential operators $\zb,\zb'',\zbt,\zbt''$, one finds $q$-Weyl and $\qt$-Weyl algebras
\be
{\zb''\zb=q\zb\zb''\,,\quad
\zbt''\zbt=\qt\zbt\zbt''\,,\quad
\zbt''\zb=\zb\zbt''\,,\quad
\zb''\zbt=\zbt\zb''\,.}
\ee
Due to the discreteness and periodicity of $m,n$, the spectra of $\bfm,\bfn$ are discrete and bounded to be $\mathbb{Z}/k\mathbb{Z}$. On the other hand, the spectra of $\bmu,\bnu$ are real. The kinematical Hilbert space of Chern-Simons theory is hence
\be
\cH^{\text{kin}}_{k,s}=L^2(\R)\otimes_{\bC} \bC^k\,,
\ee
where $\bC^k$ is a $k$-dimensional vector space. 
The quantum operators $\zb,\zb '',\zbt,\zbt''$ act on a wave function $f(\mu|m)\in\cH^{\text{kin}}_{k,s}$ as
\be
\zb f(\mu|m)= zf(\mu|m)\,,\quad
{\zb'' f(\mu|m)=f(\mu+ib|m-1)\,,\quad}
\zbt f(\mu|m)=\bz f(\mu|m)\,,\quad
{\zbt'' f(\mu|m)=f(\mu+ib^{-1}|m+1)}
\label{eq:z_on_f}
\ee
or a re-parameterized version
\be
\zb f(z,\zt)= zf(z,\zt )\,,\quad
{\zb'' f(z,\zt)=f(qz,\zt)\,,\quad}
\zbt f(z,\zt)=\bz f(z,\zt)\,,\quad
{\zbt'' f(z,\zt)=f(z,\qt\zt)\,.}
\label{eq:z_on_f_2}
\ee

$(z,z'')$ are holomorphic coordinates on $\cP_{\partial\triangle}$. The moduli space of flat $\PSL(2,\bC)$ connection on an ideal tetrahedron, denoted as $\cL_{\triangle}$, is a holomorphic Lagrange submanifold of $\cP_{\partial\triangle}$ determined by further requiring the holonomy $h$ defined in \eqref{eq:snake_rule_h} to be trivial. In other words, $\cL_{\triangle}$ is an algebraic curve given by
\be
\cL_\triangle=\{z^{-1}+z''-1=0\}\equiv \{\zt^{-1}+\zt''-1=0\}\subset \cP_{\partial\triangle}\,.
\ee
Quantization promotes the algebraic curve to the quantum constraints whose solution $\Psi_\triangle(\mu|m)$ satisfying
\be
(\zb^{-1}+\zb''-1)\Psi_\triangle=(\zbt^{-1}+\zbt''-1)\Psi_\triangle(\mu|m)=0
\label{eq:eom_quantum}
\ee
defines the Chern-Simons wave function for the ideal triangulation, or the Chern-Simons partition function with boundary condition specified by parameters $\mu$ and $m$. 
$\Psi_\triangle(\mu|m)$ is the quantum dilogarithm function \cite{Dimofte:2014zga,Imamura:2013qxa,Faddeev:1995nb,kashaev1997hyperbolic} \footnote{As $k=\f{12\pi}{\ell_{\text{p}}^2\gamma|\Lambda|}$ is taken to be positive integer, $\gamma\in\R_+$ hence $\im(b)<0$, leading to $|q|<1$. Suppose $k\in\Z_-$ then $\gamma<0$ and $|q|>1$, the quantum dilogarithm function takes the form $\Psi_\triangle(\mu|m)=\prod\limits_{j=0}^\infty \f{1-q^{j+1}z^{-1}}{1-\qt^{-j}\zt^{-1}}$, which is still the solution to \eqref{eq:eom_quantum}. }:
\be
\Psi_\triangle(\mu|m)=
	\prod\limits_{j=0}^\infty \f{1-\qt^{j+1}\zt^{-1}}{1-q^{-j}z^{-1}}\,.
\ee
$\Psi_{\triangle}$ has poles on the real line and in the lower half-plane $\im(\mu)\leq 0$ but is holomorphic in the upper half-plane $\im(\mu)>0$. 
Let $\alpha,\beta>0$, (The absolute value of) the function $e^{-\f{2\pi}{k}\beta\mu}\Psi_\triangle(\mu+i\alpha|m)$ with $\mu\in\R$ has limits
\be
|e^{-\f{2\pi}{k}\beta\mu}\Psi_\triangle(\mu+i\alpha|m)|\rightarrow \left\{
\ba{ll}
\exp\left[-\f{2\pi}{k}\beta\mu \right]\,,\quad &\mu\rightarrow+\infty\\[0.15cm]
\exp\left[-\f{2\pi}{k}\mu(\alpha+\beta-Q/2) \right]\,,\quad &\mu\rightarrow-\infty
\ea\right..
\ee
Therefore, $e^{-\f{2\pi}{k}\beta\mu}\Psi_\triangle(\mu+i\alpha|m)$ is a Schwartz function when $(\alpha,\beta)\in\fP_\triangle$ satisfy the {\it positive angle structure} of $\triangle$, defined as
\be
\fP_\triangle=\{(\alpha,\beta )\in\R^2|\alpha,\beta>0,\alpha+\beta<Q/2 \}\,.
\ee
The positive angle structure of a 3-manifold has been extensively discussed in \eg \cite{andersen2014complex,Dimofte:2014zga} and it is useful for understanding the Fourier transform of $\Psi_\triangle$. Let $\alpha=\im(\mu),\beta=\im(\nu)$, then $\int_{\cC}\rd\mu \,e^{-\f{2\pi i}{k}\nu\mu}\Psi(\mu|m)$ is absolutely convergent when the integration contour $\cC$ is shift above the real axis while remains in $\fP_\triangle$.

\subsection{Ideal octahedron partition function}
\label{subsec:ideal_octa}

Now that we have the Chern-Simons partition function $\Psi_\triangle$ on an ideal tetrahedron as the building block, the next step is to construct the partition function on an ideal octahedron. Each ideal octahedron can be decomposed into 4 ideal tetrahedra by adding an internal edge (see fig.\ref{fig:ideal_octa}). We then have 4 copies of edge coordinates $\{x,y,z,w\}$ (or considering the logarithm coordinates $\{X,Y,Z,W\} $) subject to the constraint
\be
\ba{l}
c=xyzw=1\\
\tilde{c}=\xt\yt\zt\wt=1
\ea
\quad\Longleftrightarrow\quad
\ba{l}
C=X+Y+Z+W=2\pi i \\
\widetilde{C}=\Xt+\Yt+\Zt+\Wt=2\pi i
\ea
\quad\Longleftrightarrow\quad
\ba{l}
\mu_X+\mu_Y+\mu_Z+\mu_W=0\\
m_X+m_Y+m_Z+m_W=0
\ea\,.
\label{eq:oct_constraint}
\ee

We define a set of symplectic coordinates $(X,P_X),(Y,P_Y),(Z,P_Z),(C,\Gamma)$ where
\be
P_X=X''-W''\,,\quad
P_Y=Y''-W''\,,\quad
P_Z=Z''-W''\,,\quad
\Gamma=W''\,,
\label{eq:momentum}
\ee
and similarly for the tilde sectors. 
Performing the symplectic reduction of the four copies of phase space $\cP_{\partial\triangle}$ associated to the four ideal tetrahedra by imposing the constraint $C=2\pi i$ as well as quotient out the gauge orbit variable $\Gamma$, we obtain the phase space $\cP_{\partial\oct}$ of the boundary of the ideal octahedron with the following symplectic form and Poisson structure.
\be
\omega^{\oct}_{k,s}=\f{2\pi}{k}\sum_{i}(\rd\nu_i\w\rd\mu_i -\rd n_i\w\rd m_i)\,,\quad
\left|\ba{l}
\{\mu_i,\nu_j\}_\omega=\{n_i,m_j\}_\omega= \f{k}{2\pi}\delta_{ij}\\[0.15cm]
\{\mu_i,n_j\}_\omega=\{\nu_i,m_j\}_\omega=0
\ea\right.\,,\quad i,j=X,Y,Z\,.
\ee
Quantization of the constraint $C$ and $\widetilde{C}$ adds a quantum correction as
\be
\ba{l}
c=1
\quad\rightarrow\quad
\hat{c}=q
\quad\Longleftrightarrow\quad
 C=2\pi i
\quad\rightarrow\quad
\hat{C}=2\pi i+h\,,\\
\tilde{c}=1
\quad\rightarrow\quad
\hat{\tilde{c}}=\qt
\quad\Longleftrightarrow\quad
 \widetilde{C}=2\pi i
\quad\rightarrow\quad
\hat{\widetilde{C}}=2\pi i+\tilde{h}\,.
\ea
\label{eq:oct_quantum_constraint}
\ee
In terms of $\{\bmu_i,\bfm_i\}_{i=X,Y,Z,W}$ which are the quantization of $\{\mu_i,m_i\}_{i=X,Y,Z,W}$, the quantum constraints read
\be
\bmu_X+\bmu_Y+\bmu_Z+\bmu_W=iQ\,,\quad
\bfm_X+\bfm_Y+\bfm_Z+\bfm_W=0\,.
\ee
Each octahedron partition function can hence be written in terms of the position variables $(x,y,z;\tilde{x},\tilde{y},\zt )\equiv\exp[(X,Y,Z;\tilde{X},\tilde{Y},\tilde{Z})]$ as 
\be
\cZ_{\oct}(x,y,z;\xt,\yt,\zt ) = \prod_{i,j,k,l=0}^{\infty} 
\f{1-q^{i+1}x^{-1}}{1-\qt^{-i}\xt^{-1}}
\f{1-q^{j+1}y^{-1}}{1-\qt^{-j}\yt^{-1}}
\f{1-q^{k+1}z^{-1}}{1-\qt^{-k}\zt^{-1}}
\f{1-q^{l}xyz}{1-\qt^{-l-1}\xt\yt\zt}\,,
\ee
where we have imposed the constraint \eqref{eq:oct_quantum_constraint} to eliminate the variables $w$ and $\wt$. $e^{-\f{2\pi}{k}\vec{\beta}\cdot\vec{\mu}}\cZ_{\oct}(\{\mu_i+i\alpha_i\}|\{m_i\})$ with $\mu_i\in\R$ and $\vec{\beta}\cdot\vec{\mu}\equiv \beta_X\mu_X+\beta_Y\mu_Y+\beta_Z\mu_Z$ has the following asymptotic behavior
\be
|e^{-\f{2\pi}{k}\vec{\beta}\cdot\vec{\mu}}\cZ_{\oct}(\{\mu_i+i\alpha_i\}|\{m_i\})|
\sim \left\{
\ba{ll}
e^{\f{2\pi}{k}\mu_i(\alpha_X+\alpha_Y+\alpha_Z+\beta_i-Q/2)}\,, & \mu_i\rightarrow +\infty\\
e^{\f{2\pi}{k}\mu_i(\alpha_i+\beta_i-Q/2)}\,,&\mu_i\rightarrow -\infty
\ea
\right.,\quad\forall i=X,Y,Z\,.
\ee
This function is a Schwartz function of $\mu_{X},\mu_{Y}$ and $\mu_{Z}$ if $(\alpha_{X},\alpha_{Y},\alpha_{Z},\beta_{X},\beta_{Y},\beta_{Z})\in \R^6$ is inside the open polytope $\fP(\oct)$ defined by the following inequalities
\be
\alpha_{i}>0\,,\quad
\alpha_X+\alpha_Y+\alpha_Z<Q\,,\quad
\alpha_{i}+\beta_{i}<Q/2\,,\quad
\alpha_X+\alpha_Y+\alpha_Z+\beta_{i}>Q/2\,,\quad
\forall i=X,Y,Z\,.
\label{eq:positive_angle_oct}
\ee
$(\vec{\alpha},\vec{\beta})\in \fP(\oct)$ is the positive angle of an ideal octahedron and has been shown in \cite{Han:2021tzw} to be non-empty. 
We also define the functional space
\be
\cF_{\fP(\oct)}=\left\{ \text{holomorphic }f:\bC^3\rightarrow \bC \mid \forall(\vec{\alpha},\vec{\beta})\in\fP(\oct)\,, e^{-\f{2\pi}{k}\vec{\beta}\cdot\vec{\mu}}f(\vec{\mu}+i\vec{\alpha})\in \cS(\R^3)\, \text{ is Schwartz class} \right\}\,.
\ee
This definition of the functional space $\cF_{\fP(\oct)}$ can be generalized to the functional space $\cF_\fP$ corresponding to any given positive angle structure $\fP$. (See \eg Section \ref{subsubsec:partition_S3Gamma5} for the case of $S^3\backslash\Gamma_5$.) 
Combining the discrete representation part, we define
\be
\cF_{\fP(\oct)}^{(k)}=\cF_{\fP(\oct)}\otimes_{\bC}(\bC^k)^{\otimes 3}\,.
\label{eq:def_Fk}
\ee
We conclude that $\cZ_\oct\in \cF_{\fP(\oct)}^{(k)}$.

\subsection{Chern-Simons partition function on $S^3\backslash\Gamma_5$}
\label{subsubsec:partition_S3Gamma5}

the Chern-Simons phase space $\cP_{\partial(S^3\backslash\Gamma_5)}$ is simply the 5 copies of $\cP_{\partial\oct}$ with no more constraints to be imposed. 
To impose the simplicity constraints in a more natural way as in Section \ref{subsec:simplicity_constraint}, we change the symplectic coordinates as follows. Denote the 15 position coordinates of the phase space $\cP_{\partial(S^3\backslash\Gamma_5)}$ to be $\vec{\Phi}=(X_i,Y_i,Z_i)_{i=1,\cdots,5}$ and the 15 momentum coordinates to be $\vec{\Pi}=(P_{X_i},P_{Y_i},P_{Z_i})_{i=1,\cdots,5}$ where each triple $(P_{X_i},P_{Y_i},P_{Z_i})$ is defined in the same way as \eqref{eq:momentum}.
The change of symplectic coordinates corresponds to performing (a series of) symplectic transformations which can be summarized by the following linear equations \cite{Han:2015gma,Han:2021tzw}.
\be
\mat{c}{\vec{\cQ} \\ \vec{\cP}}
=\mat{cc}{\bA & \bB \\ -(\bB^\top)^{-1} & 0}
\mat{cc}{\vec{\Phi} \\ \vec{\Pi}}
+\mat{c}{i\pi \vec{t}\\0}\,,\quad
\left|\ba{l}
\vec{\cQ}=(\{2L_{ab}\}_{(ab)},\{\cX_a\}_{a=1}^{5})\\[0.15cm]
\vec{\cP}=(\{\cT_{ab}\}_{(ab)},\{\cY_a\}_{a=1}^{5})
\ea\right., 
\label{eq:change_coordinate}
\ee
where $\bA$ and $\bB$ are $15 \times 15$ matrices with integer entries and $\vec{t}$ is a vector with integer elements (see \eqref{eq:ABt+}).
$(\vec{\cQ},\vec{\cP})$ can also be parametrized as \eqref{eq:new_coordinate_param}.
One can check that $(\vec{\cQ},\vec{\cP})$ do form a set of symplectic coordinates of the Chern-Simons phase space $\cP_{\partial(S^3\backslash\Gamma_5)}\equiv \otimes_{i=1}^5 \cP_{\partial\Oct(i)}$ on $\partial(S^3\backslash\Gamma_5)$. The Atiyah-Bott-Goldman symplectic form and the Poisson structure are \cite{Han:2015gma,Han:2021tzw}
\be
\Omega=\sum_{I=1}^{15}\cP_I\w\cQ_I\,,\quad
\{\cQ_I,\cP_J\}_\Omega=\delta_{IJ}\,,\quad
\{\cQ_I,\cQ_J\}_\Omega=\{\cP_I,\cP_J\}_\Omega=0\,,\quad 
I,J=1,\cdots,15\,.
\ee

The Chern-Simons partition function $\cZ_{\times}$ on $S^3\backslash\Gamma_5$ written in terms of coordinates $(\vec{\Phi},\vec{\Pi})$ is indeed the product of five $\cZ_{\oct}$'s. To express it in terms of the new coordinates $(\vec{\cQ},\vec{\cP})$, one separates the transformation matrix into generator matrices of the symplectic transformations:
\be
\mat{cc}{\bA & \bB \\ -(\bB^\top)^{-1} & 0}
=\mat{cc}{0 & -\id \\ \id & 0}
\mat{cc}{\id & 0 \\ \bA\bB^\top & \id}
\mat{cc}{-(\bB^{-1})^\top & 0 \\ 0 & -\bB}\,.
\ee 
The three matrices on the right-hand side correspond to the $S$-type, $T$-type and $U$-type transformations respectively \cite{Dimofte:2014zga,Han:2021tzw}.
Combining the affine translation $\sigma_{\vec{t}}$ given by the vector $\vec{t}$ as shown in \eqref{eq:change_coordinate}, the total action on the wave function $\cZ_\times$ corresponding to these transformation defines the Chern-Simons partition $\cS_{S^3\backslash\Gamma_5}$ on $S^3\backslash\Gamma_5$ in terms of new coordinates \eqref{eq:new_coordinate_param} \cite{Han:2021tzw}\footnotemark{}: 
\be\begin{split}
\cZ'_{S^3\backslash\Gamma_5}(\vec{\mu}|\vec{m})
&=((\sigma_{\vec{t}}\circ S\circ T \circ U )\triangleright \cZ_\times)(\vec{\mu}|\vec{m})\\
&=\f{4i}{k^{15}}\sum_{\vec{n}\in(\Z/k\Z)^{15}}\int_{\cC^{\times 15}}\rd^{15}\vec{\nu}\,
(-1)^{\vec{n}\cdot \bA\bB^\top\cdot \vec{n}}
e^{\f{i\pi}{k}(-\vec{\nu}\cdot \bA\bB^\top\cdot \vec{\nu}+\vec{n}\cdot \bA\bB^\top\cdot \vec{n}})
 e^{\f{2\pi i}{k}\left[-\vec{\nu}\cdot(\vec{\mu}-\f{iQ}{2}\vec{t})+\vec{n}\cdot \vec{m}\right]}\cZ_\times(-\bB^\top\vec{\nu}|-\bB^\top\vec{n})\,.
\end{split}
\label{eq:partition_S3G5_1_app}
\ee
\footnotetext{
The factor $(-1)^{\vec{n}\cdot \bA\bB^\top\cdot \vec{n}}$ is there to keep invariant the sign of the integrand of $\cZ'_{S^3\backslash\Gamma_5}(\vec{\mu}|\vec{m})$ when $n_I\rightarrow n_I+k$ for any $I$. The sign would change when $k$ is odd as well as the $I$-th the diagonal element of $\bA\bB^\top$ is odd (which happens for some $I$'s).
}
 
The positive angle structure $\fP(S^3\backslash\Gamma_5)$ for $S^3\backslash\Gamma_5$ in terms of the new variables $(\vec{\mu},\vec{\nu})$ is \cite{Han:2021tzw} \footnotemark{}
\be\begin{split}
&\fP(S^3\backslash\Gamma_5)=\sigma'_{\vec{t}}\circ S\circ T\circ U\circ \fP(\oct)^{\times 5}\\
\quad \Rightarrow\quad
&\text{ If } \, 
(\vec{\alpha}_0,\vec{\beta}_0)\in\fP(\oct)^{\times 5}\,,\quad \text{ then }\,
(\vec{\alpha}_4,\vec{\beta}_4)=(\bA\vec{\alpha}_0+\bB\vec{\beta}_0+\f{Q}{2}\vec{t},-(\bB^{-1})^\top\vec{\alpha}_0)\in \fP(S^3\backslash\Gamma_5)\,.
\end{split}
\label{eq:positive_angle_strucutre_M3}
\ee
Inversely,
\be
(\vec{\alpha}_0,\vec{\beta}_0)=(\bB^{\top}\vec{\beta}_4,\bB^{-1}\vec{\alpha}_4+\bA^\top\vec{\beta}_4-\f{Q}{2}\vec{t})\in\fP(\oct)^{\times 5}\,.
\ee
\footnotetext{
The operator $\vec{\sigma}'_{\vec{t}}$ for the positive angle structure is different from the affine transformation $\vec{\sigma}_{\vec{t}}$ acting on the wave functions. The latter is given in \eqref{eq:change_coordinate} while the former is defined as: $\vec{\sigma}'_{\vec{t}}: (\vec{\alpha},\vec{\beta})\mapsto (\vec{\alpha}+\f{Q}{2} \vec{t},\vec{\beta})$ \cite{Han:2021tzw}.
}
The symplectic transformations ensures that $\fP(S^3\backslash\Gamma_5)$ is non-empty since $\fP(\oct)^{\times 5}$ is non-empty, which concludes that $\cZ_{S^3\backslash\Gamma_5}\in\cF^{(k)}_{\fP(S^3\backslash\Gamma_5)}=\cF_{\fP(S^3\backslash\Gamma_5)}\otimes_\bC(\bC^k)^{\otimes 15}$. 

\section{Symplectic transformation of coordinates on $M_+$ and $M_-$}
\label{app:symplectic_tranf}

In this appendix, we collect the symplectic transformation matrix and the affine translation vector used in Sections \ref{subsubsec:partition_S3Gamma5} and \ref{subsubsec:constraint_system}. The linear symplectic transformation from $(\vec{\Phi}_\epsilon,\vec{\Pi}_\epsilon)=(\{X_i,Y_i,Z_i\}_{\substack{i=1,\cdots,5\,\text{ if }\epsilon=+\\i=6,\cdots,10\,\text{ if }\epsilon=-}},\{P_{X_i},P_{Y_i},P_{Z_i}\}_{\substack{i=1,\cdots,5\,\text{ if }\epsilon=+\\i=6,\cdots,10\,\text{ if }\epsilon=-}})$ to $(\vec{Q}^\epsilon,\vec{\cP}^\epsilon)$ defined in \eqref{eq:QP_epsilon_def} is
\be
\mat{c}{\vec{\cQ}^\epsilon \\ \vec{\cP}^\epsilon}
=\mat{cc}{\bA_\epsilon & \bB_\epsilon \\ -(\bB_\epsilon^\top)^{-1} & 0}
\mat{cc}{\vec{\Phi}_\epsilon \\ \vec{\Pi}_\epsilon}
+\mat{c}{i\pi \vec{t}_\epsilon\\0}\,.
\label{eq:symplectic_tranf-app}
\ee
The $\epsilon=+$ copy of the coordinates, transformation matrix and translation vector are the same as in a single $M_3$ used in \eqref{eq:change_coordinate}.

Explicitly, $\bA_\pm,\bB_\pm$ and $\vec{t}_\pm$ read
\begin{subequations}
\begin{align}
\bA=\bA_+=&\left(
\begin{array}{ccccccccccccccc}
 0 & 0 & 0 & 0 & 0 & 0 & 1 & 1 & 0 & 1 & 1 & 0 & 1 & 1 & 0 \\
 0 & 0 & 0 & 1 & 1 & 0 & 0 & 0 & 0 & 0 & 0 & 0 & 1 & 1 & 2 \\
 0 & 0 & 0 & 0 & 0 & 0 & 1 & 1 & 2 & 0 & 0 & 0 & 0 & 0 & 0 \\
 0 & 0 & 0 & 1 & 1 & 2 & 0 & 0 & 0 & 1 & 1 & 2 & 0 & 0 & 0 \\
 0 & 0 & 0 & 0 & 0 & 0 & 0 & 0 & 0 & 1 & -1 & 0 & 0 & 0 & 0 \\
 -1 & -1 & 0 & 0 & 0 & 0 & 0 & 0 & 0 & 0 & 0 & 0 & 1 & -1 & 0 \\
 1 & -1 & 0 & 0 & 0 & 0 & 1 & -1 & 0 & 0 & 0 & 0 & 0 & 0 & 0 \\
 -1 & -1 & -2 & 1 & -1 & 0 & 0 & 0 & 0 & 0 & 0 & 0 & 1 & 1 & 0 \\
 1 & 1 & 0 & 0 & 0 & 0 & 0 & 0 & 0 & -1 & -1 & 0 & 0 & 0 & 0 \\
 0 & 0 & 0 & -1 & -1 & 0 & 1 & 1 & 0 & 0 & 0 & 0 & 0 & 0 & 0 \\
 0 & 0 & 0 & 0 & 0 & -1 & 0 & 0 & 0 & 0 & 0 & 0 & 0 & 0 & 1 \\
 -1 & 0 & 0 & 0 & 0 & 0 & 0 & 0 & 0 & 0 & 0 & 0 & 1 & 0 & 0 \\
 -1 & -1 & -1 & 0 & 0 & 0 & 0 & 0 & 0 & 0 & 0 & 0 & 1 & 1 & 1 \\
 1 & 1 & 1 & 0 & 0 & 0 & 0 & 0 & 0 & 0 & 0 & 0 & -1 & 0 & 0 \\
 -1 & 0 & 0 & 0 & 0 & 0 & 0 & 0 & 0 & 1 & 1 & 1 & 0 & 0 & 0 \\
\end{array}
\right)\,,
\label{eq:A+}\\
\bB=\bB_+=&\left(
\begin{array}{ccccccccccccccc}
 0 & 0 & 0 & 0 & 0 & 0 & 0 & 1 & 0 & 0 & 1 & 0 & 0 & 1 & 0 \\
 0 & 0 & 0 & 0 & 1 & 0 & 0 & 0 & 0 & 0 & 1 & -1 & 0 & 0 & 1 \\
 0 & 0 & 0 & 0 & 1 & -1 & 0 & 0 & 1 & 0 & 0 & 0 & 0 & 1 & -1 \\
 0 & 0 & 0 & 0 & 0 & 1 & 0 & 1 & -1 & 0 & 0 & 1 & 0 & 0 & 0 \\
 -1 & 0 & 0 & 0 & 0 & 0 & 0 & 0 & 0 & 1 & -1 & 0 & -1 & 0 & 0 \\
 0 & -1 & 0 & 0 & 0 & 0 & -1 & 0 & 0 & 0 & 0 & 0 & 1 & -1 & 0 \\
 1 & -1 & 0 & 0 & 0 & 0 & 1 & -1 & 0 & -1 & 0 & 0 & 0 & 0 & 0 \\
 0 & 0 & -1 & 1 & -1 & 0 & 0 & 0 & 0 & 0 & 0 & 0 & 1 & 0 & -1 \\
 1 & 0 & -1 & -1 & 0 & 0 & 0 & 0 & 0 & -1 & 0 & 1 & 0 & 0 & 0 \\
 0 & 1 & -1 & -1 & 0 & 1 & 1 & 0 & -1 & 0 & 0 & 0 & 0 & 0 & 0 \\
 0 & 0 & 0 & 0 & 1 & -1 & 0 & 0 & 0 & 0 & 0 & 0 & 0 & 0 & 0 \\
 -1 & 0 & 0 & 0 & 0 & 0 & 0 & 0 & 0 & 0 & 0 & 0 & 0 & 0 & 0 \\
 0 & 0 & 0 & 0 & 0 & 0 & 0 & 0 & 0 & 0 & 0 & 0 & 1 & 0 & 0 \\
 0 & 0 & 1 & 0 & 0 & 0 & 0 & 0 & 0 & 0 & 0 & 0 & -1 & 1 & 0 \\
 -1 & 1 & 0 & 0 & 0 & 0 & 0 & 0 & 0 & 1 & 0 & 0 & 0 & 0 & 0 \\
\end{array}
\right)\,,
\label{eq:B+}\\
\vec{t}=\vec{t}_+&=(-3,-3,-2,-4,0,1,0,1,0,0,1,1,1,0,0)^\top\,,
\label{eq:t+}
\end{align}
\label{eq:ABt+}
\end{subequations}
\begin{subequations}
\begin{align}
\bA_-=&\left(
\begin{array}{ccccccccccccccc}
 0 & 0 & 0 & 0 & 0 & 0 & 0 & -1 & -1 & 0 & -1 & -1 & 0 & -1 & -1 \\
 0 & 0 & 0 & 0 & -1 & -1 & 0 & 0 & 0 & 0 & -1 & 1 & 0 & 0 & 0 \\
 0 & 0 & 0 & 0 & -1 & 1 & 0 & 0 & 0 & 0 & 0 & 0 & 0 & -1 & 1 \\
 0 & 0 & 0 & 0 & 0 & 0 & 0 & -1 & 1 & 0 & 0 & 0 & 0 & 0 & 0 \\
 2 & 1 & 1 & 0 & 0 & 0 & 0 & 0 & 0 & 0 & 0 & 0 & 2 & 1 & 1 \\
 0 & 1 & 1 & 0 & 0 & 0 & 2 & 1 & 1 & 0 & 0 & 0 & 0 & 0 & 0 \\
 0 & 0 & 0 & 0 & 0 & 0 & 0 & 0 & 0 & 2 & 1 & 1 & 0 & 0 & 0 \\
 0 & 0 & 0 & 0 & 0 & 0 & 0 & 0 & 0 & 0 & 0 & 0 & 0 & 1 & 1 \\
 0 & 1 & 1 & 2 & 1 & 1 & 0 & 0 & 0 & 0 & -1 & -1 & 0 & 0 & 0 \\
 0 & -1 & 1 & 0 & -1 & -1 & 0 & 1 & 1 & 0 & 0 & 0 & 0 & 0 & 0 \\
 0 & 0 & 0 & 0 & -1 & 0 & 0 & 0 & 0 & 0 & 0 & 0 & 0 & 0 & 1 \\
 1 & 1 & 1 & 0 & 0 & 0 & 0 & 0 & 0 & 0 & 0 & 0 & 1 & 0 & 0 \\
 -1 & -1 & -1 & 0 & 0 & 0 & 0 & 0 & 0 & 0 & 0 & 0 & -1 & 0 & 0 \\
 0 & 0 & -1 & 0 & 0 & 0 & 0 & 0 & 0 & 0 & 0 & 0 & 0 & -1 & 0 \\
 0 & -1 & 0 & 0 & 0 & 0 & 0 & 0 & 0 & -1 & 0 & 0 & 0 & 0 & 0 \\
\end{array}
\right)\,,
\label{eq:A-}\\
\bB_-=&\left(
\begin{array}{ccccccccccccccc}
 0 & 0 & 0 & 0 & 0 & 0 & 0 & -1 & 0 & 0 & -1 & 0 & 0 & -1 & 0 \\
 0 & 0 & 0 & 0 & -1 & 0 & 0 & 0 & 0 & 0 & -1 & 1 & 0 & 0 & -1 \\
 0 & 0 & 0 & 0 & -1 & 1 & 0 & 0 & -1 & 0 & 0 & 0 & 0 & -1 & 1 \\
 0 & 0 & 0 & 0 & 0 & -1 & 0 & -1 & 1 & 0 & 0 & -1 & 0 & 0 & 0 \\
 1 & 0 & 0 & 0 & 0 & 0 & 0 & 0 & 0 & -1 & 1 & 0 & 1 & 0 & 0 \\
 0 & 1 & 0 & 0 & 0 & 0 & 1 & 0 & 0 & 0 & 0 & 0 & -1 & 1 & 0 \\
 -1 & 1 & 0 & 0 & 0 & 0 & -1 & 1 & 0 & 1 & 0 & 0 & 0 & 0 & 0 \\
 0 & 0 & 1 & -1 & 1 & 0 & 0 & 0 & 0 & 0 & 0 & 0 & -1 & 0 & 1 \\
 -1 & 0 & 1 & 1 & 0 & 0 & 0 & 0 & 0 & 1 & 0 & -1 & 0 & 0 & 0 \\
 0 & -1 & 1 & 1 & 0 & -1 & -1 & 0 & 1 & 0 & 0 & 0 & 0 & 0 & 0 \\
 0 & 0 & 0 & 0 & -1 & 1 & 0 & 0 & 0 & 0 & 0 & 0 & 0 & 0 & 0 \\
 1 & 0 & 0 & 0 & 0 & 0 & 0 & 0 & 0 & 0 & 0 & 0 & 0 & 0 & 0 \\
 0 & 0 & 0 & 0 & 0 & 0 & 0 & 0 & 0 & 0 & 0 & 0 & -1 & 0 & 0 \\
 0 & 0 & -1 & 0 & 0 & 0 & 0 & 0 & 0 & 0 & 0 & 0 & 1 & -1 & 0 \\
 1 & -1 & 0 & 0 & 0 & 0 & 0 & 0 & 0 & -1 & 0 & 0 & 0 & 0 & 0 \\
\end{array}
\right)\,,
\label{eq:B-}\\
\vec{t}_-&=(3,1,0,0,-4,-3,-2,-1,-2,0,1,-1,3,2,2)^\top\,.
\label{eq:t-}
\end{align}
\label{eq:ABt-}
\end{subequations}

\section{Fock-Goncharov coordinates and Fenchel–Nielsen coordinates on $M_+$ and $M_-$}
\label{app:FG_FN_coordinates}

In this appendix, we collect the explicit definitions of the FG coordinates and FN coordinates dressing the edges or annuli on $M_+$ and $M_-$ which are two copies of $S^3\backslash\Gamma_5$. We refer to fig.\ref{fig:octahedra} for the face labels $a,b,c,d,e,f,g,h,i,j$. The FG coordinates on each $\cS_a$ in terms of the edge coordinates on $\{\Oct(i)\}$ are listed in Table \ref{tab:edges}. 
\begin{table}[h!!]
\begin{center}
\begin{tabular}{|c|c|c|}
\hline
$\cS_1$: & $\Ht_2\cap \Ht_3:\ \ \chi^{(1)}_{23}=Z_2+Z_3$        
		 & $\Ht_3\cap \et_4:\ \ \chi^{(1)}_{34}=Y_3''+Z_3'+Z_4''+W_4'$\\
         & $\Ht_2\cap \et_4:\ \ \chi^{(1)}_{24}=Z_2''+W_2'+Z_4$		
         & $\Ht_3\cap \ct_5:\ \ \chi^{(1)}_{35}=Z_3''+W_3'+Y_5''+Z_5'$\\
         & $\Ht_2\cap \ct_5:\ \ \chi^{(1)}_{25}=Y_2''+Z_2'+Z_5$
         & $\et_4\cap \ct_5:\ \ \chi^{(1)}_{45}=Y''_4+Z_4'+Z_5''+W_5'$\\
\hline   
         & $\Ht_7\cap \Ht_8:\ \ \chi^{(1)}_{78}=Z_7+Z_8$        
		 & $\Ht_8\cap \et_9:\ \ \chi^{(1)}_{89}=Y_8''+Z_8'+Z_9''+W_9'$\\
         & $\Ht_7\cap \et_9:\ \ \chi^{(1)}_{79}=Z_7''+W_7'+Z_9$		
         & $\Ht_8\cap \ct_{10}:\ \ \chi^{(1)}_{8,10}=Z_8''+W_8'+Y_{10}''+Z_{10}'$\\
         & $\Ht_7\cap \ct_{10}:\ \ \chi^{(1)}_{7,10}=Y_7''+Z_7'+Z_{10}$
         & $\et_9\cap \ct_{10}:\ \ \chi^{(1)}_{9,10}=Y''_9+Z_9'+Z_{10}''+W_{10}'$\\
\hline\hline
$\cS_2$: & $\ft_1\cap \It_3:\ \ \chi^{(2)}_{13}=X_1''+Y_1'+X_3$ 
		 & $\It_3\cap \ft_4:\ \ \chi^{(2)}_{34}=X_3''+Y_3'+W_4''+X_4'$\\
         & $\ft_1\cap \ft_4:\ \ \chi^{(2)}_{14}=X_1+X_4$        
         & $\It_3\cap \bt_5:\ \ \chi^{(2)}_{35}=W_3''+X_3'+X_5''+Y_5'$\\
         & $\ft_1\cap \bt_5:\ \ \chi^{(2)}_{15}=W_1''+X_1'+X_5$ 
         & $\ft_4\cap \bt_5:\ \ \chi^{(2)}_{45}=X''_4+Y_4'+W_5''+X_5'$\\
\hline
		 & $\ft_6\cap \It_8:\ \ \chi^{(2)}_{68}=X_6''+Y_6'+X_8$ 
		 & $\It_8\cap \ft_9:\ \ \chi^{(2)}_{89}=X_8''+Y_8'+W_9''+X_9'$\\
         & $\ft_6\cap \ft_9:\ \ \chi^{(2)}_{69}=X_6+X_9$        
         & $\It_8\cap \bt_{10}:\ \ \chi^{(2)}_{8,50}=W_8''+X_8'+X_{10}''+Y_{10}'$\\
         & $\ft_6\cap \bt_{10}:\ \ \chi^{(2)}_{6,10}=W_6''+X_6'+X_{10}$ 
         & $\ft_9\cap \bt_{10}:\ \ \chi^{(2)}_{9,10}=X''_9+Y_9'+W_{10}''+X_{10}'$\\
\hline\hline
$\cS_3$: & $\bt_1\cap \at_2:\ \ \chi^{(3)}_{12}=Z_1'+W_1''+X_2$       
		 & $\at_2\cap \dt_4:\ \ \chi^{(3)}_{24}=W_2''+X_2'+Y_4'+Z_4''$\\
         & $\bt_1\cap \dt_4:\ \ \chi^{(3)}_{14}=W'_1+X_1''+X_4'+Y_4''$ 
         & $\at_2\cap \dt_5:\ \ \chi^{(3)}_{25}=X_2''+Y_2'+Z_5'+W_5''$\\
         & $\bt_1\cap \dt_5:\ \ \chi^{(3)}_{15}=W_1+W_5'+X_5''$        
         & $\dt_4\cap \dt_5:\ \ \chi^{(3)}_{45}=Y_4+W_5$\\
\hline
		 & $\bt_6\cap \at_7:\ \ \chi^{(3)}_{67}=Z_6'+W_6''+X_7$       
		 & $\at_7\cap \dt_9:\ \ \chi^{(3)}_{79}=W_7''+X_7'+Y_9'+Z_9''$\\
         & $\bt_6\cap \dt_9:\ \ \chi^{(3)}_{69}=W'_6+X_6''+X_9'+Y_9''$ 
         & $\at_7\cap \dt_{10}:\ \ \chi^{(3)}_{7,10}=X_7''+Y_7'+Z_{10}'+W_{10}''$\\
         & $\bt_6\cap \dt_{10}:\ \ \chi^{(3)}_{6,10}=W_6+W_{10}'+X_{10}''$        
         & $\dt_9\cap \dt_{10}:\ \ \chi^{(3)}_{9,10}=Y_9+W_{10}$\\
\hline\hline        
$\cS_4$: & $\at_1\cap \ct_2:\ \ \chi^{(4)}_{12}=Z_1+X_2'+Y_2''$        
		 & $\ct_2\cap \jt_3:\ \ \chi^{(4)}_{23}=Y_2'+Z_2''+Z_3'+W_3''$\\
         & $\at_1\cap \jt_3:\ \ \chi^{(4)}_{13}=Y''_1+Z_1'+W_3'+X_3''$ 
         & $\ct_2\cap \jt_5:\ \ \chi^{(4)}_{25}=Y_2+Y_5'+Z_5''$\\
         & $\at_1\cap \jt_5:\ \ \chi^{(4)}_{15}=Z_1''+W_1'+X_5'+Y_5''$ 
         & $\jt_3\cap \jt_5:\ \ \chi^{(4)}_{35}=W_3+Y_5$\\
\hline
		 & $\at_6\cap \ct_7:\ \ \chi^{(4)}_{67}=Z_6+X_7'+Y_7''$        
		 & $\ct_7\cap \jt_8:\ \ \chi^{(4)}_{78}=Y_7'+Z_7''+Z_8'+W_8''$\\
         & $\at_6\cap \jt_8:\ \ \chi^{(4)}_{68}=Y''_6+Z_6'+W_8'+X_8''$ 
         & $\ct_7\cap \jt_{10}:\ \ \chi^{(4)}_{7,10}=Y_7+Y_{10}'+Z_{10}''$\\
         & $\at_6\cap \jt_{10}:\ \ \chi^{(4)}_{6,10}=Z_6''+W_6'+X_{10}'+Y_{10}''$ 
         & $\jt_8\cap \jt_{10}:\ \ \chi^{(4)}_{8,10}=W_8+Y_{10}$\\

\hline\hline
$\cS_5$: & $\It_1\cap \et_2:\ \ \chi^{(5)}_{12}=Y'_1+Z_1''+W_2'+X_2''$
		 & $\et_2\cap \gt_3:\ \ \chi^{(5)}_{23}=Z_2'+W_2''+Y_3'+Z_3''$\\
         & $\It_1\cap \gt_3:\ \ \chi^{(5)}_{13}=Y_1+X_3'+Y_3''       $ 
         & $\et_2\cap \gt_4:\ \ \chi^{(5)}_{24}=W_2+Z_4'+W_4''$\\
         & $\It_1\cap \gt_4:\ \ \chi^{(5)}_{14}=X_1'+Y_1''+W_4'+X_4''$ 
         & $\gt_3\cap \gt_4:\ \ \chi^{(5)}_{34}=Y_3+W_4$\\
\hline\hline
$\cS_6$: & $\It_6\cap \et_7:\ \ \chi^{(6)}_{67}=Y'_6+Z_6''+W_7'+X_7''$
		 & $\et_7\cap \gt_8:\ \ \chi^{(6)}_{78}=Z_7'+W_7''+Y_8'+Z_8''$\\
         & $\It_6\cap \gt_8:\ \ \chi^{(6)}_{68}=Y_6+X_8'+Y_8''       $ 
         & $\et_7\cap \gt_9:\ \ \chi^{(6)}_{79}=W_7+Z_9'+W_9''$\\
         & $\It_6\cap \gt_9:\ \ \chi^{(6)}_{69}=X_6'+Y_6''+W_{10}'+X_9''$ 
         & $\gt_8\cap \gt_9:\ \ \chi^{(6)}_{89}=Y_8+W_9$\\
\hline
\end{tabular}
\end{center}
\caption{FG coordinates $\chi^{(a)}_{ij}$ of 4-holed spheres in terms of the edge coordinates in $\{\Oct(i)\}$. $i,j$ denote 
that $\chi^{(a)}_{ij}$ is composed with coordinates from octahedra $\Oct(i)$ and $\Oct(j)$. 
We have used the notations in fig.\ref{fig:octahedra} where the octahedra are glued through the triangles labelled by $a,b,c,d,e,f,g,h,i,j$. Each ``tilde triangle" with the tilde label, say $\at_2$, labels the triangles symmetric to the triangle $a_2$ with respect to the equator of $\Oct(2)$. We refer to \cite{Han:2021tzw} for more details (where the ``prime triangles'' are the same as the tilde triangles used here). Here $X_{i},Y_i,Z_i,W_i$ ($i=1,\cdots,10$) are the tetrahedron edge coordinates from the 4 ideal tetrahedra in $\Oct(i)$.}
\label{tab:edges} 
\end{table}

The FN coordinates $\{2L_{ab}\}$ in $\cP_{\partial M_+}$ and the FN coordinates  $\{2L'_{ab}\}$ in $\cP_{\partial M_-}$ are defined in terms of the FG coordinates on $\{\cS_a\}$ as
\begin{subequations}
\begin{align}
  2L_{12}=\chi^{(1)}_{34}+\chi^{(1)}_{35}+\chi^{(1)}_{45}-3i\pi\,, \quad
& 2L'_{12}=\chi^{(1)}_{89}+\chi^{(1)}_{8,10}+\chi^{(1)}_{9,10}-3i\pi\,,\\
  2L_{21}=\chi^{(2)}_{34}+\chi^{(2)}_{35}+\chi^{(2)}_{45}-3i\pi\,,\quad
& 2L'_{21}=\chi^{(2)}_{89}+\chi^{(2)}_{8,10}+\chi^{(2)}_{9,10}-3i\pi\,,\\
  2L_{13}=\chi^{(1)}_{24}+\chi^{(1)}_{25}+\chi^{(1)}_{45}-3i\pi\,, \quad
& 2L'_{13}=\chi^{(1)}_{79}+\chi^{(1)}_{7,10}+\chi^{(1)}_{9,10}-3i\pi\,,\\
  2L_{31}=\chi^{(3)}_{24}+\chi^{(3)}_{25}+\chi^{(3)}_{45}-3i\pi\,,\quad
& 2L'_{31}=\chi^{(3)}_{79}+\chi^{(3)}_{7,10}+\chi^{(3)}_{9,10}-3i\pi\,,\\
  2L_{14}=\chi^{(1)}_{23}+\chi^{(1)}_{25}+\chi^{(1)}_{35}-3i\pi\,, \quad
& 2L'_{14}=\chi^{(1)}_{78}+\chi^{(1)}_{7,10}+\chi^{(1)}_{8,10}-3i\pi\,,\\
  2L_{41}=\chi^{(4)}_{23}+\chi^{(4)}_{25}+\chi^{(4)}_{35}-3i\pi\,,\quad
& 2L'_{41}=\chi^{(4)}_{78}+\chi^{(4)}_{7,10}+\chi^{(4)}_{8,10}-3i\pi\,,\\
  2L_{15}=\chi^{(1)}_{23}+\chi^{(1)}_{24}+\chi^{(1)}_{34}-3i\pi\,, \quad
& 2L'_{16}=\chi^{(1)}_{78}+\chi^{(1)}_{79}+\chi^{(1)}_{89}-3i\pi\,,\\
  2L_{51}=\chi^{(5)}_{23}+\chi^{(5)}_{24}+\chi^{(5)}_{34}-3i\pi\,,\quad
& 2L'_{61}=\chi^{(6)}_{78}+\chi^{(6)}_{79}+\chi^{(6)}_{89}-3i\pi\,,\\
  2L_{23}=\chi^{(2)}_{14}+\chi^{(2)}_{15}+\chi^{(2)}_{45}-3i\pi\,, \quad
& 2L'_{23}=\chi^{(2)}_{69}+\chi^{(2)}_{6,10}+\chi^{(2)}_{9,10}-3i\pi\,,\\
  2L_{32}=\chi^{(3)}_{14}+\chi^{(3)}_{15}+\chi^{(3)}_{45}-3i\pi\,,\quad
& 2L'_{32}=\chi^{(3)}_{69}+\chi^{(3)}_{6,10}+\chi^{(3)}_{9,10}-3i\pi\,,\\
  2L_{24}=\chi^{(2)}_{13}+\chi^{(2)}_{15}+\chi^{(2)}_{35}-3i\pi\,, \quad
& 2L'_{24}=\chi^{(2)}_{68}+\chi^{(2)}_{6,10}+\chi^{(2)}_{8,10}-3i\pi\,,\\
  2L_{42}=\chi^{(4)}_{13}+\chi^{(4)}_{15}+\chi^{(4)}_{35}-3i\pi\,,\quad
& 2L'_{42}=\chi^{(4)}_{68}+\chi^{(4)}_{6,10}+\chi^{(4)}_{8,10}-3i\pi\,,\\
  2L_{25}=\chi^{(2)}_{13}+\chi^{(2)}_{14}+\chi^{(2)}_{34}-3i\pi\,, \quad
& 2L'_{26}=\chi^{(2)}_{68}+\chi^{(2)}_{69}+\chi^{(2)}_{89}-3i\pi\,,\\
  2L_{52}=\chi^{(5)}_{13}+\chi^{(5)}_{14}+\chi^{(5)}_{34}-3i\pi\,,\quad
& 2L'_{62}=\chi^{(6)}_{68}+\chi^{(6)}_{69}+\chi^{(6)}_{89}-3i\pi\,,\\
  2L_{34}=\chi^{(3)}_{12}+\chi^{(3)}_{15}+\chi^{(3)}_{25}-3i\pi\,, \quad
& 2L'_{34}=\chi^{(3)}_{67}+\chi^{(3)}_{6,10}+\chi^{(3)}_{7,10}-3i\pi\,,\\
  2L_{43}=\chi^{(4)}_{12}+\chi^{(4)}_{15}+\chi^{(4)}_{25}-3i\pi\,,\quad
& 2L'_{43}=\chi^{(4)}_{67}+\chi^{(4)}_{6,10}+\chi^{(4)}_{7,10}-3i\pi\,,\\
  2L_{35}=\chi^{(3)}_{12}+\chi^{(3)}_{14}+\chi^{(3)}_{24}-3i\pi\,, \quad
& 2L'_{36}=\chi^{(3)}_{67}+\chi^{(3)}_{69}+\chi^{(3)}_{79}-3i\pi\,,\\
  2L_{53}=\chi^{(5)}_{12}+\chi^{(5)}_{14}+\chi^{(5)}_{24}-3i\pi\,,\quad
& 2L'_{63}=\chi^{(6)}_{67}+\chi^{(6)}_{69}+\chi^{(6)}_{79}-3i\pi\,,\\
  2L_{45}=\chi^{(4)}_{12}+\chi^{(4)}_{13}+\chi^{(4)}_{23}-3i\pi\,, \quad
& 2L'_{46}=\chi^{(4)}_{67}+\chi^{(4)}_{68}+\chi^{(4)}_{78}-3i\pi\,,\\
  2L_{54}=\chi^{(5)}_{12}+\chi^{(5)}_{13}+\chi^{(5)}_{23}-3i\pi\,,\quad
& 2L'_{64}=\chi^{(6)}_{67}+\chi^{(6)}_{68}+\chi^{(6)}_{78}-3i\pi\,.
\end{align}
\label{eq:all_FN_coordinates}
\end{subequations}

The conjugate momenta $\cT_{ab}$ and $\cT'_{ab}$ can be easily calculated through $-(\bB_\pm^\top)^{-1}\cdot \vec{\Phi}_\pm$. See also the appendix of \cite{Han:2021tzw} for the explicit expressions for $\cT_{ab}$. 

The FG coordinates $\{\cX_a,\cY_a,\cX'_a,\cY'_a\}_{a=1}^5$ are chosen to be
\be
\ba{lllll}
\cX_1=\chi^{(1)}_{25}\,,\quad
&\cX_2=\chi^{(2)}_{15}\,,\quad
&\cX_3=\chi^{(3)}_{15}\,,\quad
&\cX_4=\chi^{(4)}_{15}\,,\quad
&\cX_5=\chi^{(5)}_{14}\,,\\[0.15cm]
\cY_1=\chi^{(1)}_{23}\,,\quad
&\cY_2=\chi^{(2)}_{14}\,,\quad
&\cY_3=\chi^{(3)}_{45}-2\pi i\,,\quad
&\cY_4=-\chi^{(4)}_{35}+2\pi i\,,\quad
&\cY_5=\chi^{(5)}_{34}-2\pi i\,,\\[0.15cm]
\cX'_1=\chi^{(1)}_{79}\,,\quad
&\cX'_2=\chi^{(2)}_{6,10}\,,\quad
&\cX'_3=\chi^{(3)}_{6,10}\,,\quad
&\cX'_4=\chi^{(4)}_{6,10}\,,\quad
&\cX'_5=\chi^{(6)}_{6,9}\,,\\[0.15cm]
\cY'_1=\chi^{(1)}_{78}\,,\quad
&\cY'_2=\chi^{(2)}_{69}\,,\quad
&\cY'_3=\chi^{(3)}_{9,10}-2\pi i\,,\quad
&\cY'_4=-\chi^{(4)}_{8,10}+2\pi i\,,\quad
&\cY'_5=\chi^{(6)}_{89}-2\pi i\,.
\ea
\ee

\section{Gluing of holes from different 4-holed spheres to from $\partial(S^3\backslash\Gamma_5 )$}
\label{app:identify_holes}

Recall the FG coordinates $(\cX_a,\cY_a)$ on $\cS_a$ whose definitions are given in Appendix \ref{app:FG_FN_coordinates}. 
If we label the holes on each $\cS_a$ by numbers $1,2,3,4$, and identify $z^{(a)}_{12}=e^{\cX_a}\,,\,z^{(a)}_{13}=e^{\cY_a}\,,\,\forall a=1,\cdots,5$, the way of gluing holes from different 4-holed sphere to form $\partial(S^3\backslash\Gamma_5)$ is unique. 

 Denote the $i$-th ($i=1,2,3,4$) hole in $\cS^a$ as $\fp_i^{(a)}$. 
The gluing (denoted by $\sim$ below) of holes between different $\cS_a$'s is 
 \be\ba{llllll}
  \fp_1^{(1)}\sim \fp_3^{(4)}\,,\quad 
 &\fp_2^{(1)}\sim \fp_3^{(3)}\,,\quad
 &\fp_3^{(1)}\sim \fp_3^{(5)}\,,\quad
 &\fp_4^{(1)}\sim \fp_4^{(2)}\,,\quad
 &\fp_1^{(2)}\sim \fp_1^{(3)}\,,\quad\\[0.2cm]
  \fp_2^{(2)}\sim \fp_1^{(4)}\,,\quad
 &\fp_3^{(2)}\sim \fp_1^{(5)}\,,\quad
 &\fp_2^{(3)}\sim \fp_2^{(4)}\,,\quad
 &\fp_4^{(3)}\sim \fp_2^{(5)}\,,\quad
 &\fp_4^{(4)}\sim \fp_4^{(5)}\,,\quad
 \ea
 \label{eq:identify_holes}
 \ee
which is graphically illustrated in fig.\ref{fig:identify_holes}.
 \begin{figure}[h!]
 \centering
 \includegraphics[width=0.4\textwidth]{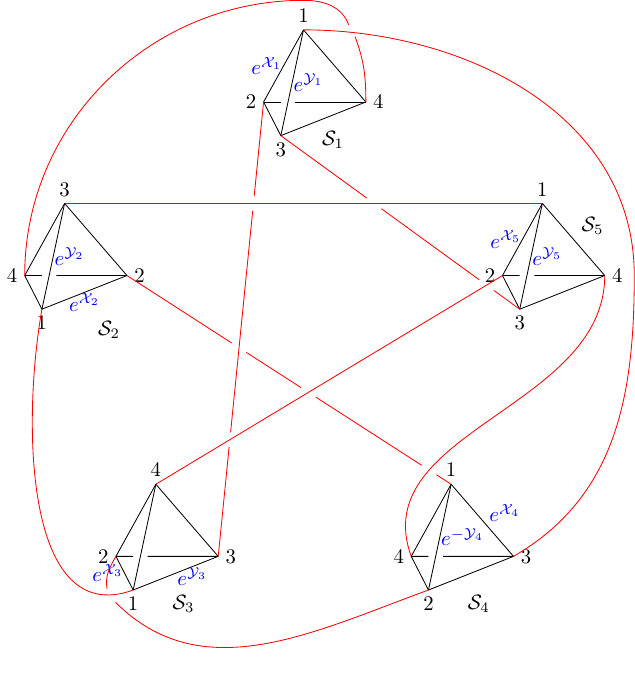}
 \caption{Identifying the holes from different 4-holed spheres. The numbers 1,2,3,4 one each 4-holed sphere $\cS_a$ denote the holes and the red lines demonstrate the gluing of holes from different $\cS_a$'s. Each red line corresponds to a blue line (open or closed) in fig.\ref{fig:glue}. Each tetrahedron graph here is the same as the ideal triangulation ({\it in black}) as in fig.\ref{fig:XY_choice}.}
 \label{fig:identify_holes}
 \end{figure}
 This means the $z^{(a)}_{\fp_i\fp_j}=-(y^{(a)}_{\fp_i\fp_j})^2$ in the trace coordinates formulas \eqref{eq:trace_coordinates} in different $\cS_a$'s correspond to the FG coordinates in the following way up to signs and $\pm 2\pi i$.
 \be\ba{llllll}
 z^{(1)}_{12} \rightarrow \cX_{25}^{(1)}\,,
&z^{(1)}_{13} \rightarrow \cX_{23}^{(1)}\,,
&z^{(1)}_{14} \rightarrow \cX_{35}^{(1)}\,,
&z^{(1)}_{23} \rightarrow \cX_{24}^{(1)}\,,
&z^{(1)}_{24} \rightarrow \cX_{45}^{(1)}\,,
&z^{(1)}_{34} \rightarrow \cX_{34}^{(1)}\,,\\[0.2cm]
 z^{(2)}_{12} \rightarrow \cX_{15}^{(2)}\,,
&z^{(2)}_{13} \rightarrow \cX_{14}^{(2)}\,,
&z^{(2)}_{14} \rightarrow \cX_{45}^{(2)}\,,
&z^{(2)}_{23} \rightarrow \cX_{13}^{(2)}\,,
&z^{(2)}_{24} \rightarrow \cX_{35}^{(2)}\,,
&z^{(2)}_{34} \rightarrow \cX_{34}^{(2)}\,,\\[0.2cm]
 z^{(3)}_{12} \rightarrow \cX_{15}^{(3)}\,,
&z^{(3)}_{13} \rightarrow \cX_{45}^{(3)}\,,
&z^{(3)}_{14} \rightarrow \cX_{14}^{(3)}\,,
&z^{(3)}_{23} \rightarrow \cX_{25}^{(3)}\,,
&z^{(3)}_{24} \rightarrow \cX_{12}^{(3)}\,,
&z^{(3)}_{34} \rightarrow \cX_{24}^{(3)}\,,\\[0.2cm]
 z^{(4)}_{12} \rightarrow \cX_{15}^{(4)}\,,
&z^{(4)}_{13} \rightarrow \cX_{35}^{(4)}\,,
&z^{(4)}_{14} \rightarrow \cX_{13}^{(4)}\,,
&z^{(4)}_{23} \rightarrow \cX_{25}^{(4)}\,,
&z^{(4)}_{24} \rightarrow \cX_{12}^{(4)}\,,
&z^{(4)}_{34} \rightarrow \cX_{23}^{(4)}\,,\\[0.2cm]
 z^{(5)}_{12} \rightarrow \cX_{14}^{(5)}\,,
&z^{(5)}_{13} \rightarrow \cX_{34}^{(5)}\,,
&z^{(5)}_{14} \rightarrow \cX_{13}^{(5)}\,,
&z^{(5)}_{23} \rightarrow \cX_{24}^{(5)}\,,
&z^{(5)}_{24} \rightarrow \cX_{12}^{(5)}\,,
&z^{(5)}_{34} \rightarrow \cX_{23}^{(5)}\,.
 \ea\ee

\section{Trace coordinates from the snake rule}
\label{app:snake}

In this appendix, we describe the snake rule calculating the holonomies around one or two holes following \cite{Dimofte:2013lba}, which leads to the trace coordinate expressions \eqref{eq:trace_coordinates} and \eqref{eq:trace_coordinates_2}. 
Notations can refer to Section \ref{subsubsec:second_simplicity}.

Let us first fix the labels for the holes on a 4-holed sphere $\cS_a$, hence the edges $\{e_{\fp_1\fp_2}\}$ on its ideal triangulation, to be consistent with fig.\ref{fig:XY_choice}. 

There are three rules for transporting a {\it snake} -- an arrow pointing from one vertex of the triangle to another with a {\it fin} facing inside the triangle, each corresponds to a matrix as follows. (The inverse transportation of each type corresponds to the inverse of the relevant matrix.
\be
\includegraphics[width=0.25\textwidth]{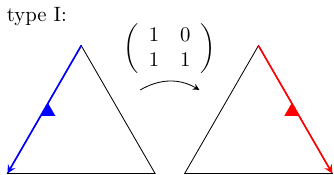}\quad\qquad
\includegraphics[width=0.25\textwidth]{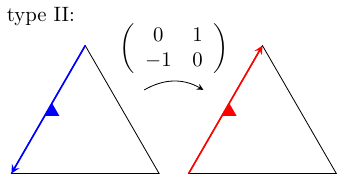}\quad\qquad
\includegraphics[width=0.3\textwidth]{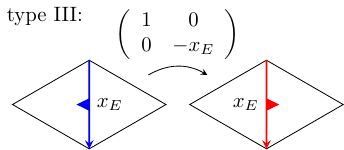}\,.
\label{eq:snake_rule}
\ee
Type I and II correspond to transporting a snake within a triangle and III correspond to moving a snake from one triangle to its adjacent triangle. Any holonomy of a closed loop can be calculated by multiplying the transportation matrices corresponding to moving a snake along the holonomy. 

Choose the snake starting on edge $e_{13}$ pointing from hole 1 to hole 3 whose fin faces the triangle bounded by $e_{12},e_{23},e_{13}$ as shown in fig.\ref{fig:XY_choice}. The holonomies around single holes 1,2 and 3 read (from left to right) 
\begin{subequations}
\begin{align}
h_1&=\left(
\begin{array}{cc}
 1 & 0 \\
 0 & -z_{13} \\
\end{array}
\right)\left(
\begin{array}{cc}
 1 & 0 \\
 1 & 1 \\
\end{array}
\right)\left(
\begin{array}{cc}
 1 & 0 \\
 0 & -z_{14} \\
\end{array}
\right)\left(
\begin{array}{cc}
 1 & 0 \\
 1 & 1 \\
\end{array}
\right)\left(
\begin{array}{cc}
 1 & 0 \\
 0 & -z_{12} \\
\end{array}
\right)\left(
\begin{array}{cc}
 1 & 0 \\
 1 & 1 \\
\end{array}
\right)\,,\\
h_2&=\left(
\begin{array}{cc}
 1 & 0 \\
 -1 & 1 \\
\end{array}
\right)\left(
\begin{array}{cc}
 0 & 1 \\
 -1 & 0 \\
\end{array}
\right)\left(
\begin{array}{cc}
 1 & 0 \\
 0 & -z_{12} \\
\end{array}
\right)\left(
\begin{array}{cc}
 1 & 0 \\
 1 & 1 \\
\end{array}
\right)\left(
\begin{array}{cc}
 1 & 0 \\
 0 & -z_{24} \\
\end{array}
\right)\left(
\begin{array}{cc}
 1 & 0 \\
 1 & 1 \\
\end{array}
\right)\left(
\begin{array}{cc}
 1 & 0 \\
 0 & -z_{23} \\
\end{array}
\right)\left(
\begin{array}{cc}
 0 & 1 \\
 -1 & 0 \\
\end{array}
\right)\left(
\begin{array}{cc}
 1 & 0 \\
 -1 & 1 \\
\end{array}
\right)\left(
\begin{array}{cc}
 0 & 1 \\
 -1 & 0 \\
\end{array}
\right)\,,\\
h_3&=\left(
\begin{array}{cc}
 0 & -1 \\
 1 & 0 \\
\end{array}
\right)\left(
\begin{array}{cc}
 1 & 0 \\
 1 & 1 \\
\end{array}
\right)\left(
\begin{array}{cc}
 1 & 0 \\
 0 & -z_{23} \\
\end{array}
\right)\left(
\begin{array}{cc}
 1 & 0 \\
 1 & 1 \\
\end{array}
\right)\left(
\begin{array}{cc}
 1 & 0 \\
 0 & -z_{34} \\
\end{array}
\right)\left(
\begin{array}{cc}
 1 & 0 \\
 1 & 1 \\
\end{array}
\right)\left(
\begin{array}{cc}
 1 & 0 \\
 0 & -z_{13} \\
\end{array}
\right)\left(
\begin{array}{cc}
 0 & 1 \\
 -1 & 0 \\
\end{array}
\right)\,.
\end{align}
\end{subequations}
As the snake is not in the neighbourhood of hole 4, the holonomy around hole 4 needs a ``special edge'' to transport the snake to its neighbourhood. We choose the special edge to be $e_{13}$. Then 
\begin{multline}
h_4=\left(
\begin{array}{cc}
 1 & 0 \\
 0 & -z_{13} \\
\end{array}
\right)\left(
\begin{array}{cc}
 1 & 0 \\
 1 & 1 \\
\end{array}
\right)\left(
\begin{array}{cc}
 0 & -1 \\
 1 & 0 \\
\end{array}
\right)\left(
\begin{array}{cc}
 1 & 0 \\
 1 & 1 \\
\end{array}
\right)\left(
\begin{array}{cc}
 1 & 0 \\
 0 & -z_{34} \\
\end{array}
\right)\left(
\begin{array}{cc}
 1 & 0 \\
 1 & 1 \\
\end{array}
\right)\left(
\begin{array}{cc}
 1 & 0 \\
 0 & -z_{24} \\
\end{array}
\right)\\
\left(
\begin{array}{cc}
 1 & 0 \\
 1 & 1 \\
\end{array}
\right)\left(
\begin{array}{cc}
 1 & 0 \\
 0 & -z_{14} \\
\end{array}
\right)\left(
\begin{array}{cc}
 0 & -1 \\
 1 & 0 \\
\end{array}
\right)\left(
\begin{array}{cc}
 1 & 0 \\
 -1 & 1 \\
\end{array}
\right)\left(
\begin{array}{cc}
 1 & 0 \\
 0 & -1/z_{13} \\
\end{array}
\right)\,.
\end{multline}
With the chosen lift $y_{\fp_1\fp_2}:=\sqrt{-z_{\fp_1\fp_2}}$, $h_1,h_2,h_3$ and $h_4$ are all $\SL(2,\bC)$ elements.
Then the traces of these holonomies reproduce the results of \eqref{eq:trace}:
\be\ba{ll}
\tr(h_1)=\sqrt{-z_{12} z_{13} z_{14}}+\frac{1}{\sqrt{-z_{12} z_{13} z_{14}}}=\lambda_1+\lambda_1^{-1}\,,\quad&
\tr(h_2)=\sqrt{-z_{12} z_{23} z_{24}}+\frac{1}{\sqrt{-z_{12} z_{23} z_{24}}}=\lambda_2+\lambda_2^{-1}\,,\quad\\[0.15cm]
\tr(h_3)=\sqrt{-z_{13} z_{23} z_{34}}+\frac{1}{\sqrt{-z_{13} z_{23} z_{34}}}=\lambda_3+\lambda_3^{-1}\,,\quad&
\tr(h_4)=\sqrt{-z_{14} z_{24} z_{34}}+\frac{1}{\sqrt{-z_{14} z_{24} z_{34}}
}=\lambda_4+\lambda_4^{-1}\,.
\ea
\label{eq:trace_app}
\ee
The holonomies $h_1,h_2,h_3$ and $h_4$ satisfy the closure constraints by the snake rule:
\be
h_1h_2h_3h_4\equiv\mat{cc}{1&0\\0&1}\,.
\label{eq:closure_snake}
\ee
The holonomies $h_{12}$ around holes 1,2 and $h_{23}$ around holes 2,3 and $h_{13}$ around holes 1,3 are simply $h_{12}=h_1h_2$, $h_{23}=h_2h_3$ and $h_{13}=h_1h_3$ respectively since they are all calculated starting from the same snake. 
The traces \eqref{eq:trace_coordinates} can be immediately obtained, plugging \eqref{eq:lambda} or \eqref{eq:trace} into which gives \eqref{eq:trace_coordinates_2}. 
The trace coordinates $\{\fm_1,\fm_2,\fm_2,\fm_3,\fkt_1,\fkt_2,\fkt_3\}$ satisfy the polynomial \eqref{eq:polynomial}.

\section{Expectation values of the Fock-Goncharov operators}
\label{app:expectation}

In this appendix, we calculate the expectation values of the operators $\bmu,\bnu,e^{\f{2\pi i}{k}\bfm}$ and $e^{\f{2\pi i}{k}\bfn}$ with the coherent states basis $\Psi^0_\rho(\mu|m)$ defined in \eqref{eq:coherent_for_amplitude}. Note that $\im(\mu)$ and $\im(\nu)$ remain classical, so we treat $\mu,\nu\in\R$ for notational simplicity in this appendix.

Recall the operation actions \eqref{eq:z_on_f} (or equivalently \eqref{eq:z_on_f_2}) of $\z,\z'',\zbt,\zbt''$ on any function $f(\mu|m)$. They generate the operation actions of $\bmu,\bnu,e^{\f{2\pi i}{k}\bfm}$ and $e^{\f{2\pi i}{k}\bfn}$ on $f(\mu|m)$ in the following way. 
\be
\bmu f(\mu|m) = \mu f(\mu|m)\,,\quad \bnu f(\mu|m) = -\f{k}{2\pi i} \partial_\mu f(\mu|m)\,,\quad
e^{\f{2\pi i}{k}\bfm} f(\mu|m) =e^{\f{2\pi i}{k}m} f(\mu|m)\,,\quad
e^{\f{2\pi i}{k}\bfn} f(\mu|m) = f(\mu|m+1)\,.
\ee
The complex-plane part of the coherent state $\psi^0_z(\mu)$ defined in \eqref{eq:coherent_2} is normalized: 
\be
\int_\R\rd\mu\, \bar{\psi}^0_z(\mu) \psi^0_z(\mu) =1\,.
\ee
On the other hand, the torus part $\xi_{(x,y)}(m)$ \eqref{eq:coherent_Vk} can be expressed in terms of the Jacobi theta function
\be
{\xi_{(x, y)}(m)={\sqrt[4]{2} }\,{k^{-3 / 4}}e^{-\frac{k y(y-i x)}{4 \pi}} \vartheta_3\left(
X_m,\tau
\right)\,,\quad
\left|\ba{rll}
X_m&=&\frac{1}{2}\left(-\frac{2 \pi m}{k}+x+i y\right)\\
\tau&=&e^{-\frac{\pi}{k}}
\ea\right.,}
\ee
which is normalized only at the large-$k$ approximation \cite{Gazeau:2009zz}:
\be
\sum_{m=0}^{k-1} \bar{\xi}_{(x,y)}(m)\xi_{(x,y)}(m)
=\sqrt{2}k^{-3/2} e^{-\f{ky^2}{2\pi}}
\sum_{m=0}^{k-1}\left|\vartheta_3\left(X_m,\tau
\right)\right|^2
\xrightarrow{k\rightarrow\infty} 1\,.
\ee
Therefore, $\Psi^0_\rho(\mu|m)$ is only normalized at the large-$k$ approximation. 
In this approximation, we calculate the expectation values of the operators $\bmu,\bnu,e^{\f{2\pi i}{k}\bfm}$ and $e^{\f{2\pi i}{k}\bfn}$ on $f(\mu|m)$ under the $\Psi^0_\rho(\mu|m)$ basis.
Recall that $(z,x,y)\in\bC\times \bT^2$ are related to the classical phase space coordinates $(\mu_0,\nu_0)\in\R^2$ and {$(m_0,n_0)\in [0,k)^{\times 2}$} by
\be
z=\f{\sqrt{2}\pi}{k}(\mu_0+i \nu_0)\,,\quad
x=\f{2\pi}{k}m_0\,,\quad
{y= \f{2\pi}{k}n_0\,.}
\ee
We get (we omit the parameters and variables in the coherent states unless necessary for conciseness) 
\begin{subequations}
\begin{align}
&\sum_{m=0}^{k-1}\int_\R\rd\mu\, \bar{\Psi}^0 \bmu \Psi^0 
&&=\lb \sum_{m=0}^{k-1}\bar{\xi}\xi\rb\,
\lb\f{2}{k}\rb^{\f12}\int_\R\rd\mu \,\mu\, e^{-\f{2\pi}{k} \lb\mu-\f{k}{\pi\sqrt{2}}\re(z)\rb^2}
=\f{k}{\pi\sqrt{2}}\re(z) \sum_{m=0}^{k-1}\int_\R\rd\mu\, \bar{\Psi}^0 \Psi^0
\xrightarrow{k\rightarrow\infty} \mu_0\,,\\
&\sum_{m=0}^{k-1}\int_\R\rd\mu\, \bar{\Psi}^0 \bnu \Psi^0
&&=-\f{k}{2\pi i}\lb \sum_{m=0}^{k-1}\bar{\xi}\xi\rb\,
\lb\f{2}{k}\rb^{\f14}\int_\R\rd\mu\, \bar{\psi}^0 \partial_\mu\lb e^{-\f{\pi}{k}\lb\mu-\f{k}{\pi\sqrt{2}}\re(z)\rb^2} e^{-i\sqrt{2}\mu\im(z)}\rb \nn\\
& &&= \lb\sum_{m=0}^{k-1}\bar{\xi}\xi\rb \int_\R\rd\mu\, \lb -i\mu+\f{k}{\sqrt{2}\pi i}\bz\rb \bar{\psi}^0\psi^0
\xrightarrow{k\rightarrow\infty} \nu_0\,,\\
&\sum_{m=0}^{k-1}\int_\R\rd\mu\, \bar{\Psi}^0 e^{\f{2\pi i}{k}\bfm} \Psi^0 
&&=\lb\int_\R\rd\mu\, \bar{\psi}^0\psi^0\rb \,
\sum_{m=\lfloor -(k-1)/2 \rfloor}^{\lfloor (k-1)/2\rfloor} \bar{\xi}e^{\f{2\pi i}{k}m}\xi \nn\\
& &&\xrightarrow{k\rightarrow\infty}\sum_{q\in\Z}\int_\R\rd m \, e^{\f{2\pi i}{k}m}\bar{\xi}\xi \,e^{2\pi i q m}\xrightarrow{k\rightarrow\infty} e^{ix}\equiv e^{\f{2\pi i}{k}m_0}\,,
\label{eq:ev_m}\\
&\sum_{m=0}^{k-1}\int_\R\rd\mu\, \bar{\Psi}^0 e^{\f{2\pi i}{k}\bfn} \Psi^0 
&&=\lb\int_\R\rd\mu\, \bar{\psi}^0\psi^0\rb \,
\sum_{m=\lfloor -(k-1)/2 \rfloor}^{\lfloor (k-1)/2\rfloor} \bar{\xi}(m)\xi(m+1)\nn\\
& &&\xrightarrow{k\rightarrow\infty}\sum_{q\in\Z}\int_\R\rd m \, \bar{\xi}(m)\xi(m+1)\,e^{2\pi i qm} \xrightarrow{k\rightarrow\infty}e^{iy}\equiv e^{\f{2\pi i}{k}n_0}\,.
\label{eq:ev_n}
\end{align}
\end{subequations}
In \eqref{eq:ev_m} and \eqref{eq:ev_n}, we have shifted the summation over $m$ by $\lfloor -(k-1)/2\rfloor$ where $\lfloor \alpha\rfloor$ denotes the floor function of $\alpha$ which enters the greatest integer less than $\alpha\in\R$ and used the Poisson resummation to change the summation of $m$ to integral. It is permitted by the periodicity of functions $\xi_{(x,y)}(m)$ and $e^{\f{2\pi i}{k}m}$, \ie they are invariant by changing $m\mapsto m+k$. We, therefore, conclude that 
\be
\la \bmu \ra \xrightarrow{k\rightarrow\infty} \mu_0\,,\quad 
\la \bnu \ra \xrightarrow{k\rightarrow\infty}\nu_0\,,\quad
\la e^{\f{2\pi i}{k}\bfm} \ra\xrightarrow{k\rightarrow\infty}e^{\f{2\pi i}{k}m_0}\,,\quad 
\la e^{\f{2\pi i}{k}\bfn} \ra\xrightarrow{k\rightarrow\infty}e^{\f{2\pi i}{k}n_0}\,.
\ee

\section{Positive angle structure for the new coordinates of $\cP_{\partial M_{+\cup -}}$}
\label{app:positive_angle}

In this appendix, we give some examples of the change of the positive angle structure according to the symplectic coordinate transformation from the original edge coordinates $(\vec{\cQ}^\pm,\vec{\cP}^\pm)$ to the final coordinates $(\vec{\fQ},\vec{\fP})$. The existence of these examples guarantees that the positive angle structure of the final coordinate is non-empty. 

Let us assume that the positive angles for the initial 10 ideal octahedra possess the symmetry: $\alpha_{+,I}=\alpha_{-,I}=\alpha\,\, \forall I=1,\cdots, 15$ and $\beta_{\epsilon,ix}=\beta_{\epsilon,iy}=\beta_{\epsilon,iz}=\beta_{\epsilon,i}\,\,\forall i=1,\cdots,5\,,\forall\epsilon=\pm$. Then one can solve that $\alpha\equiv Q/4$ and $\beta_{+,i}=\beta_{-,i}\,,\forall i=1,\cdots,5$ from the constraints $\alpha_{\text{new}}(\cC_A)=0\,,\forall A=1,\cdots, 18$. They indeed satisfy all the inequalities of \eqref{eq:positive_angle_oct} as long as $|\beta_{\epsilon,i}|<Q/4$. As a numerical check, let $\alpha=Q/4$ and $\beta_{\epsilon,i}=Q/6\,,\forall i=1,\cdots,5\,,\forall\epsilon=\pm$. Then the positive angles $(\vec{\alpha}_{M_\pm},\vec{\beta}_{M_\pm})\in\fP_{M_\pm}$ are calculated to be
\begin{align}
\vec{\alpha}_{M_+}&=\bB_+\vec{\beta}_++\bA_+\vec{\alpha}_++\f{Q}{2}\vec{t}_+
=\{\f12,\f13,\f16,\f13,-\f13,-\f13,-\f16,-\f16,-\f16,0,\f12,\f13,\f23,\f23,\f23\}Q \,,\\[0.15cm]
\vec{\beta}_{M_+}&=-(\bB_+^{-1})^\top \vec{\alpha}_+
=\{0,-\f14,-\f14,-\f12,0,\f12,-\f14,\f14,0,\f12,\f12,\f12,-\f12,\f12,-\f12\}Q \,,\\[0.15cm]
\vec{\alpha}_{M_-}&=\bB_-\vec{\beta}_-+\bA_-\vec{\alpha}_-+\f{Q}{2}\vec{t}_-
=\{-\f12,-\f13,-\f16,-\f13,\f13,\f13,\f16,\f16,\f16,0,\f12,\f23,\f13,\f13,\f13\}Q \,,\\[0.15cm]
\vec{\beta}_{M_-}&=-(\bB_-^{-1})^\top \vec{\alpha}_-
=\{0,\f14,\f14,\f12,0,-\f12,\f14,-\f14,0,-\f12,-\f12,-\f12,\f12,-\f12,\f12\}Q\,.
\end{align}
The positive angles satisfying \eqref{eq:positive_angle_new12} for $M_{+\cup -}$ are then
\begin{align}
\vec{\alpha}_{\text{new}}&=
\{\f12,\f13,\f16,\f13,-\f13,-\f13,-\f16,-\f16,-\f16,0,\f23,\f13,0,0,0,0,0,0,0,0,0,0,0,0,0,0,0,0,0,0\}Q\,,\\[0.15cm]
\vec{\beta}_{\text{new}}&=
\{0,-\f12,-\f12,-1,0,1,-\f12,\f12,0,1,-\f12,\f12,0,\f14,\f14,\f12,0,-\f12,\f14,-\f14,0,-\f12,-\f12,-\f12,\f12,-\f12,-\f12,-\f13,-\f23,-\f23\}Q\,,
\end{align}
which confirms the vanishing positive angle $\alpha_{\text{new}}(\cC_A)$ for all the constraints $\{\cC_A\}_{A=1}^{18}$.

\section{Fenchel-Nielsen twist computed by the snake rule}
\label{app:twist}

In this appendix, we use the snake rule on the cusps boundaries \cite{Dimofte:2011gm} (which are different from \eqref{eq:snake_rule}) to compute the coordinates $T_f=\log(\tau_f)$ corresponding to the B-cycles holonomy eigenvalue of the torus cusps and $T_b=\log(\tau_b)$ corresponding to the FN twist of the annulus cusps. The results depend on the choices of path but are different by a linear function of FN lengths $L_f$'s and $L_b$'s. We choose the paths that 
are consistent with the choices in \cite{Han:2015gma}. 

In general, $\{\cT_{ab}-\cT'_{ab}\}$ and $\{T_{ab}\}\equiv\{T_f,T_b\}$ may not be the same but are related by the following lemma. 

\begin{lemma}
\be
\cT_{ab}-\cT'_{ab}=T_{ab}+\zeta_{ab}(\{2L_{ab}\})\,,\quad
{\widetilde{\cT}_{ab}-\widetilde{\cT}'_{ab}=\tilde{T}_{ab}-\zeta_{ab}(\{2L_{ab}\})\,.}
\ee
where $\zeta_{ab}$ is a linear function of the set $\{2L_{ab}\}_{(ab)}$ with real linear coefficients while an imaginary constant term. 
\end{lemma}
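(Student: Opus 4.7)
The plan is to argue, via a canonicity argument on the glued phase space $\cP_{\partial M_{+\cup-}}$, that $\zeta_{ab}$ must be linear in $\{2L_{ab}\}$ with real coefficients and an imaginary constant term. The key idea is that both $\cT_{ab}-\cT'_{ab}$ and $T_{ab}$ play the role of the momentum conjugate to $2L_{ab}$ on $\cP_{\partial M_{+\cup-}}$, so their difference must Poisson-commute with every FN length.

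First, I would establish that $\{\cT_{ab}-\cT'_{ab},\,2L_{cd}\}_\Omega = \{T_{ab},\,2L_{cd}\}_\Omega = \delta_{(ab),(cd)}$. The left-hand identity follows from the symplectic transformation \eqref{eq:change_coordinate} applied separately to $(\vec{\cQ}^+,\vec{\cP}^+)$ and $(\vec{\cQ}^-,\vec{\cP}^-)$ with the brackets \eqref{eq:Poisson-L_-}, combined with the gluing constraints \eqref{eq:constraint_FN} fixing $2L_{ab}+2L'_{ab}=0$. The right-hand identity is definitional, since $T_{ab}$ is the canonical momentum produced by the snake rule on each torus or annulus cusp of $\partial M_{+\cup-}$. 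Then $\zeta_{ab}:=(\cT_{ab}-\cT'_{ab})-T_{ab}$ Poisson-commutes with all $2L_{cd}$, and is itself a linear combination of FG coordinates (as $\cT_{ab},\cT'_{ab}$ are linear via the integer-entry matrices $(\bB_\pm^\top)^{-1}\bA_\pm$, and $T_{ab}$ is linear via the snake rule \eqref{eq:snake_rule_annulus} on the cusp boundary). Since $\{2L_{cd},\,T_{cd}\}$ form a Darboux basis of the reduced symplectic space, a linear function Poisson-commuting with all $2L_{cd}$ must be a linear combination of $\{2L_{cd}\}$'s up to an additive constant.

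Next, I would trace the sources of imaginary contributions to show the coefficients are real while the constant is purely imaginary. The only imaginary inputs enter through (i) the affine translation $i\pi\vec{t}_\pm$ in \eqref{eq:change_coordinate}, and (ii) the constants $-i\pi$ per edge in the cusp snake rule \eqref{eq:snake_rule_annulus}, arising from the lift $y=\sqrt{-z}$ chosen to resolve the $\PSL(2,\bC)/\SL(2,\bC)$ ambiguity. Both sources contribute only $i\pi\mathbb{Z}$, while the coefficients multiplying FG coordinates themselves are integers. Re-expressed in the $\{2L_{cd}\}$ basis (also an integer combination of FG coordinates, as given in \eqref{eq:all_FN_coordinates}), the linear coefficients of $\zeta_{ab}$ are rational and the constant lies in $i\pi\mathbb{Z}$, consistent with the explicit values in \eqref{eq:zeta0}.

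Finally, the tilde sector follows by the same argument applied to the $\bar{\Omega}$-Poisson structure, with one sign flip: the relations \eqref{eq:QP_to_PhiPi} show that the tilde coordinates carry the affine shift $+i\pi\vec{t}_\pm$ where the untilde ones carry $-i\pi\vec{t}_\pm$, and the snake-rule lift \eqref{eq:snake_rule_annulus} similarly picks up an opposite sign for $\tilde{L}$. Since the linear-in-$2L$ piece of $\zeta_{ab}$ is determined by integer combinatorial data shared by both sectors while the constant is purely imaginary, the combined effect is precisely a sign reversal, yielding $\widetilde{\cT}_{ab}-\widetilde{\cT}'_{ab}=\tilde T_{ab}-\zeta_{ab}$ with the same $\zeta_{ab}$. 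The main obstacle I anticipate is the careful bookkeeping of all $i\pi$ shifts, since the symplectic-reduction route to $\cT_{ab}-\cT'_{ab}$ and the direct cusp-snake-rule route to $T_{ab}$ track lifts differently; however, the canonicity argument bypasses any need to compute $\zeta_{ab}$ explicitly, reducing the verification to checking that no irrational or non-imaginary contributions can appear.
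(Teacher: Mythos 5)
Your argument is essentially the paper's own proof: the paper likewise treats $T_{ab}$ as a second momentum conjugate to $2L_{ab}$, expands it in the Darboux basis of $\cP_{\partial M_{+\cup-}}$, uses $\{2L_{cd},T_{ab}\}=\delta_{(ab),(cd)}$ and $\{T_{ab},T_{cd}\}=0$ to reduce the difference $\zeta_{ab}$ to a linear function of the $\{2L_{ab}\}$ plus a constant, attributes the real linear coefficients and the $i\pi$-valued constant to the integer symplectic matrices and the affine shifts, and gets the tilde relation by complex conjugation (equivalent to your sign-tracking once $2\tilde{L}_{ab}=-2L_{ab}$ is used). The one step to tighten --- present in both your write-up and the paper's --- is that Poisson-commuting with all $2L_{cd}$ does not by itself exclude dependence on $(\cX_5,\cY_5,\cX_5',\cY_5')$, which also survive the symplectic reduction and are part of the Darboux basis; one must additionally argue that $T_{ab}$, and hence $\zeta_{ab}$, commutes with these coordinates as well.
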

\begin{proof}
{Similar to $\cT_{ab}$ and $\cT'_{ab}$, $T_{ab}$'s are some linear functions of $\fZ_i, P_{\fZ_i}$ {with real coefficients in the linear terms and the constant term takes the form $\pi i\cdot c_{ab}$ ($c_{ab}\in\R$) as it comes from the linear combination of the affine translations} (see Appendix A.3.3 in \cite{Dimofte:2013lba}, see also \cite{Han:2015gma} for the example related to our model). By the symplectic transformation in \eqref{QIPI}, each $T_{ab}$ is expressed as a linear function of $(\mathfrak{Q}_I,\mathfrak{P}_I)$. By definition, $T_{ab}$ is a function on the phase space for $\partial M_{+\cup -}$, then it is a linear function of the symplectic coordinates $\{2L_{ab},\cT_{ab}-\cT_{ab}',\cX_5,\cY_5,\cX_5',\cY_5'\}$ (but not of $\{\Gamma_A\}$ when the constraints $\{\cC_A\}$ are imposed). 
Moreover, its Poisson bracket with $2L_{cd}$ must be $\{2L_{cd},T_{ab}\}=\delta_{(ab),(cd)}$ and $\{T_{ab},T_{cd}\}=0\,,\forall (ab),(cd)$, which means $T_{ab}$ can only be a linear combination of $\{2L_{ab}\}$ and $\cT_{ab}-\cT_{ab}'$. The same argument applies to the tilde sector. The tilded variables are just complex conjugates of the non-tilded ones, and $\overline{\zeta_{ab}(2L_{ab})}=-\zeta_{ab}(2L_{ab})$. Hence the second equation in \eqref{cTcTT} holds.}
\end{proof}

Due to the fact that we have chosen $L'_{ab}$ in a symmetric way as $L_{ab}$, it turns out that all $\zeta_{ab}$'s have only constant terms. Explicitly, we use the snake rule described in the following to compute $T_{ab}$ (they are also used in \cite{Han:2015gma}) and find \eqref{eq:zeta0}, which we copy here:
\be
\zeta_{12}=0\,,\,\,
\zeta_{13}=\pi i\,,\,\,
\zeta_{14}=\pi i\,,\,\,
\zeta_{15}=\pi i\,,\,\,
\zeta_{23}=0\,,\,\,
\zeta_{24}=-2\pi i\,,\,\,
\zeta_{25}=0\,,\,\,
\zeta_{34}=-\pi i\,,\,\,
\zeta_{35}=0\,,\,\,
\zeta_{45}=-\pi i\,.
\label{eq:zeta}
\ee 

We now describe the snake rule for cusp boundaries. Dress the vertex (or the angle) of a disc cusp, which is a triangle, in an ideal tetrahedron $\triangle$ by $\fz$ ($\fz=z,z', z''$) when this vertex is connected to an edge of $\triangle$ dressed with $\fz$. 
Assume the oriented paths on the cusp boundary are all non-intersecting. 
The snake rule on a cusp boundary can be separated into two types on a single disc cusp, each corresponding to an operation on the logarithmic FN coordinate: 
\be
\includegraphics[width=0.25\textwidth]{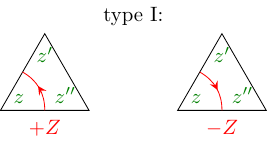}\quad\qquad
\includegraphics[width=0.25\textwidth]{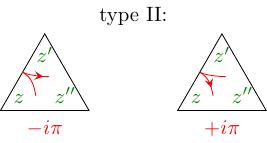}\,.
\label{eq:cusp_snake_rule}
\ee
Type I corresponds to the part of a path crossing an angle dressed with $\zb$ in a counter-clockwise (\resp clockwise) direction. It adds $+\fZ=\log \fz$ (\resp $-\fZ=\log(\fz^{-1})$) to the FN coordinate for the path. On the other hand,  Type II corresponds to the part of a path bouncing against an edge of the disc cusp in a  clockwise (\resp counter-clockwise) direction relative to the disc cusp. It adds $-i\pi$ (\resp $+i\pi$) to the FN coordinate for the path. 

Each FN coordinate corresponds to a path on the cusp boundary as shown in fig.\ref{fig:TAll_snake_rule}.
\eqref{eq:cusp_snake_rule} provides another way to formulate $\{L_{ab},L'_{ab}\}$ other than performing the symplectic transformation from the FG coordinates as in \eqref{eq:all_FN_coordinates}. As an example, $L_{12}, L'_{12}$ correspond to the A-cycles (with no winding) (red paths in fig.\ref{fig:T12_snake}) of the torus cusp connecting $\cS_1$ and $\cS_2$ and $ T_{12}-T'_{12}$ (note that $T_{ab}\neq \cT_{ab}\,,T'_{ab}\neq \cT'_{ab} $ in general) corresponds to the B-cycle (blue path in fig.\ref{fig:T12_snake}) of the same torus cusp. 

It is easy to read $T_{ab}$ and $T'_{ab}$ from fig.\ref{fig:TAll_snake_rule}:
\be
\ba{lllll}
 T_{12}=X_3''-Y_3+Z_3\,,\,
&T_{13}=Z''_5-W''_5\,,\,
&T_{14}=Y_2'-Z_2'\,,\,
&T_{23}=X_4''-Y_4''\,,\,
&T_{24}=-Y''_5+X''_5\,,\\[0.15cm]
 T'_{12}=-X_8''+Y_8-Z_8\,,\, 
&T'_{13}=-Z''_{10}+W''_{10}\,,\,
&T'_{14}=-Y_7'+Z_7'\,,\,
&T'_{23}=-X_9''+Y_9''\,,\,
&T'_{24}=Y''_{10}-X''_{10}\,,\\[0.25cm]
T_{34}=-Z''_1+W'_1\,,\,
&T_{15}=-W''_4+Z''_4\,,\,
&T_{25}=X_1''-Y_1''\,,\,
&T_{35}=W_2'-X_2'\,,\,
&T_{45}=-Y'_3+Z_3-W''_3\,,\\[0.15cm]
T'_{34}=Z''_6-W'_6\,,\,
&T'_{15}=W''_9-Z''_9\,,\,
&T'_{25}=-X_6''+Y_6''\,,\,
&T'_{35}=-W_7'+X_7'\,,\,
&T'_{45}=Y'_8-Z_8+W''_8\,.
\ea
\ee
\begin{figure}[h!]
\centering
\begin{minipage}{0.3\textwidth}
\centering
\includegraphics[width=\textwidth]{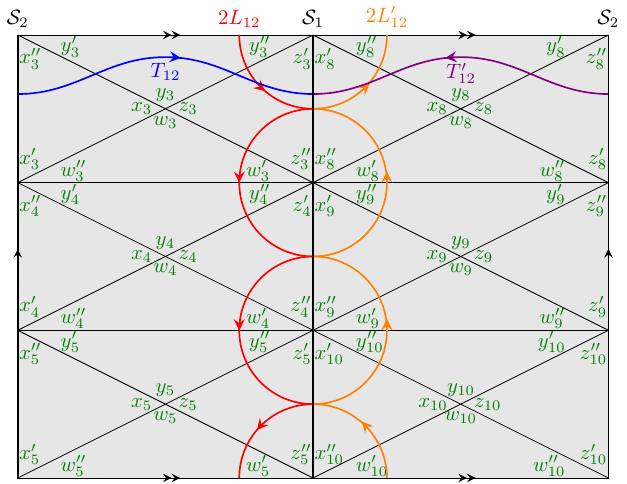}
\subcaption{$(ab)=(12)$}
\label{fig:T12_snake}
\end{minipage}
\begin{minipage}{0.3\textwidth}
\centering
\includegraphics[width=\textwidth]{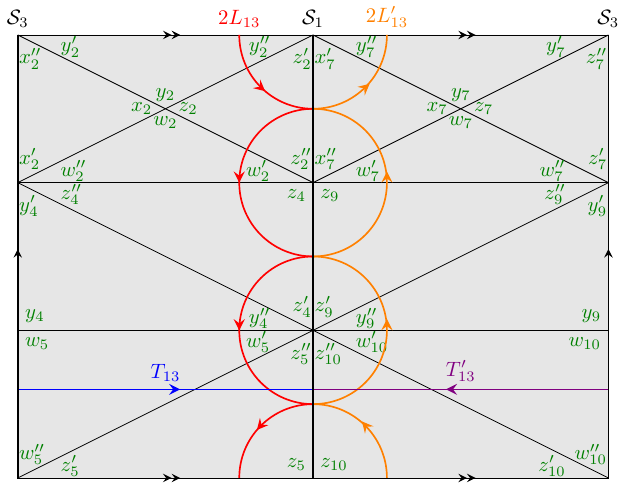}
\subcaption{$(ab)=(13)$}
\label{fig:T13_snake}
\end{minipage}
\begin{minipage}{0.3\textwidth}
\centering
\includegraphics[width=\textwidth]{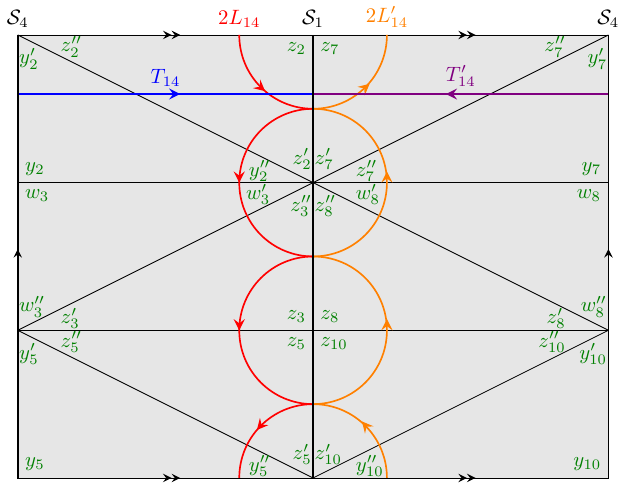}
\subcaption{$(ab)=(14)$}
\label{fig:T14_snake}
\end{minipage}
\begin{minipage}{0.3\textwidth}
\centering
\includegraphics[width=\textwidth]{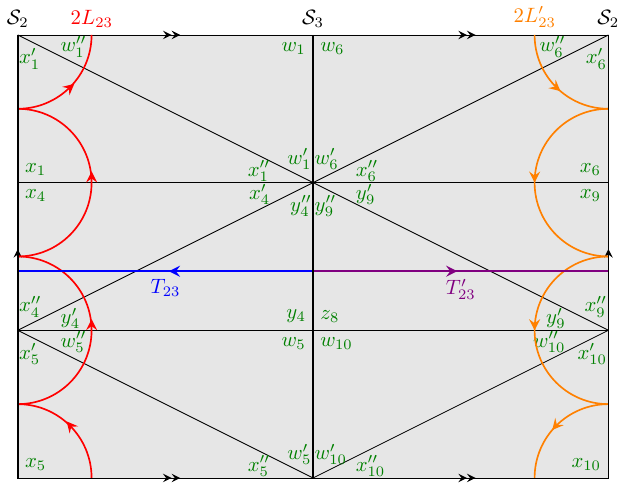}
\subcaption{$(ab)=(23)$}
\label{fig:T23_snake}
\end{minipage}
\begin{minipage}{0.3\textwidth}
\centering
\includegraphics[width=\textwidth]{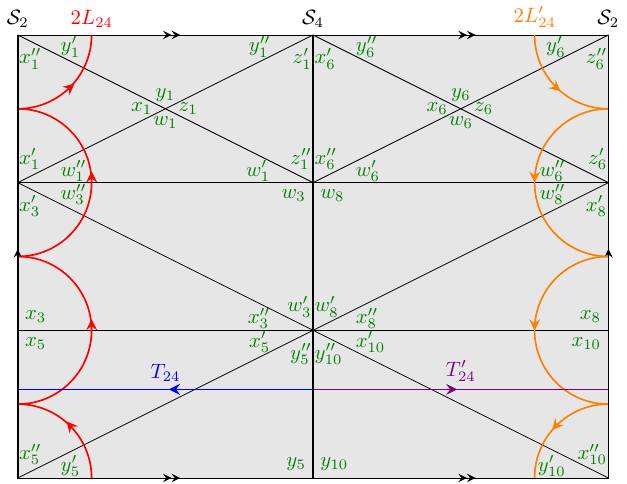}
\subcaption{$(ab)=(24)$}
\label{fig:T24_snake}
\end{minipage}
\begin{minipage}{0.3\textwidth}
\centering
\includegraphics[width=\textwidth]{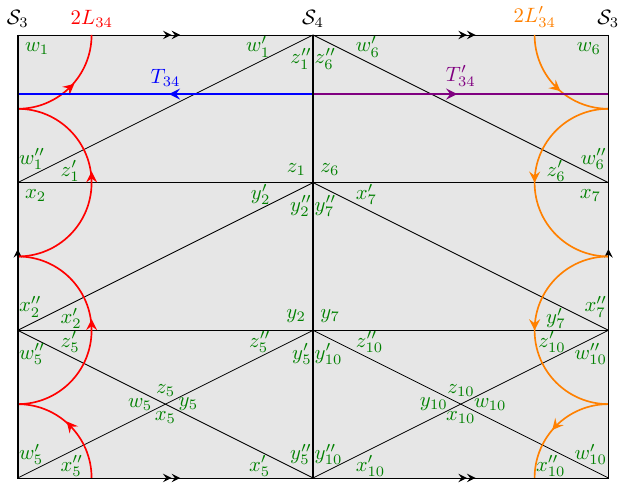}
\subcaption{$(ab)=(34)$}
\label{fig:T34_snake}
\end{minipage}
\begin{minipage}{0.3\textwidth}
\centering
\includegraphics[width=\textwidth]{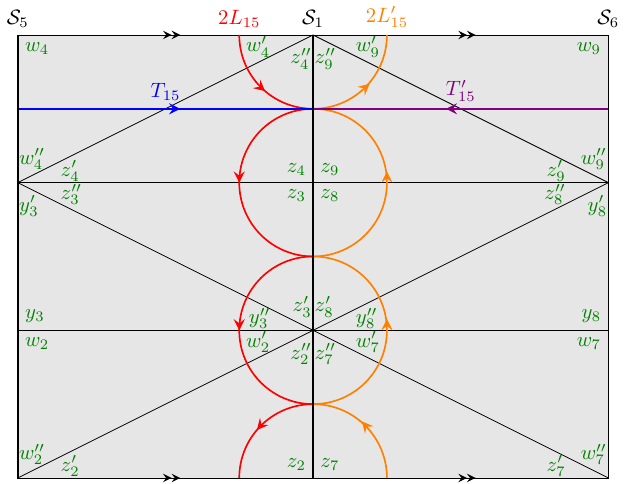}
\subcaption{$(ab)=(15)$}
\label{fig:T15_snake}
\end{minipage}
\qquad
\begin{minipage}{0.3\textwidth}
\centering
\includegraphics[width=\textwidth]{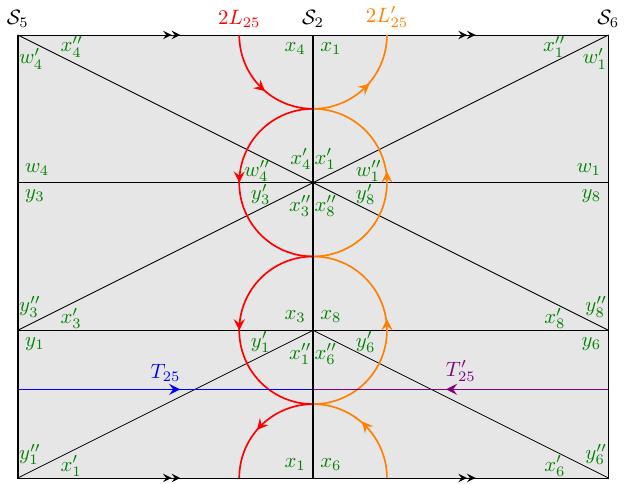}
\subcaption{$(ab)=(25)$}
\label{fig:T25_snake}
\end{minipage}\\
\begin{minipage}{0.3\textwidth}
\centering
\includegraphics[width=\textwidth]{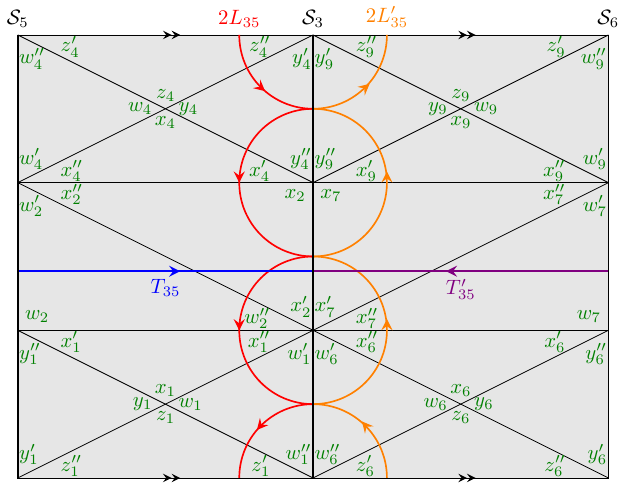}
\subcaption{$(ab)=(35)$}
\label{fig:T35_snake}
\end{minipage}
\qquad
\begin{minipage}{0.3\textwidth}
\centering
\includegraphics[width=\textwidth]{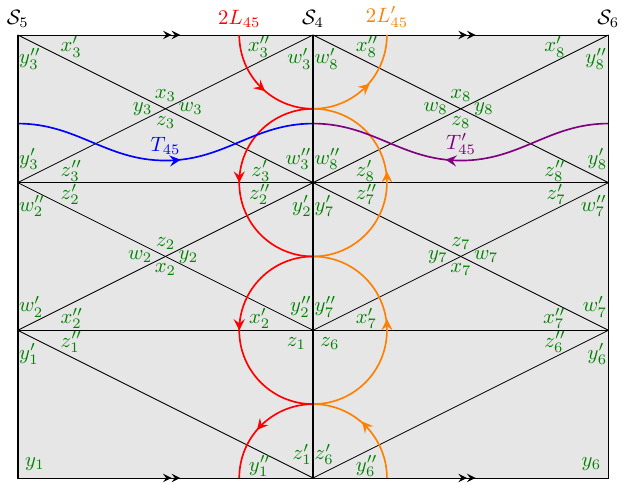}
\subcaption{$(ab)=(45)$}
\label{fig:T45_snake}
\end{minipage}
\caption{Paths on cusp boundary associated with which the FN coordinates on $M_{+\cup -}$ are defined. FN lengths $2L_{ab}$'s (\resp $2L'_{ab}$'s) on $M_+$ (\resp $M_-$) are associated to paths shown {\it in red} (\resp {\it in orange}) while FN twists $T_{ab}$'s (\resp $T'_{ab}$'s) on $M_+$ (\resp $M_-$) are associated to paths shown {\it in blue} (\resp {\it in violet}).
$2L_{ab}$ and $2L'_{ab}$ calculated by the cusp snake rules \eqref{eq:cusp_snake_rule} is the same as calculated by  \eqref{eq:all_FN_coordinates}. The paths for $T_{ab}$'s and $T'_{ab}$'s are chosen in a simple and symmetric way. The choices of paths for all $T_{ab}$'s are consistent with those in \cite{Han:2015gma}. Cusp boundaries in (a)--(f) are torus cusps, on each of which $2L_{ab}$ and $2L'_{ab}$ are associated to the A-cycle of the torus and $T_f\equiv T_{ab}-T'_{ab}$ is associated to the B-cycle. 
} 
\label{fig:TAll_snake_rule}
\end{figure}
Comparing $T_{ab}/2$ and $\cT_{ab}-\cT'_{ab}$ when imposing the gluing constraints \eqref{eq:oct_constraint} for octahedra and express in terms of the octahedron symplectic coordinates, one finds that they are different merely by a constant as follows.
\be
\{T_{ab}/2-\lb\cT_{ab}-\cT'_{ab}\rb\}_{(ab)} =i\pi \vec{t}_T\,,\quad
\vec{t}_T=
\{0, 1, 1, 2, 0, -2, 0, -1, 0, -1\}\,,
\ee
where the order of $\{(ab)\}$ is $(ab)=\{(12),(13),(14),(15),(23),(24),(25),(34),(35),(45)\}$. This leads to $\zeta_{ab}(2L_{ab})$ given in \eqref{eq:zeta}.

\bibliographystyle{bib-style} 
\bibliography{RC.bib}

\providecommand{\href}[2]{#2}\begingroup\raggedright\begin{thebibliography}{10}

\bibitem{Han:2021tzw}
M.~Han, ``{Four-dimensional spinfoam quantum gravity with a cosmological
  constant: Finiteness and semiclassical limit},'' Phys. Rev. D {\bf 104}
  (2021), no.~10, 104035,
  \href{http://arXiv.org/abs/2109.00034}{{\texttt{arXiv:2109.00034}}}.

\bibitem{Rovelli:2014ssa}
C.~Rovelli and F.~Vidotto, {\em {Covariant Loop Quantum Gravity}: {An
  Elementary Introduction to Quantum Gravity and Spinfoam Theory}}.
\newblock Cambridge Monographs on Mathematical Physics. Cambridge University
  Press, 11, 2014.

\bibitem{Perez:2012wv}
A.~Perez, ``{The Spin Foam Approach to Quantum Gravity},'' Living Rev. Rel.
  {\bf 16} (2013) 3,
  \href{http://arXiv.org/abs/1205.2019}{{\texttt{arXiv:1205.2019}}}.

\bibitem{Ponzano:1968se}
G.~Ponzano and T.~E. Regge, ``Semiclassical limit of Racah coefficients,''.

\bibitem{Freidel:2002dw}
L.~Freidel and D.~Louapre, ``{Diffeomorphisms and spin foam models},'' Nucl.
  Phys. B {\bf 662} (2003) 279--298,
  \href{http://arXiv.org/abs/gr-qc/0212001}{{\texttt{arXiv:gr-qc/0212001}}}.

\bibitem{Christodoulou:2012af}
M.~Christodoulou, M.~Langvik, A.~Riello, C.~Roken, and C.~Rovelli,
  ``{Divergences and Orientation in Spinfoams},'' Class. Quant. Grav. {\bf 30}
  (2013) 055009,
  \href{http://arXiv.org/abs/1207.5156}{{\texttt{arXiv:1207.5156}}}.

\bibitem{Turaev:1992hq}
V.~Turaev and O.~Viro, ``{State sum invariants of 3 manifolds and quantum 6j
  symbols},'' Topology {\bf 31} (1992) 865--902.

\bibitem{Oriti:2006se}
D.~Oriti, ``{The Group field theory approach to quantum gravity},''
  \href{http://arXiv.org/abs/gr-qc/0607032}{{\texttt{arXiv:gr-qc/0607032}}}.

\bibitem{Krajewski:2011zzu}
T.~Krajewski, ``{Group field theories},'' PoS {\bf QGQGS2011} (2011) 005,
  \href{http://arXiv.org/abs/1210.6257}{{\texttt{arXiv:1210.6257}}}.

\bibitem{Perini:2008pd}
C.~Perini, C.~Rovelli, and S.~Speziale, ``{Self-energy and vertex radiative
  corrections in LQG},'' Phys. Lett. B {\bf 682} (2009) 78--84,
  \href{http://arXiv.org/abs/0810.1714}{{\texttt{arXiv:0810.1714}}}.

\bibitem{Krajewski:2010yq}
T.~Krajewski, J.~Magnen, V.~Rivasseau, A.~Tanasa, and P.~Vitale, ``{Quantum
  Corrections in the Group Field Theory Formulation of the EPRL/FK Models},''
  Phys. Rev. D {\bf 82} (2010) 124069,
  \href{http://arXiv.org/abs/1007.3150}{{\texttt{arXiv:1007.3150}}}.

\bibitem{Engle:2007wy}
J.~Engle, E.~Livine, R.~Pereira, and C.~Rovelli, ``{LQG vertex with finite
  Immirzi parameter},'' Nucl. Phys. B {\bf 799} (2008) 136--149,
  \href{http://arXiv.org/abs/0711.0146}{{\texttt{arXiv:0711.0146}}}.

\bibitem{Freidel:2007py}
L.~Freidel and K.~Krasnov, ``{A New Spin Foam Model for 4d Gravity},'' Class.
  Quant. Grav. {\bf 25} (2008) 125018,
  \href{http://arXiv.org/abs/0708.1595}{{\texttt{arXiv:0708.1595}}}.

\bibitem{Frisoni:2021uwx}
P.~Frisoni, F.~Gozzini, and F.~Vidotto, ``{Numerical analysis of the
  self-energy in covariant loop quantum gravity},'' Phys. Rev. D {\bf 105}
  (2022), no.~10, 106018,
  \href{http://arXiv.org/abs/2112.14781}{{\texttt{arXiv:2112.14781}}}.

\bibitem{Riello:2013bzw}
A.~Riello, ``{Self-energy of the Lorentzian Engle-Pereira-Rovelli-Livine and
  Freidel-Krasnov model of quantum gravity},'' Phys. Rev. D {\bf 88} (2013),
  no.~2, 024011,
  \href{http://arXiv.org/abs/1302.1781}{{\texttt{arXiv:1302.1781}}}.

\bibitem{Dona:2018pxq}
P.~Don\`a, ``{Infrared divergences in the EPRL-FK Spin Foam model},'' Class.
  Quant. Grav. {\bf 35} (2018), no.~17, 175019,
  \href{http://arXiv.org/abs/1803.00835}{{\texttt{arXiv:1803.00835}}}.

\bibitem{Noui:2002ag}
K.~Noui and P.~Roche, ``{Cosmological deformation of Lorentzian spin foam
  models},'' Class. Quant. Grav. {\bf 20} (2003) 3175--3214,
  \href{http://arXiv.org/abs/gr-qc/0211109}{{\texttt{arXiv:gr-qc/0211109}}}.

\bibitem{Han:2010pz}
M.~Han, ``{4-dimensional Spin-foam Model with Quantum Lorentz Group},'' J.
  Math. Phys. {\bf 52} (2011) 072501,
  \href{http://arXiv.org/abs/1012.4216}{{\texttt{arXiv:1012.4216}}}.

\bibitem{Fairbairn:2010cp}
W.~J. Fairbairn and C.~Meusburger, ``{Quantum deformation of two
  four-dimensional spin foam models},'' J. Math. Phys. {\bf 53} (2012) 022501,
  \href{http://arXiv.org/abs/1012.4784}{{\texttt{arXiv:1012.4784}}}.

\bibitem{Regge:1961px}
T.~Regge, ``{GENERAL RELATIVITY WITHOUT COORDINATES},'' Nuovo Cim. {\bf 19}
  (1961) 558--571.

\bibitem{Hartle:1981cf}
J.~B. Hartle and R.~Sorkin, ``{Boundary Terms in the Action for the Regge
  Calculus},'' Gen. Rel. Grav. {\bf 13} (1981) 541--549.

\bibitem{Friedberg:1984ma}
R.~Friedberg and T.~D. Lee, ``{Derivation of Regge's Action From Einstein's
  Theory of General Relativity},'' Nucl. Phys. B {\bf 242} (1984) 145.

\bibitem{Regge:2000wu}
T.~Regge and R.~M. Williams, ``{Discrete structures in gravity},'' J. Math.
  Phys. {\bf 41} (2000) 3964--3984,
  \href{http://arXiv.org/abs/gr-qc/0012035}{{\texttt{arXiv:gr-qc/0012035}}}.

\bibitem{Barrett:1997gw}
J.~W. Barrett and L.~Crane, ``{Relativistic spin networks and quantum
  gravity},'' J. Math. Phys. {\bf 39} (1998) 3296--3302,
  \href{http://arXiv.org/abs/gr-qc/9709028}{{\texttt{arXiv:gr-qc/9709028}}}.

\bibitem{Barrett:1998gs}
J.~W. Barrett and R.~M. Williams, ``{The Asymptotics of an amplitude for the
  four simplex},'' Adv. Theor. Math. Phys. {\bf 3} (1999) 209--215,
  \href{http://arXiv.org/abs/gr-qc/9809032}{{\texttt{arXiv:gr-qc/9809032}}}.

\bibitem{Bianchi:2008ae}
E.~Bianchi and A.~Satz, ``{Semiclassical regime of Regge calculus and spin
  foams},'' Nucl. Phys. B {\bf 808} (2009) 546--568,
  \href{http://arXiv.org/abs/0808.1107}{{\texttt{arXiv:0808.1107}}}.

\bibitem{Conrady:2008mk}
F.~Conrady and L.~Freidel, ``{On the semiclassical limit of 4d spin foam
  models},'' Phys. Rev. D {\bf 78} (2008) 104023,
  \href{http://arXiv.org/abs/0809.2280}{{\texttt{arXiv:0809.2280}}}.

\bibitem{Haggard:2014xoa}
H.~M. Haggard, M.~Han, W.~Kami\'nski, and A.~Riello, ``{SL(2,C)
  Chern\textendash{}Simons theory, a non-planar graph operator, and 4D quantum
  gravity with a cosmological constant: Semiclassical geometry},'' Nucl. Phys.
  B {\bf 900} (2015) 1--79,
  \href{http://arXiv.org/abs/1412.7546}{{\texttt{arXiv:1412.7546}}}.

\bibitem{Plebanski:1977zz}
J.~F. Plebanski, ``{On the separation of Einsteinian substructures},'' J. Math.
  Phys. {\bf 18} (1977) 2511--2520.

\bibitem{Gaiotto:2009hg}
D.~Gaiotto, G.~W. Moore, and A.~Neitzke, ``{Wall-crossing, Hitchin Systems, and
  the WKB Approximation},''
  \href{http://arXiv.org/abs/0907.3987}{{\texttt{arXiv:0907.3987}}}.

\bibitem{Dimofte:2011gm}
T.~Dimofte, ``{Quantum Riemann Surfaces in Chern-Simons Theory},'' Adv. Theor.
  Math. Phys. {\bf 17} (2013), no.~3, 479--599,
  \href{http://arXiv.org/abs/1102.4847}{{\texttt{arXiv:1102.4847}}}.

\bibitem{Dimofte:2011ju}
T.~Dimofte, D.~Gaiotto, and S.~Gukov, ``{Gauge Theories Labelled by
  Three-Manifolds},'' Commun. Math. Phys. {\bf 325} (2014) 367--419,
  \href{http://arXiv.org/abs/1108.4389}{{\texttt{arXiv:1108.4389}}}.

\bibitem{Dimofte:2013lba}
T.~Dimofte, D.~Gaiotto, and R.~van~der Veen, ``{RG Domain Walls and Hybrid
  Triangulations},'' Adv. Theor. Math. Phys. {\bf 19} (2015) 137--276,
  \href{http://arXiv.org/abs/1304.6721}{{\texttt{arXiv:1304.6721}}}.

\bibitem{Dimofte:2014zga}
T.~Dimofte, ``{Complex Chern\textendash{}Simons Theory at Level k via the
  3d\textendash{}3d Correspondence},'' Commun. Math. Phys. {\bf 339} (2015),
  no.~2, 619--662,
  \href{http://arXiv.org/abs/1409.0857}{{\texttt{arXiv:1409.0857}}}.

\bibitem{andersen2014complex}
J.~E. Andersen and R.~Kashaev, ``Complex Quantum Chern-Simons,'' arXiv preprint
  arXiv:1409.1208 (2014).

\bibitem{Dimofte:2010wxa}
T.~D. Dimofte, {\em {Refined BPS Invariants, Chern-Simons Theory, and the
  Quantum Dilogarithm}}.
\newblock PhD thesis, Caltech, 2010.

\bibitem{Han:2015gma}
M.~Han, ``{4d Quantum Geometry from 3d Supersymmetric Gauge Theory and
  Holomorphic Block},'' JHEP {\bf 01} (2016) 065,
  \href{http://arXiv.org/abs/1509.00466}{{\texttt{arXiv:1509.00466}}}.

\bibitem{Fock:2003alg}
V.~V. Fock and A.~B. Goncharov, ``Moduli spaces of local systems and higher
  Teichmuller theory,'' 2003.

\bibitem{EllegaardAndersen:2011vps}
J.~Ellegaard~Andersen and R.~Kashaev, ``{A TQFT from Quantum Teichm\"uller
  Theory},'' Commun. Math. Phys. {\bf 330} (2014) 887--934,
  \href{http://arXiv.org/abs/1109.6295}{{\texttt{arXiv:1109.6295}}}.

\bibitem{andersen2013new}
J.~E. Andersen and R.~Kashaev, ``A new formulation of the Teichm$\backslash$"
  uller TQFT,'' arXiv preprint arXiv:1305.4291 (2013).

\bibitem{Engle:2007qf}
J.~Engle, R.~Pereira, and C.~Rovelli, ``{Flipped spinfoam vertex and loop
  gravity},'' Nucl. Phys. B {\bf 798} (2008) 251--290,
  \href{http://arXiv.org/abs/0708.1236}{{\texttt{arXiv:0708.1236}}}.

\bibitem{Ding:2010fw}
Y.~Ding, M.~Han, and C.~Rovelli, ``{Generalized Spinfoams},'' Phys. Rev. D {\bf
  83} (2011) 124020,
  \href{http://arXiv.org/abs/1011.2149}{{\texttt{arXiv:1011.2149}}}.

\bibitem{Haggard:2015ima}
H.~M. Haggard, M.~Han, and A.~Riello, ``{Encoding Curved Tetrahedra in Face
  Holonomies: Phase Space of Shapes from Group-Valued Moment Maps},'' Annales
  Henri Poincare {\bf 17} (2016), no.~8, 2001--2048,
  \href{http://arXiv.org/abs/1506.03053}{{\texttt{arXiv:1506.03053}}}.

\bibitem{Coman:2015lna}
I.~Coman, M.~Gabella, and J.~Teschner, ``{Line operators in theories of class
  $\mathcal{S}$, quantized moduli space of flat connections, and Toda field
  theory},'' JHEP {\bf 10} (2015) 143,
  \href{http://arXiv.org/abs/1505.05898}{{\texttt{arXiv:1505.05898}}}.

\bibitem{Teschner:2013tqy}
J.~Teschner and G.~S. Vartanov, ``{Supersymmetric gauge theories, quantization
  of $\mathcal{M}_{\mathrm{flat}}$, and conformal field theory},'' Adv. Theor.
  Math. Phys. {\bf 19} (2015) 1--135,
  \href{http://arXiv.org/abs/1302.3778}{{\texttt{arXiv:1302.3778}}}.

\bibitem{Nekrasov:2011bc}
N.~Nekrasov, A.~Rosly, and S.~Shatashvili, ``{Darboux coordinates, Yang-Yang
  functional, and gauge theory},'' Nucl. Phys. B Proc. Suppl. {\bf 216} (2011)
  69--93, \href{http://arXiv.org/abs/1103.3919}{{\texttt{arXiv:1103.3919}}}.

\bibitem{Goldman1986}
W.~Goldman, ``Invariant functions on Lie groups and Hamiltonian flows of
  surface group representations.,'' Inventiones mathematicae {\bf 85} (1986)
  263--302.

\bibitem{Gazeau:2009zz}
J.-P. Gazeau, {\em {Coherent states in quantum physics}}.
\newblock 2009.

\bibitem{Haggard:2015yda}
H.~M. Haggard, M.~Han, W.~Kami\'nski, and A.~Riello, ``{Four-dimensional
  Quantum Gravity with a Cosmological Constant from Three-dimensional
  Holomorphic Blocks},'' Phys. Lett. B {\bf 752} (2016) 258--262,
  \href{http://arXiv.org/abs/1509.00458}{{\texttt{arXiv:1509.00458}}}.

\bibitem{Haggard:2015nat}
H.~M. Haggard, M.~Han, W.~Kaminski, and A.~Riello,
  ``{$\operatorname{SL}(2,\mathbb{C})$ Chern-Simons theory, flat connections,
  and four-dimensional quantum geometry},'' Adv. Theor. Math. Phys. {\bf 23}
  (2019), no.~4, 1067--1158,
  \href{http://arXiv.org/abs/1512.07690}{{\texttt{arXiv:1512.07690}}}.

\bibitem{BenGeloun:2010qkf}
J.~Ben~Geloun, R.~Gurau, and V.~Rivasseau, ``{EPRL/FK Group Field Theory},''
  EPL {\bf 92} (2010), no.~6, 60008,
  \href{http://arXiv.org/abs/1008.0354}{{\texttt{arXiv:1008.0354}}}.

\bibitem{Dimofte:2013iv}
T.~Dimofte, M.~Gabella, and A.~B. Goncharov, ``{K-Decompositions and 3d Gauge
  Theories},'' JHEP {\bf 11} (2016) 151,
  \href{http://arXiv.org/abs/1301.0192}{{\texttt{arXiv:1301.0192}}}.

\bibitem{Bianchi:2010fj}
E.~Bianchi, D.~Regoli, and C.~Rovelli, ``{Face amplitude of spinfoam quantum
  gravity},'' Class. Quant. Grav. {\bf 27} (2010) 185009,
  \href{http://arXiv.org/abs/1005.0764}{{\texttt{arXiv:1005.0764}}}.

\bibitem{Alekseev:1994au}
A.~Y. Alekseev, H.~Grosse, and V.~Schomerus, ``{Combinatorial quantization of
  the Hamiltonian Chern-Simons theory. 2.},'' Commun. Math. Phys. {\bf 174}
  (1995) 561--604,
  \href{http://arXiv.org/abs/hep-th/9408097}{{\texttt{arXiv:hep-th/9408097}}}.

\bibitem{Alekseev:1994pa}
A.~Y. Alekseev, H.~Grosse, and V.~Schomerus, ``{Combinatorial quantization of
  the Hamiltonian Chern-Simons theory},'' Commun. Math. Phys. {\bf 172} (1995)
  317--358,
  \href{http://arXiv.org/abs/hep-th/9403066}{{\texttt{arXiv:hep-th/9403066}}}.

\bibitem{Alekseev:1995rn}
A.~Y. Alekseev and V.~Schomerus, ``{Representation theory of Chern-Simons
  observables},''
  \href{http://arXiv.org/abs/q-alg/9503016}{{\texttt{arXiv:q-alg/9503016}}}.

\bibitem{BJBCrowley_1979}
B.~J.~B. Crowley, ``Some generalisations of the Poisson summation formula,''
  Journal of Physics A: Mathematical and General {\bf 12} (nov, 1979) 1951.

\bibitem{Han:2013hna}
M.~Han, ``{On Spinfoam Models in Large Spin Regime},'' Class. Quant. Grav. {\bf
  31} (2014) 015004,
  \href{http://arXiv.org/abs/1304.5627}{{\texttt{arXiv:1304.5627}}}.

\bibitem{Bonzom:2009hw}
V.~Bonzom, ``{Spin foam models for quantum gravity from lattice path
  integrals},'' Phys. Rev. D {\bf 80} (2009) 064028,
  \href{http://arXiv.org/abs/0905.1501}{{\texttt{arXiv:0905.1501}}}.

\bibitem{Hellmann:2013gva}
F.~Hellmann and W.~Kaminski, ``{Holonomy spin foam models: Asymptotic geometry
  of the partition function},'' JHEP {\bf 10} (2013) 165,
  \href{http://arXiv.org/abs/1307.1679}{{\texttt{arXiv:1307.1679}}}.

\bibitem{hormander2015analysis}
L.~H{\"o}rmander, {\em The analysis of linear partial differential operators I:
  Distribution theory and Fourier analysis}.
\newblock Springer, 2015.

\bibitem{Bahr:2015gxa}
B.~Bahr and S.~Steinhaus, ``{Investigation of the Spinfoam Path integral with
  Quantum Cuboid Intertwiners},'' Phys. Rev. D {\bf 93} (2016), no.~10, 104029,
  \href{http://arXiv.org/abs/1508.07961}{{\texttt{arXiv:1508.07961}}}.

\bibitem{Imamura:2013qxa}
Y.~Imamura, H.~Matsuno, and D.~Yokoyama, ``{Factorization of the
  $S^3/\mathbb{Z}_n$ partition function},'' Phys. Rev. D {\bf 89} (2014),
  no.~8, 085003,
  \href{http://arXiv.org/abs/1311.2371}{{\texttt{arXiv:1311.2371}}}.

\bibitem{Faddeev:1995nb}
L.~D. Faddeev, ``{Discrete Heisenberg-Weyl group and modular group},'' Lett.
  Math. Phys. {\bf 34} (1995) 249--254,
  \href{http://arXiv.org/abs/hep-th/9504111}{{\texttt{arXiv:hep-th/9504111}}}.

\bibitem{kashaev1997hyperbolic}
R.~M. Kashaev, ``The hyperbolic volume of knots from the quantum dilogarithm,''
  Letters in mathematical physics {\bf 39} (1997), no.~3, 269--275.

\end{thebibliography}\endgroup

\end{document}